\DeclareDocumentMathCommand{\corMat}{}{\Psi^{\operatorname{cor}}}
\DeclareDocumentMathCommand{\projBlockMarkov}{}{P^{\operatorname{proj}}}
\DeclareDocumentMathCommand{\fieldMarkov}{}{P^{\operatorname{FD}}}
\DeclareDocumentMathCommand{\DKL}{}{\D_{\operatorname{KL}}}
\DeclareDelimiter{\DTV}[\mathnormal{d_{\TV}}]{\lparen}{\rparen}
\title{Entropic Independence II: Optimal Sampling and Concentration via Restricted Modified Log-Sobolev Inequalities}
\author[1]{Nima Anari}
\author[1]{Vishesh Jain}
\author[2]{Frederic Koehler}
\author[1]{Huy Tuan Pham}
\author[1]{Thuy-Duong Vuong}
\affil[1]{Stanford University, \url{{anari,visheshj,huypham,tdvuong}@stanford.edu}}
\affil[2]{Simons Institute, UC Berkeley, \url{fkoehler@berkeley.edu}}
\date{}
\begin{document}
    \maketitle
        
    \begin{abstract}
        We introduce a framework for obtaining tight mixing time bounds for Markov chains based on what we call restricted modified log-Sobolev inequalities. Modified log-Sobolev inequalities quantify the rate of relative entropy contraction for the Markov operator, and are notoriously difficult to establish. However, for distributions infinitesimally close to stationarity, entropy contraction becomes equivalent to variance contraction, a.k.a.\ a Poincare inequality, which is significantly easier to establish through, for example, spectral analysis. Motivated by this observation, we study restricted modified log-Sobolev inequalities that guarantee entropy contraction not for all starting distributions, but for those in a large neighborhood of the stationary distribution.
        
        We use our framework to show that we can sample from the hardcore and Ising models on $n$-node graphs that have a constant  $\delta$ relative gap to the tree-uniqueness threshold, in nearly-linear time $\widetilde O_{\delta}(n)$. Notably, our bound does not depend on the maximum degree $\Delta$ of the graph, and is therefore optimal even for high-degree graphs. Our work improves on prior mixing time bounds of $\widetilde O_{\delta, \Delta}(n)$ and $\widetilde O_{\delta}(n^2)$, established via (non-restricted) modified log-Sobolev and Poincare inequalities respectively. As an additional corollary of our results we show that optimal concentration inequalities can still be achieved from the restricted form of modified log-Sobolev inequalities. To establish restricted entropy contraction for these distributions, we extend the entropic independence framework of Anari, Jain, Koehler, Pham, and Vuong to distributions that satisfy spectral independence under a restricted set of external fields. We also develop an orthogonal trick that might be of independent interest: utilizing Bernoulli factories we show how to implement Glauber dynamics updates on high-degree graphs in $O(1)$ time, assuming the graph is represented so that one can sample random neighbors of any desired node in $O(1)$ time. Put together, we obtain the perhaps surprising result that we can sample from tree-unique hardcore and Ising models in time $\widetilde O_{\delta}(n)$, i.e., without even necessarily having enough time to read all edges of the graph.
    \end{abstract}
    \clearpage

\section{Introduction}

The Ising and hardcore models are two classes of probability distributions on graphs which have long been the subject of intense study in statistical physics, probability theory, and theoretical computer science. These models are some of the most fundamental examples of probability distributions with interacting and correlated random variables, and also serve as key examples in the study of critical phenomena in lattice and mean-field spin systems. Concretely, the Ising model on a graph $G=(V,E)$ is the probability measure over spins $\sigma \in \set{\pm 1}^V$ with probability mass function
\[ \mu(\sigma) \propto \lambda^{\card{\set{ i \given \sigma_i = +1 }}} \prod_{\set{i,j} \in E} \beta^{\1(\sigma_i = \sigma_j)}  \]
parameterized by \emph{external field} $\lambda$ and \emph{edge activity} $\beta$, where the parameter $\beta$ controls whether the spins like to align with ($\beta > 1$, \emph{ferromagnetic}) or opposite to ($\beta < 1$, \emph{antiferromagnetic}) their neighbors on the graph. Similarly, the hardcore model is the probability measure on \emph{independent sets} $\sigma \subset 2^{V}$ given by
\[ \mu(\sigma) \propto \lambda^{\card{\sigma}}, \]
i.e., weighted by the size of the independent set, where the parameter $\lambda$ is conventionally referred to as the \emph{fugacity}. The study of these two models has been closely linked; in some cases the hardcore model can even be recovered as a limit of the Ising model with strongly antiferromagnetic interactions ($\beta \to 0, \lambda \to 0$).

It has long been understood that there are close connections between several questions of interest concerning these models: in particular, between the \emph{uniqueness of Gibbs measures} on infinite graphs, rapid mixing of the \emph{Glauber dynamics} (or Gibbs sampler) for the corresponding model on finite graphs, and \emph{concentration of measure} estimates and \emph{functional inequalities} for the Gibbs measure; see e.g. \cite{dobrushin1968problem,stroock1992equivalence,dobrushin1985constructive,marton2004measure,zegarlinski1992dobrushin,wu2006poincare} for some examples. In many cases, there is a single ``high temperature'' regime for the parameters of the model under which all of these properties of the measure hold at once. In the case of antiferromagnetic models, the ``high temperature'' regime also characterizes the exact regime of parameters where these models are algorithmically tractable \cite{sly2012computational,galanis2016inapproximability}.

Recently, there has been an intense interest in understanding the sharp thresholds for mixing in the hardcore and Ising models on the class of graphs of maximum degree $\Delta$ based on connections to the \emph{uniqueness threshold} on the infinite $\Delta$-regular tree. After a long line of work including \cite{dyer2002counting,weitz2006counting,mossel2009hardness,sly2010computational,sly2012computational,mossel2013exact,sinclair2014approximation,galanis2016inapproximability,chen2021optimal,chen2021rapid,anari2021spectral} we know that in the particular case of the hardcore model, sampling is computationally hard above the uniqueness threshold on graphs of degree at most $\Delta$ \cite{sly2010computational}, and below the uniqueness threshold sampling can be done in $\tilde{O}(n^2)$ steps of the Glauber dynamics \cite{chen2021rapid} or $\widetilde O(C_{\Delta} n)$ steps if the maximum degree $\Delta$ is fixed \cite{chen2021optimal, blanca2021mixing}; $C_\Delta$ is an exponentially large function of the maximum degree $\Delta$. The picture is similar for the Ising model, with $C_\Delta$ being instead a polynomially large function of $\Delta$ whose exponent depends on the gap to the uniqueness threshold \cite{chen2021optimal,blanca2021mixing}. Nevertheless, it has been generally expected that the mixing time of Glauber dynamics is always $O(n \log n)$ within the uniqueness regime, regardless of the degree of the graph. \Textcite{chen2021optimal} raised the challenge of proving even a weaker bound with near-linear dependence on $n$ and polynomial dependence on $\Delta$, i.e., $\widetilde O(n \cdot \poly(\Delta))$, on the mixing time of Glauber dynamics for the tree-unique hardcore model. \Textcite{chen2021rapid} also raised the open problem of proving $\widetilde O(n)$ mixing time in the high-degree setting.

In this work, we develop new methods for analyzing the mixing time of Markov chains and as an application, finally prove that a slight variation of Glauber dynamics, which we dub balanced Glauber dynamics, mixes in the optimal $O(n \log n)$ many steps up to the appropriate uniqueness thresholds for both the hardcore and Ising models. Our results build upon tools from recent works  %based on high-dimensional-expanders-based analysis of 
which analyze Markov chains through high-dimensional expansion properties  \cite[e.g., ][]{anari2021spectral,chen2021optimal,chen2021rapid} and in particular, techniques for establishing \emph{entropic independence} introduced by \textcite{anari2021entropic}. To surpass the limitations of previous methods, we introduce a number of new ideas for proving and using functional inequalities restricted to certain classes of well-behaved functionals.
As a consequence of our theory, we not only recover sharp mixing time estimates but also prove strong concentration of measure estimates which generalize classical results from the product measure setting.

Going beyond mixing times, we show that the \emph{total running time} of our sampling algorithms for both the Ising model and the hardcore model can be bounded by $\widetilde O(n)$ as well, assuming suitable access to the underlying graph; remarkably, this is sublinear in the size of the \emph{input}, for not-too-sparse graphs, and is nearly-linear in the size of the output, which is optimal. We show this by introducing a technique which might be of independent interest: we employ Bernoulli factories to implement Glauber dynamics updates in amortized $O(1)$ time per update instead of the na\"ive $O(\Delta)$ time, under the standard assumption %behind this trick is 
that the graph is given to the algorithm in a way that (uniformly) random neighbors of any desired node $v$ can be efficiently sampled.
%\Vishesh{Title: should we add optimal concentration somewhere? Is two-spin systems fair?}
%\Nima{No, it was just a temporary title. We should think of a good title.}
\begin{comment}
For spin system, can bound $q_i \leq \lambda/(1+\lambda)$ and $p_i \geq \frac{\Omega(1)}{\Delta}$ so $q_i/p_i \leq O(1) \lambda/(1+\lambda) \Delta \leq O(1).$ Then we can set the spectral-independent parameter $\alpha$ to be small enough (but only a constant time worse than the true threshold) to get eq. (+) in entropic independence. 

Why do we need to start at all $0$? for the upperbound on $q_I$ to hold? 
Nima:
after $n \log n$ steps, then all $i$ is touched once, then $q_i$ be small $q_i < \lambda/(1+\lambda).$ and we can do restricted MLSI.

For entropic independence, we probably will need $(1+ \delta) z_{\bar{i}} \geq z_i,$ which is equivalent to $(1+\delta) (\frac{1-q_i}{1-p_i})^{\alpha} \geq \frac{q_i}{p_i}$ equiv 
$(1+\delta)^{1/\alpha} (1-q_i)/(1-p_i) \geq q_i/p_i$ that is implied by $(1+\delta)^{1/\alpha} p_i \geq q_i.$

since $q_i/p_i < O(\Delta),$ maybe take $\alpha = 1/\log \Delta.$

%Does 2-spin system has MLSI? not sure, but prob. not important. Can just show mixing time
Argument: 

complete spectral independence of $\mu$=> complete spectral ind of $\mu_k$

complete spectral ind of $\mu':=\mu_k$ => $n \to  1$ entropy contraction when $q_i/p_i \leq O(1)$ => restricted MLSI of $\mu'$ 

=> restricted MLSI of field dynamics for $\mu$?

=> mixing of $\mu$ b/c $q_i/p_i$ bound is satisfied after every spin is update once
\end{comment}
\subsection{Our results}
In this paper, we focus on the behavior of the hardcore model and Ising model up to the tree uniqueness threshold. For the hardcore model on a graph of maximum degree $\Delta \ge 3$, the uniqueness threshold is explicitly given by the critical fugacity
\begin{equation*} 
\lambda_{\Delta} := \frac{(\Delta - 1)^{\Delta - 1}}{(\Delta - 2)^{\Delta}}
\end{equation*}
and we say the hardcore model with fugacity $\lambda$ is $\delta$-unique if $\lambda \le (1 - \delta) \lambda_{\Delta}$. Similarly, for the Ising model we say it is $\delta$-unique (with worst case external field) if 
\[ \beta \in \bracks*{\frac{\Delta - 2 + \delta}{\Delta - \delta}, \frac{\Delta - \delta}{\Delta - 2 + \delta}}. \]

\paragraph{Optimal mixing of the (balanced) Glauber dynamics.} 
%\Fred{State results for hardcore, Ising for worst-case/zero field, Ising with external field. Informally define balanced Glauber dynamics.}
We now state our main mixing time results. For sampling from the hardcore model, we consider a variant of Glauber dynamics which we call the \emph{balanced Glauber dynamics}. In this variation, the update site is chosen in a slightly non-uniform fashion, effectively introducing a small number of additional ``balancing'' updates into the usual Glauber chain --- see \cref{sec:balanced-gd} for a precise description of this process. We explain the motivation for this small modification in Techniques below.
\begin{theorem}[See \cref{thm:hardcore-detailed} for a more precise statement]
\label{thm:main hardcore}
Suppose $\mu$ is the $\delta$-unique hardcore model on $G = (V,E)$ with $\abs{V}= n$. The balanced Glauber dynamics with $O_{\delta}( n \log n)$ many steps approximately samples from $\mu$.
\end{theorem}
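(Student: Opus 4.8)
The plan is to prove a \emph{restricted} modified log-Sobolev inequality for the balanced Glauber dynamics: relative entropy contracts at the optimal rate $1 - \Omega_{\delta}(1/n)$ per step, but only for distributions $\nu$ lying in a neighborhood $\mathcal{G}$ of $\mu$ --- concretely, those whose one-site marginals $q_i := \Pr_{\nu}[i \in \sigma]$ are two-sided comparable to the marginals $p_i := \Pr_{\mu}[i \in \sigma]$, so that $q_i/p_i$ and $(1-q_i)/(1-p_i)$ are bounded by a constant depending only on $\delta$. Granted this, the theorem follows from three soft ingredients: (a) $\Ent_{\mu}(\nu^{(t)}/\mu)$ is non-increasing along \emph{any} Markov chain with stationary measure $\mu$, so nothing is lost during a burn-in phase; (b) from the empty independent set, $\Ent_{\mu}(\nu^{(0)}/\mu) = \log(1/\mu(\emptyset)) \le n \log(1+\lambda) = O_{\delta}(n)$, since $\delta$-uniqueness forces $\lambda = O(1/\Delta)$; (c) the balancing updates are arranged so that after $O_{\delta}(n\log n)$ steps every site has been updated at least once --- a coupon-collector bound --- at which point $\nu^{(t)} \in \mathcal{G}$, and (this is the purpose of the modification) $\mathcal{G}$ is forward-invariant thereafter. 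Chaining these, after the burn-in one has $\Ent_{\mu}(\nu^{(t)}/\mu) \le (1 - \Omega_{\delta}(1/n))^{t}\cdot O_{\delta}(n)$, which is below any fixed target after a further $O_{\delta}(n\log n)$ steps; Pinsker's inequality then gives closeness in total variation.

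The heart of the matter is the restricted MLSI. I would begin from the fact that the $\delta$-unique hardcore model is $O_{\delta}(1)$-spectrally independent, and moreover \emph{completely} so: every pinning of a hardcore model is again a hardcore model (on an induced subgraph, same fugacity), and the usual self-avoiding-walk-tree / potential-function analysis bounds the spectral-independence constant in terms of the gap to the uniqueness threshold with \emph{no} dependence on $\Delta$. The key extra point is that this stability survives when the model is tilted by an external field taking values in a bounded range --- precisely the \emph{restricted} family of fields for which the entropic independence framework of \cite{anari2021entropic} is extended in this paper. Invoking that extension, spectral independence under the restricted fields together with membership in $\mathcal{G}$ --- which is exactly what confines the tilted measures appearing in the intermediate (field / down-up) dynamics to the permitted range of fields --- yields $O_{\delta}(1)$-entropic independence for the relevant conditional measures. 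Feeding this into the local-to-global machinery, run only over $\mathcal{G}$, produces a restricted MLSI for the field dynamics, and a standard one-step comparison between the field dynamics and Glauber dynamics transfers it to the balanced Glauber dynamics at the optimal rate.

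I expect the obstacle to be twofold. On the functional-inequality side, one must show that spectral independence under a \emph{restricted} set of external fields --- rather than under all fields, which fails here with $\Delta$-free constants --- still suffices to run the entropic-independence-to-MLSI pipeline, and one must verify that the restriction to $\mathcal{G}$ is genuinely preserved along every step of the intermediate chains, so the local-to-global induction never leaves the region where contraction is available. On the algorithmic side, one must design the balancing updates so that, simultaneously, the chain retains stationary distribution $\mu$, only $O_{\delta}(n\log n)$ additional updates are inserted (so that the step count is not inflated), and $\mathcal{G}$ becomes exactly forward-invariant; this last point is why vanilla Glauber dynamics is not quite enough, as an ordinary update at a site $j$ adjacent to an as-yet-untouched site $i$ can transiently push $q_i$ outside the good range. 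Once the bookkeeping of $\mathcal{G}$ under the balanced chain is set up, the remaining estimates --- $p_i = \Theta_{\delta}(1/\Delta)$, $q_i \le \lambda/(1+\lambda)$ right after a site is updated, the coupon-collector burn-in, and the final Pinsker step --- are routine.
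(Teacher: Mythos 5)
Your high-level plan --- a restricted modified log-Sobolev inequality obtained from restricted (complete) spectral independence and restricted entropic independence, together with a modification of Glauber dynamics that keeps the evolving law in the good region --- is exactly the paper's strategy. The genuine gap is in your choice of the good class. You define $\mathcal{G}$ by the \emph{top-level} one-site marginal ratios $q_i/p_i$. That is enough for the one-step $n\to 1$ restricted entropy contraction (\cref{prop:entropy-contraction-n-to-1}), but it does not close the local-to-global argument that yields the $n\to n-\ell$ contraction needed for the field/block dynamics (\cref{cor:restricted entropy contraction}): that step applies the one-step contraction to \emph{every} conditional $\nu^{\sigma_\Lambda}$ against $\mu^{\sigma_\Lambda}$, and a top-level marginal bound on $\nu$ versus $\mu$ gives no control on the marginals of $\nu^{\sigma_\Lambda}$ versus $\mu^{\sigma_\Lambda}$. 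The paper instead works with the class $\mathcal{V}^c(C,\mu)$ of $C$-\emph{completely} bounded measures (\cref{def:C-bounded}), which hard-wires the bound into every pinning and every subunit external field; and what the systematic scan and balanced updates actually establish is not a marginal estimate at all but the strictly stronger conditional odds-ratio inequality $\gamma_v := \max_{\sigma^-}\nu(\sigma^+)/\nu(\sigma^-) \le C\lambda_v$ (with $\sigma^+=\sigma^-\cup\set{v}$; \cref{lem:systematic-scan-hardcore,lem:balanced-bounded}), which is stable under conditioning and tilting essentially by construction and hence yields complete boundedness.

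Two further points you would have to fill in. The balancing updates depend on the history of update sites through the counters, so $(\sigma_t)$ alone is not Markov; the paper therefore analyzes the augmented chain $(\sigma_t, N_t)$, proves entropy contraction for the joint state via the chain rule for KL divergence (\cref{lem:tensorization-to-balance}), and transfers to the marginal by data processing. Moreover the paper replaces the probabilistic coupon-collector burn-in with a single \emph{deterministic} systematic scan pass, which both guarantees that every vertex is touched (avoiding the low-probability failure event your version would have to dismiss, since $C$-complete boundedness is not automatically preserved under mixing in a bad component) and directly establishes the $\gamma_v$ bound for $\nu_0$; the number of extra balancing updates is then bounded deterministically by $T/(K-1)$ (\cref{lem:number-of-steps-balanced}), so the $O_{\delta}(n\log n)$ step count holds in the worst case rather than merely in expectation.
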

\begin{theorem} \label{thm:main ising}
Suppose $\mu$ is the $\delta$-unique Ising model on $G = (V,E)$ with $\abs{V}= n$. The (standard) Glauber dynamics with $O_{\delta}(n\log n)$ many steps approximately samples from $\mu$. %for $\mu$ has modified log-Sobolev constant $\Omega_{\delta}\parens*{\frac{1}{n}}$ and mixes in time $O_{\delta}(n\log n).$ 
\end{theorem}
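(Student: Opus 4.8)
The plan is to derive the theorem from a \emph{restricted} modified log-Sobolev inequality (restricted MLSI) for the Glauber dynamics, and then to turn that functional inequality into a mixing bound through a two-phase argument: a short burn-in that forces the law of the chain into a large neighborhood $\mathcal{N}$ of $\mu$ on which the restricted MLSI holds, followed by an entropy-contraction phase inside $\mathcal{N}$. Here $\mathcal{N}$ is the set of distributions $\nu$ whose single-site marginals are comparable to those of $\mu$, meaning $\nu_i(+1)/\mu_i(+1)$ and $\nu_i(-1)/\mu_i(-1)$ are both $O_\delta(1)$ for every $i$. Confining the input distribution to $\mathcal{N}$ is precisely the device that removes the dependence on $\Delta$: the unrestricted MLSI for the Ising Glauber dynamics is known to degrade with $\Delta$, and this is the price we want to avoid.

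The first ingredient I would assemble is spectral independence. By the now-standard implication from contraction of the tree recursion up to the uniqueness threshold to spectral independence \cite{anari2021spectral,chen2021optimal,chen2021rapid}, the $\delta$-unique Ising model is $O(1/\delta)$-spectrally independent, and this bound is \emph{complete}: it is inherited by every conditional distribution, and, more to the point, it is stable under applying any external field that stays within a bounded multiplicative window of the identity. What matters is that the only external fields we ever need to handle are those reweighting $\mu$ toward a distribution in $\mathcal{N}$, and for this restricted family the spectral independence constant is $O_\delta(1)$, with no $\Delta$.

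Next I would upgrade this restricted spectral independence to a restricted entropy-contraction statement, extending the entropic independence framework of \cite{anari2021entropic} to spectral independence under a restricted set of fields. Working with the two-homogenization of $\mu$ (each $\pm 1$ spin split into a pair of ground-set elements) and tilting by a field landing in $\mathcal{N}$, restricted spectral independence yields $(1+O_\delta(1))^{-1}$-entropic independence for those tilts, hence the optimal ``$n \to (n-1)$'' entropy-contraction factor for the associated down-up walk on all input distributions whose marginals are $O_\delta(1)$-comparable to $\mu$. Pushing this through the standard local-to-global / field-dynamics comparison gives a restricted MLSI for a field-type dynamics for $\mu$, and comparing that dynamics with single-site updates gives a restricted MLSI for the Glauber dynamics with constant $\Omega_\delta(1/n)$, valid on $\mathcal{N}$.

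Finally I would put the pieces together. Starting the Glauber dynamics from a deterministic near-mode configuration, after $O(n\log n)$ steps every site has been updated at least once, and one checks that the resulting law lies in $\mathcal{N}$ and is (essentially) kept there by further steps, since a Glauber update relaxes each site's marginal toward a conditional Ising marginal. On $\mathcal{N}$ the restricted MLSI then contracts the relative entropy to $\mu$ at rate $1-\Omega_\delta(1/n)$ per step, so $O_\delta(n\log n)$ more steps push it --- and hence the total variation distance, by Pinsker --- below any target $\epsilon$; altogether $O_\delta(n\log n)$ steps. The hardcore model (\cref{thm:main hardcore}) is treated identically, with balanced Glauber dynamics replacing standard Glauber precisely to keep the marginals inside the analogue of $\mathcal{N}$. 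The hard part, I expect, is making the step ``restricted spectral independence $\Rightarrow$ restricted MLSI'' genuinely free of $\Delta$: the usual entropic-independence-to-MLSI pipeline loses factors of $\Delta$ exactly where it invokes worst-case control of single-site quantities, and the real content is showing these losses vanish once the input is confined to $\mathcal{N}$ --- together with the bookkeeping that $\mathcal{N}$ is (nearly) invariant under a Glauber step and that the relative entropy stays controlled through the burn-in, where no MLSI is available.
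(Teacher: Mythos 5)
Your proposal takes a genuinely different route from the paper's, and it contains a real gap.

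The central point you miss is that the $\delta$-unique Ising model does not actually need a \emph{restricted} MLSI at all: it satisfies the \emph{full} modified log-Sobolev inequality with constant $\Omega_\delta(1/n)$, and the paper's proof establishes exactly that. The reason is \cref{prop:ising flc}, which shows the $\delta$-unique Ising model is $\Omega(\delta)$-fractionally log-concave, i.e., completely spectrally independent under \emph{all} positive external fields, not merely fields that keep marginals $O_\delta(1)$-close to those of $\mu$. (In the paper's notation, this is the case $\epsilon=\infty$, which makes the $C$-boundedness hypothesis in \cref{prop:entropy-contraction-n-to-1} and \cref{cor:restricted entropy contraction} vacuous.) This is why the paper's \cref{thm:main ising} uses \emph{standard} Glauber dynamics with no systematic-scan or balancing interleave, unlike \cref{thm:main hardcore}, which needs balanced Glauber precisely because arbitrary external fields can push the hardcore fugacity out of the uniqueness region. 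You are, in effect, importing the restricted machinery the paper develops for the hardcore model into the Ising case where it is superfluous. (The paper's actual Ising argument then proceeds by the case split $\lambda\le 1/800$ versus $\lambda > 1/800$: the small-$\lambda$ case is handled by Dobrushin directly, and the larger-$\lambda$ case combines \cref{prop:ising flc}, \cref{cor:restricted entropy contraction}, \cref{lem:ent-tensorization}, and \cref{prop:ising marginals} to get unrestricted approximate tensorization of entropy with constant $O_\delta(1)$.)

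This distinction is not cosmetic, because your proposal leans on a step that is not justified and is in fact the very obstacle the paper singles out as a ``challenging part'' of the restricted framework. You write that after $O(n\log n)$ burn-in steps, ``one checks that the resulting law lies in $\mathcal{N}$ and is (essentially) kept there by further steps, since a Glauber update relaxes each site's marginal toward a conditional Ising marginal.'' There is no argument here: the fact that every site has been updated once does not control the joint law's marginal ratios in the required uniform way, and a single Glauber step has no monotonicity guaranteeing that a law with marginals $O_\delta(1)$-comparable to $\mu$'s stays that way after the step. The paper explicitly leaves open, for the hardcore model, whether vanilla Glauber dynamics stays in the $C$-bounded class, which is a version of exactly your claim. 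Your plan is therefore circular at this point. Had the Ising model actually required the restricted route, the fix would have been the same interleaving/projection trick the paper uses for the hardcore model, not the burn-in heuristic you propose; but the cleaner and correct observation is that for the $\delta$-unique Ising model no such device is needed because spectral independence under all fields (fractional log-concavity) removes the restriction entirely.
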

\begin{remark}
The uniqueness region defined for the Ising model has a strange peculiarity: \cref{thm:main ising} in its exact form, that is nearly-linear mixing for all degrees, can actually be derived without appealing to the techniques of this work or even the prior work of \textcite{chen2021rapid} who studied spectral gap for high-degree regimes. The reason is that $\delta$-unique Ising models with high enough degree $\Delta>\Delta_0(\delta)$ satisfy the Dobrushin uniqueness condition! And complementing that, all small degrees $\Delta=O(1)$ are covered by the earlier work of \textcite{chen2021optimal}.

To see why high-degree cases fall under Dobrushin's regime, note that each entry of the Dobrushin influence matrix (see \cite{hayes2006simple} for definition) can be bounded by $\abs{\log(\beta)}/2$. Be aware that the notation $\beta$ is not consistent between our work and \cite{hayes2006simple}; one can translate $\beta \mapsto \log(\beta)/2$ to go from our notation to that of \cite{hayes2006simple}. This means the norm of the Dobrushin influence matrix is bounded by $\Delta\cdot \abs{\log(\beta)}/2$ which is asymptotically
\[ \frac{\Delta}{2}\cdot \parens*{\frac{2-\Theta(\delta)}{\Delta}+O(1/\Delta^2)}=1-\Theta(\delta)+O(1/\Delta). \] 
Fixing $\delta$, for large enough $\Delta$, this norm gets smaller than a constant $<1$, which entails Dobrushin uniqueness and hence nearly-linear mixing time \cite{hayes2006simple,levin2017markov}.

We remark that ``asymptotic Dorbushin uniqueness'' does not happen for the hardcore model or slight variants of the tree-unique Ising model (see \cref{rem:modified-ising} for details).
\end{remark}
\begin{remark}\label{rem:modified-ising}
Note that in the limit $\Delta \to \infty$, the endpoints of the uniqueness region for the Ising model are essentially of the form $\beta  = 1 \pm 2/\Delta$.
For antiferromagnetic Ising models, even if $\beta$ is outside of this region (e.g. $\beta \approx 1 - \alpha/\Delta$ for $\alpha > 2$), there is a critical external field $\lambda_c \in (0,1)$ below which the Ising model is in the tree uniqueness region \cite{sinclair2014approximation} (see also \cref{sec:ising-antiferro}): we also establish an analogous result covering this two-dimensional uniqueness region in $(\beta,\lambda)$ space (see \cref{thm:ising-anti-detailed}) using the balanced Glauber dynamics. The sublinear time sampling result (\cref{thm:general-ising-ss}) and concentration result (\cref{prop:ising-lipschitz}) below also extend to this setting, see \cref{prop:ising-anti-concentration}.
\end{remark}

\paragraph{Sublinear time sampling algorithms.}
Supposing that the graph $G$ is represented using the standard data structure of adjacency arrays, i.e., each vertex has an array of neighbors so that sampling a random neighbor of a vertex can be performed in $O(1)$ time. We show how to sample from both the hardcore model and the Ising model in runtime nearly linear in the output length $n$, and so in sublinear time for graphs of at least polylogarithmic average degree. %superlinear density. 

\begin{theorem}[See \cref{thm:hardcore-systematic} for a more precise statement]
Suppose $\mu$ is the $\delta$-unique hardcore model on $G = (V,E)$ with $\abs{V}= n$, and $G$ is represented by adjacency arrays. Then, there is a randomized algorithm to approximately sample from $\mu$ which can be implemented in %the field dynamics interleaved with systematic scan mixes in $O_{\delta}(\log(n))$ many steps, and each step can be implemented in expected running time $O_{\delta}(n \log(n))$ for a total runtime of 
expected time $O_{\delta}(n \log^2(n))$. 
\end{theorem}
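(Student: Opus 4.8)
The plan is to pair the optimal mixing bound of \cref{thm:main hardcore} with an implementation of a single balanced Glauber step that runs in $O(\log n)$ expected amortized time. By \cref{thm:main hardcore}, started from the empty independent set the chain is within total variation $\epsilon$ of $\mu$ after $T = O_\delta(n\log(n/\epsilon))$ steps, which for any prescribed (in particular constant) error is $T = \widetilde O_\delta(n)$; so it suffices to implement one step in $O(\log n)$ expected amortized time. I would store the current state $\sigma$ in a length-$n$ Boolean array, so that single-vertex reads and writes cost $O(1)$, and maintain the occupied set $S=\set{v:\sigma_v=1}$ in a dynamic structure supporting insertion, deletion, and (weighted) sampling in $O(\log n)$ time --- exactly what is needed to draw the update site from the mildly non-uniform rule of the balanced dynamics --- while the adjacency-array representation produces a uniformly random neighbor of any vertex in $O(1)$ time.

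The crux is implementing one update at site $v$ as a Bernoulli-factory-style simulation. Flip a coin that is heads with probability $\lambda/(1+\lambda)$; on tails, set $\sigma_v\gets0$ (and update $S$). On heads one must decide whether $N(v)\cap S=\emptyset$: if $v\in S$ this holds automatically by the independent-set invariant, and we set $\sigma_v\gets1$; otherwise repeatedly draw uniformly random neighbors of $v$ without replacement and test membership in $S$, setting $\sigma_v\gets0$ the instant an occupied neighbor appears and $\sigma_v\gets1$ only if all $\deg(v)$ neighbors are found empty. One checks this realizes the Glauber transition at $v$ exactly --- equivalently, it simulates the coin $\1(N(v)\cap S=\emptyset)$ out of the primitive ``a uniformly random neighbor of $v$ is occupied.'' The key point is that, conditioned on the chosen site $v_t$, the expected work of a step is $O(1)+\tfrac{\lambda}{1+\lambda}\deg(v_t)$, which is $O(1)$ because $\deg(v_t)\le\Delta$ and $\lambda\le\lambda_\Delta$ with $\lambda_\Delta\,\Delta=O(1)$; thus over $T$ steps the neighbor-scanning work is $\widetilde O_\delta(n)$ in expectation, and adding the $O(\log n)$-per-step overhead of the two data structures gives total expected time $\widetilde O_\delta(n)=O_\delta(n\log^2 n)$.

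I expect the main obstacle to be making the amortized accounting airtight. One must verify that the random-order neighbor scan is genuinely $O(\deg v)$ in the worst case and interacts correctly with the independent-set invariant, so that no violating configuration is ever produced; one must confirm that the crude inequality $\lambda_\Delta\Delta=O(1)$ is indeed all that is needed to tame the $\tfrac{\lambda}{1+\lambda}\deg(v_t)$ term even under the non-uniform site distribution of the balanced dynamics; and one must deal with the running time being a random variable, obtaining a clean bound in expectation (or, with an extra logarithm, with high probability, by aborting and restarting once a time budget is exceeded). It also remains to check that whatever dynamic structure \cref{sec:balanced-gd} requires to sample the update site can in fact be maintained within $O(\log n)$ per step; the analogue of this whole scheme is what ultimately needs the Bernoulli factory in full strength for the Ising update, where --- unlike the hardcore case, which is saved by the rarity of occupation --- the relevant update probability is a genuinely non-constant function of the neighborhood magnetization.
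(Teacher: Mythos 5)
Your local update trick---flip a $\operatorname{Ber}(\lambda/(1+\lambda))$ coin first and scan $v$'s neighbors only on heads---is exactly the one the paper uses, and the $O(1)$-expected-time accounting via $\lambda\le\lambda_\Delta=O(1/\Delta)$ is correct. But the paper applies this trick to a different chain: the field dynamics interleaved with systematic scans from \cref{thm:hardcore-systematic} (which the statement explicitly points to), not the balanced Glauber dynamics of \cref{thm:main hardcore}. This choice is not cosmetic, and your proposal has a genuine gap as a result.

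First, your description of the balanced dynamics is off. It does not sample the update site from a slowly-changing weighted distribution that can be stored in a $O(\log n)$-per-operation structure. As defined in \cref{sec:balanced-gd}, each round picks a \emph{uniform} vertex $I_t$, updates it, and then deterministically performs extra balancing updates at whichever vertex $v$ currently has $N_t(v) > K\Delta$, where $N_t(v)$ counts updates to neighbors of $v$ since $v$ itself was last updated. So uniform site selection costs $O(1)$ and the data structure you describe is neither needed nor sufficient. Second---and this is the real gap---maintaining the counters $N_t(v)$ requires, on \emph{every} update of a vertex $u$, incrementing $N(v)$ for each neighbor $v$ of $u$. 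That is $\Theta(\deg u)$ deterministic work per step, and it is \emph{not} shielded by the $\operatorname{Ber}(\lambda/(1+\lambda))$ coin: the counters must be touched whether the coin is heads or tails. Over the $\Theta_\delta(n\log n)$ uniformly chosen steps of \cref{thm:hardcore-detailed} the expected bookkeeping cost is $\Theta_\delta\bigl((|E|/n)\cdot n\log n\bigr)=\Theta_\delta(|E|\log n)$, which for dense graphs dwarfs $n\log^2 n$. You do not address how to maintain these counters in amortized $o(\Delta)$ time, and it is unclear that it is possible. This is likely one of the reasons the paper uses field dynamics with interleaved systematic scans for the sublinear-time result: those chains use a fixed deterministic scan order (for $P^{\operatorname{SS}}$) and uniformly random sites (for the Glauber steps implementing the approximate field dynamics), with no per-step bookkeeping on the whole neighborhood; the only extra overhead is $O(n)$ per field-dynamics phase to choose the pinnings, which over $O_\delta(\log(n/\epsilon))$ phases is lower order. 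Your closing observation that the full Bernoulli factory is only needed for Ising is accurate; for hardcore one can even scan the adjacency list sequentially rather than sampling random neighbors.
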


\begin{theorem}[See \cref{thm:general-ising-ss} for a more precise statement]
Suppose $\mu$ is the $\delta$-unique Ising model on $G = (V,E)$ with $\abs{V}= n$ and $G$ is represented by adjacency arrays. %i.e., an array of neighbors for each vertex. 
Then a step of the Glauber dynamics can be implemented by a randomized algorithm with expected running time $O(1)$. Combined with \cref{thm:main ising}, this implies that approximate sampling can be performed in expected runtime $O_{\delta}(n \log n)$.
\end{theorem}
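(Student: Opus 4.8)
The reduction of the second assertion to the first is immediate: by \cref{thm:main ising}, running the Glauber dynamics for $T = O_\delta(n\log n)$ steps from any fixed initial configuration yields an approximate sample, so if each step is implemented in expected time $O(1)$ then linearity of expectation gives total expected running time $O_\delta(n\log n)$, with the $O(n)$ cost of initializing and reading out the configuration absorbed. So the content is the first assertion.

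A Glauber step picks a uniformly random vertex $v$ and resets $\sigma_v$ to $+1$ with probability $p := \frac{\lambda\beta^{n_+}}{\lambda\beta^{n_+}+\beta^{n_-}}$, where $n_{\pm}$ counts the neighbors of $v$ with spin $\pm 1$ and $n_+ + n_- = d := \deg(v)$. From an adjacency-array representation we obtain $d$ (the array length) and a uniformly random neighbor of $v$ --- hence an i.i.d.\ $\{\pm 1\}$-valued sample with $\Pr[+1] = n_+/d$ --- each in $O(1)$ time; the naive approach reads all $d$ neighbors to learn $n_{\pm}$, which is exactly what we want to avoid. Instead, the plan is to realize this resampling \emph{exactly} as a Bernoulli race between the outcomes $+1$ and $-1$ with per-round acceptance weights $w_+ := \lambda\beta^{n_+}$ and $w_- := \beta^{n_-}$: fix a scale $M \ge \max(w_+, w_-)$ and repeat the round ``toss a fair coin; on heads, accept $+1$ with probability $w_+/M$, else continue; on tails, accept $-1$ with probability $w_-/M$, else continue'' until acceptance. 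Conditioned on acceptance the output is $+1$ with probability exactly $w_+/(w_++w_-) = p$, and the expected number of rounds is $2M/(w_++w_-)$.

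Two things then must be arranged, and both rely on $\delta$-uniqueness keeping $\beta$ within $1 \pm O(1/\Delta)$. First, the acceptance probabilities $w_{\pm}/M$ must be simulable from the i.i.d.\ neighbor samples in $O(1)$ expected time. Writing $w_+/M$ (and symmetrically $w_-/M$) as an explicitly computable constant times a factor $\gamma^{m}$, with $\gamma \in (0,1]$ a known constant and $m \in \{n_+, n_-\}$ chosen so the exponent is nonnegative, it suffices to simulate a coin of bias $\gamma^{m}$; this is the classical Poissonization trick --- draw $N \sim \operatorname{Poisson}(c)$ with $c := d\,\abs{\ln\gamma}$, draw $N$ uniform neighbors with replacement, and succeed iff none of them carries the spin counted by $m$, so that the success probability is $\mathbb{E}[(1-m/d)^N] = e^{-c\,m/d} = \gamma^{m}$ at a cost of $c$ neighbor queries in expectation. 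Since $\beta \in \bracks*{\frac{\Delta-2+\delta}{\Delta-\delta}, \frac{\Delta-\delta}{\Delta-2+\delta}}$ forces $\Delta\abs{\ln\beta} = O(1)$ (indeed bounded by an absolute constant, tending to $2$ as $\Delta\to\infty$), we get $c \le \Delta\abs{\ln\beta} = O(1)$, and both $\beta^{d}$ and $\beta^{-d}$ lie in a fixed constant range. Second, the choice $M := \max(\lambda,1)\cdot\max(1,\beta^{d})$ satisfies $M \ge \max(w_+,w_-)$ and $w_++w_- \ge (\lambda+1)\min(1,\beta^{d})$, so $2M/(w_++w_-) \le 2\cdot\frac{\max(\lambda,1)}{\lambda+1}\cdot\frac{\max(1,\beta^d)}{\min(1,\beta^d)} = O(1)$ using $\max(\lambda,1)/(\lambda+1) \le 1$; note this bound depends on neither $\delta$ nor $\lambda$, which is why the per-step time is $O(1)$ and not merely $O_\delta(1)$.

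Assembling these pieces, one Glauber step is realized exactly using $O(1)$ fair coin tosses, $O(1)$ Bernoulli draws with explicit constant biases, one $\operatorname{Poisson}(c)$ draw, and $O(1)$ uniform-neighbor queries, all in expectation --- hence in expected time $O(1)$. The rest is routine: the two nearly identical cases $\beta \ge 1$ and $\beta < 1$ (which only change the signs in the $\gamma^{m}$ factors and the value of $M$), the degenerate case $d=0$ (where $p = \lambda/(1+\lambda)$ and there is nothing to compute), and the elementary estimates on $\Delta\abs{\ln\beta}$ and $\beta^{\pm d}$. I expect the one genuinely non-obvious step to be the Poissonization, which is what lets us sample a $\beta^{n_{\pm}}$-coin without ever learning $n_{\pm}$ or touching more than $O(1)$ neighbors in expectation; once that is in hand, the race analysis and the constant bounds are bookkeeping made possible by the uniqueness constraint on $\beta$.
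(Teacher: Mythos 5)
Your proof is correct and takes a genuinely different route from the paper's. Both implementations begin from the same key consequence of $\delta$-uniqueness, namely $\Delta\abs{\log\beta} = O(1)$, and both use the adjacency-array oracle only through uniform random neighbor queries, but the reductions differ. The paper first builds an auxiliary coin $Y \sim \operatorname{Ber}(p)$ with $p = \tfrac14 + n_+/(2\Delta_i) \in [\tfrac14,\tfrac34]$ (a $\tfrac12$--$\tfrac12$ mixture of a fair coin and a neighbor-spin sample), writes the conditional marginal as the logistic transform $e^{c_1+c_2p}/(1+e^{c_1+c_2p})$, and chains two cited black boxes: Huber's two-coin trick to dispose of the logistic denominator, and the \L{}atuszy\'nski--Kosmidis--Papaspiliopoulos--Roberts $e^{-\kappa p}$-factory for the remaining exponential (using $c_2<0$ in the antiferromagnetic case and flipping spins to reach it otherwise). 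You instead run a Bernoulli race directly between the two unnormalized weights $w_\pm$, with the acceptance probabilities $w_\pm/M$ factored as (known constant) $\times\ \gamma^m$ for $\gamma\le 1$ and $m\in\{n_+,n_-\}$, and you simulate the $\gamma^m$-coins by Poissonization. The inline Poissonization you use is in essence the same construction that underlies the paper's cited exponential factory, but your race wrapper replaces the sigmoid/Huber step and treats the two spin outcomes and the ferro/antiferro cases symmetrically (via the choice of which count sits in the exponent), which makes the argument more elementary and self-contained; the paper's route buys modularity by reducing to two standard primitives. Your bound $2M/(w_++w_-) = O(1)$ on the expected number of race rounds and your bound $c = d\abs{\ln\gamma}\le\Delta\abs{\ln\beta}=O(1)$ on the Poisson rate play exactly the role that the bounds on $\abs{c_2}$ and $e^{c_1}$ play in the paper's version, and with Wald's identity both give the same $O(1)$ expected-time conclusion.
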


%\begin{remark}%
%\cref{thm:general-ising-ss}
%\end{remark}

\paragraph{Sharp concentration of measure and transport-entropy inequalities.}
Using our restricted modified log-Sobolev inequalities, we show via the Herbst argument that sub-Gaussian concentration bounds hold for all Lipschitz functions in both the hardcore and Ising model in the uniqueness region. By the celebrated result of \textcite{bobkov1999exponential}, concentration of Lipschitz functions is equivalent to a $W_1$ transport-entropy inequality, i.e. $W_1(\nu,\mu)^2 \le C \DKL{\nu \river \mu}$ for all measures $\nu$ where $W_1$ denotes the Wasserstein-1 distance with the Hamming metric. 
\begin{theorem}[See \cref{lem:hardcore-lipschitz}]
Suppose $\mu$ is the $\delta$-unique hardcore model on a graph with $n$ vertices, and let $f$ be so that $\abs{f(\sigma_+)-f(\sigma_-)} \le \kappa$ for all adjacent states $(\sigma_-,\sigma_+)$, i.e. $f$ is $\kappa$-Lipschitz with respect to the Hamming metric. For all $t \ge 0$, we have %$(\gamma n)^{1/2} O_{\delta,\kappa}(1) < t < \gamma n$, we have 
\[ \P_{\mu}{f -  \E_{\mu} f > t} \le 
e^{-c t^2/\kappa^2 n} \]
for some $c = c(\delta) > 0$.
\end{theorem}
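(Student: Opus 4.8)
The plan is to derive the claimed sub-Gaussian estimate from the restricted modified log-Sobolev inequality through the Herbst argument, taking care that this functional inequality is only available for starting distributions in a neighborhood of $\mu$, so that we cannot let the Chernoff parameter tend to infinity and must handle large deviations separately. Fix a $\kappa$-Lipschitz $f$ and, for $s\ge 0$, write $\nu_s := e^{sf}\mu/\E_{\mu}[e^{sf}]$ for the exponential tilt and $H(s):=\log\E_{\mu}[e^{sf}]$, so that $H'(s)=\E_{\nu_s}[f]$ and $\Ent_{\mu}(e^{sf})=\E_{\mu}[e^{sf}]\bigl(sH'(s)-H(s)\bigr)$. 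Since $f$ changes by at most $\kappa$ under any single-site update, for every vertex $v$ and every configuration off $v$ the conditional odds ratio of $\nu_s$ at $v$ is that of $\mu$ scaled by a factor in $[e^{-s\kappa},e^{s\kappa}]$; as $\delta$-uniqueness is stable under such bounded multiplicative perturbations of the single-site conditionals, there is $s_0=s_0(\delta)>0$ such that for all $0\le s\le s_0/\kappa$ the tilt $\nu_s$ lies in the neighborhood of $\mu$ on which our restricted modified log-Sobolev inequality applies, with a constant depending only on $\delta$.

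For such $s$, the restricted inequality gives $\Ent_{\mu}(e^{sf})\le C_{\delta}\,\mathcal E(e^{sf},sf)$, where $\mathcal E$ is the Dirichlet form of the balanced Glauber dynamics. Each update resamples a single coordinate across which $f$ oscillates by at most $\kappa$, so a two-point estimate bounds $\mathcal E(e^{sf},sf)$ (a sum of $n$ per-coordinate conditional-covariance terms) by $A\kappa^2 n\,\E_{\mu}[e^{sf}]$ for a constant $A=A(\delta)$, using $s\kappa\le s_0$. Hence $sH'(s)-H(s)\le A\kappa^2 n\,s^2$ for all $0<s\le s_0/\kappa$; dividing by $s^2$ recognizes the left-hand side as $\frac{d}{ds}\bigl(H(s)/s\bigr)$, and integrating from $0$, where $H(s)/s\to\E_{\mu}f$, yields $H(s)\le s\,\E_{\mu}f + A\kappa^2 n\,s^2$, i.e.\ $\E_{\mu}[e^{s(f-\E_{\mu}f)}]\le e^{A\kappa^2 n\,s^2}$ for $0\le s\le s_0/\kappa$.

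It remains to run the exponential Markov bound $\P_{\mu}{f-\E_{\mu}f>t}\le e^{-st+A\kappa^2 n\,s^2}$ subject to $0<s\le s_0/\kappa$. If $t\le 2As_0\kappa n$, the unconstrained optimizer $s=t/(2A\kappa^2 n)$ is admissible and gives $e^{-t^2/(4A\kappa^2 n)}$. If $2As_0\kappa n<t\le\kappa n$, take $s=s_0/\kappa$; then $A\kappa^2 n\,s^2=A n s_0^2\le s_0 t/(2\kappa)$, so the bound is at most $e^{-s_0 t/(2\kappa)}\le e^{-(s_0/2)\,t^2/(\kappa^2 n)}$, using $t\le\kappa n$ in the last step. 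If $t>\kappa n$, the probability is $0$, since a $\kappa$-Lipschitz function on the hardcore configuration space, whose update graph has diameter at most $n$, has oscillation at most $\kappa n$. In all cases $\P_{\mu}{f-\E_{\mu}f>t}\le e^{-ct^2/\kappa^2 n}$ with $c=c(\delta):=\min\{1/(4A),\,s_0/2\}$.

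The step I expect to be the main obstacle is dealing with the restriction of the functional inequality. In the classical Herbst argument one has a global modified log-Sobolev inequality and simply optimizes over all $s$; here one must instead (i) certify that the exponential tilts $\nu_s$ with $s\kappa=O_{\delta}(1)$ still lie in the region controlled by the restricted inequality --- which is exactly where the Lipschitz hypothesis and the robustness of $\delta$-uniqueness come in --- and (ii) recover genuinely sub-Gaussian, rather than merely sub-exponential, tails in the large-deviation range where $t$ is of order $\kappa n$, which we do by exploiting that a Lipschitz function on a bounded-diameter state space has bounded range. The ODE manipulation and the two-point Dirichlet-form estimate are routine. (Applying the same argument to $-f$ gives the matching lower-tail bound, and the $W_1$ transport-entropy inequality then follows from the cited equivalence.)
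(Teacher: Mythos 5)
Your proposal is correct and follows essentially the same route as the paper: the paper likewise runs a Herbst argument in which the bounded oscillation $e^{\pm s\kappa}$ of $e^{sf}$ certifies membership of the exponential tilt in the $C$-completely-bounded class, the approximate-tensorization/restricted-MLSI machinery then gives $\Ent_{\mu}(e^{sf}) \lesssim_{\delta} \kappa^2 n s^2\,\E_{\mu}[e^{sf}]$ for $s\kappa = O_{\delta}(1)$, and large deviations $t>\kappa n$ are dismissed using that a $\kappa$-Lipschitz function has range $O(\kappa n)$. The only cosmetic differences are that the paper invokes its Herbst lemma and Dirichlet-form estimate as black boxes rather than spelling out the ODE and two-point computations, and that the Dirichlet form used is that of the standard (not balanced) Glauber dynamics, which is the chain the approximate-tensorization comparison targets.
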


\begin{remark}
In the hardcore model with small fugacity, sites are much more likely to be unoccupied than occupied, which can lead to even better concentration. To reflect this, we establish (see \cref{lem:hardcore-monotone}) a more precise two-level Bernstein-type inequality for monotone functionals, such as the number of occupied sites in the hardcore model.
\end{remark}

\begin{theorem}[See \cref{prop:ising-lipschitz}]
Suppose $\mu$ is the $\delta$-unique Ising model on a graph with $n$ vertices, and let $f$ be so that $\abs{f(\sigma_+)-f(\sigma_-)} \le \kappa$ for all adjacent states $(\sigma_-,\sigma_+)$, i.e. $f$ is $\kappa$-Lipschitz with respect to the Hamming metric. For all $t \ge 0$ we have %$(\gamma n)^{1/2} O_{\delta,\kappa}(1) < t < \gamma n$, we have 
\[ \P_{\mu}{f -  \E_{\mu} f > t} \le 
e^{-c t^2/\kappa^2 n} \]
for some $c = c(\delta) > 0$.
\end{theorem}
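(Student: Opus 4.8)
The plan is to run the Herbst (exponential-moment) argument on top of the restricted modified log-Sobolev inequality we establish for the Glauber dynamics of the $\delta$-unique Ising model (the same inequality behind \cref{thm:main ising}). Write $P$ for the Glauber transition operator with stationary measure $\mu$ and $\mathcal{E}$ for its Dirichlet form; the restricted MLSI furnishes a constant $c_\delta>0$ with $\frac{c_\delta}{n}\Ent_{\mu}(g)\le\mathcal{E}(g,\log g)$ for every nonnegative $g$ whose tilt $g\mu/\E_{\mu} g$ lies in the large neighborhood $\mathcal{N}$ of $\mu$ on which that inequality holds (equivalently, the discrete-time entropy-contraction statement used for mixing, which implies this Dirichlet-form version with the same rate). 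Given a $\kappa$-Lipschitz $f$, I would set $\Lambda(s):=\log\E_{\mu} e^{sf}$ and use the Herbst identity $s\Lambda'(s)-\Lambda(s)=\Ent_{\mu}(e^{sf})/\E_{\mu} e^{sf}$.

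The first step is to bound the Dirichlet form of an exponential tilt. Using the elementary inequality $(e^a-e^b)(a-b)\le\frac12(a-b)^2(e^a+e^b)$ together with $\abs{f(\sigma_+)-f(\sigma_-)}\le\kappa$ on adjacent configurations, and the fact that a Glauber step resamples one uniformly random coordinate (so the $n$ per-coordinate contributions carry weight $\frac1n$ and combine without an $n$ blowup), a short reversibility computation gives $\mathcal{E}(e^{sf},sf)\le C_0 s^2\kappa^2\,\E_{\mu} e^{sf}$ for a universal constant $C_0$ --- crucially with no factor of $n$. Feeding this into the restricted MLSI yields $s\Lambda'(s)-\Lambda(s)\le\frac{C_0 n\kappa^2}{c_\delta}s^2$, hence $\frac{d}{ds}\bigl(\Lambda(s)/s\bigr)\le\frac{C_0 n\kappa^2}{c_\delta}$, and integrating from $0$ (where $\Lambda(s)/s\to\E_{\mu} f$) gives $\Lambda(s)\le s\,\E_{\mu} f+\frac{C_0 n\kappa^2}{c_\delta}s^2$. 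A Chernoff bound then produces $\P_{\mu}{f-\E_{\mu} f>t}\le\exp\bigl(\frac{C_0 n\kappa^2}{c_\delta}s^2-st\bigr)$, and optimizing over $s$ gives the claimed $\exp(-c\,t^2/\kappa^2 n)$ with $c=c_\delta/(4C_0)$. The extra factor $n$ in the exponent, relative to the classical product-measure bound, is precisely the reflection of the $\Theta(1/n)$ scaling of the single-site MLSI constant.

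The one delicate point --- and the reason this genuinely exploits the \emph{restricted} rather than a full MLSI --- is that the Herbst identity invokes the MLSI for $g=e^{sf}$, while for large $s$ the tilt $\mu_s\propto e^{sf}\mu$ escapes any fixed neighborhood of $\mu$. This is harmless because we never need large $s$: since $f$ changes by at most $\kappa$ along each of the at most $n$ single-site flips joining two configurations, $\max f-\min f\le\kappa n$, so $\P_{\mu}{f-\E_{\mu} f>t}=0$ once $t\ge\kappa n$, while for $t<\kappa n$ the optimal Chernoff parameter satisfies $s\le\frac{c_\delta}{2C_0\kappa}=O_\delta(1/\kappa)$. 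Hence it suffices to certify the MLSI for $g=e^{sf}$ over $s\in[0,s_\star]$ with $s_\star=O_\delta(1/\kappa)$, and for such $s$ the tilt $\mu_s$ differs from $\mu$ only by an $O_\delta(1)$-bounded modification of its single-site conditional laws, so $\mu_s\in\mathcal{N}$ and the restricted MLSI applies across the whole relevant window. I expect verifying this containment --- that $\mathcal{N}$ is wide enough to absorb the $O_\delta(1)$ external-field-type perturbations induced by Lipschitz tilts --- to be the main, and essentially the only, obstacle; it should run in close parallel to the hardcore estimate \cref{lem:hardcore-lipschitz} and its monotone refinement \cref{lem:hardcore-monotone}. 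If arguing directly about $\mu_{s_\star}$ turns out to be awkward, one can instead split $[0,s_\star]$ into $O_\delta(1)$ dyadic pieces and iterate the differential inequality, since on each piece the relative tilt moves by only a bounded multiplicative factor. Finally, by \textcite{bobkov1999exponential} the resulting sub-Gaussian bound is equivalent to the transport-entropy inequality $W_1(\nu,\mu)^2\le C'\DKL{\nu \river \mu}$, which yields that statement as well.
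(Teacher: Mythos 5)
Your argument is logically sound, but it takes a materially different (and more elaborate) route than the paper does for the Ising case. The key structural fact you do not use is that the $\delta$-unique Ising model with worst-case external field is \emph{fully} fractionally log-concave (\cref{prop:ising flc}: spectral independence holds under \emph{all} external fields $\lambda\in\R_{>0}^V$, not merely $(0,1]^V$). Because of this, the restriction parameter $\epsilon$ in \cref{cor:restricted entropy contraction} can be taken to $\infty$, the term $\sqrt{\log C/\log(1+\epsilon)}$ vanishes, and the contraction constant becomes independent of $C$. Combined with the easy-regime tensorization (\cref{prop:ising marginals}, via \cref{lem:ising-dobrushin}) and \cref{lem:ent-tensorization}, the paper thus obtains a \emph{full}, unrestricted modified log-Sobolev inequality with constant $\Omega_\delta(1/n)$ for Glauber dynamics on $\mu$ (this is exactly the chain of reasoning in the proof of \cref{thm:main ising}). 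With a full MLSI in hand, the Herbst argument applies with no containment check at all; the paper simply cites \cref{lem:herbst} and is done. (There is also a case split on $\lambda<1/800$ handled by Dobrushin's condition and the transport-entropy inequality of \cref{lem:ising-dobrushin}, which your argument doesn't address.)

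What you instead construct is the ``restricted MLSI plus tilt-containment'' argument: bound the relevant Chernoff parameter by $s_\star=O_\delta(1/\kappa)$, observe that for such $s$ the density $e^{sf}$ varies by at most a factor $e^{s_\star\kappa}=O_\delta(1)$ across adjacent states, and then certify that the tilt $e^{sf}\mu/\E_\mu e^{sf}$ remains in the restricted class for which entropy contraction is known. This is precisely what the paper is forced to do for the \emph{hardcore} model (\cref{lem:hardcore-monotone}, \cref{lem:hardcore-lipschitz}), where arbitrary external fields do push the model out of the uniqueness regime and only a restricted MLSI is available. So your proposal effectively transplants the hardcore concentration proof onto the Ising model; it should go through (the containment step is exactly \cref{lem:restricted mlsi to tensorization} applied with $C=e^{s\kappa}$, fed into \cref{lem:ent-tensorization} and \cref{lem:field tensorization from block} as in the proof of \cref{thm:comparison}), but it is more work than needed here, and the caveat you flag (``I expect verifying this containment to be the main obstacle'') is genuine work you have not completed. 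One small aside: your Dirichlet-form bound via $(e^a-e^b)(a-b)\le\tfrac12(a-b)^2(e^a+e^b)$ is a clean alternative to the paper's \cref{lem:dirichlet2} that avoids the extra $e^{\gamma\kappa}$ factor; it would be a mild tidying-up of the paper's Herbst machinery.
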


\subsection{Techniques}

A classic approach to the analysis of Markov chain mixing times consists of establishing functional inequalities, where roughly speaking, one shows that a measure of distance to the stationary measure $\mu$ multiplicatively contracts at every step. Two popular measures of distance to stationarity for a distribution $\nu$ are the $\chi^2$-divergence, a.k.a.\ the variance of $\nu$'s density w.r.t.\ $\mu$:
\[ \E*_{\mu}{\parens*{\frac{d\nu}{d\mu}-1}^2}, \]
and the relative entropy, a.k.a.\ the Kullback-Leibler divergence:
\[ \E*_{\mu}{\frac{d\nu}{d\mu}\cdot \log \frac{d\nu}{d\mu}}. \]
Contraction of these ``divergences'' are related to Poincare and modified log-Sobolev inequalities respectively \cite[see, e.g.,][]{bobkov2006modified}. Contraction of variance is often easier to establish, because of its relation to the spectral gap of the Markov chain which enables a host of techniques for spectral analysis, but often it leads to a suboptimal (with a polynomial factor loss) bound on the mixing time; in contrast, modified log-Sobolev inequalities are notoriously difficult to establish, especially since there is no equivalent spectral connection, but they can lead to optimal mixing time bounds.

It is well-known that entropy contraction is strictly stronger than variance contraction \cite{bobkov2006modified}. The reason for this is that for distributions $\nu$ that are infinitesimally close to $\mu$, entropy contraction and variance contraction become \emph{equivalent}. Roughly speaking, this is because the functional $x\mapsto x\log x$ can be approximated by its quadratic Taylor expansion near $x=1$, with the second degree term giving us the variance.

\paragraph{Restricted modified log-Sobolev inequalities.} Motivated by the observation that entropy contraction and variance contraction are equivalent in infinitesmially small neighborhoods of the stationary distribution, we propose studying an intermediate form of functional inequality that we call a \emph{restricted modified log-Sobolev inequality}. Roughly speaking, this is an inequality which guarantees entropy contraction in one step of the Markov chain for a \emph{restricted class} of distribtuions $\nu$. Intuitively, one should think of this as entropy contraction in a \emph{large neighborhood} of the stationary distribution. Our work shows that, in well-studied settings, restricted modified log-Sobolev inequalities can be considerably easier to establish than (full) modified log-Sobolev inequalities, while at the same time, yielding essentially the same consequences for mixing times and concentration of measure. 

\paragraph{Restricted entropic independence.} In order to establish restricted modified log-Sobolev inequalities we use a generalization of techniques developed by earlier work of \textcite{anari2021entropic}. Roughly speaking they showed that spectral independence \cite{ALO20}, a form of variance contraction, for not just the distribution $\mu$, but rather all external fields applied to $\mu$, automatically entails entropic independence, a form of entropy contraction. The main barrier in applying this framework to the distributions studied in this work, especially the hardcore model, is that arbitrary external fields can easily take us outside the uniqueness regime where there is no hope of mixing, let alone spectral independence; this is because an external field can change the parameter $\lambda$ (the fugacity) to an arbitrarily large positive number. Nevertheless, we employ the fact that a \emph{restricted} class of external fields keep the distribution in the spectral independence regime \cite{chen2021rapid}, and generalize the entropic independence machinery to show entropy contraction for a \emph{restricted} class of distributions $\nu$, which includes all of the distributions necessary for analyzing the mixing time of Markov chains and concentration of Lipschitz functions. 

\paragraph{Boosting contraction results using field dynamics.} We follow the footsteps of the prior work of \textcite{chen2021rapid} who invented a new Markov chain called field dynamics, and showed its utility in establishing a spectral gap, both for the field dynamics itself, and by a comparison argument, for the Glauber dynamics. Field dynamics allows one to combine a loose bound on variance contraction near the uniqueness threshold together with an optimal bound for variance contraction far away from the threshold, to get a boosted optimal bound on variance contraction near the threshold. Our arguments follow the same high-level plan but with variance replaced with entropy. That is, we establish restricted modified log-Sobolev inequalities for the field dynamics first, and use optimal entropy contraction inequalities far away from the uniqueness threshold, to get a boosted optimal entropy contraction near the threshold. We then use comparison arguments to translate the results to a variant of Glauber dynamics.

A challenging part of using restricted modified log-Sobolev inequalities to establish mixing times is that a priori there is no reason that the evolution of the Markov chain will keep the distribution in the restricted class where we have entropy contraction, even if we initially start from a distribution within this class. We show that in the case of tree-unique hardcore and Ising models, simple modifications of the well-studied Glauber dynamics Markov chain and the field dynamics guarantee that the distribution at time $t$ never escapes the restricted class of distributions.

\paragraph{Balanced Glauber dynamics, and field dynamics interleaved with systematic scans.} As noted above, in our analysis we consider a variant of Glauber dynamics. In this variation, the update site is chosen in a slightly non-uniform fashion, effectively introducing a small number of additional ``balancing'' updates into the usual Glauber chain. Similarly, for another Markov chain called field dynamics that was introduced by \textcite{chen2021rapid}, we sometimes add an additional interleaving systematic scan step to keep the distribution within the restricted region of entropy contraction.

The introduction of these additional steps is very analogous to the use of \emph{projections} in optimization algorithms such as projected gradient descent. In our case, these steps serve as projection operators in the following sense: they guarantee that the density of the resulting distribution lies in the class of $C$-completely bounded measures (see \cref{def:C-bounded}), where we have contraction of entropy, while ensuring that the projection itself does not increase the relative entropy. %; i.e. they serve as projection operators onto the space of $C$-bounded measures in the KL-divergence. 
The projection step enables us to show that the Glauber/field dynamics step  makes a large amount of progress. In the optimization literature, such projection steps are sometimes crucial: the Iterative Hard Thresholding algorithm \cite{blumensath2009iterative} alternates between a projection onto the set of sparse vectors and a gradient step on the squared loss, where the sparsity generated by the projection step is needed to argue that the gradient step makes progress (enabling appeal to the ``Restricted Isometry Property''). Somewhat similarly, the Nash-Moser iteration (see, e.g., \cite{secchi2016nash}) combines the Newton step with a step which improves regularity.
%Compared to the standard field dynamics and Glauber dynamics, in some of our sampling results we add 

We leave it as an interesting open question to investigate whether for the hardcore model, the balancing steps added to Glauber  dynamics are actually needed. Stated differently, does vanilla Glauber dynamics (potentially started from a judicious choice of starting point) automatically remain in the $C$-bounded region of entropy contraction?

\paragraph{Concentration inequalities.}
Modified log-Sobolev inequalities have other applications beyond mixing time of Markov chains; for example, they can be used to establish concentration inequalities using a technique known as the Herbst argument \cite[see, e.g.,][]{goel2004modified}. We show that for the Ising model and the hardcore model in the uniqueness region, restricted modified log-Sobolev inequalities are enough to establish the same optimal concentration inequalities (as would be obtained by conjectured modified log-Sobolev inequalities), by demonstrating that the Herbst argument essentially only needs entropy contraction for functionals within the ``good restricted class'' of $C$-bounded measures.

\paragraph{Sublinear time sampling algorithms.} Our results improve the mixing time bounds for the high-degree regime of the Ising and hardcore models. One concern might be that mixing time could be a misleading indicator of algorithmic tractability; after all, it is easy to construct Markov chains that mix in one step, but whose steps take exponential time to implement. This concern is moot for Glauber dynamics in bounded-degree graphical models, as the steps of Glauber dynamics can be easily implemented in constant time. We show that this concern is moot even for the high-degree regime, by introducing new tricks to implement Glauber dynamics updates of the tree-unique Ising and hardcore models in amortized $O(1)$ time per step, improving on the na\"ive implementation which takes $O(\Delta)$ time per update. For the Ising model, we assume the ability to sample uniformly random neighbors of any desired node in the graph, and show that a trick based on Bernoulli factories can achieve the desired $O(1)$ update time. As far as we know, this trick has not been studied before, and it might be of independent interest. 

\subsection{Further related work}

Previous work of \cite{anari2021entropic} introduced the notion of \emph{entropic independence} and new tools for proving entropic independence which we build upon in this work. In that work, one of the main results was also a $O(n \log n)$ time mixing estimate for the Ising model, established under an incomparable condition (bounded spectral norm of the interaction matrix, previously studied in \cite{bauerschmidt2019very,eldan2021spectral}).
In short, the bounded spectral norm condition is more powerful for studying frustrated models such as the Sherrington-Kirkpatrick model from spin glass theory \cite{mezard1987spin}, where the analogous tree threshold is the broadcasting/reconstruction threshold rather than the uniqueness threshold; in contrast, the results in this paper based on the uniqueness threshold are more powerful for studying the behavior of models with strong biases (such as the hardcore model) and for recovering the sharp uniqueness threshold constants in terms of the degree $\Delta$.
 In certain cases (e.g. the complete graph Curie-Weiss model) both of these results are applicable and we recover similar results using either approach.
 
 The hardcore model is one of the most well-studied distributions in statistical physics and more recently the area of sampling and counting in computer science, where the main challenge has been establishing near-linear time mixing for fugacity $\lambda$ all the way up to the uniqueness threshold. This was first achieved under structural assumptions on the graph: large degree and large girth \cite{hayes2006coupling}, amenable neighborhoods \cite{weitz2006counting}, and just large girth \cite{efthymiou2019convergence}. A sequence of results following the relatively new approach of high-dimensional-expanders-based Markov chain analysis, recently achieved the near-linear mixing time goal with no structural assumption other than $O(1)$-bounded-degrees \cite{ALO20, chen2020rapid, chen2021optimal, blanca2021mixing}. The remaining case of graphs with large maximum degree remained largely open prior to our work; a very recent work of \textcite{chen2021rapid} made substantial progress on this case by establishing an optimal spectral gap, and as a corollary, a (suboptimal) quadratic mixing time (we note that they established this optimal spectral gap for the more general class of all anti-ferromagnetic two-spin systems).
 
 \subsection{Organization}
 
 We first state our results in the general setting of two-spin models in \cref{sec:entropy-contraction-field,sec:comparison}. In \cref{sec:entropy-contraction-field}, we develop general techniques to translate the spectral independence of a distribution under a family of external fields to restricted entropic independence and restricted modified log-Sobolev inequalities for the field dynamics. Then in \cref{sec:comparison}, we show how to translate these notions from field dynamics to Glauber dynamics.
 
 In \cref{sec:sampling-hardcore,sec:ising} we specialize to the problem of sampling from the hardcore and Ising models respectively. In each section, we show how to modify field dynamics and Glauber dynamics to keep the evolving distribution over the run of the Markov chain within the well-behaved class of $C$-bounded distributions. In each section, we also show how to implement a sampling algorithm that has runtime near-linear in the output size.
 
 Finally, in \cref{sec:concentration}, we show how restricted modified log-Sobolev inequalities yield optimal concentration inequalities for both the hardcore and the Ising model.
 
 \subsection{Acknowledgements}
 Nima Anari and Thuy-Duong Vuong are supported by NSF CAREER award CCF-2045354, a Sloan Research Fellowship, and a Google Faculty Research Award. Huy Tuan Pham is supported by a Two Sigma Fellowship. Frederic Koehler was supported in part by E.\ Mossel's Vannevar Bush Fellowship ONR-N00014-20-1-2826.
 
\section{Preliminaries}

We let $[n]$ denote the set $\set{1,\dots, n}$ and $\N$ denote the set of natural numbers $\set{1,2,\dots}$.

\subsection{Distributions and generating polynomials}
We will frequently view distributions on $\set{\pm 1}^{n}$ as distributions on $2^{[n]}$ by identifying $\sigma \in \set{\pm 1}^{n}$ with the subset $\set{i \in [n]\given \sigma_i = 1}$. For a distribution $\mu$ on $2^{[n]}$ and a subset $\Lambda \subseteq [n]$, we say that $\sigma_{\Lambda} \in 2^{\Lambda}$ is a (valid) partial configuration if there exists some $S \in 2^{[n]\setminus \Lambda}$ with $\mu(S \cup \sigma_{\Lambda}) > 0$. For such a valid partial configuration, we define the \emph{$\sigma_{\Lambda}$-pinned} probability measure $\mu^{\sigma_{\Lambda}}$ on $2^{[n]\setminus \Lambda}$ by 
\[\mu^{\sigma_{\Lambda}}(T) \propto \mu(\sigma_{\Lambda} \cup T) \quad \forall T \in 2^{[n]\setminus \Lambda} \]
where the notation $\propto$ denotes equality up to a normalizing constant (so that the measure $\mu^{\sigma_{\Lambda}}$ is a valid probability measure). Also, for any subset $\Lambda \subseteq [n]$, we define 
\[\mu(\Lambda) = \sum_{S \supseteq \Lambda}\mu(S).\]
For notational convenience, we will denote $\mu(\set{i})$ simply as $\mu(i)$. 

\begin{definition}
The multivariate generating polynomial $g_{\mu} \in \R[z_1,\dots, z_n]$ associated to a density $\mu\colon 2^{[n]} \to \R_{\geq 0}$ is given by
\[g_{\mu}(z_1,\dots, z_n) := \sum_{S}\mu(S)\prod_{i\in S}z_i = \sum_{S}\mu(S)z^{S}.\]
\end{definition}
Here we have used the standard notation that for $S \subseteq [n]$, $z^S = \prod_{i\in S}z_i$; note that this is a multilinear polynomial.

\begin{definition}[Measure tilted by external field]\label{def:measure-tilt}
For a distribution $\mu$ on $2^{[n]}$ and vector $\lambda = (\lambda_1,\dots, \lambda_n) \in \R^{n}_{>0}$, which we refer to as the \emph{external field}, we denote the measure $\mu$ tilted by external field $\lambda$ by the notation $\lambda \ast \mu$, formally defined as
%$\lambda$ external field applied to $\mu$ is a distribution on $2^{[n]}$, denoted by $\lambda \ast \mu$, given by
\[\P_{\lambda \ast \mu}{S} = \frac{1}{Z_{\lambda}} \mu(S)\cdot \prod_{i \in S}\lambda_i \]
where the normalizing constant $Z_{\lambda}$ is defined so that $\lambda \ast \mu$ is a probability measure.
Note that for any $(z_1,\dots, z_n) \in \R^{n}_{\geq 0}$, 
\[g_{\lambda \ast \mu}(z_1,\dots, z_n) \propto g_{\mu}(\lambda_1 z_1,\dots, \lambda_n z_n).\]
We also use the following shorthand: for $\lambda \in \R_{> 0}$, the notation $\lambda \ast \mu: = (\lambda, \ldots, \lambda) \ast \mu$ denotes the measure $\mu$ tilted by uniform external field $\lambda$.
\end{definition}

Sometimes we use the notation $\bar{x}$ to denote a natural involution applied to some object $x$. Suppose we have a set of indices $\Omega = \set{i_1,\dots, i_n}$, and define the set $\bar{\Omega} = \set{\bar{i_1},\dots, \bar{i_n}}$, which is disjoint from $\Omega$, and each of whose elements is naturally paired with an element of $\Omega$. Note that since we assumed the sets are disjoint, we can write $\Omega \cup \bar{\Omega} = \set{i_1, \bar i_1, \ldots,i_n,\bar i_n }$. This lets us represent a vector $\sigma \in \set{\pm 1}^{\Omega}$ as a subset $\sigma^{\hom}$ of $\Omega \cup \bar{\Omega}$ in the obvious way, which we define formally below. 

\begin{definition}[Homogenization]
For $\sigma \in \set*{\pm 1}^{\Omega},$ let $\sigma^{\hom} \in \binom{\Omega \cup \bar{\Omega} }{\card{\Omega}}$ be the set $$\set*{i \in \Omega \given \sigma_i = 1} \cup\set*{\bar{i} \in \bar{\Omega} \given \sigma_i = -1}.$$ For a distribution $\mu$ over $\set*{\pm 1}^{\Omega}$, let the {homogenization} of $\mu$, denoted by $\mu^{\hom},$ be the distribution supported on $\set*{\sigma^{\hom} \given \sigma \in \set*{\pm 1}^{\Omega}} $ defined by $\mu^{\hom}(\sigma^{\hom}) \propto \mu(\sigma). $ The completely analogous definition is also made for $\sigma \in 2^{\Omega}$. 
\end{definition}
In terms of generating polynomials, we can equivalently write
\begin{equation}\label{eqn:generating-polynomial-hom}
g_{\mu^{\hom}}(z_1,\ldots,z_n,z_{\bar 1},\ldots z_{\bar n}) 
= g_{\mu}(z_1/z_{\bar 1},\ldots,z_n/z_{\bar n}) z_{\bar 1} \cdots z_{\bar n},
\end{equation}
and we note that the generating polynomial $g_{\mu^{\hom}}$ is a homogeneous polynomial of degree $n$. 

Finally, we introduce the notion of a blow-up of a probability distribution \cite{chen2021rapid}, which plays an important role in analyzing the field dynamics (see \cref{def:projected block dynamics} and \cref{lem:approx-field-block} later).

\begin{definition}[$\vec k$-blow-up distribution]
For a distribution $\mu: 2^{[n]}\to \R_{\geq 0}$ and $\vec{k}\in \N^{n},$ let $\mu_{\vec{k}}$ be the distribution on $2^{[k_1 + \dots + k_n]}$ with multivariate generating polynomial
\[g_{\mu}\parens*{\frac{z_{1, 1} +\dots z_{1,k_1} }{k_1}, \dots, \frac{z_{n, 1} +\dots z_{n,k_n} }{k_n}}.\]
Note that this polynomial is multilinear with nonnegative coefficients summing to one, so this definition indeed describes a probability measure. 
\end{definition}

We remark that the blow-up of a homogeneous $\mu$ (only supported on sets of a particular size) has also been called subdivision \cite{anari2020isotropy}, but we use the term blow-up instead to emphasize the asymmetric nature of the operation in the absence of homogeneity.

We will identify $[k_1 + \dots + k_n]$ with 
the set of pairs $\set{(i,j) \given i \in [n], j \in [k_i]}$ and so
denote the elements of $[k_1+\dots +k_n]$ by pairs $(i,j)$.
%for $i\in [n]$ and $j \in [k_i]$. 
Accordingly, for any such $(i,j)$, $\mu_{\vec{k}}(\cdot \mid (i,j))$ (respectively, $\mu_{\vec{k}}(\cdot \mid \overline{(i,j)})$) denotes the distribution of $\mu_{\vec{k}}$ conditioned on $(i,j)$ being included (respectively, excluded).

There is an equivalent description of $\mu_{\vec{k}}$ which is, perhaps, more intuitive. With notation as in the previous paragraph, $\mu_{\vec{k}}$ is the distribution of a random variable $Y$, taking values in $2^{[k_1 + \dots + k_n]}$ constructed as follows: first, sample $X \sim \mu$. If $i \notin X$, then $(i,j)\notin Y$ for all $j \in [k_i]$. For each $i \in X$, sample $j_i^* \in [k_i]$ uniformly and independent of all other choices. Let $(i,j_i^*) \in Y$ and $(i,j) \notin Y$ for all $j \neq j_i^*$. In particular, note that under the blow-up-distribution, two distinct elements $(i,j)$ and $(i,j')$ for $j \ne j'$ will never simultaneously be present in a sample $\sigma_{\vec k} \sim \mu_{\vec k}$. 

There is an obvious projection map $P_{\vec k}(i,j) = i$ which goes from sites for the blow-up distribution (i.e. elements of $[k_1 + \dots + k_n]$) to sites for the original distribution (elements of $[n]$). Similarly, we can define a natural projection from configurations in the blow-up distribution to configurations on the original space, as follows. 
%Also, if $\sigma_{\vec k} \sim \mu_{\vec k}$ we can define its projection $\sigma$ by including $i$ in $\sigma$ iff there exists $j$ such that $(i,j) \in \sigma_{\vec k}$; the definition of the blow-up distribution ensures that the resulting $\sigma$ is distributed according to $\mu$.
\begin{definition}[Projection]
\label{def:projection}
Let $\mu$ be a distribution over $2^{[n]}$, let $\vec{k} \in \N^{n}$, and let $\mu_{\vec{k}}$ denote the corresponding blow-up distribution on $2^{[k_1 + \dots + k_n]}$. %$\Omega = 2^{[k_1 + \dots + k_n]}$. 
For any %$\sigma \in \binom{\Omega \cup \bar{\Omega}}{k_1 + \dots + k_n}$
$\sigma \in 2^{[k_1 + \cdots + k_n]}$, its projection $\sigma^\ast \in 2^{[n]}$ is defined by $i \in \sigma^\ast$ iff there exists some $j \in [k_i]$ such that $(i, j) \in \sigma$.
\end{definition}

The following lemma shows that the conditionals of $\mu_{\vec{k}}$ are blow-ups of conditionals of the tilted measure $\lambda \ast \mu$ for some external field $\lambda \in [0,1]^{n}$. 

\begin{lemma} \label{lem:conditional of blow up}
Let $\mu$ be a distribution on $2^{[n]}$, let $\vec{k} \in \N^{n}$, and let $\mu_{\vec{k}}$ denote the blow-up distribution. 
For $i \in [n]$ and $j \in [k_i]$,
\begin{equation*}
    \begin{split}
        \mu_{\vec{k}} (\cdot  \mid (i,j) ) &\equiv \mu(\cdot \mid i)_{k_1, \dots, k_{i-1}, k_{i+1}, \dots, k_n}.\\
        \mu_{\vec{k}} (\cdot  \mid \overline{(i,j)} ) &\equiv (\lambda \ast \mu)_{k_1, \dots, k_{i-1}, k_i -1, k_{i+1}, \dots, k_n} \text{ if } k_i > 1, \text{ with } \lambda_i = \frac{k_i - 1}{k_i} \text{ and } \lambda_j = 1 \text{ for } j \neq i.\\
         \mu_{\vec{k}} (\cdot  \mid \overline{(i,j)} ) &\equiv  \mu(\cdot \mid \bar{i})_{k_1, \dots, k_{i-1}, k_{i+1}, \dots, k_n} \text{ if } k_i = 1.
    \end{split}
\end{equation*}
\end{lemma}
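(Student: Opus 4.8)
The plan is to verify all three identities at the level of multivariate generating polynomials, since two distributions on the same ground set coincide iff their generating polynomials agree. Recall that by definition $g_{\mu_{\vec k}}$ is obtained from $g_\mu$ by substituting $z_i \mapsto (z_{i,1} + \dots + z_{i,k_i})/k_i$, and that conditioning on inclusion (resp.\ exclusion) of a coordinate corresponds to extracting the part of the generating polynomial that is linear (resp.\ constant) in that coordinate's variable, then setting that variable to $1$ (resp.\ $0$) and renormalizing. So for the first identity I would take $g_{\mu_{\vec k}}$, isolate the monomials containing $z_{i,j}$, set $z_{i,j} = 1$ and every other $z_{i,j'}$ (for $j' \neq j$, $j' \in [k_i]$) to $0$; since no sample of the blow-up ever contains two coordinates over the same $i$, this kills exactly the $z_{i,j'}$ terms without side effects, and what remains is $g_\mu$ with $z_i$ replaced by $1$ — i.e.\ the generating polynomial of $\mu(\cdot \mid i)$ — and with $z_{i'} \mapsto (z_{i',1}+\dots+z_{i',k_{i'}})/k_{i'}$ for $i' \neq i$. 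That is exactly $g$ of the claimed blow-up $\mu(\cdot \mid i)_{k_1,\dots,k_{i-1},k_{i+1},\dots,k_n}$, up to the normalizing constant, which is absorbed by $\equiv$.

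For the second identity ($k_i > 1$, conditioning on $\overline{(i,j)}$), I would instead set $z_{i,j} = 0$ and keep the other $z_{i,j'}$ as free variables. The substitution for coordinate $i$ then becomes $z_i \mapsto (z_{i,1} + \dots + z_{i,k_i})/k_i$ with the $j$-th term deleted, i.e.\ $(\text{sum of } k_i - 1 \text{ variables})/k_i = \frac{k_i-1}{k_i}\cdot\frac{(\text{sum of } k_i-1 \text{ variables})}{k_i - 1}$. The factor $\frac{k_i-1}{k_i}$ in front of a size-$(k_i-1)$ averaging substitution is precisely what appears when one forms the blow-up of the tilted measure $\lambda \ast \mu$ with $\lambda_i = \frac{k_i-1}{k_i}$, $\lambda_{i'} = 1$ otherwise: by \cref{def:measure-tilt}, $g_{\lambda\ast\mu}(z) \propto g_\mu(\lambda_1 z_1,\dots,\lambda_n z_n)$, so the $\frac{k_i-1}{k_i}$ scalar can be pulled out of the $i$-th averaging substitution and reinterpreted as the external field. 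Matching the remaining $(k_i-1)$-fold average over $z_{i,1},\dots,\widehat{z_{i,j}},\dots,z_{i,k_i}$ (reindexed to $[k_i-1]$) and the untouched averages over $i' \neq i$ with the definition of the blow-up $(\lambda\ast\mu)_{k_1,\dots,k_i-1,\dots,k_n}$ finishes this case. The third identity ($k_i = 1$) is the degenerate version: setting the unique $z_{i,1} = 0$ just substitutes $z_i \mapsto 0$ into $g_\mu$, which extracts the constant-in-$z_i$ part, i.e.\ $g_{\mu(\cdot\mid\bar i)}$, and leaves the other averaging substitutions intact, giving the claimed blow-up over the index set with $i$ removed; here no external field is needed since there is no leftover scalar.

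I do not expect a serious obstacle; the statement is essentially bookkeeping. The one place to be careful is the second case: one must check that the scalar $\frac{k_i-1}{k_i}$ really can be disentangled from the averaging operation and attributed entirely to an external field on coordinate $i$ (rather than, say, accidentally rescaling the other coordinates), and that the resulting external field vector $\lambda \in [0,1]^n$ is the one claimed — in particular that $\lambda_i = \frac{k_i-1}{k_i} \in (0,1)$ when $k_i > 1$, consistent with the statement of \cref{lem:conditional of blow up} and with the preceding sentence in the text asserting $\lambda \in [0,1]^n$. A secondary point worth stating explicitly is the combinatorial fact, already noted in the paragraph preceding \cref{def:projection}, that the blow-up distribution never places two coordinates $(i,j),(i,j')$ with $j\neq j'$ in the same sample; this is what guarantees that the inclusion-conditioning in the first identity does not interact with the other $z_{i,j'}$ variables and lets the substitution-based argument go through cleanly.
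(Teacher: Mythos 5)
Your argument is correct, but it runs through a different medium than the paper's. The paper proves only the second identity (stating the other two are similar and easier), and does so by direct comparison of probability masses: for each $S'$ in the blow-up ground set with $(i,j)\notin S'$, it writes out $\mu_{\vec k}(S')$ and $(\lambda\ast\mu)_{k_1,\dots,k_i-1,\dots,k_n}(S')$ under the obvious identification of ground sets, and checks that both are proportional to $\mu(S)\prod_{\ell\in S}k_\ell^{-1}$, where $S$ is the projection of $S'$. You instead work at the level of generating polynomials, treating inclusion/exclusion conditioning as extracting or zeroing the corresponding variable and the blow-up as an averaging substitution; the factor $\tfrac{k_i-1}{k_i}$ that the paper absorbs via the indicator $\mathds{1}_{i\in S}\,\lambda (k_i-1)^{-1}$ appears on your side as a scalar pulled out of the deleted-variable average and reinterpreted as an external field via \cref{def:measure-tilt}. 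Both are bookkeeping and prove the same thing, but your substitution calculus handles all three identities by one uniform mechanism, whereas the paper's terse shortcut asks the reader to supply the first and third cases by analogy. The one subtlety you correctly flag is that, in the first identity, the blow-up never occupies two coordinates $(i,j),(i,j')$ from the same block, which is exactly what makes zeroing the remaining $z_{i,j'}$ a no-op after extracting the $z_{i,j}$-linear part; in the exclusion cases one simply sets $z_{i,j}=0$ and this fact is not needed.
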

\begin{proof}
We prove only the second part; the proofs of the other parts are similar (and easier). We identify $[k_1 + \dots + (k_i-1) + \dots + k_n]$ with $\set{(i,j)\given i\in [n], j \in [k_i]} \setminus \set{(i,j)}$. Then, for any $S' \in 2^{[k_1+\dots + k_n]}$ such that $(i,j)\notin S'$, we have that
\[\mu_{\vec{k}}(\cdot \mid \overline{(i,j)}) \propto \mu(S)\prod_{\ell\in S}k_{\ell}^{-1},\]
where $S = \set{i \in [n]\given (i,j)\in S' \text{ for some }j\in [k_i]}$. Moreover,
\begin{align*}
    (\lambda \ast \mu)_{k_1,\dots, k_{i-1}, k_i-1,k_{i+1},\dots, k_n}(S') 
    &\propto \mu(S)\prod_{\ell \in S\setminus i}k_{\ell}^{-1}\cdot (\mathds{1}_{i\notin S} + \mathds{1}_{i\in S}\lambda (k_i-1)^{-1})\\
    &= \mu(S)\prod_{\ell \in S}k_{\ell}^{-1},
\end{align*}
which gives the desired conclusion. 
\end{proof}

\subsection{Restricted fractional log-concavity and spectral domination}

\begin{definition}[Restricted fractional log-concavity]
	
	For $\alpha \in (0,1]$ and for a convex subset $\mathcal{R} \subseteq \R^{n}_{\geq 0}$, a distribution $\mu$ on $\binom{[n]}{\ell}$ is said to be $\alpha$-fractionally log-concave on $\mathcal{R}$ (abbreviated as $\alpha$-FLC on $\mathcal{R}$) if $\log g_{\mu}(z_1^{\alpha},\dots, z_{n}^{\alpha})$ is concave, viewed as a function on $\mathcal{R}$. 
\end{definition}

We note that when $\mathcal{R} = \R^{n}_{\geq 0}$, then this is simply the notion of $\alpha$-fractional-log-concavity introduced in \cite{alimohammadi2021fractionally}. We also note that $1$-FLC on $\R^{n}_{\geq 0}$ is equivalent to complete/strong log-concavity \cite{ALOV19,BH19}.

\begin{definition}[Correlation matrix] \label{def:corr}
	Let $\mu$ be a probability distribution over $2^{[n]}$.  %with generating polynomial $g_{\mu} (z_1, \cdots, z_n) = \sum _{S \in  2^{[n]}}\mu(S) z^S.$
Its \textit{correlation matrix} $\corMat_{\mu} \in \R^{n\times n}$  is defined by
\[\corMat_{\mu} (i,j) = \begin{cases} 1 -\P{i} &\text{ if } j=i, \\ \P{j \given i} - \P{j} &\text{ otherwise.}\end{cases} \]
\end{definition}

\begin{definition}[Spectral domination]
For $\eta, \epsilon \geq 0$, a distribution 
$\mu: 2^{[n]} \to \R_{\geq 0} $ is said to be $(\eta,\epsilon)$-spectrally dominated if
%\[\forall \lambda \in [0,1+\epsilon]^n: g_{\lambda \ast \mu} (z_1, \dots, z_n) = \sum_{S} \mu(S) \lambda^S z^S\]
%is $\eta$-spectral independence, or equivalently, 
\[\lambda_{\max}(\corMat_{\lambda \ast \mu}) \leq \eta \quad \forall \lambda \in (0,1+\epsilon]^{n}.\] %\text{ and conditional} \mu' \text{ of  }\mu.\]
\end{definition}
%\Fred{Does this recover the ``usual'' definition of spectral independence if we set $\epsilon = 0$? Isn't $\epsilon = 0$ more like the definition of complete spectral independence in Chen et al? (Perhaps later notion of complete spectral independence could be called complete spectral independence at all links?)}

%\Vishesh{Changed ``independent'' to ``dominated''. AFAIK this doesn't conflict with anything and seems like reasonable terminology.}

$(\eta,\epsilon)$-spectral domination, as defined above, implies fractional log-concavity of the homogenization over a certain convex cone in the positive orthant. As a special case, if we take $\epsilon = \infty$ then we recover fractional log-concavity over the whole positive orthant, as studied in \cite{alimohammadi2021fractionally}.
\begin{proposition}
\label{prop:spectralind-to-flc}
Let $\eta > 1/2$ and $\epsilon \geq 0$. If the distribution
$\mu: 2^{[n]} \to \R_{\geq 0}$ is $(\eta,\epsilon)$-spectrally dominated, %then $\lambda\ast \mu^{\hom}$ is $1/2\eta$-FLC when $0 \leq \lambda_i \leq \lambda_{\bar{i}} (1+\epsilon).$
then for all $0< \alpha \leq 1/2\eta$, its homogenization $\mu^{\hom}$ is $\alpha$-FLC on the region 
\begin{equation*} 
\Lambda_{\alpha, \epsilon} := \set*{(z_1,\dots,z_n, z_{\bar{1}},\dots, z_{\bar{n}}): 0\leq z_i \leq z_{\bar{i}}(1+\epsilon)^{1/\alpha} \quad\forall i \in [n]}. 
\end{equation*}
\end{proposition}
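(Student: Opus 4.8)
The plan is to relate $(\eta,\epsilon)$-spectral domination to the Hessian of $\log g_{\mu^{\hom}}$ after the substitution $z_i \mapsto z_i^\alpha$, and show that on the region $\Lambda_{\alpha,\epsilon}$ the relevant correlation matrices that govern this Hessian all arise from tilts $\lambda \ast \mu$ with $\lambda \in (0,1+\epsilon]^n$, so their top eigenvalue is at most $\eta$. First I would recall the standard dictionary (as in \cite{alimohammadi2021fractionally,anari2021entropic}) between fractional log-concavity and spectral independence: for a homogeneous multilinear polynomial, $\log g$ is concave in the $z_i^\alpha$ variables at a point if and only if a matrix of the form $\diag(q) - \alpha\, q q^{\mathsf T} + \alpha\, (\text{something like } \diag(q)\,\corMat\,\diag(q))$ — more precisely $I - \alpha(I - \corMat_{\nu})$ after a diagonal conjugation, where $\nu$ is the measure obtained by tilting/evaluating $g_{\mu^{\hom}}$ at that point — is negative semidefinite. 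The condition $\alpha \le 1/2\eta$ together with $\lambda_{\max}(\corMat_\nu) \le \eta$ then yields the needed PSD/NSD bound, exactly as in the unrestricted case; the factor $2$ and the hypothesis $\eta > 1/2$ come from controlling $\corMat_\nu \succeq -\eta I$ as well (trace/eigenvalue considerations give $\corMat_\nu \succeq -(\text{something})$, and $\eta > 1/2$ ensures $1 - \alpha(1+\eta) \ge 0$ or the analogous inequality).

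The genuinely new ingredient, compared to \cite{alimohammadi2021fractionally}, is the restriction to $\Lambda_{\alpha,\epsilon}$. Here the key step is to check that for $(z_1,\dots,z_n,z_{\bar 1},\dots,z_{\bar n}) \in \Lambda_{\alpha,\epsilon}$, when we evaluate $g_{\mu^{\hom}}$ at the point $(z_1^\alpha,\dots,z_n^\alpha, z_{\bar 1}^\alpha,\dots,z_{\bar n}^\alpha)$, the resulting normalized measure is a pinning/tilt of $\mu^{\hom}$ whose ``$\mu$-part'' is $\lambda \ast \mu$ with $\lambda_i = (z_i/z_{\bar i})^\alpha$. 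By \eqref{eqn:generating-polynomial-hom}, $g_{\mu^{\hom}}(z_1^\alpha,\dots,z_{\bar n}^\alpha) = g_\mu\bigl((z_1/z_{\bar 1})^\alpha, \dots, (z_n/z_{\bar n})^\alpha\bigr) \prod_i z_{\bar i}^\alpha$, so the measure it induces is precisely the homogenization of $\lambda \ast \mu$ with $\lambda_i = (z_i/z_{\bar i})^\alpha$. The defining inequality of $\Lambda_{\alpha,\epsilon}$, namely $z_i \le z_{\bar i}(1+\epsilon)^{1/\alpha}$, is exactly equivalent to $\lambda_i = (z_i/z_{\bar i})^\alpha \le 1+\epsilon$, i.e.\ $\lambda \in (0,1+\epsilon]^n$. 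Hence spectral domination applies and $\lambda_{\max}(\corMat_{\lambda \ast \mu}) \le \eta$; I would then also need that the correlation matrix of a homogenization is controlled by that of the original two-spin measure (a $\pm 1$ relabeling leaves the relevant spectral quantities essentially unchanged — this is a routine computation, e.g.\ $\corMat_{\nu^{\hom}}$ has the same nonzero spectrum as a block related to $\corMat_\nu$, up to the sign flip that does not affect $\lambda_{\max}$ in the way we need).

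With these two pieces in hand, the proof assembles as follows: fix $\alpha \le 1/2\eta$ and a point in $\Lambda_{\alpha,\epsilon}$; rewrite the Hessian of $\log g_{\mu^{\hom}}$ in the $z^\alpha$-coordinates at that point in terms of $\corMat$ of the induced homogenized tilt; invoke spectral domination to bound its top eigenvalue by $\eta$; and use $\alpha\eta \le 1/2$ to conclude negative semidefiniteness of the Hessian, which is the definition of $\alpha$-FLC on $\Lambda_{\alpha,\epsilon}$. The main obstacle I anticipate is bookkeeping around the homogenization: making sure the correlation matrix of $\mu^{\hom}$ (a measure on $2^{\Omega \cup \bar\Omega}$ supported on a slice) is correctly identified with the two-spin correlation structure of $\lambda \ast \mu$, and that the extra $z_{\bar i}$ coordinates do not introduce unwanted directions in the Hessian — essentially one restricts to the tangent space of the scaling action, which is standard for homogeneous polynomials but requires care to state cleanly. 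The eigenvalue arithmetic ($\eta > 1/2$, $\alpha \le 1/2\eta$) should drop out once the matrix identity is set up correctly.
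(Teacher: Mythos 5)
Your proposal matches the paper's proof essentially step for step: both reduce $\alpha$-FLC at a point $\vec v \in \Lambda_{\alpha,\epsilon}$ to bounding $\lambda_{\max}$ of the correlation matrix of a tilted homogenization, both identify via \cref{eqn:generating-polynomial-hom} that the relevant tilt is $(\lambda \ast \mu)^{\hom}$ with $\lambda_i = (v_i/v_{\bar i})^\alpha \in (0, 1+\epsilon]$ precisely because $\vec v \in \Lambda_{\alpha,\epsilon}$, and both then invoke spectral domination, delegating the Hessian-to-$\corMat$ translation and the homogenization spectral estimate to \cite[Lemmas~69 and~71]{alimohammadi2021fractionally}. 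One small point to correct when writing this up: the factor of $2$ in the hypothesis $\alpha \le 1/(2\eta)$ does not come from a lower-eigenvalue bound $\corMat_\nu \succeq -\eta I$, nor is the spectrum of $\corMat_{\nu^{\hom}}$ ``essentially unchanged'' from that of $\corMat_\nu$; rather, as in \cite[Lemma~71]{alimohammadi2021fractionally}, homogenization genuinely doubles the top-eigenvalue bound (from $\eta$ for $\corMat_{\lambda\ast\mu}$ to $2\eta$ for $\corMat_{(\lambda\ast\mu)^{\hom}}$), and $\alpha \le 1/(2\eta)$ is exactly what makes $2\eta \le 1/\alpha$ so that the Hessian criterion from \cite[Lemma~69]{alimohammadi2021fractionally} applies.
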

\begin{proof}
The proof closely follows the proofs of \cite[Lemmas~69 and 71]{alimohammadi2021fractionally}. Let  $0 < \alpha \leq 1/2\eta$, $f := g_{\mu^{\hom}}$, and $\vec{v} = (v_1,\dots, v_n, v_{\bar{1}},\dots, v_{\bar{n}}) \in \Lambda_{\alpha, \epsilon}$. We wish to show that the Hessian of $\log f (\set{z_i^\alpha, z_{\bar{i}}^\alpha})$ is negative semi-definite at $\vec{v}$. Let $\vec{1}$ denote the all-ones vector and let $D_{\vec{v}}$ denote the $2n\times 2n$ diagonal matrix whose $j^{th}$ entry, for $j \in [n]\cup \overline{[n]}$, is $v_j^{-1}$. Then, by homogeneity of $f$ we have the identity 
\[\eval{\nabla^2 \log f (\set{z_i^\alpha, z_{\bar{i}}^\alpha})}_{\vec{v}} = D_{\vec{v}} \parens*{\eval{\nabla^2 \log f^{\vec{v}} (\set{z_i^\alpha, z_{\bar{i}}^\alpha})}_{\vec{1}}} D_{\vec{v}},\]
where $f^{\vec{v}}(z_i, z_{\bar{i}}) = f(v_i^\alpha z_i, v_{\bar{i}}^\alpha z_{\bar{i}})$, so it suffices to show that $\eval{\nabla^2 \log f^{\vec{v}} (\set{z_i^\alpha, z_{\bar{i}}^\alpha})}_{\vec{1}}$ is negative semi-definite. Moreover, since
\[f^{\vec{v}} \propto g_{\vec{v}^{\alpha}\ast \mu^{\hom}},\]
it suffices to show that $H := \eval{\nabla^2 \log g_{\vec{v}^{\alpha}\ast \mu^{\hom}} (\set{z_i^\alpha, z_{\bar{i}}^\alpha})}_{\vec{1}}$ is negative semi-definite. From the proof of \cite[Lemma~69]{alimohammadi2021fractionally}, this holds provided that $\lambda_{\max}(\corMat_{\vec{v}^{\alpha}\ast \mu^{\hom}}) \leq 1/\alpha$. Finally, from the  proof of \cite[Lemma~71]{alimohammadi2021fractionally}, the  $(\eta,\epsilon)$-spectral domination of $\mu$ implies that $\lambda_{\max}(\corMat_{\vec{v}^{\alpha}\ast \mu^{\hom}}) \leq 2\eta \leq 1/\alpha$, since from \cref{eqn:generating-polynomial-hom} we have
\[ \vec{v}^{\alpha}\ast \mu^{\hom} \propto (\vec{\lambda} \ast \mu)^{\hom} \]
for $\vec \lambda = (\lambda_1, \ldots, \lambda_n)$ with
$\lambda_i := {v_i^{\alpha}}/{v_{\bar{i}}^{\alpha}} \leq (1+\epsilon)$. %$v_i^{\alpha}/v_{\bar{i}}^{\alpha} \leq (1+\epsilon)$.
\end{proof}

\subsection{Markov chains and functional inequalities}\label{sec:prelim-markov}
Let $\mu$ and $\nu$ be probability measures on a finite set $\Omega$. The Kullback-Leibler divergence (or relative entropy) between $\nu$ and $\mu$ is given by
\[\DKL{\nu \river \mu} = \sum_{x \in \Omega}\nu(x)\log\parens*{\frac{\nu(x)}{\mu(x)}},\]
with the convention that this is $\infty$ if $\nu$ is not absolutely continuous with respect to $\mu$. By Jensen's inequality, $\DKL{\nu \river \mu} \geq 0$ for any probability measures $\mu, \nu$. For later use, we record the following simple lemma. 

\begin{lemma}
\label{lem:blow up preserves KL}
Let $\mu, \nu$ be probability distributions on $2^{[n]}$, let $\vec{k} \in \N^{n}$, and let $\mu_{\vec{k}}, \nu_{\vec{k}}$ denote the corresponding blow-up distributions. Then, 
\[\DKL{\nu \river \mu} = \DKL{\nu_{\vec{k}} \river \mu_{\vec{k}}}.\]
\end{lemma}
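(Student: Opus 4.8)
The plan is to use the random-variable description of the blow-up distribution given in the excerpt. Recall that $\mu_{\vec k}$ is the law of $Y \in 2^{[k_1 + \dots + k_n]}$ obtained by first sampling $X \sim \mu$, and then, independently for each $i \in X$, choosing $j_i^\ast \in [k_i]$ uniformly at random and setting $(i, j_i^\ast) \in Y$, with $(i,j) \notin Y$ for all other $j$; and for $i \notin X$, no pair $(i, \cdot)$ appears in $Y$. In particular, the projection map $\sigma \mapsto \sigma^\ast$ of \cref{def:projection} recovers $X$ from $Y$ with probability one, so $\mu_{\vec k}$ is supported exactly on the set $\Sigma$ of $\sigma$ for which (i) $\sigma^\ast \in \supp(\mu)$ and (ii) for each $i$, $\sigma$ contains at most one pair $(i, j)$. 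The same description applies verbatim to $\nu_{\vec k}$, with the same support structure (note $\supp(\nu) \subseteq \supp(\mu)$ if $\DKL{\nu \river \mu} < \infty$, and otherwise both sides are $\infty$ after checking the easy absolute-continuity direction).

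The key computation is an explicit formula for the Radon–Nikodym derivative. For $\sigma \in \Sigma$ with $\sigma^\ast = S$, the weight $\mu_{\vec k}(\sigma)$ equals $\mu(S) \prod_{i \in S} k_i^{-1}$, since conditioned on $X = S$ each of the $\prod_{i \in S} k_i$ admissible configurations is equally likely; likewise $\nu_{\vec k}(\sigma) = \nu(S) \prod_{i \in S} k_i^{-1}$. Therefore
\[
\frac{\nu_{\vec k}(\sigma)}{\mu_{\vec k}(\sigma)} = \frac{\nu(S)}{\mu(S)} = \frac{\nu(\sigma^\ast)}{\mu(\sigma^\ast)}
\]
for every $\sigma \in \Sigma$; the combinatorial factors cancel exactly because the number of blow-up configurations projecting to a given $S$ depends only on $S$, not on $\mu$ or $\nu$. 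Plugging this into the definition of KL divergence,
\[
\DKL{\nu_{\vec k} \river \mu_{\vec k}} = \sum_{\sigma \in \Sigma} \nu_{\vec k}(\sigma) \log \frac{\nu(\sigma^\ast)}{\mu(\sigma^\ast)} = \sum_{S} \log \frac{\nu(S)}{\mu(S)} \sum_{\sigma : \sigma^\ast = S} \nu_{\vec k}(\sigma) = \sum_{S} \nu(S) \log \frac{\nu(S)}{\mu(S)},
\]
where in the last step I use that $\nu_{\vec k}$ pushed forward under the projection map is exactly $\nu$ (immediate from the random-variable description: $Y^\ast = X \sim \nu$). This final sum is $\DKL{\nu \river \mu}$, completing the proof.

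There is no real obstacle here; the only things requiring a line of care are (a) the absolutely-continuous case distinction, which is handled by observing $\supp(\nu_{\vec k})$ and $\supp(\mu_{\vec k})$ stand in the same inclusion relation as $\supp(\nu)$ and $\supp(\mu)$, so finiteness on both sides agrees, and (b) confirming that the multinomial-type normalizing factor $\prod_{i \in S} k_i^{-1}$ genuinely does not depend on the underlying measure, which is visible directly from the generating-polynomial definition of the blow-up where each $z_i$ is replaced by the average $(z_{i,1} + \dots + z_{i,k_i})/k_i$. One could alternatively phrase the whole argument via generating polynomials, but the probabilistic description makes the cancellation most transparent, so that is the route I would take.
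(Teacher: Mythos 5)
Your argument is correct and is in substance the same as the paper's: both proofs observe that the Radon–Nikodym derivative $\nu_{\vec k}/\mu_{\vec k}$ is constant on each fiber of the projection (the combinatorial factor $\prod_{i\in S} k_i^{-1}$ cancels), reducing the sum defining $\DKL{\nu_{\vec k}\river\mu_{\vec k}}$ to $\DKL{\nu\river\mu}$. Your writeup is a bit more explicit about the fiber structure and the absolute-continuity edge case, but there is no difference in approach.
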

\begin{proof}
For a $\nu$-valid configuration $\sigma \in 2^{[n]}$, let $\Sigma_{\sigma} \in 2^{[k_1 + \dots + k_n]}$ denote the collection of $\nu_{\vec{k}}$-valid configurations corresponding to $\sigma$. It is readily seen that
\begin{align*}
\sum_{\sigma' \in \Sigma_{\sigma}}\nu_{\vec{k}}(\sigma')\log\parens*{\frac{\nu_{\vec{k}}(\sigma')}{\mu_{\vec{k}}(\sigma')}} 
= \sum_{\sigma' \in \Sigma_{\sigma}}\nu_{\vec{k}}(\sigma')\log\parens*{\frac{\nu(\sigma)}{\mu(\sigma)}} = \nu(\sigma)\log\parens*{\frac{\nu(\sigma)}{\mu(\sigma)}}.  
\end{align*}
Summing over $\sigma$ gives the desired conclusion. 
\end{proof}
For a probability measure $\mu$ and a nonnegative random variable $f$, we define the (relative) entropy functional
\[ \Ent_{\mu}[f] = \E_{\mu}{f \log f} - \E_{\mu}{f} \log \E_{\mu}{f}, \]
noting that for a probability measure $\nu$, $\Ent_{\mu}\bracks*{\frac{d\nu}{d\mu}} = \DKL{\nu \river \mu}$. 

The total variation distance between $\mu$ and $\nu$ is given by
\[\DTV{\mu, \nu} = \frac{1}{2}\sum_{x \in \Omega}\abs{\mu(x) - \nu(x)}.\]

A Markov chain (see \cite{levin2017markov} for a detailed introduction) on $\Omega$ is specified by a row-stochastic non-negative transition matrix $P \in \R^{\Omega \times \Omega}$. We will view probability distributions on $\Omega$ as row vectors. A transition matrix $P$ is said to be reversible with respect to a distribution $\mu$ if for all $x,y \in \Omega$, $\mu(x)P(x,y) = \mu(y)P(y,x)$. In this case, it follows immediately that $\mu$ is a stationary distribution for $P$, i.e., $\mu P = \mu$. If $P$ is further assumed to be ergodic, then $\mu$ is its unique stationary distribution, and for any probability distribution $\nu$ on $\Omega$, $\DTV{\nu P^{t}, \mu} \to 0$ as $t \to \infty$. 

\begin{definition}[Mixing time]
Let $P$ be an ergodic Markov chain on a finite state space $\Omega$ and let $\mu$ denote its (unique) stationary distribution. For any probability distribution $\nu$ on $\Omega$ and $\epsilon \in (0,1)$, we define
\[t_\mix(P, \nu, \epsilon) = \min\set{t\geq 0 \given \DTV{\nu P^{t}, \mu}\leq \epsilon},\]
and
\[t_\mix(P,\epsilon) = \max\set*{t_\mix(P, \1_x, \epsilon)\given x\in \Omega},\]
where $\1_{x}$ is the point mass distribution supported on $x$. 

By replacing $\DTV{\cdot, \mu}$ with $\DKL{\cdot \river \mu}$ in the above definitions, we can analogously define $t^{\KL}_\mix(P,\nu, \epsilon)$ and $t_{\mix}^{\KL}(P,\epsilon)$. 
\end{definition}

We will drop $P$ and $\nu$ if they are clear from context. Moreover, if we do not specify $\epsilon$, then it is set to $1/4$. This is because the growth of $t_{\operatorname{mix}}(P,\epsilon)$ is at most logarithmic in $1/\epsilon$ (see \cite{levin2017markov}). Pinsker's inequality allows us to translate mixing in KL-divergence to mixing in total variation distance, i.e.,
\begin{equation*}
    t_\mix(P,\nu, \epsilon) \leq t_\mix^{\KL}(P, \nu, 2\epsilon^2).
\end{equation*}

We use the following standard notation for the \emph{Dirichlet form} of a Markov chain $P$ with stationary measure $\mu$:
\[
\mathcal{E}_P(f,g)=\angles{f, (I-P)g}_{L_2(\mu)} = \frac{1}{2} \sum_{x,y} \mu(x)P(x,y)(f(x)-f(y))(g(x)-g(y)).
\]

\begin{definition}[Restricted entropy contraction]
Let $P$ denote the transition matrix of an ergodic, reversible Markov chain on $\Omega$ with stationary distribution $\mu$. Given $\mathcal{V}$, a family of probability distributions on $\Omega$, the restricted entropy contraction constant of $P$ (with respect to $\mathcal{V}$) is defined to be
\[\alpha_0(P, \mathcal{V}) = \sup\set*{\alpha \in \R\given \DKL{\nu P\river \mu P} \leq (1-\alpha)\DKL{\nu\river \mu}\text{ for all }\nu \in \mathcal{V}}.\]

When $\mathcal{V}$ is the family of all probability distributions on $\Omega$, we will refer to $\alpha_0(P, \mathcal{V})$ as the entropy contraction constant of $P$ and denote it simply by $\alpha_0(P)$. 
\end{definition}
Note that by the data processing inequality, $\alpha_0(P, \mathcal{V}) \geq 0$. Moreover, the following relationship between the modified log-Sobolev constant of $P$ (see \cite{bobkov2006modified} for an introduction), defined by 
\[\rho_0(P) := \inf\set*{\frac{\mathcal{E}_P(f, \log f)}{2\Ent_{\mu}[f]}: f: \Omega \to \R_{>0}, \Ent_{\mu}[f] > 0} ,\] 
and the entropy contraction constant is well-known (see, e.g., \cite{anari2021entropic}):
\begin{equation*}
    \rho_0(P) \geq 2\alpha_0(P).
\end{equation*}

\begin{lemma}\label{lem:MLSI to mixing}
Let $P$ denote the transition matrix of an ergodic, reversible Markov chain on $\Omega$ with stationary distribution $\mu$. Let $\mathcal{V}$ denote a family of probability distributions on $\Omega$ which is closed under $P$. Then, for any $\nu \in \mathcal{V}$ and for any $\epsilon \in (0,1)$,
\[t_\mix^{\KL}(P,\nu, \epsilon) \leq  \ceil*{\alpha_0(P, \mathcal{V})^{-1}\cdot \parens*{\log\log\max\set*{\parens*{\frac{\nu(x)}{\mu(x)}}\given x\in \Omega} + \log\parens*{\frac{1}{\epsilon}}} }.\]
\end{lemma}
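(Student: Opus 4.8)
The plan is the standard iteration of a one-step entropy contraction, combined with a crude a priori bound on the initial relative entropy. Write $\alpha = \alpha_0(P, \mathcal{V})$; we may assume $\alpha > 0$, since otherwise the claimed bound is vacuous.

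First I would observe that, because $\mathcal{V}$ is closed under $P$ and $\mu P = \mu$ (as $\mu$ is stationary), the defining inequality of $\alpha_0(P,\mathcal{V})$ applies to every iterate $\nu P^{s}$: for all $s \ge 0$ and all $\nu \in \mathcal{V}$,
\[ \DKL{\nu P^{s+1} \river \mu} \le (1-\alpha)\, \DKL{\nu P^{s} \river \mu}. \]
(If the supremum defining $\alpha_0$ is not attained, one runs this with any $\alpha' < \alpha$ and lets $\alpha' \to \alpha$ at the end; if $\DKL{\nu \river \mu} = \infty$ there is nothing to prove.) Iterating gives $\DKL{\nu P^{t} \river \mu} \le (1-\alpha)^{t}\, \DKL{\nu \river \mu}$ for every $t \ge 0$. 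Next I would bound the starting entropy: with $M := \max\set*{\nu(x)/\mu(x) \given x \in \Omega}$,
\[ \DKL{\nu \river \mu} = \sum_{x} \nu(x) \log\frac{\nu(x)}{\mu(x)} \le \sum_x \nu(x) \log M = \log M, \]
so that $\DKL{\nu P^{t} \river \mu} \le (1-\alpha)^{t} \log M$.

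It therefore suffices to pick $t$ with $(1-\alpha)^{t} \log M \le \epsilon$. Taking logarithms and using $-\log(1-\alpha) \ge \alpha$, this is implied by $\alpha t \ge \log\log M + \log(1/\epsilon)$; hence any integer $t \ge \alpha^{-1}\parens*{\log\log M + \log(1/\epsilon)}$ works, in particular $t = \ceil*{\alpha^{-1}\parens*{\log\log M + \log(1/\epsilon)}}$, which is exactly the asserted bound on $t_\mix^{\KL}(P,\nu,\epsilon)$.

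The argument is essentially routine; the only points needing a little care are degenerate cases — if $M = 1$ then $\nu = \mu$ and $t_\mix^{\KL}(P,\nu,\epsilon) = 0$, and if $\log M \le 1$ one may simply use $\DKL{\nu\river\mu} \le 1$ and proceed identically — and the elementary inequality $-\log(1-\alpha)\ge\alpha$. I do not anticipate any genuine obstacle: the content of the lemma is the iteration in the first step, which relies only on the stated closure of $\mathcal{V}$ under $P$ together with stationarity of $\mu$.
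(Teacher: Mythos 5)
Your proof is correct and follows the paper's argument essentially verbatim: iterate the one-step contraction using closure of $\mathcal{V}$ under $P$ and stationarity of $\mu$, bound the initial entropy by $\DKL{\nu\river\mu}\le\log\max_x(\nu(x)/\mu(x))$, and apply $1-x\le e^{-x}$. The extra remarks about attainability of the supremum and degenerate cases are harmless but unnecessary; the core steps coincide with the paper's proof.
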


\begin{proof}
By the definition of the restricted entropy contraction constant and the closure of $\mathcal{V}$ under $P$, it follows that for any non-negative integer $t\geq 0$,
\begin{align*}
    \DKL{\nu P^{t+1}\river \mu P^{t+1}} = \DKL{(\nu P^{t})P\river \mu P} \leq (1-\alpha_0(P,\mathcal{V}))\DKL{\nu P^{t}\river \mu},
\end{align*}
from which it follows immediately that for all non-negative integers $t$,
\[\DKL{\nu P^{t}\river \mu} \leq (1-\alpha_0(P,\mathcal{V}))^t \DKL{\nu\river \mu}. \]
The claimed bound now follows by using the numerical inequality $1-x \leq e^{-x}$ and noting that $\DKL{\nu\river \mu} \leq \log\parens*{\max_{x\in \Omega}\frac{\nu(x)}{\mu(x)}}$.  
\end{proof}
We will also need the notion of \emph{approximate tensorization of entropy} as defined in, e.g., 
\cite{caputo2015approximate}. 
\begin{definition}[Approximate tensorization of entropy]
\label{def:approx-tensorization}
Let $\mu$ be a probability measure on the product space $\Omega = \Omega_1 \times \cdots \times \Omega_n$. We say that $\mu$ satisfies approximate tensorization of entropy with constant $C$ %holds for probability measure $\mu$ on measurable space $\Omega = \Omega_1 \times \cdots \times \Omega_n$ 
if for any positive measurable function $f$,
\[ \Ent_{\mu}(f) \le C \sum_{v = 1}^n \E_{\mu}{\Ent_{v}(f)} \]
%\Fred{does this conflict with lift notation, figure out a consistent notation}
where 
\[ \Ent_k(f) := \Ent_{\mu(\sigma_k = \cdot \mid \sigma_{\sim k})}(f) \]
%$\mu_k$ 
is the entropy functional with respect to the conditional measure of $\sigma_k \in \Omega_k$ given $\sigma_j \in \Omega_j$ for all $j \ne k$ and for $\sigma \sim \mu$. We also say the pair $(\mu,f)$ satisfy approximate tensorization of entropy if the inequality holds for this particular function $f$. 
\end{definition}

It is known that if $\mu$ satisfies approximate tensorization of entropy with constant $C$, then $\mu$ satisfies the modified log-Sobolev inequality with constant $1/2Cn$ (see, e.g. \cite[Fact~3.5]{chen2021optimal}), where the factor of $2$ %and $n$ 
comes from the different conventions we are using for $\rho_0(P)$. We remark that approximate tensorization of entropy is equivalent to entropy contraction of the down operator (see \cref{def:down-operator}), the latter notion being defined for more general distributions beyond those supported on product spaces \cite{anari2021entropic}. 

%and time versus \cite{caputo2015approximate}. \Fred{think this is right but double-check this is where the usual factor of $1/n$ comes from?} 

%Such a bound implies the Modified Log-Sobolev Inequality with the same constant (Proposition 1.1 of \cite{caputo2015approximate}). 

We introduce some more notation for later use: in the setting of \cref{def:approx-tensorization}, for $S \subset [n]$, we will use the notation $\Ent_S(f)$ for the entropy functional with respect to the conditional measure of $\sigma_S$ given $\sigma_{[n] \setminus S}$ (in particular, $\Ent_S(f)$ is a function of $\sigma_{[n]\setminus S})$; this is useful when considering ``block versions'' of approximate tensorization.

%\June{todo: add entropy approximation factor}
%\Vishesh{What does this mean?}

\subsection{Complete spectral independence}

We record a variant of the definition of complete spectral independence that was instrumental in obtaining Poincare inequalities in the prior work of \textcite{chen2021rapid}.

\begin{definition}
For $\eta, \epsilon \geq 0$, a distribution $\mu: 2^{[n]} \to \R_{\geq 0}$ is said to be completely $(\eta, \epsilon)$-spectrally independent if for all $\Lambda \subseteq [n]$ with $\card{\Lambda} \leq n-2$, and for all valid partial configurations $\sigma_{\Lambda} \in 2^{\Lambda}$, the conditional distribution $\mu^{\sigma_{\Lambda}}$ on $2^{[n]\setminus \Lambda}$ is $(\eta, \epsilon)$-spectrally dominated. 
\end{definition}

\begin{proposition}
\label{prop:blowup preserve complete spectral ind}
For $\eta\geq 1$ and $\epsilon \geq 0$, if the distribution $\mu: 2^{[n]} \to \R_{\geq 0}$ is completely $(\eta,\epsilon)$-spectrally independent, then so is the distribution $\mu_{\vec{k}} : 2^{[k_1 + \dots + k_n]} \to \R_{\geq 0}$ for any $\vec{k} \in \N^n.$
\end{proposition}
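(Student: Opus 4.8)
The plan is to reduce the claim for $\mu_{\vec k}$ to the corresponding claim for conditionals of $\mu$, using \cref{lem:conditional of blow up}, and then to invoke the definition of complete spectral independence for $\mu$ together with the stability of spectral domination under the relevant operations. Concretely, complete $(\eta,\epsilon)$-spectral independence of $\mu_{\vec k}$ requires that for every partial configuration $\tau$ of $\mu_{\vec k}$ on a subset $\Lambda' \subseteq [k_1 + \dots + k_n]$ with $\card{\Lambda'} \le (k_1 + \dots + k_n) - 2$, the pinned distribution $(\mu_{\vec k})^{\tau}$ is $(\eta,\epsilon)$-spectrally dominated. So the first step is to understand what such a pinned distribution looks like.

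First I would observe that pinning coordinates of $\mu_{\vec k}$ and then forming further blow-ups all commute appropriately: iterating \cref{lem:conditional of blow up}, conditioning $\mu_{\vec k}$ on a set of sites $(i,j)$ being included or excluded yields, up to reindexing, a blow-up $(\lambda \ast \mu^{\sigma_\Lambda})_{\vec k'}$ of some pinned-and-tilted version of $\mu$, where $\sigma_\Lambda$ is the induced pinning on the corresponding subset $\Lambda \subseteq [n]$, the external field $\lambda$ lies in $[0,1]^n$ (each $\lambda_i$ is either $1$ or of the form $(k_i - \ell)/k_i \le 1$ coming from the ``excluded'' cases), and $\vec k'$ is the vector of remaining multiplicities. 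The key combinatorial point to check here is the size bound: if $\card{\Lambda'} \le (k_1 + \dots + k_n) - 2$, then the induced pinning leaves at least one ``free'' original coordinate $i$ with at least one unpinned copy, and in fact we can arrange $\card{\Lambda} \le n - 1$ — but we need $\card{\Lambda} \le n-2$ to apply complete spectral independence of $\mu$ directly, OR we need to handle the $\card{\Lambda} = n-1$ case separately, where $\mu^{\sigma_\Lambda}$ is a distribution on a single coordinate and $\corMat$ is trivially controlled (it is $1\times 1$ with entry $1 - \P{i} \le 1 \le \eta$; applying a field in $(0,1+\epsilon]$ keeps this a $1\times 1$ matrix with entry in $[0,1]$).

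Next, granting that the pinned distribution of $\mu_{\vec k}$ is of the form $(\lambda \ast \mu^{\sigma_\Lambda})_{\vec k'}$ with $\lambda \in [0,1]^n$ and $\mu^{\sigma_\Lambda}$ $(\eta,\epsilon)$-spectrally dominated, I would show that (i) tilting a spectrally dominated distribution by a field $\lambda \in (0,1]^n$ preserves $(\eta,\epsilon)$-spectral domination — this is immediate from the definition, since $(0,1+\epsilon]^n$ is closed under coordinatewise multiplication by entries of $[0,1]^n$, so every field reachable after the tilt was already reachable before — and (ii) the blow-up operation $\nu \mapsto \nu_{\vec k'}$ preserves $(\eta,\epsilon)$-spectral domination. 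For (ii) the mechanism is that a tilted blow-up $\lambda' \ast \nu_{\vec k'}$ can, via the generating-polynomial identities, be related to a blow-up of a tilted copy of $\nu$, so control of $\corMat$ for all small fields on $\nu$ transfers to control of $\corMat$ for all small fields on $\nu_{\vec k'}$; one has to check the correlation-matrix eigenvalue bound is preserved, which should follow from the explicit relationship between correlations in a blow-up and correlations in the base distribution (two copies $(i,j),(i,j')$ are perfectly anti-correlated, copies across different base coordinates inherit the base correlation scaled by a factor $\le 1$), so $\corMat_{\nu_{\vec k'}}$ is a suitable substochastic-type ``lift'' of $\corMat_\nu$ with no larger top eigenvalue.

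The main obstacle I anticipate is step (ii): verifying that blow-ups preserve spectral domination — i.e., that $\lambda_{\max}(\corMat_{\lambda' \ast \nu_{\vec k'}}) \le \eta$ for all $\lambda' \in (0,1+\epsilon]$ given the analogous bound for $\nu$. The subtlety is that an external field $\lambda'$ on the blown-up space assigns possibly different values to the copies $(i,1),\dots,(i,k_i)$, so $\lambda' \ast \nu_{\vec k'}$ need not itself be a blow-up of a tilted $\nu$; one must either argue that the worst case is when $\lambda'$ is constant on each fiber (a convexity/symmetrization argument), or directly analyze the correlation matrix of a field-tilted blow-up. I would expect this to be exactly the content that \textcite{chen2021rapid} established (their blow-up was introduced precisely to make field dynamics tractable), so I would look to adapt their computation; absent that, the cleanest route is to write $\corMat_{\lambda' \ast \nu_{\vec k'}}$ blockwise and bound its spectral radius by that of $\corMat$ of an appropriate tilt of $\nu$ using the fact that each $k_i \times k_i$ diagonal block is a rank-one perturbation of a multiple of the identity (reflecting the hard within-fiber anti-correlation). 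The remaining steps are routine bookkeeping with \cref{lem:conditional of blow up} and the definitions.
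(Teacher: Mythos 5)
Your outline follows the same structure as the paper's proof: reduce via \cref{lem:conditional of blow up} to the claim that a blow-up of a spectrally dominated distribution is spectrally dominated, noting along the way that tilting by a field in $(0,1]^n$ preserves $(\eta,\epsilon)$-spectral domination (since $(0,1]^n$ multiplied coordinatewise into $(0,1+\epsilon]^n$ stays inside $(0,1+\epsilon]^n$) and that the $\card{\Lambda}=n-1$ corner case is trivial because the correlation ``matrix'' is $1\times 1$ with entry at most $1\le \eta$. You also put your finger on the genuine difficulty: for an arbitrary field $\lambda'$ on the blown-up ground set, the distribution $\lambda' \ast \mu_{\vec k}$ is \emph{not} itself a blow-up of any tilt of $\mu$, so one cannot just push the problem back down.

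The gap is that you stop exactly there. You propose two possible resolutions --- a symmetrization argument showing the worst field is fiber-constant, or a direct blockwise eigenvalue analysis --- but execute neither, and that computation is the real content of the proposition. The paper takes (a sharper form of) the second route. It reduces to blowing up one coordinate at a time, i.e.~$k_2=\dots=k_n=1$, and then exhibits an explicit eigenbasis of $\corMat_{\lambda'\ast\mu_{\vec k}}$: setting $\lambda_i := \frac{1}{k_i}\sum_j \lambda'_{i,j}$ (the fiber-averaged field), every eigenvector $\vec v$ of $\corMat_{\lambda\ast\mu}$ with eigenvalue $\rho$ lifts, by duplicating the first coordinate $k_1$ times, to an eigenvector of $\corMat_{\lambda'\ast\mu_{\vec k}}$ with the \emph{same} eigenvalue $\rho$, and the remaining $k_1-1$ eigenvectors (supported on the fiber, orthogonal to $(\lambda'_{1,j})_j$) all have eigenvalue exactly $1$, reflecting the hard within-fiber exclusion. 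This yields the key inequality $\lambda_{\max}(\corMat_{\lambda'\ast\mu_{\vec k}}) \le \max(1, \lambda_{\max}(\corMat_{\lambda\ast\mu}))$ for \emph{every} field $\lambda'$, not just fiber-constant ones, and explains the role of the hypothesis $\eta\ge 1$: it is exactly what absorbs those extra eigenvalues equal to $1$. Your proposal neither derives this averaged-field identity nor notes why $\eta\ge 1$ is required, so as written it is a correct plan rather than a proof; to complete it you would need to supply that spectral computation (or prove the convexity you gesture at in the symmetrization alternative).
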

\begin{proof}
Let $\mu' : = \mu_{\vec{k}}$ and let $\lambda' \in \R_{\geq 0}^{k_1+\dots+k_n}$. 
% We will bound the spectral radius $\corMat_{\lambda' \ast \mu'}$ using $\corMat_{\lambda \ast \mu}$ with $\lambda_i = \frac{1}{k_i}\sum_{j=1}^{k_i} \lambda'_{i,j}.$ 
% With similar arguments, we can prove spectral independence of scaling of conditionals of $\mu'$ using the appropriate scaling of conditionals of $\mu.$ First, note that by \cref{lem:conditional of blow up}, all conditionals of $\mu'$ are blow-up of conditionals of scale-down of $\mu.$ 
We will first show that \begin{equation} \label{ineq:blowup eigenvalues}
\lambda_{\max} \parens{\corMat_{\lambda' \ast \mu'}} \leq  \max \parens*{1, \lambda_{\max} \parens{\corMat_{\lambda \ast \mu}}}
\end{equation} where $ \lambda_i = \frac{1}{k_i}\sum_{j=1}^{k_i} \lambda'_{i,j}.$ In particular, for $\lambda' \in (0,1+\epsilon]^{\sum_{i=1}^n k_i},$ we have that $\lambda \in (0,1+\epsilon]^n,$ so that by the $(\eta, \epsilon)$-spectral domination of $\mu$ and the assumption that $\eta \geq 1$, we see that   $\lambda_{\max} \parens{\corMat_{\lambda' \ast \mu'}} \leq \eta$ for all such $\lambda'$. 

Since $\mu_{k_1,\dots, k_n} = (\mu_{k_1,1,\dots,1})_{1,k_2,\dots, k_n}$ and so on, it suffices to prove \cref{ineq:blowup eigenvalues} for $k_2=\dots = k_n = 1$. Let $\Psi$ and $\Psi'$ be the correlation matrix of $\lambda \ast \mu$ and $\lambda' \ast \mu'.$ 
Note that $\forall i\neq 1$ and distinct $j_1, j_2 \in [k_1]:$
\begin{align*}
    &\P_{\lambda' \ast \mu'}{(1,j_1)} = \frac{\lambda'_{1,j_1}}{k_1 \lambda_1}\P_{\lambda\ast\mu}{1}    \text{ and } \P_{\lambda' \ast \mu'}{(1,j_1) \given (1,j_2)} = 0;\\
    & \P_{\lambda' \ast \mu'}{(1,j_1) \given i} = 
    \frac{\lambda'_{1,j_1}}{k_1 \lambda_1 }\P_{\lambda \ast \mu} { 1 \given i} \text{ and }
    \P_{\lambda' \ast \mu'}{i } = \P_{\lambda \ast \mu}{i} \text{ and } \P_{\lambda' \ast \mu'}{i \given (1,j_1)} = \P_{\lambda\ast \mu}{i \given 1}. 
\end{align*}
% Thus we can write
% \[
% \Psi'=\bracks{
% \begin{array}{cc}
% I - \frac{\P_{\mu}{1}}{k_1}\textbf{J} &\tempb \\ \hline
% C &\tempd
% \end{array}}\]
By definition, $\Psi = \diag^{-1}\parens*{\sqrt{\P_{\lambda \ast \mu}{i}}} \cdot A\cdot  \diag\parens*{\sqrt{\P_{\lambda \ast \mu}{i}}}$ for a symmetric matrix $A$. 
Thus, $\Psi$ has an orthogonal basis  with respect to the inner product $\langle \cdot, \diag(\P_{\lambda \ast 
\mu}{i}) \cdot \rangle$ of eigenvectors $\vec{v}^1, \dots, \vec{v}^n$ 
with corresponding eigenvalues $\rho_1 \geq \rho_2 \geq \dots \geq \rho_n.$ 
It is straightforward to check that for each such $\vec{v}^t = (v^t_1,\dots, v^t_n)$, the vector \[\vec{w}^t : =(\underbrace{v^t_1, \dots, v^t_1}_{k_1 \text{ times}}, v^t_2, \dots, v^t_n)\]
is an eigenvector of $\Psi'$ with eigenvalue $\rho_t.$ Indeed, for any $j \in [k_1]$ and $i \geq 2,$ we have:
\begin{align*}
  \Psi'[(1,j), \cdot] \vec{w}^t &= v_1^t - \sum_{j'=1}^{k_1} \P_{\lambda' \ast \mu'}{(1,j')}  v_1^t + \sum_{i\neq 1} \Psi'_{(1,j),i} v^t_i = v_1^t \parens*{1-  \frac{\sum_{j'}\lambda'_{1,j'}}{k_1 \lambda_1}\P_{\lambda\ast\mu}{1}} +  \sum_{i\neq 1} \Psi_{1,i} v^t_i \\
  &= \sum_{i=1}^n \Psi_{1,i} v^t_i = \rho_t v^t_1;\\
  \Psi'[i, \cdot] \vec{w}^t &= \sum_{j'=1}^{k_1} \Psi'_{i,(1,j')}  v_1^t + \sum_{i'\neq 1} \Psi'_{i,i
 '} v^t_{i'} = v_1^t  \frac{\sum_{j'}\lambda'_{1,j'}}{k_1 \lambda_1}( \P_{\lambda\ast\mu}{1 \given i} - \P_{\lambda\ast\mu}{1}) +  \sum_{i'\neq 1} \Psi_{i,i'} v^t_{i'} \\
  &= \sum_{i'=1}^n \Psi_{i,i'} v^t_{i'} = \rho_t v^t_i.\\
\end{align*}
In addition, it is easily checked that $\Psi'$ has eigenvectors $\vec{w}^{n+t}:= (u^t_{1}, \dots, u^t_{k_1},\underbrace{0, \dots, 0}_{n-1 \text{ times}} ) $ with eigenvalue $ 1,$ where $\set*{\vec{u}^t}_{t=1}^{k_1-1}$ forms an orthogonal basis with respect to the inner product $\langle \cdot, \diag(\lambda'_{1,j})_{j=1}^{k_1} \cdot \rangle$ of the vector space $\set*{\vec{u} \in \R^{k_1} \given\vec{u} \perp (\lambda'_{1,j})_{j=1}^{k_1}}.$  Finally, observe that $\set*{\vec{w}^t}_{t=1}^{n+k_1-1}$ forms an orthogonal basis of eigenvectors of $\Psi'$ with respect to the inner product $\langle \cdot, \diag(\P_{\lambda \ast \mu}{i}\frac{\lambda'_{i,j}}{k_i \lambda_i}) \cdot \rangle$. Therefore, the spectrum of $\Psi'$ is
\[\set*{\rho_1, \dots, \rho_n} \cup \set*{1^{(k_1-1)}},\]
from which it follows that $\lambda_{\max}(\Psi') \leq \max(\rho_1, 1)$, as desired. 

So far, we have shown that the $(\eta, \epsilon)$-spectral domination of $\mu$ implies that $\mu_{\vec{k}}$ is also $(\eta, \epsilon)$-spectrally dominated for all $\vec{k} \in \N^{n}_{\geq 1}$. To upgrade this to complete spectral independence, we simply apply this result to the conditionals of $\set{\lambda \ast \mu\given \lambda \in (0,1]^{n}}$  and recall that, by \cref{lem:conditional of blow up}, the conditionals of a blow-up are blow-ups of conditionals of $\lambda \ast \mu$ for some $\lambda \in (0,1]^{n}$. 
% by \cref{thm:correlation matrix vs FLC} applied to $\mu.$%, $\lambda_1 \leq \frac{1}{\alpha}.$
%In particular, the first $k_1$ rows (resp. columns) of $\Psi'-I$ are identical, 
%relate $\corMat_{\lambda' \ast \mu'}$ to $\corMat_{\lambda \ast \mu}$ with $\lambda_i = \frac{1}{k_i}\sum_{j=1}^{k_i} \lambda'_{i,j}.$
\end{proof}

% \subsection{Spin systems}

% \begin{definition}
% For $b > 0$, a distribution $\mu$ on $\set{\pm 1}^{[n]}$ is said to be $b$-marginally bounded if for all every proper subset $\Lambda \subseteq V$, for every valid partial configuration $\sigma_{\Lambda}$ and for every $v \in V\setminus \Lambda$, 
% \[\text{either } \mu^{\sigma_{\Lambda}}(v=+1)\mu^{\sigma_{\Lambda}}(v=-1) = 0 \quad \text{ or } \quad \min\set{\mu^{\sigma_{\Lambda}}(v=+1),\mu^{\sigma_{\Lambda}}(v=-1)} \geq b.\]
% \end{definition}

\subsection{Down-up walks and field dynamics}
\begin{definition}[Down operator]\label{def:down-operator}
	For a ground set $\Omega$, and  $\card{\Omega} \geq k\geq \l$,  the down operator $D_{k\to \l}\in \R^{\binom{\Omega}{k}\times \binom{\Omega}{\l}}$ is defined to be
	\[ 
		D_{k\to \l}(S, T)=\begin{cases}
			\frac{1}{\binom{k}{\l}}&\text{ if }T\subseteq S,\\
			0&\text{ otherwise}.\\
		\end{cases}
	\]
\end{definition}
Note that $D_{k\to \l}D_{\l\to m}=D_{k\to m}$. 

%Note that the down operator has no dependence on $\mu$. In contrast the up operator as defined below depends on $\mu$ and is actually designed to be the time-reversal of the down operator w.r.t.\ the background measure $\mu$.
\begin{definition}[Up operator]
	For a ground set $\Omega$, $\card{\Omega}\geq k\geq \l$, and density $\mu:\binom{\Omega}{k}\to \R_{\geq 0}$, the up operator $U_{\l \to k}\in \R^{\binom{\Omega}{\l}\times \binom{\Omega}{k}}$ is defined to be
	\[ 
		U_{\l\to k}(T, S)=\begin{cases}
			\frac{\mu(S)}{\sum_{S'\supseteq T}\mu(S')}&\text{ if }T\subseteq S,\\
			0&\text{ otherwise}.\\
		\end{cases}
	\]
\end{definition}

\begin{remark}
Note that, unlike the down operator, the up operator depends on the underlying density $\mu$. 
\end{remark}

If we define $\mu_k=\mu$ and more generally let $\mu_\l$ be $\mu_k D_{k\to \l}$, then the down and up operators satisfy
\[ \mu_k(S)D_{k\to \l}(S, T)=\mu_\l(T)U_{\l \to k}(T, S), \]
In other words, $U_{\l \to k}$ is the time-reversal of $D_{k\to \l}$ with respect to the $\mu_k$ and $\mu_\l$ measures. This property ensures that the composition of the down and up operators have the appropriate $\mu_{i}$ as a stationary distribution, are reversible, and have nonnegative real eigenvalues.

\begin{definition}[Down-up walk]
	For a ground set $\Omega$, $\card{\Omega}\geq k\geq \l$, and density $\mu:\binom{\Omega}{k}\to \R_{\geq 0}$, the $k\leftrightarrow \l$ down-up walk is defined by the row-stochastic matrix $D_{k\to \l} U_{\l \to k}$.
\end{definition}

\begin{proposition}[{\cite[see, e.g.,][]{KO18,AL20,ALO20}}]
	The operators $D_{k\to \l}U_{\l\to k}$ and $U_{\l\to k}D_{k\to \l}$ both define Markov chains that are time-reversible and have nonnegative eigenvalues. Moreover $\mu_k$ and $\mu_\l$ are respectively their stationary distributions.
\end{proposition}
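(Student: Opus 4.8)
The plan is to derive everything from the single displayed identity
\[ \mu_k(S)D_{k\to \l}(S, T)=\mu_\l(T)U_{\l \to k}(T, S) \]
recorded just above, which expresses that $U_{\l\to k}$ is the time-reversal of $D_{k\to\l}$. Equip the space of real functions on $\binom{\Omega}{k}$ with the weighted inner product $\langle f,g\rangle_{\mu_k} := \sum_{S}\mu_k(S)f(S)g(S)$, and likewise the functions on $\binom{\Omega}{\l}$ with $\langle\cdot,\cdot\rangle_{\mu_\l}$; view $D_{k\to\l}$ and $U_{\l\to k}$ as operators acting on these function spaces by $(D_{k\to\l}f)(S)=\sum_{T}D_{k\to\l}(S,T)f(T)$ and similarly for $U_{\l\to k}$. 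First I would check, by substituting the reversal identity into a one-line computation, that $D_{k\to\l}$ and $U_{\l\to k}$ are adjoint to one another across these two inner products, i.e.\ $\langle D_{k\to\l}f,\,g\rangle_{\mu_k}=\langle f,\,U_{\l\to k}g\rangle_{\mu_\l}$ for all $f,g$. I would also note at the outset that both $D_{k\to\l}$ and $U_{\l\to k}$ are row-stochastic with nonnegative entries (the row sums are $1$ directly from the definitions), so their products $D_{k\to\l}U_{\l\to k}$ and $U_{\l\to k}D_{k\to\l}$ are genuine transition matrices.

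Given the adjoint relation, write $P := D_{k\to\l}U_{\l\to k}$; since $D_{k\to\l}=U_{\l\to k}^{\ast}$ we have $P = U_{\l\to k}^{\ast}U_{\l\to k}$. Hence $P$ is self-adjoint with respect to the positive-definite inner product $\langle\cdot,\cdot\rangle_{\mu_k}$, and it is positive semidefinite there since $\langle Pf,f\rangle_{\mu_k}=\langle U_{\l\to k}f,\,U_{\l\to k}f\rangle_{\mu_\l}\ge 0$. Self-adjointness with respect to a positive inner product forces $P$ to be diagonalizable with real eigenvalues, and positive semidefiniteness makes those eigenvalues nonnegative; moreover, unwinding self-adjointness on the standard basis gives exactly the detailed-balance equations $\mu_k(S)P(S,S')=\mu_k(S')P(S',S)$, i.e.\ $P$ is reversible with respect to $\mu_k$, which in turn implies $\mu_k P=\mu_k$. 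The identical argument with the roles of $k$ and $\l$ interchanged applies to $Q := U_{\l\to k}D_{k\to\l}=D_{k\to\l}^{\ast}D_{k\to\l}$, giving reversibility with respect to $\mu_\l$, stationarity of $\mu_\l$, and nonnegative real eigenvalues; alternatively one can transfer the eigenvalue statement from $P$ to $Q$ using the general fact that $AB$ and $BA$ share the same nonzero spectrum.

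There is no genuine obstacle here: the content is entirely contained in the reversal identity, which is already established, and the rest is a formal consequence. The only point requiring a little care is the bookkeeping in the adjointness step, since $D_{k\to\l}$ and $U_{\l\to k}$ map between two different spaces $\binom{\Omega}{k}$ and $\binom{\Omega}{\l}$ carrying different weighted inner products, and one must keep track of which measure weights which side; once that is set up correctly, reversibility, stationarity, and nonnegativity of the spectrum all drop out immediately from the factorizations $P=U_{\l\to k}^{\ast}U_{\l\to k}$ and $Q=D_{k\to\l}^{\ast}D_{k\to\l}$.
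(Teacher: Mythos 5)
Your proof is correct, and it is the standard argument from the cited literature; the paper itself does not reprove this proposition but instead cites \cite{KO18,AL20,ALO20}, which proceed exactly via the observation that the time-reversal identity $\mu_k(S)D_{k\to\l}(S,T)=\mu_\l(T)U_{\l\to k}(T,S)$ makes $D_{k\to\l}$ and $U_{\l\to k}$ adjoint across $L^2(\mu_k)$ and $L^2(\mu_\l)$, so that $D_{k\to\l}U_{\l\to k}=U_{\l\to k}^{\ast}U_{\l\to k}$ and $U_{\l\to k}D_{k\to\l}=D_{k\to\l}^{\ast}D_{k\to\l}$ are self-adjoint and positive semidefinite in their respective inner products, which is precisely reversibility, stationarity, and nonnegativity of the spectrum. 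The only point worth making explicit (you implicitly assume it) is that the inner products $\langle\cdot,\cdot\rangle_{\mu_k}$ and $\langle\cdot,\cdot\rangle_{\mu_\l}$ are positive definite only on the supports of $\mu_k$ and $\mu_\l$; one therefore restricts the operators to those supports, which is harmless since the walks never leave them.
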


The following novel Markov chain, called the field dynamics, was introduced in recent work of \textcite{chen2021rapid}.

\begin{definition}[Field dynamics]
\label{def:field-dynamics}
Let $\mu$ be a probability measure over $2^{[n]}$ and let $\theta \in (0,1)$. The field dynamics with respect to $\mu$ at parameter $\theta$ is a Markov chain on $2^{[n]}$ with transition matrix $\fieldMarkov_{\theta}$ defined as follows: given the current state $\sigma \in 2^{[n]}$, generate a random subset $S\subseteq [n]$ by independently including each $i \in [n]$ into $S$ with probability $\theta$ if $i \in \sigma$ and probability $1$ if $i \notin \sigma$. Generate $\sigma'_S \in 2^{S}$ from the distribution $(\theta \ast \mu)^{\sigma_{[n]\setminus S}}$. Move to the new state $\sigma' = \sigma'_S \cup \sigma_{[n]\setminus S} = \sigma'_S \cup ([n]\setminus S)$.  
\end{definition}

It was shown in \cite{chen2021rapid} that just like the Glauber dynamics, the field dynamics is ergodic and reversible with respect to $\mu$. 

\begin{lemma}[{\cite[Proposition~2.2]{chen2021rapid}}]
Let $\mu$ be a probability measure over $2^{[n]}$ and let $\theta \in (0,1)$. Then, $\fieldMarkov_{\theta}$ is irreducible, aperiodic, and reversible with respect to $\mu$. 
\end{lemma}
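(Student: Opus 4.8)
The plan is to verify the three properties separately, with reversibility the only one requiring real work. For reversibility, rather than expanding $\fieldMarkov_\theta(\sigma,\tau)$ by brute force I would first recognize the field dynamics as a two-step conditional resampling on an enlarged space. Introduce the probability measure $\hat\mu$ on pairs $(R,\omega)$ with $R \subseteq \omega \subseteq [n]$ defined by
\[ \hat\mu(R,\omega) = \mu(\omega)\,\theta^{\card{\omega}-\card{R}}(1-\theta)^{\card{R}}; \]
since $\sum_{R \subseteq \omega}\theta^{\card{\omega}-\card{R}}(1-\theta)^{\card{R}} = \prod_{i \in \omega}(\theta + (1-\theta)) = 1$ for every $\omega$, this is indeed a probability measure, and its $\omega$-marginal is exactly $\mu$. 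One then checks that a step of $\fieldMarkov_\theta$ started from $\sigma$ is precisely the following: sample $R \sim \hat\mu(\cdot \mid \omega = \sigma)$, and then output $\tau \sim \hat\mu(\cdot \mid R)$. The first claim holds because $R := [n] \setminus S$ contains each $i \in \sigma$ independently with probability $1-\theta$ and contains no $i \notin \sigma$, matching $\hat\mu(\cdot \mid \omega = \sigma) \propto \theta^{\card{\sigma} - \card{R}}(1-\theta)^{\card{R}}$ on $R \subseteq \sigma$; the second holds because $\hat\mu(\omega \mid R) \propto \mu(\omega)\,\theta^{\card{\omega}}$ on $\omega \supseteq R$, which (via $\tau = \sigma'_S \cup R$ with $\sigma'_S \subseteq [n]\setminus R$) is exactly the law of the new state, since $\sigma'_S \sim (\theta \ast \mu)^{\sigma_{[n]\setminus S}} = (\theta\ast\mu)^{R}$ and $(\theta\ast\mu)^{R}(T) \propto \mu(T \cup R)\,\theta^{\card{T \cup R}}$.

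Given this identification, reversibility follows from the general fact that for any joint law $\hat\mu$ on a pair $(X,Y)$ the chain $x \mapsto Y \sim \hat\mu(\cdot \mid x) \mapsto X' \sim \hat\mu(\cdot \mid Y)$ is reversible with respect to the $X$-marginal: indeed
\[ \mu_X(x)\,P(x,x') = \sum_y \hat\mu(x,y)\,\hat\mu(x' \mid y) = \sum_y \mu_Y(y)\,\hat\mu(x \mid y)\,\hat\mu(x' \mid y), \]
which is symmetric in $x$ and $x'$. Applying this with $(X,Y) = (\omega,R)$, whose $X$-marginal is $\mu$, shows $\fieldMarkov_\theta$ is reversible with respect to $\mu$. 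Alternatively, one can expand $\fieldMarkov_\theta(\sigma,\tau)$ directly as a sum over the ``frozen'' set $R \subseteq \sigma \cap \tau$ and observe that $\mu(\sigma)\fieldMarkov_\theta(\sigma,\tau)$ is visibly symmetric; the auxiliary-variable picture just makes the cancellation transparent.

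For irreducibility and aperiodicity, observe that with probability $\theta^{\card{\sigma}} > 0$ the random set $S$ equals all of $[n]$, in which case $\sigma_{[n]\setminus S}$ is empty and the next state is drawn from $(\theta\ast\mu)^{\emptyset} = \theta \ast \mu$, which has the same support as $\mu$. Hence from any state in $\operatorname{supp}(\mu)$ one reaches every state of $\operatorname{supp}(\mu)$ in a single step, which gives irreducibility; and the same computation gives $\fieldMarkov_\theta(\sigma,\sigma) \ge \theta^{\card{\sigma}}\,(\theta\ast\mu)(\sigma) > 0$, so every state has positive self-loop probability and the chain is aperiodic. The only real obstacle in the whole argument is spotting the auxiliary variable $R = [n]\setminus S$ and matching the two conditional laws of $\hat\mu$ to the two sampling steps of \cref{def:field-dynamics}; once that dictionary is in place, all three claims fall out immediately.
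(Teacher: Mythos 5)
Your proof is correct. The paper itself does not prove this lemma; it cites \textcite{chen2021rapid} (their Proposition~2.2), so there is no in-paper argument to compare against. Your auxiliary-variable construction $\hat\mu(R,\omega) = \mu(\omega)\,\theta^{\card\omega - \card R}(1-\theta)^{\card R}$ on pairs $R \subseteq \omega$ is a clean way to organize what would otherwise be a direct detailed-balance verification: you correctly identify $R = [n]\setminus S$ as a subset of $\sigma$ with each element retained with probability $1-\theta$, match $\hat\mu(R \mid \omega=\sigma)$ to that sampling step, and match $\hat\mu(\omega \mid R) \propto \mu(\omega)\theta^{\card\omega}$ on $\omega \supseteq R$ to the law of the new state (since $(\theta\ast\mu)^{R}(T) \propto \mu(T\cup R)\theta^{\card{T\cup R}}$ and $\card{T\cup R}=\card T+\card R$ because $T$ and $R$ are disjoint). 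The reversibility then follows from the generic fact you cite about two-step conditional resampling in a joint law; this is the same mechanism that makes the down-up walk reversible, so your observation that field dynamics is ``a two-step conditional resampling on an enlarged space'' is precisely the right structural picture (and indeed consistent with how \textcite{chen2021rapid} ultimately interpret field dynamics via blow-ups and projected down-up walks, cf.\ \cref{def:projected block dynamics} and \cref{lem:approx-field-block}). The irreducibility and aperiodicity claims are also correct: $\P{S=[n]} = \theta^{\card\sigma} > 0$, in which case the next state is drawn from $\theta\ast\mu$, which has the same support as $\mu$ since $\theta > 0$, giving a positive one-step transition probability between any two states of $\operatorname{supp}(\mu)$ and in particular a positive self-loop. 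The proof is complete and essentially the standard argument, just organized more transparently than a brute-force expansion of $\mu(\sigma)\fieldMarkov_\theta(\sigma,\tau)$.
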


The analysis of the field dynamics in \cite{chen2021rapid}, as well as in the present work, hinges on its characterization as the limit of the projection of appropriate down-up walks for a sequence of blow-ups of the distribution $\mu$. The following definition and lemma formalize this connection.

\begin{definition}[Projected block dynamics, {\cite[Proposition~5.3]{chen2021rapid}}]
\label{def:projected block dynamics}
Let $\mu$ be a probability measure over $2^{[n]}$, let $k \in \N$, and let $\mu_{k} := \mu_{(k,k,\dots,k)}$ denote the blow-up distribution on $\Omega = [k + \dots + k]$. Let $1 \leq \ell \leq kn$ be an integer and let $(X_t)_{t\geq 0}$ denote the Markov chain on $\binom{\Omega \cup \bar{\Omega}}{kn}$ starting from $X_0$ and with transition matrix corresponding to the $kn \leftrightarrow kn - \ell$ down-up walk on $\binom{\Omega \cup \bar{\Omega}}{kn}$. We define the $(k,\ell)$-projected-block dynamics to be the stochastic process $(X_t^\ast)_{t\geq 0}$ on $2^{[n]}$, where $X_t^\ast$ denotes the projection of $X_t$. %\Fred{should it either say $\Omega = [k + \dots + k]$, or the Markov chain is on a different set from what's written. }

The $(k,\ell)$-projected block dynamics is a well-defined Markov chain on $2^{[n]}$ which is reversible with respect to $\mu$. We will denote its transition matrix by $\projBlockMarkov_{k,\ell}$.  
\end{definition}

\begin{lemma}[{\cite[Lemma~5.5]{chen2021rapid}}]
\label{lem:approx-field-block}
For all probability distributions $\mu$ over $2^{[n]}$, for all $\theta \in (0,1)$, and for all $\epsilon > 0$, there exists $K = K(\mu, \theta, \epsilon) \geq 1$ such that for all integers $k\geq K$ and for all $\sigma, \sigma' \in 2^{[n]}$,
\[\abs*{\projBlockMarkov_{k, \ceil{\theta k n}}(\sigma, \sigma') - \fieldMarkov_{\theta}(\sigma, \sigma')} \leq \epsilon.\]
\end{lemma}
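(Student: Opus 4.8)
The plan is to realize the field dynamics $\fieldMarkov_\theta$ as an exact limit of the projected block dynamics $\projBlockMarkov_{k,\lceil\theta k n\rceil}$ as $k\to\infty$, and then conclude that for each fixed $\sigma,\sigma'$ the entrywise difference goes to $0$, which gives the uniform-in-$(\sigma,\sigma')$ bound (the state space $2^{[n]}$ being finite). First I would write down explicitly what one step of the $kn\leftrightarrow kn-\ell$ down-up walk on $\binom{\Omega\cup\bar\Omega}{kn}$ does, starting from a lift of $\sigma$: recall from the homogenization/blow-up setup that a configuration projecting to $\sigma$ selects, for each $i\in\sigma$, exactly one ``occupied'' copy $(i,j_i^\ast)$, while for $i\notin\sigma$ all copies $(i,j)$ are on the $\bar\Omega$ side. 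A down-step to level $kn-\ell$ deletes a uniformly random $\ell$-subset of the $kn$ present coordinates; an up-step re-samples the deleted coordinates according to the conditional of $\mu_k^{\hom}$. The key point, which I would verify via \cref{lem:conditional of blow up} and \cref{def:measure-tilt}, is that conditioning the blow-up on the retained coordinates corresponds to conditioning a $\theta$-tilt of $\mu$ (the $k_i-1$ vs $k_i$ factors in \cref{lem:conditional of blow up} producing the scalar $\theta$ in the limit), so the up-step is, up to vanishing discretization error, a draw from $(\theta\ast\mu)^{\sigma_{[n]\setminus S}}$ restricted to an appropriate block.

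Next I would track which original sites $i\in[n]$ become ``active'' (i.e. $i\in S$ in the language of \cref{def:field-dynamics}) after the down-step. A site $i\in\sigma$ is active iff its single occupied copy $(i,j_i^\ast)$ is among the $\ell$ deleted coordinates, which by symmetry happens with probability $\ell/(kn)$, and with $\ell=\lceil\theta kn\rceil$ this probability is $\theta+O(1/kn)$, independently across the sites of $\sigma$ in the limit; a site $i\notin\sigma$ has all $k$ of its copies on the $\bar\Omega$ side, and the probability that \emph{at least one} of them is retained tends to $1$ as $k\to\infty$ (it is $1-\binom{kn-k}{\ell}/\binom{kn}{\ell}\to 1$ since $\ell/kn\to\theta<1$), matching the ``include $i$ with probability $1$ if $i\notin\sigma$'' rule. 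So the induced law on the set $S\subseteq[n]$ of active sites converges to exactly the law prescribed by the field dynamics, and conditioned on $S$ the induced law on the new projected configuration $\sigma'$ converges to $(\theta\ast\mu)^{\sigma_{[n]\setminus S}}$ on $S$. Assembling these, $\projBlockMarkov_{k,\lceil\theta kn\rceil}(\sigma,\sigma')\to\fieldMarkov_\theta(\sigma,\sigma')$ pointwise; since $2^{[n]}$ is finite, convergence is uniform, so for any $\epsilon>0$ there is $K=K(\mu,\theta,\epsilon)$ beyond which the bound holds. (Alternatively, and more cleanly, I would simply cite that this is exactly the content of \cite[Lemma~5.5]{chen2021rapid} and that its proof goes through the projected-block/down-up correspondence recorded in \cref{def:projected block dynamics}; the argument above is the reconstruction of that proof.)

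The main obstacle I anticipate is the careful bookkeeping in the up-step: one must show that re-sampling the $\ell$ deleted homogenized coordinates of $\mu_k^{\hom}$, conditioned on the retained ones, genuinely produces (in the $k\to\infty$ limit) a sample from $(\theta\ast\mu)^{\sigma_{[n]\setminus S}}$ and not some other tilt, and that the ``one occupied copy per active site'' structure is preserved so that the projection $X_t\mapsto X_t^\ast$ is well-behaved. This is where \cref{lem:conditional of blow up} does the real work — it identifies the conditionals of a blow-up with blow-ups of conditionals of a tilted measure with field $\lambda_i=(k_i-1)/k_i\to 1-$ per deleted copy, and the accumulation of these factors across the $\approx\theta k$ deleted copies at an active site is what yields the scalar $\theta$; making the ``$\to$'' quantitative (so as to extract an explicit $K(\mu,\theta,\epsilon)$) requires a uniform continuity / compactness argument on the finitely many conditional measures involved. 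Everything else — the binomial/hypergeometric limits controlling which sites activate, and the finiteness of $2^{[n]}$ upgrading pointwise to uniform convergence — is routine.
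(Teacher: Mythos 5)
The paper does not prove this lemma itself; it is imported verbatim from \cite[Lemma~5.5]{chen2021rapid} and used as a black box, so strictly speaking the paper's ``proof'' is just the citation (which you also end up giving). Your reconstruction does match the outline of the cited argument — coupling the lifted chain's down-step to the choice of active sites $S$, using hypergeometric-to-Bernoulli limits to recover the $\theta$-versus-$1$ activation rule, and invoking \cref{lem:conditional of blow up} to identify the up-step's conditional as a $\theta$-tilt — so this is consistent with how the paper treats the statement.

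Two wording slips are worth flagging, neither of which affects your conclusion. First, for $i\notin\sigma$ the relevant event is that at least one of the $k$ negative copies is \emph{removed} (so that site $i$ can change value under the up-step), not retained; your expression $1-\binom{kn-k}{\ell}/\binom{kn}{\ell}$ is exactly the probability of that removal event and does tend to $1$, but the reason is $\theta>0$, not $\theta<1$. Second, the factor $\theta$ at an active site $i$ arises from conditioning on the roughly $(1-\theta)k$ \emph{retained} negative copies: each such conditioning contributes a factor $(k_i-1)/k_i$ via \cref{lem:conditional of blow up}, and these telescope to $(\text{number of removed copies of }i)/k\approx\theta$. Describing it as ``accumulation across the $\approx\theta k$ deleted copies'' has the per-copy factors attached to the wrong set, even though the final scalar comes out to $\theta$ either way. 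With those corrections your sketch is a faithful reconstruction of the argument in \cite[Lemma~5.5]{chen2021rapid}.
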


\section{Restricted entropy contraction for field dynamics}
\label{sec:entropy-contraction-field}
\begin{comment}
\begin{definition}[$c$-bounded]
Let $\mu, \nu $ be distribution over $2^{[n]}.$ We say $\nu$ is $ c$-bounded wrt $\mu$ if
\[\forall i: \frac{\nu(i)}{\mu(i) (1-\nu(i))} \leq c\]

We say $\nu$ is totally $c$-bounded wrt $\mu$ if for all partial assignments $\Lambda,$ $\nu ( \cdot \mid {\Lambda}) $ is $ c$-bounded wrt $\mu(\cdot \mid {\Lambda}).$ 

We say $\nu$ is completely $c$-bounded wrt $\mu$ if for all partial assignments $\Lambda$ and all external field $\lambda \in [0,1]^n,$ $\lambda \ast \nu ( \cdot \mid {\Lambda}) $ is $ c$-bounded wrt $\lambda \ast \mu(\cdot \mid {\Lambda}).$ 
\end{definition}
\end{comment}
%\Fred{check: have we defined the notation $\nu(i)$ for the marginal? Clarify the definition when $\mu(i) = 0$, is the requirement just that $\nu(i) = 0$ in this case?}

The goal of this section is to prove \cref{thm:restricted entropy field dynamics}, which shows that if the probability distribution $\mu$ on $2^{[n]}$ is completely $(\eta, \epsilon)$-spectrally independent, then for any $\theta \in (0,1)$, the corresponding field dynamics has a strictly positive restricted entropy contraction constant with respect to a natural class of ``$C$-completely bounded probability distributions'', as captured by the next definition. Crucially, the lower bound on the restricted entropy contraction constant depends only on $\eta, \epsilon, \theta$ and $C$, all of which will be $\Theta(1)$ in our applications.  

\begin{definition}[(Completely) bounded distributions]
\label{def:C-bounded}
Let $\mu$ be a probability distribution over $2^{[n]}$ and let $C \geq 1$. Define the class of $C$-bounded distributions with respect to $\mu$ by  
%\[\mathcal{V}(C,\mu) := \set*{\nu \in \text{ is a probability distribution over }2^{[n]}\given \forall i\in [n], \frac{\nu(i) (1- \mu(i)) }{\mu(i) (1-\nu(i))} \leq C}.\]
\[\mathcal{V}(C,\mu) := \set*{\nu \in \operatorname{AC}_{\mu}\given \frac{\nu(i) (1- \mu(i)) }{\mu(i) (1-\nu(i))} \leq C \quad \forall i\in [n]},\]
where $\operatorname{AC}_{\mu}$ denotes the set of all probability measures over $2^{[n]}$ which are absolutely continuous with respect to $\mu$.

%We define the class of $C$-totally bounded distributions with respect to $\mu$, denoted by $\mathcal{V}^{t}(C,\mu)$, as those distributions $\nu$ such that for all valid partial assignments $\sigma_{\Lambda} \in 2^{\Lambda}$, $\nu^{\sigma_{\Lambda}}$ is $C$-bounded with respect to $\mu^{\sigma_{\Lambda}}$.  

Moreover, we define the class of $C$-completely bounded distributions with respect to $\mu$, denoted by $\mathcal{V}^{c}(C,\mu)$, as those distributions $\nu$ such that for all valid partial assignments $\sigma_{\Lambda} \in 2^{\Lambda}$ and all external fields $\lambda \in (0,1]^n,$ $(\lambda \ast \nu)^{\sigma_{\Lambda}} $ is $C$-bounded with respect to $(\lambda \ast \mu)^{\sigma_{\Lambda}}$.  
\end{definition}

\begin{theorem} 
\label{thm:restricted entropy field dynamics}
Let $\eta \geq 1$, $\epsilon > 0$, and $C\geq 1$. Suppose that $\mu$ is a probability distribution on $2^{[n]}$ which is completely $(\eta, \epsilon)$-spectrally independent. Then, for any $\nu \in \mathcal{V}^{c}(C,\mu)$, $\theta \in (0,1)$, $k \in \N$, and $P \in \set*{\projBlockMarkov_{k,\lceil nk \theta \rceil},\fieldMarkov_{\theta}}$, we have
%Suppose $\mu: \set*{\pm 1}^{[n]} \to \R_{\geq 0} $ that is completely $(\eta, \epsilon)$-spectral independent and $\nu$ is completely $c$-bounded wrt $\mu.$ Let $\projBlockMarkov_{k,\ell}$ and $\fieldMarkov_{\theta}$ be the transition matrix of the projected block dynamics and the field dynamics respectively. Then for $P \in \set*{\projBlockMarkov_{k,\lceil nk \theta \rceil},\fieldMarkov_{\theta}}$
\begin{equation*}
    \DKL {\nu P \river \mu P} \leq  (1- \kappa) \DKL {\nu \river \mu},
\end{equation*}
where $\kappa = (\theta/3)^{\eta'}$ with $\eta' =  \max \set*{2\eta , \sqrt{\log(C)/\log(1+\epsilon)}} .$
\end{theorem}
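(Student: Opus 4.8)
The plan is to reduce the statement about $\fieldMarkov_\theta$ to one about the projected block dynamics $\projBlockMarkov_{k,\lceil nk\theta\rceil}$, then reduce that to an entropy contraction statement for a down-up walk on the blow-up $\mu_k$, and finally establish the latter via the restricted entropic independence machinery generalizing \cite{anari2021entropic}. First, by \cref{lem:approx-field-block}, $\projBlockMarkov_{k,\lceil nk\theta\rceil} \to \fieldMarkov_\theta$ entrywise as $k \to \infty$, and relative entropy $\DKL{\nu P \river \mu P}$ is continuous in the (finitely many) entries of $P$; so it suffices to prove the claimed contraction for $P = \projBlockMarkov_{k,\lceil nk\theta\rceil}$ for every $k$, with a bound $\kappa$ independent of $k$, and then pass to the limit. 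Next, by \cref{def:projected block dynamics}, $\projBlockMarkov_{k,\ell}$ is the projection (via $\sigma \mapsto \sigma^\ast$) of the $kn \leftrightarrow kn-\ell$ down-up walk on $\binom{\Omega \cup \bar\Omega}{kn}$ for the homogenized blow-up $(\mu_k)^{\hom}$; since projection is a deterministic map, the data processing inequality gives $\DKL{\nu \projBlockMarkov_{k,\ell} \river \mu \projBlockMarkov_{k,\ell}} \le \DKL{\hat\nu\, Q \river \hat\mu\, Q}$ where $Q$ is that down-up walk and $\hat\mu,\hat\nu$ are the lifts of $\mu_k,\nu_k$ to the homogenized space; combined with \cref{lem:blow up preserves KL} (which says $\DKL{\nu_{\vec k}\river \mu_{\vec k}} = \DKL{\nu\river\mu}$), it suffices to prove $\DKL{\hat\nu\, Q \river \hat\mu\, Q} \le (1-\kappa)\DKL{\hat\nu\river\hat\mu}$.

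The core step is the entropy contraction for the down-up walk $Q = D_{kn \to kn-\ell}U_{kn-\ell \to kn}$ on the homogeneous distribution $(\mu_k)^{\hom}$ of degree $kn$, where $\ell = \lceil nk\theta\rceil$, run from the class of $C$-bounded lifts. The strategy here mirrors \cite{anari2021entropic}: $(\eta,\epsilon)$-spectral domination of $\mu$ upgrades, via \cref{prop:blowup preserve complete spectral ind}, to complete $(\eta,\epsilon)$-spectral domination of $\mu_k$, and hence — by \cref{prop:spectralind-to-flc} — to $\alpha$-fractional log-concavity of $(\mu_k)^{\hom}$ on the region $\Lambda_{\alpha,\epsilon}$ with $\alpha = 1/(2\eta)$. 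The point of the $C$-boundedness hypothesis on $\nu$ is exactly that the density $d\hat\nu/d\hat\mu$, and all its conditionals, correspond to external-field tilts that stay inside $\Lambda_{\alpha,\epsilon}$ (this is where the condition $\eta' \ge \sqrt{\log C/\log(1+\epsilon)}$ enters: one needs $(1+\epsilon)^{1/\alpha} \ge C$-type slack so that a $C$-bounded tilt does not leave the FLC region; choosing $\alpha = 1/\eta'$ with $\eta' = \max\{2\eta,\sqrt{\log C/\log(1+\epsilon)}\}$ makes both $1/(2\eta)$-FLC hold and $(1+\epsilon)^{1/\alpha} \ge \sqrt{\dots}$, giving the needed room). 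Restricted FLC along the trajectory then yields a restricted local-to-global / entropic-independence statement: a one-step entropy factorization for the down-up walk with a constant that depends only on $\eta,\epsilon,C$ and the drop fraction $\ell/(kn) \approx \theta$. Tracking constants through the standard ``$(1-\ell/kn)$ raised to a power'' local-to-global bound (as in the entropic independence / spectral independence induction) produces the factor $\kappa = (\theta/3)^{\eta'}$, the $1/3$ absorbing the $\lceil\cdot\rceil$ rounding and the $\ell/(kn)$ vs.\ $\theta$ discrepancy for all $k$.

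The main obstacle I expect is verifying that the $C$-completely-bounded hypothesis is genuinely preserved, or at least sufficient, all along the down-up walk: one must check that conditioning on a partial configuration of the blow-up (which, by \cref{lem:conditional of blow up}, corresponds to conditionals of tilted measures $\lambda \ast \mu$ with $\lambda \in [0,1]^n$) keeps both the spectral domination (handled by "complete" in the hypothesis) and the $C$-boundedness (handled by "completely bounded" in $\mathcal{V}^c(C,\mu)$, which quantifies over exactly these tilts and pinnings) intact, and that the resulting tilted densities land in $\Lambda_{\alpha,\epsilon}$ with the stated $\alpha$. Concretely, one must convert ``$\nu(i)(1-\mu(i))/(\mu(i)(1-\nu(i))) \le C$ for all $i$'' into ``$z_i \le z_{\bar i}(1+\epsilon)^{1/\alpha}$'' for the homogenizing variables associated to the density ratio, uniformly over pinnings, which is a short but delicate computation relating marginals of $\nu$ to the homogenization coordinates. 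The second delicate point is making the local-to-global entropy-factorization argument work in the \emph{restricted} setting — i.e., ensuring the class $\mathcal{V}^c(C,\mu)$ is closed (or approximately closed) under the intermediate down/up steps so the induction can even be set up; this likely requires a separate lemma, proved elsewhere in the paper, that $C$-complete-boundedness propagates to the relevant conditional and tilted measures arising in the walk.
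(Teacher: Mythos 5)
Your plan is correct and follows essentially the same route as the paper's proof: approximate the field dynamics by the projected block dynamics via \cref{lem:approx-field-block}; use the data processing inequality together with \cref{lem:blow up preserves KL} to reduce to entropy contraction of the down operator $D_{nk\to nk-\ell}$ on the homogenized blow-up; push complete spectral independence and $C$-complete-boundedness through the blow-up (\cref{prop:blowup preserve complete spectral ind} and \cref{prop:blowup preserve bounded}, the latter being exactly the "separate lemma" you anticipate would be needed); and then invoke the restricted entropic-independence corollary (\cref{cor:restricted entropy contraction}, itself built from \cref{prop:entropy-contraction-n-to-1} plus the standard local-to-global argument) to get the $(\theta/3)^{\eta'}$ factor. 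The only imprecision is in the heuristic ``$(1+\epsilon)^{1/\alpha}\ge C$-type slack'': the actual requirement is $(1+\epsilon)^{1/\alpha^2}\ge C$ (hence the square root in $\eta'$), since the homogenization variables are $z_i^*=(q_i/p_i)^\alpha$, $z_{\bar i}^*=((1-q_i)/(1-p_i))^\alpha$ and one needs $z_i^*\le z_{\bar i}^*(1+\epsilon)^{1/\alpha}$; but you flag this as the delicate computation, and the choice $\alpha=1/\eta'$ with $\eta'=\max\{2\eta,\sqrt{\log C/\log(1+\epsilon)}\}$ is the right one.
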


The proof of this theorem requires a few intermediate steps. We begin with the following lemma, which shows that the blow-up operation preserves (complete) $C$-boundedness,

\begin{lemma} \label{prop:blowup preserve bounded}
Let $\mu$ be a probability distribution over $2^{[n]}$, let $C\geq 1$, and let $\vec{k} \in \N^{n}$. If $\nu \in \mathcal{V}(C,\mu)$, then $\nu_{\vec{k}} \in \mathcal{V}(C, \mu_{\vec{k}})$. Moreover, if $\nu \in \mathcal{V}^c(C,\mu)$, then $\nu_{\vec{k}} \in \mathcal{V}^c(C, \mu_{\vec{k}})$.
\end{lemma}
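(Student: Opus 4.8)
The plan is to deduce both assertions from one elementary inequality: \emph{if $a,b\in[0,1]$ and $C\ge 1$ satisfy $b(1-a)\le C\,a(1-b)$, then $b(1-aw)\le C\,a(1-bw)$ for every $w\in[0,1]$.} Indeed, when $b\le a$ we have $b(1-aw)\le b(1-bw)\le a(1-bw)\le C\,a(1-bw)$; when $b>a$ the elementary fact $(1-aw)(1-b)\le(1-a)(1-bw)$ (which, after expanding and cancelling, is just $(b-a)(1-w)\ge 0$) combines with the hypothesis to give $b(1-aw)\le C\,a(1-bw)$ as well. The content of the lemma is that passing from a site $i$ of $\mu$ to a site of the blow-up (or of a conditional of a tilted blow-up) only rescales the relevant marginal probability by some $w\in(0,1]$, and such a rescaling cannot hurt the $C$-boundedness odds ratio $b(1-a)/a(1-b)$.

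For the first claim, from the two-stage description of the blow-up (sample $X\sim\mu$, then for each $i\in X$ an independent uniform slot $j_i^\ast\in[k_i]$) one reads off $\mu_{\vec k}((i,j)) = \mu(i)/k_i$ and $\nu_{\vec k}((i,j)) = \nu(i)/k_i$ for every site $(i,j)$; moreover $\nu_{\vec k}\in\operatorname{AC}_{\mu_{\vec k}}$ since $\DKL{\nu_{\vec k}\river\mu_{\vec k}}=\DKL{\nu\river\mu}<\infty$ by \cref{lem:blow up preserves KL}. Applying the displayed inequality with $a=\mu(i)$, $b=\nu(i)$, $w=1/k_i$ gives $\nu_{\vec k}((i,j))(1-\mu_{\vec k}((i,j)))\le C\,\mu_{\vec k}((i,j))(1-\nu_{\vec k}((i,j)))$ for all $(i,j)$, i.e.\ $\nu_{\vec k}\in\mathcal V(C,\mu_{\vec k})$.

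For the second claim, fix $\lambda'\in(0,1]^{k_1+\dots+k_n}$ and a valid partial assignment $\tau_M$ for $\mu_{\vec k}$ (we may assume it is also valid for $\nu_{\vec k}$, else there is nothing to check). First, the projection onto $2^{[n]}$ of $\lambda'\ast\mu_{\vec k}$ is $\bar\lambda\ast\mu$ with $\bar\lambda_i := k_i^{-1}\sum_j\lambda'_{i,j}$, and conditionally on $X$ each slot $j_i$ ($i\in X$) is chosen with probability $\propto\lambda'_{i,j}$. Now condition on $\tau_M$ block by block: a block containing a slot that $\tau_M$ includes forces the corresponding coordinate of $X$ to be present, a block all of whose slots $\tau_M$ excludes forces it absent, and on any remaining block $\tau_M$ excludes some (possibly empty) proper subset $J_i\subsetneq[k_i]$, leaving the slots $F_i:=[k_i]\setminus J_i\neq\emptyset$ free. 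Letting $\sigma_\Lambda$ denote the resulting pinning of $\mu$ (valid, since $\tau_M$ is), a direct computation shows that the conditional law of $X$ given $\tau_M$ equals $(\hat\lambda\ast\mu)^{\sigma_\Lambda}$, where $\hat\lambda_i := k_i^{-1}\sum_{j\in F_i}\lambda'_{i,j}\in(0,1]$ on the free blocks (and arbitrary on $\Lambda$) --- importantly $\hat\lambda$ and $\sigma_\Lambda$ depend only on $(\lambda',\tau_M)$, not on the distribution --- and that for every free site $(i_0,j_0)$,
\[ (\lambda'\ast\mu_{\vec k})^{\tau_M}((i_0,j_0))=(\hat\lambda\ast\mu)^{\sigma_\Lambda}(i_0)\cdot w_{i_0,j_0},\qquad w_{i_0,j_0}:=\frac{\lambda'_{i_0,j_0}}{\sum_{j\in F_{i_0}}\lambda'_{i_0,j}}\in(0,1], \]
with the identical formula for $\nu$ in place of $\mu$ (same $\hat\lambda,\sigma_\Lambda,w_{i_0,j_0}$). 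Since $\nu\in\mathcal V^c(C,\mu)$ and $\hat\lambda\in(0,1]^n$, the pair $\big((\hat\lambda\ast\nu)^{\sigma_\Lambda},(\hat\lambda\ast\mu)^{\sigma_\Lambda}\big)$ is $C$-bounded; plugging $a=(\hat\lambda\ast\mu)^{\sigma_\Lambda}(i_0)$, $b=(\hat\lambda\ast\nu)^{\sigma_\Lambda}(i_0)$, $w=w_{i_0,j_0}$ into the displayed inequality (the common factor $w_{i_0,j_0}$ cancels out of the odds ratio) yields exactly the $C$-boundedness of $(\lambda'\ast\nu_{\vec k})^{\tau_M}$ against $(\lambda'\ast\mu_{\vec k})^{\tau_M}$ at $(i_0,j_0)$, and absolute continuity transfers along the same chain of operations. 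As $\lambda'$ and $\tau_M$ were arbitrary, $\nu_{\vec k}\in\mathcal V^c(C,\mu_{\vec k})$.

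I expect the main obstacle to be precisely the bookkeeping underlying the displayed factorization: checking that conditioning a tilted blow-up on $\tau_M$ splits as a pinning $\sigma_\Lambda$ of the original distribution, an extra within-block tilt that must be verified to keep $\hat\lambda$ inside $(0,1]^n$ (here one uses $\sum_{j\in F_i}\lambda'_{i,j}\le |F_i|\le k_i$), and a residual per-slot weight $w_{i_0,j_0}$ that is identical for $\mu$ and $\nu$ and hence cancels. Once this is in place, both parts reduce to the one-line inequality from the first paragraph.
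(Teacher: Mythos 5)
Your proof is correct. The numerical kernel is identical to the paper's: both observe that if $b(1-a)\le Ca(1-b)$ then rescaling the two marginal probabilities by a common factor $w\in(0,1]$ can only decrease the odds ratio $b(1-aw)/a(1-bw)$; the paper phrases this as a two-case computation ($\tilde\nu(i)\le\tilde\mu(i)$ versus $\ge$) and you phrase it the same way ($b\le a$ versus $b>a$). Where you diverge is in the bookkeeping for conditionals under the complete version. The paper proves a one-shot claim---averaging the slot fields $\lambda'_{i,j}$ over each block to get $\lambda_i$, and then showing $\tilde\nu\in\mathcal V(C,\tilde\mu)\Rightarrow\hat\nu_{\vec k}\in\mathcal V(C,\hat\mu_{\vec k})$---and then invokes \cref{lem:conditional of blow up} to reduce pinned tilted blow-ups (inductively, one slot at a time) to unpinned ones. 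You instead carry out the decomposition in a single pass: a partial pinning $\tau_M$ of $\lambda'\ast\mu_{\vec k}$ is shown to split, block by block, into a pinning $\sigma_\Lambda$ of $\mu$, a within-block effective field $\hat\lambda_i=k_i^{-1}\sum_{j\in F_i}\lambda'_{i,j}\in(0,1]$ over the free slots, and a residual per-slot weight $w_{i_0,j_0}=\lambda'_{i_0,j_0}/\sum_{j\in F_{i_0}}\lambda'_{i_0,j}\in(0,1]$, which then feeds cleanly into the monotonicity inequality. This explicit factorization is a self-contained replacement for the iterated use of \cref{lem:conditional of blow up}; it makes visible the precise scaling $w$ and the fact that $\hat\lambda$ stays in $(0,1]^n$, both of which are implicit in the paper's argument. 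Both routes are correct; yours trades the appeal to the conditional-of-blow-up lemma for a more direct combinatorial computation.
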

\begin{proof}
Both assertions are a consequence of the following statement and \cref{lem:conditional of blow up}: fix $\lambda' \in (0,1]^{k_1+\dots + k_n}$ and define $\lambda \in (0,1]^{n}$ by $\lambda_i = \frac{1}{k_i} \sum_{j=1}^{k_i} \lambda'_{i,j}.$ Let $\tilde{\mu} = \lambda \ast \mu$ and $\hat{\mu}_{\vec{k}} = \lambda' \ast \mu_{\vec{k}}$, and define $\tilde{\nu}, \hat{\nu}_{\vec{k}}$ similarly. We claim that if $\tilde{\nu} \in \mathcal{V}(C, \tilde{\mu})$, then $\hat{\nu}_{\vec{k}} \in \mathcal{V}(C,\hat{\mu}_{\vec{k}}).$

To see this, note that for any $i\in [n]$ and $j \in [k_i]$, 
\[\hat{\mu}_{\vec{k}} ((i,j)) = \tilde{\mu}(i)\frac{\lambda'_{i,j}}{k_i\lambda_i }, \quad \hat{\mu}_{\vec{k}} (\overline{(i,j)}) = 1- \tilde{\mu}(i)\frac{\lambda'_{i,j}}{k_i\lambda_i },\] 
and similarly for $\hat{\nu}_{\vec{k}} $ and $\tilde{\nu}.$ If $\tilde{\nu}(i) \leq \tilde{\mu}(i)$, then
\[\frac{\hat{\nu}_{\vec{k}} (i,j) (1-\hat{\mu}_{\vec{k}} (i,j) ) }{\hat{\mu}_{\vec{k}}(i,j) (1 - \hat{\nu}_{\vec{k}} (i,j) )}  = \frac{\tilde{\nu}(i) \lambda'_{i,j}/(k_i \lambda_i)}{1- \tilde{\nu}(i)\lambda'_{i,j}/(k_i \lambda_i)} \cdot \parens*{\frac{\tilde{\mu}(i) \lambda'_{i,j}/(k_i \lambda_i)} {1- \tilde{\mu}(i)\lambda'_{i,j}/(k_i \lambda_i)}}^{-1} \leq 1\]
On the other hand, if $\tilde{\nu}(i) \geq \tilde{\mu}(i)$, then
%$ \nu_{\vec{k}}/\mu_{\vec{k}} ((i,j))=  \nu(i)/\mu(i)$ and $\nu_{\vec{k}} (i) \leq \nu(i).$
\[\frac{\hat{\nu}_{\vec{k}} (i,j) (1-\hat{\mu}_{\vec{k}} (i,j) ) }{\hat{\mu}_{\vec{k}}(i,j) (1 - \hat{\nu}_{\vec{k}} (i,j) )}  = \frac{\tilde{\nu}(i) (1- \tilde{\mu}(i) \lambda'_{i,j}/(k_i \lambda_i) )}{\tilde{\mu}(i) (1- \tilde{\nu}(i) \lambda'_{i,j}/(k_i \lambda_i) )} \leq \frac{\tilde{\nu}(i) (1- \tilde{\mu}(i))}{\tilde{\mu}(i) (1-\tilde{\nu}(i))} \leq C.\]
Thus, in either case, we have $\frac{\nu_{\vec{k}}(i,j) (1-\mu_{\vec{k}}(i,j)) }{\mu_{\vec{k}}(i,j) (1 - \nu_{\vec{k}}(i,j))}\leq C$, as desired.  
\end{proof}

The next proposition shows that spectral domination of $\mu$ implies a restricted form of entropic independence (see \cite{anari2021entropic}), for probability distributions $\nu$ which are $O(1)$-bounded with respect to $\mu$. In the special case that we can take $\epsilon = \infty$, we do not need the $O(1)$-boundedness condition, and recover the fact that fractional log concavity implies entropic independence from \cite{anari2021entropic}.
\begin{proposition}
\label{prop:entropy-contraction-n-to-1}
Let $\eta \geq 1$, $\epsilon > 0$, and $C\geq 1$. Suppose that $\mu$ is a probability distribution on $2^{[n]}$ which is $(\eta, \epsilon)$-spectrally dominated. Then, for any $\nu \in \mathcal{V}(C, \mu)$, we have that
%Suppose $\mu: 2^{[n]} \to \R_{\geq 0} $ that is completely $(\eta, \epsilon)$-spectral independent
%i.e.
% \[g_{\lambda \ast \mu} (z_1, \dots, z_n) = \sum_{S} \mu(S) \lambda^S z^S\]
% is $\eta$-spectral independence for all $\lambda \in [0,1-\epsilon]^n.$
%Let $\mu D_{n \to 1}  = p$ and $ \nu D_{n \to 1} = q.$ Suppose 
% \begin{equation}\label{eq:bounded marginal ratio}
%     \forall i: q_i/p_i \leq c \text{ and } q_i \leq 1 - \varepsilon.
% \end{equation}
\[\DKL {\nu^{\hom} D_{n \to 1}  \river \mu^{\hom} D_{n \to 1} } \leq \frac{\eta'}{n}\DKL{\nu^{\hom} \river \mu^{\hom}}, \]
where $\eta' =  \max \set*{2\eta , \sqrt{\log(C)/\log(1+\epsilon)}} .$ 
\end{proposition}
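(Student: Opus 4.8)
The plan is to derive this restricted entropic independence from the restricted fractional log-concavity furnished by \cref{prop:spectralind-to-flc}, mirroring the proof that (unrestricted) fractional log-concavity implies entropic independence in \cite{anari2021entropic}, while tracking exactly which external-field tilts of $\mu^{\hom}$ get used and checking that the $C$-boundedness of $\nu$ confines them to the cone $\Lambda_{\alpha,\epsilon}$ on which log-concavity is available.

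\emph{Reformulation.} First I would rewrite both sides in marginal form. Since $D_{n\to 1}$ sends $\mu^{\hom}$ to the uniform average over $i\in[n]$ of the two-point measure placing mass $\P_{\mu}{i}$ on $\set{i}$ and $1-\P_{\mu}{i}$ on $\set{\bar i}$, and similarly for $\nu$, while $\DKL{\nu^{\hom}\river\mu^{\hom}} = \DKL{\nu\river\mu}$ is immediate from the definition of homogenization, we get
\[ \DKL{\nu^{\hom}D_{n\to 1}\river\mu^{\hom}D_{n\to 1}} = \frac1n\sum_{i=1}^{n}\parens*{\P_{\nu}{i}\log\frac{\P_{\nu}{i}}{\P_{\mu}{i}} + \parens*{1-\P_{\nu}{i}}\log\frac{1-\P_{\nu}{i}}{1-\P_{\mu}{i}}}. \]
So the proposition is the ``top-level approximate tensorization'' statement that the sum on the right (before dividing by $n$) is at most $\eta'\DKL{\nu\river\mu}$. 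Replacing $\nu$ by $(1-s)\nu + s\mu$ and sending $s\downarrow 0$ at the end --- a perturbation that stays in $\mathcal{V}(C,\mu)$ since it only pushes each marginal toward $\P_{\mu}{i}$ --- I may additionally assume $d\nu/d\mu$ is bounded above and below.

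\emph{Restricted FLC and the interpolation argument.} I would take $\alpha := 1/\eta'$; as $\eta\ge 1$ and $\eta'\ge 2\eta$ this gives $\alpha\in(0,1/2]$ with $\alpha\le 1/(2\eta)$, so \cref{prop:spectralind-to-flc} applies and $\mu^{\hom}$ is $\alpha$-FLC on $\Lambda_{\alpha,\epsilon} = \set*{z : 0\le z_i\le z_{\bar i}(1+\epsilon)^{\eta'}\ \forall i}$; equivalently, by the Hessian computation behind that proposition, $\lambda_{\max}(\corMat_{\lambda\ast\mu^{\hom}})\le\eta'$ for every external field whose coordinatewise ratios $\lambda_i/\lambda_{\bar i}$ lie in $[(1+\epsilon)^{-\eta'},(1+\epsilon)^{\eta'}]$. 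I would then run the entropic-independence argument of \cite{anari2021entropic}: construct a one-parameter family of external fields $\set{\lambda^{(t)}}_{t\in[0,1]}$ interpolating from $\mu^{\hom}$ at $t=0$ to a tilt adapted to $d\nu/d\mu$, differentiate the appropriate entropy/free-energy functional of $\log g_{\mu^{\hom}}$ along the path, bound the arising term using the spectral estimate $\lambda_{\max}(\corMat)\le\eta'$ valid at each point of the path, and integrate in $t$ to conclude with the factor $1/\alpha = \eta'$ (divided by the degree $n$ of the homogenization).

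\emph{Main obstacle.} The crux, and the only point where the hypothesis $\nu\in\mathcal{V}(C,\mu)$ is genuinely needed, is checking that the interpolating fields $\lambda^{(t)}$ never leave $\Lambda_{\alpha,\epsilon}$ --- automatic under unrestricted FLC, but not here. The idea is that realizing the density of a $C$-bounded $\nu$ only calls for tilts whose coordinates are controlled by $C$, and the choice $\eta' = \max\set{2\eta,\sqrt{\log C/\log(1+\epsilon)}}$ is exactly calibrated so that $(1+\epsilon)^{\eta'}$ contains the path: a short computation should show it suffices that $(1+\epsilon)^{\eta'}$ dominate $C^{1/\eta'}$, i.e.\ that $(\eta')^{2}\log(1+\epsilon)\ge\log C$. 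I should also note that, unlike \cref{thm:restricted entropy field dynamics}, this is a one-shot $n\to 1$ statement about $\mu^{\hom}$ itself, so only $(\eta,\epsilon)$-spectral domination of $\mu$ (not complete spectral independence) is invoked, and the argument never needs to descend to pinnings of $\mu$.
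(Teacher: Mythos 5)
Your high-level plan matches the paper's: set $\alpha = 1/\eta'$, invoke \cref{prop:spectralind-to-flc} to get $\alpha$-FLC on $\Lambda_{\alpha,\epsilon}$, and use the $C$-boundedness of $\nu$ to confine the relevant external field to that cone. Your calibration check is exactly right and is the crux: since $\nu\in\mathcal V(C,\mu)$, the ratio $z_i^*/z_{\bar i}^* = \bigl(q_i(1-p_i)/p_i(1-q_i)\bigr)^{\alpha} \le C^{1/\eta'}$, and requiring this to be at most $(1+\epsilon)^{1/\alpha}=(1+\epsilon)^{\eta'}$ forces $(\eta')^2 \ge \log C/\log(1+\epsilon)$, which the definition of $\eta'$ guarantees.

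However, the mechanism you sketch for extracting the entropy-contraction inequality --- constructing a path of external fields $\lambda^{(t)}$, differentiating a free-energy functional, bounding the derivative via $\lambda_{\max}(\corMat)$, and integrating in $t$ --- does not describe the argument in \cite{anari2021entropic}, and it is not what this paper does either. The actual proof is a \emph{one-shot} argument with no interpolation: by \cref{lem:lc-equiv}, $\alpha$-FLC on the cone makes $f(z) := g_{\mu^{\hom}}(z_1^\alpha,\dots,z_{\bar n}^\alpha)^{1/(\alpha n)}$ concave there, so by degree-one homogeneity $g_{\mu^{\hom}}(z) \le \bigl(\frac{1}{n}\sum_i p_i z_i^{1/\alpha} + (1-p_i) z_{\bar i}^{1/\alpha}\bigr)^{\alpha n}$ for $z\in\Lambda_{\alpha,\epsilon}$; evaluating at $z^*$ gives $g_{\mu^{\hom}}(z^*)\le 1$. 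Combined with the convex-duality identity \cref{lem:KL-dual}, which expresses $\inf\{\DKL{\nu'\river\mu^{\hom}} : \nu' D_{n\to 1} = \nu^{\hom}D_{n\to 1}\}$ as a Legendre transform of $\log g_{\mu^{\hom}}$, one directly reads off the bound with the factor $\alpha n = n/\eta'$. Your interpolation proposal is not fleshed out enough to verify: it is unclear what free-energy functional you would differentiate, how its second derivative relates to the specific quantity $\DKL{\nu^{\hom}D_{n\to 1}\river\mu^{\hom}D_{n\to 1}}$, or why the path would remain in $\Lambda_{\alpha,\epsilon}$ at intermediate $t$ (your cone check is only for the endpoint $z^*$). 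The duality route sidesteps all of this by going straight to the point $z^*$, and I would replace your step with it.
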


The proof of this proposition is similar to the proof of the aforementioned statement in \cite{anari2021entropic}, and requires a couple of preliminary lemmas. The first one is folklore and the second is a direct consequence of convex duality.

\begin{lemma}[Folklore]
\label{lem:lc-equiv}
Let $\mathcal{C} \subseteq \R_{\geq 0}^{n}$ denote a convex cone.  
		For a $d$-homogeneous function $f:\mathcal{C}\to\R_{\geq 0}$ the following are all equivalent:
		\begin{enumerate}
			\item $f$ is quasi-concave.
			\item $f$ is log-concave.
			\item $f$ is $d$-th-root-concave, i.e., $f^{1/d}$ is concave.
		\end{enumerate}  
\end{lemma}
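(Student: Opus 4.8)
I would establish the cycle of implications $(3) \Rightarrow (2) \Rightarrow (1) \Rightarrow (3)$. The first two implications are immediate and hold for any nonnegative function, not just homogeneous ones; the substance of the lemma — and the only place $d$-homogeneity is used — is the implication $(1) \Rightarrow (3)$.

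For $(3) \Rightarrow (2)$: write $\log f = d \log(f^{1/d})$, and note that $f^{1/d}$ is a nonnegative concave function while $t \mapsto \log t$ is concave and nondecreasing on $(0,\infty)$ (with $\log 0 := -\infty$), so the composition $\log(f^{1/d})$ is concave; hence so is $\log f$, i.e.\ $f$ is log-concave. For $(2) \Rightarrow (1)$: the superlevel set $\set{x \in \mathcal{C} : f(x) \geq c}$ equals $\set{x : \log f(x) \geq \log c}$ for $c > 0$ (and equals all of $\mathcal{C}$ for $c \leq 0$), which is convex since $\log f$ is concave; thus $f$ is quasi-concave.

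The main step is $(1) \Rightarrow (3)$. I would set $g := f^{1/d}$; since $t \mapsto t^{1/d}$ is increasing, $g$ has the same superlevel sets as $f$ and is therefore quasi-concave, and $g$ is $1$-homogeneous, $g(tx) = t g(x)$ for $t > 0$. It then suffices to show $g$ is concave, and by $1$-homogeneity this is equivalent to superadditivity $g(x+y) \geq g(x) + g(y)$: indeed superadditivity plus positive homogeneity yields $g(\lambda x + (1-\lambda) y) \geq g(\lambda x) + g((1-\lambda) y) = \lambda g(x) + (1-\lambda) g(y)$. To prove superadditivity when $g(x), g(y) > 0$, I would normalize: $x/g(x)$ and $y/g(y)$ both lie in the superlevel set $\set{g \geq 1}$, which is convex by quasi-concavity, so the convex combination $\theta \cdot \tfrac{x}{g(x)} + (1-\theta)\cdot \tfrac{y}{g(y)}$ with $\theta = \tfrac{g(x)}{g(x)+g(y)}$ lies in it as well; but this combination equals $\tfrac{x+y}{g(x)+g(y)}$, so $g\bigl(\tfrac{x+y}{g(x)+g(y)}\bigr) \geq 1$, and homogeneity gives $g(x+y) \geq g(x) + g(y)$.

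The one place requiring care — and the reason this is really a statement about the region where $f$ is positive — is points where $f$ vanishes: there the normalization above breaks down, and the equivalence can genuinely fail at boundary zeros of a merely nonnegative homogeneous quasi-concave function. In the intended applications (e.g.\ when $f = g_{\mu^{\hom}}$ is the generating polynomial of a probability distribution restricted to the relevant open subregion of $\R^{2n}_{>0}$, as in \cref{prop:spectralind-to-flc}) one has $f > 0$ throughout, so the clean argument above applies directly; alternatively, one observes that $\set{f > 0} = \bigcup_{c > 0}\set{f \geq c}$ is an increasing union of convex sets, hence a convex subcone, and runs the argument there. So the only (minor) obstacle is this bookkeeping around zeros; the homogeneity-plus-normalization step is the entire content.
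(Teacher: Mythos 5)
Your proof is correct, and since the paper gives no proof (the lemma is cited as ``Folklore''), there is nothing to compare against; the normalization argument you use for $(1) \Rightarrow (3)$ (that a $1$-homogeneous quasi-concave function is superadditive, hence concave) is the standard route. Your caveat about zeros is well-founded and worth retaining: on a cone where $f$ can vanish on a ray, quasi-concavity plus homogeneity does \emph{not} in general force concavity --- for instance on $\R_{\geq 0}^2$ the $1$-homogeneous function $g(x_1,x_2) = x_1 \cdot \mathds{1}[x_1 \geq x_2]$ has convex superlevel sets but fails concavity along the segment from $(1,0)$ to $(1,3)$. So the clean statement really is about the open subcone $\{f > 0\}$, and your observation that $\{f>0\} = \bigcup_{c>0}\{f \geq c\}$ is a convex subcone under quasi-concavity is exactly the right way to localize; in the paper's application (\cref{prop:entropy-contraction-n-to-1}) the generating polynomial $g_{\mu^{\hom}}$ is indeed strictly positive at the relevant points $z^*$ and on a neighborhood, so the restriction costs nothing.
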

\begin{lemma}[Lemma 26 of \cite{anari2021entropic}, see also {\cite[Appendix B]{singh2014entropy}}]%{\cite[see, e.g.,][Appendix B]{singh2014entropy}}]
\label{lem:KL-dual}
Consider a homogeneous distribution $\mu: \binom{[n]}{k} \to \R_{\geq 0}$ and let  $g_{\mu}(z_1, \dots, z_n)$ be its multivariate generating polynomial. Then, for any $q \in \R^{n}_{\geq 0}$ with $\sum_{i=q}^{n}q_i = 1$, we have
\[ \inf\set*{\DKL{\nu \river \mu} \given \nu D_{k\to 1}=q } = -\log\parens*{\inf_{z_1,\dots, z_n>0} \frac{g_\mu(z_1,\dots,z_n)}{z_1^{kq_1}\cdots z_n^{kq_n}} }. \]
\end{lemma}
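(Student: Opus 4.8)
The plan is to recognize the left-hand side as an entropy-minimization (``$I$-projection'') problem subject to linear moment constraints, and to derive the stated formula from the Donsker--Varadhan / Gibbs variational principle; the heart of the argument is an elementary exponential-tilting identity.

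First I would reformulate the constraint. Since $D_{k\to 1}(S,\set{i})=1/k$ for $i\in S$ and $0$ otherwise, the requirement $\nu D_{k\to 1}=q$ is the same as $\sum_{S\ni i}\nu(S)=kq_i$ for every $i\in[n]$; summing over $i$ and using $\sum_{i=1}^n q_i=1$ shows that any nonnegative $\nu$ satisfying these constraints automatically has $\sum_S\nu(S)=1$, so the infimum is genuinely over probability distributions $\nu$ on $\binom{[n]}{k}$ (those not absolutely continuous with respect to $\mu$ contributing $+\infty$). Next, for each $z=(z_1,\dots,z_n)\in\R^n_{>0}$, let $\mu_z$ be the tilted probability measure $\mu_z(S):=\mu(S)z^S/g_\mu(z)$, which has the same support as $\mu$. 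For any feasible $\nu$ I would prove the identity
\[ \DKL{\nu\river\mu}=\DKL{\nu\river\mu_z}+k\sum_{i=1}^n q_i\log z_i-\log g_\mu(z), \]
which follows by writing $\log\tfrac{\nu(S)}{\mu_z(S)}=\log\tfrac{\nu(S)}{\mu(S)}-\sum_{i\in S}\log z_i+\log g_\mu(z)$, multiplying by $\nu(S)$, summing over $S$, and using $\sum_S\nu(S)\sum_{i\in S}\log z_i=\sum_i(\log z_i)\sum_{S\ni i}\nu(S)=k\sum_i q_i\log z_i$.

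Given the identity, the inequality ``$\ge$'' is immediate: $\DKL{\nu\river\mu_z}\ge 0$ forces $\DKL{\nu\river\mu}\ge -\log\bigl(g_\mu(z)/(z_1^{kq_1}\cdots z_n^{kq_n})\bigr)$ for every feasible $\nu$ and every $z\in\R^n_{>0}$, so $\inf_\nu\DKL{\nu\river\mu}\ge -\log\inf_{z>0}g_\mu(z)/(z_1^{kq_1}\cdots z_n^{kq_n})$. For the matching bound ``$\le$'', suppose the infimum on the right is attained at some $z^\ast\in\R^n_{>0}$. Because $g_\mu$ is multilinear we have $z_i\,\partial_{z_i}g_\mu(z)=\sum_{S\ni i}\mu(S)z^S$, so the first-order optimality condition for $\log g_\mu(z)-k\sum_i q_i\log z_i$ at $z^\ast$ reads $\sum_{S\ni i}\mu_{z^\ast}(S)=kq_i$ for all $i$; that is, $\mu_{z^\ast}$ is itself feasible. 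Substituting $\nu=\mu_{z^\ast}$ and $z=z^\ast$ into the tilting identity (the first term now vanishing) gives $\DKL{\mu_{z^\ast}\river\mu}=-\log\inf_{z>0}g_\mu(z)/(z_1^{kq_1}\cdots z_n^{kq_n})$, so the infimum on the left is at most the right-hand side. Combining the two directions yields the claimed equality.

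The one point that needs real care --- and the main obstacle --- is the existence of the minimizer $z^\ast$. Setting $\theta_i:=\log z_i$, the objective is the convex function $\Phi(\theta):=\log g_\mu(e^{\theta_1},\dots,e^{\theta_n})-k\sum_i q_i\theta_i$, and a short coercivity computation (along a ray $\theta+tu$ the dominant term of $\Phi$ as $t\to+\infty$ is $t\,(\max_{S:\mu(S)>0}\sum_{i\in S}u_i-k\sum_i q_i u_i)$, and symmetrically as $t\to-\infty$) shows $\Phi$ attains its infimum modulo its lineality space precisely when $kq$ lies in the relative interior of $K:=\operatorname{conv}\set{\chi_S\given \mu(S)>0}$, where $\chi_S\in\set{0,1}^n$ is the indicator of $S$; in that case the previous paragraph applies verbatim. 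If $kq\notin K$ then the primal problem is infeasible, so the left side is $+\infty$, while a hyperplane separating $kq$ from $K$ furnishes a direction along which $g_\mu(e^\theta)/e^{k\sum_i q_i\theta_i}\to 0$, making the right side $+\infty$ as well. Finally, if $kq$ lies on the relative boundary of $K$, then every feasible $\nu$ must be supported on the unique face $\mathcal F$ of $K$ containing $kq$ in its relative interior, while $\mu_{e^{\theta+tu}}$ concentrates on that same face as $t\to+\infty$ along an exposing direction $u$; both sides therefore reduce to the identical problem for $\mu$ restricted to the index sets of $\mathcal F$, and since this restriction has strictly smaller support one finishes by induction, the base case being exactly the relative-interior case. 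Alternatively one may invoke strong duality for this linearly constrained convex program directly, as is done in \cite[Appendix~B]{singh2014entropy}.
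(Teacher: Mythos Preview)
Your proof is correct. Note, however, that the paper does not actually supply a proof of this lemma: it simply states the result as ``a direct consequence of convex duality'' and refers the reader to \cite[Lemma~26]{anari2021entropic} and \cite[Appendix~B]{singh2014entropy}. Your argument via the exponential-tilting identity is precisely the standard convex-duality proof these references contain, so there is nothing to compare beyond observing that you have supplied the details the paper chose to omit.
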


\begin{proof}[Proof of \cref{prop:entropy-contraction-n-to-1}]
For $i \in [n]$, let $p_i = \P_{S\sim \mu}{i \in S}$ and $q_i = \P_{S\sim \nu}{i \in S}$. Note that $\mu^{\hom}D_{n\to 1} = (p_1,\dots, p_n, 1-p_1,\dots, 1-p_n)$ and similarly for $\nu^{\hom}D_{n\to 1}$.  
Let \[g = g_{\mu^{\hom}}(z_1,\dots, z_n, z_{\bar{1}},\dots, z_{\bar{n}})\] denote the multivariate generating polynomial of $\mu^{\hom}$. By \cref{prop:spectralind-to-flc}, for all $0<\alpha \leq 1/2\eta$, $\log g(\set{z_i^{\alpha}, z_{\bar{i}}^{\alpha}})$ is concave on the convex cone
\[\Lambda_{\alpha, \epsilon} := \set{(z_1,\dots,z_n, z_{\bar{1}},\dots, z_{\bar{n}})\given 0\leq z_i \leq z_{\bar{i}}(1+\epsilon)^{1/\alpha} \quad\forall i \in [n]}.\]
In particular, it follows from \cref{lem:lc-equiv} (cf. \cite{anari2021entropic}) that 
\[f(z_1,\dots, z_n, z_{\bar{1}},\dots, z_{\bar{n}}) := g(z_1^{\alpha},\dots, z_n^{\alpha}, z_{\bar{1}}^{\alpha},\dots, z_{\bar{n}}^{\alpha})^{1/\alpha n}\] 
is concave on $\Lambda_{\alpha, \epsilon}$, and hence, for $(z_1,\dots,z_n,z_{\bar{1}},\dots, z_{\bar{n}}) \in \Lambda_{\alpha, \epsilon}$,
\begin{equation*}
     g(z_1, \dots, z_n, z_{\bar{1}}, \ldots, z_{\bar{n}}) \leq \parens*{\sum_{i=1} p_i z_i^{1/\alpha} + (1-p_i) z_{\bar{i}}^{1/\alpha} }^{\alpha n}.
\end{equation*}
Let $\alpha  = 1/\eta'$ and let $z^* \in \R^{2n}_{\geq 0}$ be defined by $z^*_i = (q_i/p_i)^{\alpha}$ and $z^*_{\bar{i}} = ((1-q_i)/(1-p_i))^{\alpha}$. 
Since $\nu \in \mathcal{V}(C,\mu)$, it %is straightforward to see 
is seen by expanding the definition of $\mathcal{V}(C,\mu)$ that $z^* \in \Lambda_{\alpha, \epsilon}$. Therefore, by \cref{lem:KL-dual} (cf. \cite{anari2021entropic}), we have that
\begin{align*}
    \DKL{\nu^{\hom}\river \mu^{\hom}} &\geq -\log \parens*{\frac{g(z_1^*,\dots, z_n^*, z_{\bar{1}}^*,\dots, z_{\bar{n}}^*)}{(z_1^*)^{nq_1}\dots (z_{n}^*)^{nq_n}(z_{\bar{1}}^*)^{n(1-q_1)}\dots (z_{\bar{n}}^*)^{n(1-q_n)}}}\\
    &\geq -\log\parens*{\frac{1}{(z_1^*)^{nq_1}\dots (z_{n}^*)^{nq_n}(z_{\bar{1}}^*)^{n(1-q_1)}\dots (z_{\bar{n}}^*)^{n(1-q_n)}}}\\
    &\geq \alpha n \DKL{\nu^{\hom}D_{n\to 1}\river \mu^{\hom}D_{n\to 1}},
\end{align*}
as desired. \qedhere

\end{proof}

\begin{corollary} \label{cor:restricted entropy contraction}
Let $\eta \geq 1$, $\epsilon > 0$, and $C\geq 1$. Suppose that $\mu$ is a probability distribution on $2^{[n]}$ which is completely $(\eta, \epsilon)$-spectrally independent. Then, for any $\nu \in \mathcal{V}^{c}(C,\mu)$, $\theta \in (0,1)$, and $\ell = \lceil \theta n \rceil$, we have that
% Suppose $\nu$ is "good" wrt $\mu$ i.e
% $\nu_{\Lambda}$ and $\mu_{\Lambda}$ satisfy \cref{eq:bounded marginal ratio}  for all partial (valid) assignments $\Lambda.$ 
%Suppose $\nu$ is completely $c$-bounded wrt $\mu.$
%Let $\eta' =  \max\set*{2\eta , \log(\max\set*{1,c})/\log(1+\epsilon )},$ then for $\theta \in (0,1)$ and $ \ell = \lceil \theta n \rceil$
\begin{equation*}
    \DKL {\nu^{\hom} D_{n \to (n-\ell)} \river \mu^{\hom} D_{n \to (n-\ell)}} \leq  (1- \kappa) \DKL {\nu^{\hom} \river \mu^{\hom}},
\end{equation*}
where $\kappa = (\theta/3)^{\eta'}$ with $\eta' =  \max \set*{2\eta , \sqrt{\log(C)/\log(1+\epsilon)}} .$   %When $\ell = \lceil n\eta' \rceil$
\end{corollary}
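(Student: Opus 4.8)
The plan is to derive \cref{cor:restricted entropy contraction} from \cref{prop:entropy-contraction-n-to-1} by a restricted analogue of the entropic local-to-global argument of \cite{anari2021entropic}, followed by an elementary product estimate. The first observation is that complete $(\eta,\epsilon)$-spectral independence of $\mu$ says precisely that every pinning $\mu^{\sigma_\Lambda}$ with $\card{\Lambda}\le n-2$ is $(\eta,\epsilon)$-spectrally dominated, so \cref{prop:entropy-contraction-n-to-1} applies to each of them; since a pinning of $\mu^{\hom}$ is the homogenization of a pinning of $\mu$, this means every pinning $(\mu^{\hom})^{\tau}$ of degree $m\ge 2$ satisfies the restricted entropic independence inequality $\DKL{\rho D_{m\to 1}\river (\mu^{\hom})^{\tau}D_{m\to 1}}\le \frac{\eta'}{m}\DKL{\rho\river(\mu^{\hom})^{\tau}}$ for every $\rho$ in the corresponding class $\mathcal{V}(C,\cdot)$ (recall that this one-sided class is exactly where the witness point $z^{*}$ from the proof of \cref{prop:entropy-contraction-n-to-1} lands in $\Lambda_{\alpha,\epsilon}$). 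The second observation is that $\nu\in\mathcal{V}^{c}(C,\mu)$ gives, on taking external field $\vec 1$, that $\nu^{\sigma_\Lambda}\in\mathcal{V}(C,\mu^{\sigma_\Lambda})$ for every valid $\sigma_\Lambda$; equivalently, every pinning of $\nu^{\hom}$ lies in the $\mathcal{V}(C,\cdot)$ class of the corresponding pinning of $\mu^{\hom}$. Crucially, both families of hypotheses are closed under further pinning.

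The core is the following claim, proved by induction on the degree $k$: if $\mu^{\hom}$ has degree $k$ and each of its pinnings of degree $\ge 2$ is $\eta'$-entropically independent restricted to $\mathcal{V}(C,\cdot)$, while each pinning of $\nu^{\hom}$ lies in the corresponding class, then for all $1\le m<k$,
\[ \DKL{\nu^{\hom}D_{k\to m}\river\mu^{\hom}D_{k\to m}}\le \left(1-\prod_{j=k-m+1}^{k}\Bigl(1-\frac{\eta'}{j}\Bigr)\right)\DKL{\nu^{\hom}\river\mu^{\hom}}. \]
For the inductive step I would use the chain rule for relative entropy twice. Writing $D_{k\to 1}=D_{k\to m}D_{m\to 1}$,
\[ \DKL{\nu^{\hom}D_{k\to m}\river\mu^{\hom}D_{k\to m}}=\DKL{\nu^{\hom}D_{k\to 1}\river\mu^{\hom}D_{k\to 1}}+\E_{i\sim\nu^{\hom}D_{k\to 1}}{\DKL{(\nu^{\hom}D_{k\to m})^{\tau_{i}}\river(\mu^{\hom}D_{k\to m})^{\tau_{i}}}}, \]
and the same decomposition with $D_{k\to 1}$ in place of $D_{k\to m}$ identifies $\E_{i}{\DKL{(\nu^{\hom})^{\tau_i}\river(\mu^{\hom})^{\tau_i}}}$ with $\DKL{\nu^{\hom}\river\mu^{\hom}}-\DKL{\nu^{\hom}D_{k\to1}\river\mu^{\hom}D_{k\to1}}$. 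I would bound the first term by $\frac{\eta'}{k}\DKL{\nu^{\hom}\river\mu^{\hom}}$ via the top-level restricted entropic independence, and for the conditional terms use the combinatorial identity $(\mu^{\hom}D_{k\to m})^{\tau_{i}}=(\mu^{\hom})^{\tau_{i}}D_{(k-1)\to(m-1)}$ (and likewise for $\nu^{\hom}$), which lets one apply the induction hypothesis to the degree-$(k-1)$ pinning $(\mu^{\hom})^{\tau_{i}}$ — whose pinnings, by the closure noted above, again satisfy all hypotheses. Putting these together and substituting the top-level bound once more yields $1-\Gamma_{k,m}=(1-\Gamma_{k-1,m-1})(1-\eta'/k)$ for the contraction coefficient $\Gamma_{k,m}$; unrolling from the base case $\Gamma_{k,1}=\eta'/k$ (which is \cref{prop:entropy-contraction-n-to-1} itself) gives the displayed product. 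Specializing to $k=n$ and $m=n-\ell$, the contraction coefficient is $1-\prod_{j=\ell+1}^{n}(1-\eta'/j)$.

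Finally, I would verify $\prod_{j=\ell+1}^{n}(1-\eta'/j)\ge(\theta/3)^{\eta'}=\kappa$ for $\ell=\lceil\theta n\rceil$. Writing $t=\lceil\eta'\rceil$ and using concavity of $a\mapsto\log(1-ax)$ in the form $\log(1-\frac{\eta'}{t}\cdot\frac{t}{j})\ge\frac{\eta'}{t}\log(1-\frac{t}{j})$, it suffices to lower bound $\prod_{j=\ell+1}^{n}(1-t/j)=\frac{\ell(\ell-1)\cdots(\ell-t+1)}{n(n-1)\cdots(n-t+1)}\ge (\frac{\ell-t+1}{n})^{t}$; since $\ell\ge\theta n$ and $\ell-t+1\ge \theta n/3$ for $n$ larger than a constant depending on $\eta'$ and $\theta$, this is at least $(\theta/3)^{t}$, and raising to the power $\eta'/t\le 1$ finishes the estimate. (The remaining cases — small $n$, or $\ell$ close to $n$ where one uses the base case $\Gamma_{n,1}=\eta'/n$ directly — are handled separately and are not the bottleneck for the main theorems.) The step I expect to be most delicate is the bookkeeping in the induction: one must ensure that throughout, only \emph{pinnings} of $\mu^{\hom}$ and $\nu^{\hom}$ are ever invoked — never the down-projections $\mu^{\hom}D_{k\to j}$, for which complete spectral independence and complete $C$-boundedness are not known — and the identity $(\mu^{\hom}D_{k\to m})^{\tau_i}=(\mu^{\hom})^{\tau_i}D_{(k-1)\to(m-1)}$ is exactly what makes this restricted version of the argument go through. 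The product estimate itself is routine, modulo the slack built into the constant $3$.
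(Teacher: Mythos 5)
Your proposal is essentially the paper's own approach, just spelled out: the paper's proof is a one-sentence pointer ("apply \cref{prop:entropy-contraction-n-to-1} to the conditionals and use the local-to-global argument of \textcite{anari2021entropic}"), and what you have written is a correct execution of exactly that argument. The two key observations — that complete spectral independence and complete $C$-boundedness are closed under pinning (so \cref{prop:entropy-contraction-n-to-1} applies to every conditional), and that the down-operator recursion only ever invokes pinnings of $\mu^{\hom},\nu^{\hom}$ (via $(\mu^{\hom}D_{k\to m})^{\tau_i}=(\mu^{\hom})^{\tau_i}D_{(k-1)\to(m-1)}$), never their down-projections — are precisely what make the restricted local-to-global step legitimate, and you identify both.

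Three small points of precision. First, in the sentence where you identify $\E_i{\DKL{(\nu^{\hom})^{\tau_i}\river(\mu^{\hom})^{\tau_i}}}$ with $\DKL{\nu^{\hom}\river\mu^{\hom}}-\DKL{\nu^{\hom}D_{k\to1}\river\mu^{\hom}D_{k\to1}}$, you wrote "the same decomposition with $D_{k\to 1}$ in place of $D_{k\to m}$"; plugging $D_{k\to 1}$ into that decomposition actually gives a tautology. What you want is the decomposition with $m=k$ (the identity operator), which yields the chain rule identity you quote. Second, the recursion $1-\Gamma_{k,m}=(1-\Gamma_{k-1,m-1})(1-\eta'/k)$ should be an inequality $\ge$ (the decomposition gives $\Gamma_{k,m}\le\Gamma_{k-1,m-1}(1-\eta'/k)+\eta'/k$), unless you take $\Gamma_{k,m}$ to denote the bound you are tracking rather than the tight contraction constant; the unrolled product $\prod_{j=k-m+1}^{k}(1-\eta'/j)$ is nevertheless correct. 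Third, the numerical step $\prod_{j=\ell+1}^{n}(1-\eta'/j)\ge(\theta/3)^{\eta'}$ genuinely requires $\ell$ large relative to $\eta'$ (you note this, deferring "small $n$"), and for $n$ so small that $\ell+1\le\eta'$ the product degenerates to a non-positive quantity and the bound is vacuous. The paper invokes the corollary only for blow-ups $\mu_{\vec k}$ with $k\to\infty$, so this is not a problem for the applications, but the literal statement of the corollary carries no restriction on $n$, and it is worth being aware that the cited local-to-global mechanism does not give $(\theta/3)^{\eta'}$ for tiny $n$; either a caveat "$n$ sufficiently large" or a separate elementary argument in the small-$\ell$ range would be needed to make the statement airtight as written.
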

\begin{proof}
This follows immediately by applying \cref{prop:entropy-contraction-n-to-1} to the conditionals of $\mu$ and $\nu$ and using exactly the same local-to-global argument as in \cite{anari2021entropic} (cf.~\cite{alimohammadi2021fractionally}).  
\end{proof}

Finally, we are in a position to state and prove the main result of this section.

\begin{proof}[Proof of \cref{thm:restricted entropy field dynamics}]
First, let $P = \projBlockMarkov_{k, \lceil n k \theta \rceil}$, for $k \in \N$. Let $\vec{k} = (k,\dots, k) \in \N^{n}$. By \cref{prop:blowup preserve complete spectral ind}, $\mu_{\vec{k}}$ is completely $(\eta, \epsilon)$-spectrally independent and by \cref{prop:blowup preserve bounded}, $\nu_{\vec{k}} \in \mathcal{V}^{c}(C, \mu_{\vec{k}})$.    
%Since $\projBlockMarkov_{k,\lceil nk \theta \rceil} \rightarrow \fieldMarkov_{\theta}$ as $k \to \infty,$ so it is enough to set $k$ large enough, and prove the claim for $ P \equiv \projBlockMarkov_{k,\lceil nk \theta \rceil}.$
%Note that for any $k,$ $\mu_k$ is $(\eta,\epsilon)$-complete spectrally independent, and $\nu_k$ is totally $c$-bounded good wrt $\mu_k$ by \cref{prop:blowup preserve bounded}.
%By the data processing inequality
Therefore, with $\ell = \lceil n k \theta \rceil$, we have
\begin{align*}
     \DKL {\nu P \river \mu P}  &\leq \DKL {\nu_{\vec{k}}^{\hom} D_{nk \to (nk-\ell)} \river \mu_{\vec{k}}^{\hom} D_{nk \to (nk-\ell)}} & \textrm{(data processing inequality)}\\
     &\leq (1- \kappa) \DKL {\nu_{\vec{k}}^{\hom} \river \mu_{\vec{k}}^{\hom}} & \textrm{\cref{cor:restricted entropy contraction}}\\
     &=  (1- \kappa) \DKL{\nu_{\vec{k}} \river \mu_{\vec{k}}}\\
     &= (1-\kappa) \DKL{\nu \river \mu}. & \textrm{\cref{lem:blow up preserves KL}}
\end{align*}
The result for $P = \fieldMarkov_{\theta}$ follows from the above and \cref{lem:approx-field-block}. 
\end{proof}

\section{Comparison of Glauber dynamics and field dynamics}\label{sec:comparison}
In the previous section, we established restricted entropy contraction for the field dynamics, with respect to the class of C-completely bounded measures. In this section, we will establish such a result for the usual Glauber dynamics, in the (stronger) form of approximate tensorization of entropy. More precisely, we will show the following.  

% In this section, we prove a comparison result which allows us to establish approximate tensorization of the entropy functional for the distribution $\mu$ using an analysis based on comparison to field dynamics. As a consequence, we can prove a restricted form of entropy tensorization for the Glauber dynamics, provided that the distribution is spectrally independent and we have a restricted form of the MLSI in an easier regime of the problem. 
\begin{theorem}
\label{thm:comparison}
 Let $\mu$ be a probability distribution on $2^{[n]}$ which is completely $(\eta, \epsilon)$-spectrally independent. Let $\nu \in \mathcal{V}^{c}(C,\mu)$ and $\theta \in (0,1)$, and let $f := d\nu/d\mu$. Suppose there exists $\kappa > 0$ such that for $\pi := \theta \ast \mu := (\theta,\dots, \theta) \ast \mu$ and for all $R \subseteq [n]$ with $\pi(R) > 0$, %\Fred{clarify this is assumed for easy regime}
\begin{equation}\label{eqn:mlsi-easy}
\kappa \Ent_{\pi^{{\bf 1}_R}}(f) \le \mathcal{E}_{P_{\theta}}(f,\log f),
\end{equation}
where $P_{\theta}$ is the transition matrix for the Glauber dynamics with respect to $\pi^{{\bf 1}_R}$ and $\mathcal{E}_{P_{\theta}}$ is its corresponding Dirichlet form. Then, we have
 \[ \Ent_{\mu}[f] \le C' \sum_{v \in [n]} \E_{\mu}{\Ent_v[f]},\]
 where $C' = \frac{C}{\kappa n}\times \frac{1}{\Omega(\theta)^{O(\eta')}}$, with $\eta' =  \max \set*{2\eta , \sqrt{\log(C)/\log(1+\epsilon)}}$. 
\end{theorem}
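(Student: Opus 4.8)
The plan is to use the field dynamics $\fieldMarkov_\theta$ as a bridge between approximate tensorization of entropy for $\mu$ and the ``easy'' modified log-Sobolev inequalities \eqref{eqn:mlsi-easy} assumed for the Glauber dynamics on the tilted, all-occupied-pinned measures $\pi^{\mathbf{1}_R}$; this is the entropy analogue of the variance-based comparison argument of \textcite{chen2021rapid}. There are three ingredients. (i) \cref{thm:restricted entropy field dynamics}, applied to $\nu$ and $P=\fieldMarkov_\theta$, yields entropy contraction at rate $\kappa_0 := (\theta/3)^{\eta'}$, hence after rearranging
\[ \Ent_\mu[f] = \DKL{\nu \river \mu} \;\le\; \kappa_0^{-1}\bigl(\DKL{\nu \river \mu} - \DKL{\nu\fieldMarkov_\theta \river \mu}\bigr). \]
(ii) One step of $\fieldMarkov_\theta$ can be described as: draw a pinning set $R$ (each currently-occupied site is kept with probability $1-\theta$, each unoccupied site discarded), then \emph{completely resample} the free coordinates $S := [n]\setminus R$ from $\pi^{\mathbf{1}_R}$, which equals $\theta\ast(\mu^{\mathbf{1}_R})$; so the measures $\pi^{\mathbf{1}_R}$ appearing in \eqref{eqn:mlsi-easy} are precisely the ones internal to $\fieldMarkov_\theta$. (iii) The hypothesis \eqref{eqn:mlsi-easy} turns ``the relative entropy removed by resampling the free block $S$'' into a single-site Glauber Dirichlet form on $\pi^{\mathbf{1}_R}$.

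Concretely, I would bound the one-step entropy dissipation $\DKL{\nu \river \mu} - \DKL{\nu\fieldMarkov_\theta \river \mu}$ from above. Using the decomposition in (ii) together with the chain rule for relative entropy, this dissipation is controlled --- up to the bookkeeping needed to move between the $\mu$-stationary chain and the $\theta$-tilted resampling measures --- by an average over the random pinning $R$ of the conditional relative entropies $\Ent_{\pi^{\mathbf{1}_R}}(f_R)$ removed by exactly resampling the free block, where $f_R(\tau) := f(\tau\cup R)$ for $\tau \in 2^{[n]\setminus R}$. Feeding each such term through \eqref{eqn:mlsi-easy} gives $\Ent_{\pi^{\mathbf{1}_R}}(f_R) \le \kappa^{-1}\mathcal{E}_{P_\theta}(f_R,\log f_R) = \kappa^{-1}\abs{S}^{-1}\sum_{v\in S}\mathcal{E}^{\pi^{\mathbf{1}_R}}_v(f_R,\log f_R)$, and each single-site heat-bath Dirichlet form is comparable, up to a universal constant, to the local entropy $\E_{\pi^{\mathbf{1}_R}}{\Ent_v[f_R]}$. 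Averaging over $R$ and re-expanding the tilt and the pinning must then reassemble $\frac1n\sum_{v\in[n]}\E_{\mu}{\Ent_v[f]}$: this is where $\nu \in \mathcal{V}^c(C,\mu)$ enters --- via control of the conditionals of $\nu$ as in \cref{prop:blowup preserve bounded} --- to compare the local entropies of the tilted--pinned $\pi^{\mathbf{1}_R}$ with those of $\mu$, which, together with the fact that the resampled block has size $\abs{S} = \Omega(\theta n)$ with high probability, yields the factor $C$ and one factor of $\Omega(\theta)^{-O(1)}$. Combining with (i), and noting that $\kappa_0^{-1} = (3/\theta)^{\eta'}$ absorbs the remaining powers of $\theta$ into $\Omega(\theta)^{-O(\eta')}$, produces $\Ent_\mu[f] \le \frac{C}{\kappa n}\cdot\frac{1}{\Omega(\theta)^{O(\eta')}}\sum_{v\in[n]}\E_{\mu}{\Ent_v[f]}$, which is the assertion.

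The main obstacle is exactly the ``bookkeeping'' flagged above, and it is genuinely subtler than in the variance-based argument: there, the Dirichlet form of the full-block-resampling chain is \emph{exactly} a variance, whereas for relative entropy the corresponding object is $\Ent$ plus a Jensen-gap (equivalently, reverse-KL) correction, so the one-step dissipation of the mixture $\fieldMarkov_\theta$ agrees with the average of conditional entropies only up to this correction and an additional information term coming from the state-dependent, asymmetric law of the random block $R$. Showing that $\nu\in\mathcal{V}^c(C,\mu)$ keeps this correction under control --- i.e.\ making the ``dissipation $\approx$ average conditional entropy'' step quantitative with the advertised dependence on $C$ --- is the crux. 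A secondary, routine point is that the single-site and full-block heat-bath Dirichlet forms need to be compared to the corresponding entropy functionals only up to universal constants; this should be carried out carefully so that the final constant $C'$ comes out as stated.
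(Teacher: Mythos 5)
Your proposal identifies the right ingredients (restricted entropy contraction for $\fieldMarkov_\theta$ from \cref{thm:restricted entropy field dynamics}, the easy-regime MLSI \eqref{eqn:mlsi-easy}, the decomposition of a field-dynamics step into ``random pinning $R$ + exact resample from $\pi^{\mathbf{1}_R}$'', and the need for $C$-boundedness somewhere), but the route you sketch hits a genuine obstacle which you correctly flag but do not resolve, and which the paper's proof is specifically structured to avoid. You propose to upper-bound the one-step dissipation $\DKL{\nu\river\mu} - \DKL{\nu\fieldMarkov_\theta\river\mu}$ by an average of conditional block entropies $\Ent_{\pi^{\mathbf{1}_R}}(f)$. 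Because the law of the kept block $R$ depends on the current configuration (only occupied sites can be kept), this dissipation is not such an average; it also carries an information term about $R$ itself, and you offer no argument that $C$-boundedness tames it. The paper never touches the dissipation of $\fieldMarkov_\theta$. Instead it applies \cref{cor:restricted entropy contraction} to the $k$-blowup of $\mu$ and $\nu$; the down operator $D_{nk\to nk-\ell}$ \emph{does} have a dissipation that is exactly an average of block conditional entropies (approximate block tensorization), with no state-dependence issue, and \cref{lem:field tensorization from block} performs a careful $k\to\infty$ limit to convert this into the pinning-weighted inequality \eqref{eq:entfac-field}, in which the random $R$ has a state-\emph{independent} product law (each site independently retained with probability $1-\theta$) and the asymmetry is absorbed into the weight $\pi(R)$. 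These two lemmas are precisely the missing piece, and without them (or a genuinely new argument for the $\fieldMarkov_\theta$ dissipation) the proposed proof does not close.

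A second, more local error: you assert that each single-site heat-bath Dirichlet form is comparable ``up to a universal constant'' to the local entropy functional $\Ent_v[f]$. That is false in general; the comparison $\mathcal{E}_v(f,\log f)\lesssim\Ent_v[f]$ requires a two-sided bound on the neighbouring-state ratios of $f$, and the resulting constant is $C$, not universal --- this is exactly the content of \cref{lem:restricted mlsi to tensorization} and \cref{lem:dirichlet-to-entropy}, and it is \emph{here} that the hypothesis $\nu\in\mathcal{V}^c(C,\mu)$ produces the $C$ in $C_1=\kappa^{-1}C/n$. The step you credit with producing the $C$ factor (comparing local entropies under $\pi^{\mathbf{1}_R}$ to those under $\mu$) in fact only contributes a factor $\theta^{-1}$, via the elementary \cref{lem:ent-compare} inside \cref{lem:ent-tensorization}, and involves no dependence on $C$. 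So the $C$-dependence is correctly anticipated to be present, but your bookkeeping places it in the wrong step, and the step where it truly arises is one you had marked as costing only a universal constant.
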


\begin{remark}
\cref{eqn:mlsi-easy} amounts to an MLSI for $P_{\theta}$ with constant $\Theta(\kappa)$. In our applications, we will have $\kappa = \Theta(1/n)$, so that $C'$ will be a constant depending only on $C, \eta, \epsilon, \theta$ (and crucially, not on $n$). 
\end{remark}

Intuitively, the above theorem uses \cref{cor:restricted entropy contraction} (which is stronger than restricted entropy contraction of the field dynamics) %for completely spectrally independent measures 
to transfer a modified log-Sobolev inequality for the Glauber dynamics with respect to the ``easier measure'' $\theta \ast \mu$ to an approximate entropy tensorization statement for the true measure $\mu$. Such a scheme was also employed in \cite{chen2021rapid} at the level of the spectral gap. However, here, considerably more care is required due to two reasons: (i) we only have a \emph{restricted} version of entropy contraction for $D_{n \to (n - \theta n)}$; (ii) in contrast to the spectral gap, modified log-Sobolev inequalities and approximate tensorization statements are significantly more challenging to establish for the ``easier measure''. Indeed, what we would really like to have is approximate tensorization of entropy for the ``easier measure'', but as this does not seem to be available in the literature for, e.g., the hard-core model, we show how to make do with only a modified log-Sobolev inequality (see \cref{lem:restricted mlsi to tensorization}).

To begin with, we show how to obtain a version of \cref{thm:comparison} if (\cref{eqn:mlsi-easy}) is replaced by the stronger condition that approximate tensorization of entropy is available for the ``easier measure'' (\cref{eq:entfac-easy} below).

% \subsection{Comparison from entropy tensorization}
\iffalse
%Throughout this subsection, let $f = \nu/\mu$. %Assume that there is $C>0$ such that all nonzero values of $f$ are between $1/C$ and $C$. 
Let $\pi = \theta * \mu$ be the tilt of $\mu$ under external field $\theta$  and let $f$ be an arbitrary positive function. \Fred{TODO: rearrange and add definition of $Z_{\theta}$.}

\begin{assumption}\label{assump:entfac-easy}%[Entropy factorization of $\pi$]
For every $R\subseteq V$ with $\pi({\bf 1}_R) > 0$ and arbitrary positive function $f$, we have that 
\begin{equation}\label{eq:entfac-easy}
\Ent_{\pi^{{\bf 1}_R}}(f) \le C_1 \sum_{v\in V} \pi^{{\bf 1}_R}[\Ent_v(f)].
\end{equation}
\end{assumption}

In the following, we denote by $R$ a random subset of $V$ where each element $v\in V$ is included independently with probability $1-\theta$. 
\begin{assumption}\label{assump:entfac-field}%[Entropy factorization of the field dynamics]
Assume that 
\begin{equation}\label{eq:entfac-field}
\Ent_{\mu}(f) \le C_2\frac{Z_\theta}{\theta^{\card{V}}} \E_R { \pi_R({\bf 1}_R) \Ent_{\pi^{{\bf 1}_R}}(f)} .
\end{equation}
\end{assumption}
\fi
% The below Lemma reduces the problem of establishing approximate entropy tensorization for the Glauber dynamics (for a particular density $f$) to establishing analogous tensorization estimates for the Glauber dynamics in a potentially easier regime \cref{eq:entfac-easy} and for the field dynamics \cref{eq:entfac-field}.  
\begin{proposition}\label{lem:ent-tensorization}
Let $V$ be a finite set. Let $\mu$ be a probability measure on $2^V$, $\theta \in (0,1)$,
and let $\pi := \theta * \mu := (\theta,\ldots,\theta) * \mu$ be the tilt of $\mu$ under uniform external field $\theta$. 
Suppose $f$ is a positive function on $2^V$ such that:
\begin{enumerate}
    \item For every $R\subseteq V$ with %$\pi({\bf 1}_R) > 0$, we have that
    $\pi(R) > 0$, we have 
\begin{equation}\label{eq:entfac-easy}
\Ent_{\pi^{{\bf 1}_R}}(f) \le C_1 \sum_{v\in V} \pi^{{\bf 1}_R}[\Ent_v(f)].
\end{equation}
    \item For $R$ a random subset of $V$, where each element $v\in V$ is included independently with probability $1-\theta$, we have
\begin{equation}\label{eq:entfac-field}
\Ent_{\mu}(f) \le C_2\frac{Z_\theta}{\theta^{\card{V}}} \E_R { \pi(R) \Ent_{\pi^{{\bf 1}_R}}(f)}, 
\end{equation}
    where $Z_{\theta}$ is the normalizing constant of $\pi = \theta * \mu$ as in \cref{def:measure-tilt}.
    %\Fred{TODO: add definition of $Z_\theta$.}
\end{enumerate}
%Under Assumptions \cref{assump:entfac-easy} and \cref{assump:entfac-field}, 
Then, we have that 
\[
\Ent_{\mu}(f) \le C \sum_{v\in V} \mu[\Ent_v(f)]
\]
for $C = C_1 C_2 \theta^{-1}$.
\end{proposition}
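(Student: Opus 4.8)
The plan is to combine the two hypotheses in the obvious telescoping way, with the only subtlety being bookkeeping the normalizing constant $Z_\theta$ and the probabilities $\pi(R)$ versus the law of the random set $R$. First I would start from \cref{eq:entfac-field}, which expresses $\Ent_\mu(f)$ in terms of $\E_R[\pi(R)\Ent_{\pi^{{\bf 1}_R}}(f)]$ where $R$ includes each $v$ independently with probability $1-\theta$. Then I would apply \cref{eq:entfac-easy} inside the expectation: for each fixed $R$ with $\pi(R) > 0$ we have $\Ent_{\pi^{{\bf 1}_R}}(f) \le C_1 \sum_v \pi^{{\bf 1}_R}[\Ent_v(f)]$, so
\[
\Ent_\mu(f) \le C_1 C_2 \frac{Z_\theta}{\theta^{\card V}} \E_R\Bracks*{\pi(R) \sum_{v\in V} \pi^{{\bf 1}_R}[\Ent_v(f)]}.
\]
(When $\pi(R) = 0$ the term contributes nothing, so restricting to $\pi(R) > 0$ is harmless.)

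The heart of the argument is to identify the right-hand side with $C \sum_v \mu[\Ent_v(f)]$. The key computation is that $\pi(R)\cdot \pi^{{\bf 1}_R}[\,\cdot\,]$, when further weighted by the probability that the random set equals $R$ and summed over $R$, reconstructs the measure $\mu$ up to the explicit constant $Z_\theta/\theta^{\card V} \cdot \theta$. Concretely: a sample from $\mu$ can be generated by first drawing $R$ with the stated product law (inclusion probability $1-\theta$), then drawing the configuration on $R$ from $\pi^{{\bf 1}_R}$ — this is essentially the definition of the field dynamics stationary decomposition / the identity underlying \cref{def:field-dynamics}. Tracking constants, $\P[R = S]\,\pi(S)\,\pi^{{\bf 1}_S}(\tau) = \theta^{\card V} Z_\theta^{-1}\,\theta^{-\card{S^c \cap \tau}} \cdots$ — I would write $\pi^{{\bf 1}_R}(\tau) \propto \mu(\tau)\theta^{\card\tau}$ with normalizer $\pi(R) Z_\theta$ cancelling appropriately, and verify that summing $\P[R=S]$ over all $S$ compatible with a fixed $\tau \sim \mu$ yields exactly $\theta^{\card V} Z_\theta^{-1} \theta^{-1}\cdot$ (mass of $\tau$ under $\mu$) — so that $\frac{Z_\theta}{\theta^{\card V}}\E_R[\pi(R)\pi^{{\bf 1}_R}[g]] = \theta^{-1}\mu[g]$ for any functional $g$ of the configuration (here $g = \Ent_v(f)$, viewed as a function of $\sigma_{\sim v}$). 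Plugging this in gives $\Ent_\mu(f) \le C_1 C_2 \theta^{-1}\sum_v \mu[\Ent_v(f)]$, which is the claim with $C = C_1 C_2 \theta^{-1}$.

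The main obstacle I anticipate is purely the constant-chasing in this last identity: getting the powers of $\theta$ and the factor $Z_\theta$ exactly right, and being careful that $\Ent_v(f)$ depends only on $\sigma_{[n]\setminus v}$ so that its $\pi^{{\bf 1}_R}$-average only involves the part of the configuration on $R \setminus \{v\}$ (and that the cases $v \in R$ versus $v \notin R$ are handled uniformly, since on $R^c$ the configuration is deterministically empty and $\Ent_v$ of a constant-in-$\sigma_v$ situation still makes sense under $\pi^{{\bf 1}_R}$). No genuinely hard inequality is needed — everything follows from the two hypotheses plus the explicit combinatorial identity relating the ``two-stage'' sampling procedure to $\mu$. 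I would present the constant computation as a short lemma or inline verification, then assemble the three displayed inequalities.
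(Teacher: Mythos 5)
Your plan has a genuine gap, and it's not just bookkeeping. You treat $\Ent_v(f)$ as though it were a fixed functional of the configuration, so that the two-stage sampling identity relating $\E_R[\pi(R)\,\pi^{{\bf 1}_R}[\,\cdot\,]]$ to $\mu[\,\cdot\,]$ applies directly. But $\Ent_v(f)$ is not such a functional: by definition it is the entropy of $f$ with respect to the \emph{conditional} law of $\sigma_v$ given $\sigma_{\sim v}$, and that conditional law depends on the underlying measure. In $\pi^{{\bf 1}_R}[\Ent_v(f)]$ the inner $\Ent_v$ is taken with respect to $\pi^{{\bf 1}_R}(\sigma_v=\cdot\mid\sigma_{\sim v})$, whereas in the target $\mu[\Ent_v(f)]$ it is with respect to $\mu(\sigma_v=\cdot\mid\sigma_{\sim v})$. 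Concretely: when $v\in R$, $\sigma_v$ is pinned and $\pi^{{\bf 1}_R}[\Ent_v(f)]=0$; when $v\notin R$, the $\pi^{{\bf 1}_R}$-conditional of $\sigma_v$ has odds ratio $\theta\,\mu^\sigma_+/\mu^\sigma_-$, a $\theta$-tilt of the $\mu$-conditional (this is \cref{eqn:pi-to-mu-site} in the paper). So the clean identity you invoke cannot be applied ``for any functional $g$'' here, because the integrand itself changes with the measure.

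This is exactly where the paper has to do real work: after rewriting $\E_R[\pi(R)\,\pi^{{\bf 1}_R}(\Ent_v(f))]$ as $\E_{\sigma\sim\pi_{V-v}}[\theta^{|V|-|\sigma|}\Ent_{\pi^\sigma}(f)]$ (Equation (22) of \cite[Lemma~4.1]{chen2021rapid}), one still has $\Ent_{\pi^\sigma}(f)$ (entropy under the biased Bernoulli) rather than $\Ent_{\mu^\sigma}(f)$. The comparison between the two is the content of \cref{lem:ent-compare}, and that comparison is where the factor $\theta^{-1}$ enters. Your $\theta^{-1}$ instead comes from the combinatorial identity, but that identity does not actually produce a $\theta^{-1}$: summing $\P[R=S]$ over $S\subseteq\tau$ gives $\theta^{|V|-|\tau|}$, and together with $\pi(\tau)=\mu(\tau)\theta^{|\tau|}/Z_\theta$ you get $\frac{Z_\theta}{\theta^{|V|}}\E_R[\pi(R)\,\pi^{{\bf 1}_R}[g]]=\mu[g]$ exactly, with no extra factor of $\theta^{-1}$. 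So even if $\Ent_v(f)$ were measure-independent, your constant would be $C_1C_2$ rather than $C_1C_2\theta^{-1}$. The missing $\theta^{-1}$ is precisely the cost of converting a $\pi^\sigma$-entropy to a $\mu^\sigma$-entropy via \cref{lem:ent-compare}, and without that comparison lemma (or something equivalent) the argument does not close.
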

\begin{proof}
The proof of this lemma is essentially the entropic version of \cite[Lemma~4.1]{chen2021rapid}. Concretely, applying the assumptions \cref{eq:entfac-field} and then \cref{eq:entfac-easy}, we have 
\begin{align*}
\Ent_{\mu}(f) &\le C_2\frac{Z_\theta}{\theta^{\card{V}}} \E_R {\sum_{v\in V} \pi(R) \Ent_{\pi^{{\bf 1}_R}}(f)}\\
&\le {C_2C_1} \frac{Z_\theta}{\theta^{\card{V}}} \E_R {\pi(R) \sum_{v\in V} \pi^{{\bf 1}_R}(\Ent_v(f))}.
\end{align*}
%\June{for an assignment $\sigma$, let $\abs{\sigma}$ denote the number of + spins}
For $\sigma \in 2^{V}$, we denote by $\card{\sigma}$ the number of elements in $\sigma$. Following the proof of Equation (22) in  \cite[Lemma~4.1]{chen2021rapid}, we have 
\begin{equation}\label{eq:exp-over-R}
\E_R { \pi(R)\pi^{{\bf 1}_R} (\Ent_{v}(f))} = \E_{\sigma \sim \pi_{V-v}}{\theta^{\card{V}-\card{\sigma}} \Ent_{\pi^\sigma}(f)},
\end{equation}
where $\pi_{V-v}$ denotes the marginalization of $\pi$ to $2^{V\setminus \set{v}}$. 
Next, we define the notation $\pi^\sigma_+ = \pi^{\sigma}(\set{v})$ and $\pi^\sigma_- = \pi^{\sigma}(\emptyset)$, and similarly, define $\mu^{\sigma}_+$, $\mu^{\sigma}_-$, $f^\sigma_+$, and $f^\sigma_-$. Also, for $\sigma \in 2^{V\setminus \set{v}}$, we let $\sigma_+ := \sigma \cup \set{v}$ and $\sigma_- := \sigma$ (but now viewed as a subset of $V$). 
\iffalse
\begin{align}
    &\E_{\sigma \sim \pi_{V-v}}{\theta^{-\card{\sigma}} \Ent_{\pi^\sigma}(f)} \\
    &= \sum_{\sigma \in \Omega(\pi_{V-v})} \theta^{-\card{\sigma}} \pi_{V-v}(\sigma) \parens*{\pi^{\sigma}_+f^{\sigma}_+\log f^{\sigma}_+ + \pi^{\sigma}_-f^{\sigma}_-\log f^{\sigma}_- - (\pi^{\sigma}_+f^{\sigma}_++\pi^{\sigma}_-f^{\sigma}_-)\log(\pi^{\sigma}_+f^{\sigma}_++\pi^{\sigma}_-f^{\sigma}_-)}\\
    &= \sum_{\sigma \in \Omega(\pi_{V-v})} \theta^{-\card{\sigma}} \pi_{V}(\sigma_-) \parens*{f^{\sigma}_-\log f^{\sigma}_- + \frac{\pi^{\sigma}_+}{\pi^{\sigma}_-}f^{\sigma}_+\log f^{\sigma}_+ - (f^{\sigma}_-+\frac{\pi^{\sigma}_+}{\pi^{\sigma}_-}f^{\sigma}_+)\log(\pi^\sigma_-f^{\sigma}_-+\pi^{\sigma}_+f^{\sigma}_+)}. \label{eqn:ent-expanded}
\end{align}
\fi
%\June{Is $\pi_+ + \pi_- = 1$?}
Since $\pi^{\sigma}_-  = \mu(\sigma_-) \theta^{\card{\sigma}}/\pi(\sigma) Z_\theta$ and similarly for $\pi^{\sigma}_+$, we have that  
\begin{equation}\label{eqn:pi-to-mu-site}
\frac{\pi_+^\sigma}{\pi_-^\sigma} =  \frac{\mu(\sigma_+)\theta^{\card{\sigma}+1}}{\mu(\sigma_-)\theta^{\card{\sigma}}} = \theta \frac{\mu^\sigma_+}{\mu^\sigma_-}.
\end{equation}
Since $\theta \leq 1$, this shows that $\pi_-^{\sigma} \ge \mu^{\sigma}_-$.

We need the following numerical inequality, whose proof is deferred to \cref{sec:appendix-numerical}. 
\begin{lemma}\label{lem:ent-compare}
For $a,b,x,\eta>0$, we have 
\begin{align*}
&\frac{1}{x+1}a\log a+ \frac{x}{x+1}b\log b - \frac{a+xb}{x+1}\log\frac{a+xb}{x+1} \\
&\le \max(\eta,\eta^{-1}) \parens*{\frac{1}{\eta x+1}a\log a+ \frac{\eta x}{\eta x+1}b\log b - \frac{a+\eta xb}{\eta x+1}\log\frac{a+\eta xb}{\eta x+1}}.
\end{align*}
Equivalently, for any $f : \set{0,1} \to \R_{> 0}$, $\Ent_{\operatorname{Ber}(1/(x + 1))}[f] \le \max(\eta,\eta^{-1}) \Ent_{\operatorname{Ber}(1/(\eta x + 1))}[f]$.
% In particular, for $\eta \geq 1 $,
% \begin{equation}
%     \begin{split}
%         &a\log a+ x b\log b - (a+xb)\log(\frac{a+xb}{x+1}) \\
% &\le \max(\eta,\eta^{-1}) \parens*{\frac{1}{\eta x+1}a\log a+ \frac{\eta x}{\eta x+1}b\log b - \frac{a+\eta xb}{\eta x+1}\log(\frac{a+\eta xb}{\eta x+1})}.
%     \end{split}
% \end{equation}
\end{lemma}

Applying \cref{lem:ent-compare} with $\eta = \theta^{-1}, x = \frac{\pi^{\sigma}_+}{\pi^{\sigma}_-}, a = f^{\sigma}_-, b = f^{\sigma}_+$ and using \cref{eqn:pi-to-mu-site}, we can bound the RHS of \cref{eq:exp-over-R} as follows:
\begin{align*}
\E_{\sigma \sim \pi_{V-v}}{\theta^{\card{V}-\card{\sigma}} \Ent_{\pi^\sigma}(f)}
&\le \theta^{-1} \E_{\sigma \sim \pi_{V-v}}{\theta^{\card{V}-\card{\sigma}} \Ent_{\mu^\sigma}(f)} \\
&= \theta^{-1} \sum_{\sigma} \theta^{\card{V}-\card{\sigma}} \pi_{V-v}(\sigma) \Ent_{\mu^\sigma}(f) \\
%&= \theta^{-1} \sum_{\sigma \in \Omega(\pi_{V-v})} \theta^{-\card{\sigma}} \frac{\pi_{V}(\sigma_-)}{\pi^{\sigma}_-} \Ent_{\mu^\sigma}(f) \\
&\le \theta^{-1} \sum_{\sigma} \theta^{\card{V}-\card{\sigma}} \frac{\pi_{V}(\sigma_-)}{\mu^{\sigma}_-} \Ent_{\mu^\sigma}(f)\\
&= \theta^{-1} \frac{\theta^{\card{V}}}{Z_\theta}\sum_{\sigma \in \Omega(\pi_{V-v})} \mu_{V-v}(\sigma)  \Ent_{\mu^\sigma}(f),
\end{align*}
where we have used $\pi_{V - v}(\sigma) = \frac{\pi_{V}(\sigma_-)}{\pi^{\sigma}_-} \le \frac{\pi_{V}(\sigma_-)}{\mu^{\sigma}_-}$ in the third line and
$\pi_V(\sigma_-) = \mu(\sigma_-) \theta^{\card{\sigma}}/Z_\theta = \mu_{V - v}(\sigma) \mu^{\sigma}_- \theta^{\card{\sigma}}/Z_\theta$ in the last line.
Combining with the previous bounds, we obtain the conclusion of \cref{lem:ent-tensorization}. 
\end{proof}

\subsection{Approximate tensorization of entropy for bounded functions from MLSI}
\iffalse % moved into lemma statement
Given a transition matrix $P$ which moves between states differing in at most one coordinate with stationary distribution $\mu$, we say that entropy contraction with rate $\kappa$ holds for a function $f$ if for $\nu = \mu f$
\begin{equation}
\DKL{\nu P \river \mu P} \le (1-\kappa)\DKL{\nu \river \mu}. \label{eq:ent-contraction}
\end{equation}
Note that \cref{eq:ent-contraction} implies that
\[
\kappa \Ent(f) \le \mathcal{E}_P(f,\log f).
\]

We denote by $\sigma_+,\sigma_-$ a pair of states which differ in exactly one vertex which receives a $+1$-spin in $\sigma_+$ and a $-1$-spin in $\sigma_-$. 
\fi
It is well-known that approximate tensorization of entropy implies the modified Log-Sobolev inequality (MLSI) with the same constant, due to Jensen's inequality (see e.g. \cite{caputo2015approximate}). In the following lemma, we show that the reverse implication also holds for suitably nice functions $f$.

\begin{lemma}\label{lem:restricted mlsi to tensorization}
Suppose that $P$ is the transition matrix of a Markov chain on $2^V$ with stationary distribution $\mu$ such that all transitions of $P$ consist of adding or removing at most one element from the current set. Let $f \colon 2^{V} \to \R_{\geq 0}$ and suppose there exists $C > 0$ such that the following holds: for all $v \in V$ and $\sigma \in 2^{V \setminus \set{v}}$, 
\[\frac{f(\sigma_+)}{f(\sigma_-)} \in [C^{-1}, C],\]
where $\sigma_+ = \sigma \cup \set{v}$ and $\sigma_- = \sigma$ (viewed as a subset of $V$). 
%for any positive function $f$ satisfying 
% $f$ satisfies $f(\sigma_+)/f(\sigma_-) \in [C^{-1},C]$ for all pairs of states $(\sigma_-,\sigma_+)$ which differ in exactly one vertex which receives a $+1$-spin in $\sigma_+$ and a $-1$-spin in $\sigma_-$.
%Assume that \cref{eq:ent-contraction} holds whenever $f(\sigma_+)/f(\sigma_-) \in [C^{-1},C]$ for all adjacent pairs $(\sigma_-,\sigma_+)$. 
Then,
\[
\mathcal{E}_P(f,\log f) \le \frac{C}{n} \sum_{v\in V} \mu[\Ent_v(f)].
\]
In particular, if there also exists $\kappa > 0$ such that $\kappa \Ent_{\mu}(f) \le \mathcal{E}_P(f,\log f)$,
then approximate tensorization of entropy (\cref{def:approx-tensorization}) holds for $f$ with constant $\kappa^{-1}C/n$. 
% under the above assumption on $f$, \cref{eq:entfac-easy} holds with constant $\kappa^{-1}C$ when entropy contraction with rate $\kappa$ holds. \Fred{minor fixme: check conventions with $1/n$ here} %\Fred{re-add comment about entropy contraction vs restricted mlsi.}
\end{lemma}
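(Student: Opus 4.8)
The plan is to directly compare the Dirichlet form $\mathcal{E}_P(f,\log f)$ — which is a sum over transitions of $P$, each adding or removing a single element — against the sum $\sum_{v\in V}\mu[\Ent_v(f)]$ of single-site entropies. First I would write out $\mathcal{E}_P(f,\log f) = \frac12\sum_{x,y}\mu(x)P(x,y)(f(x)-f(y))(\log f(x)-\log f(y))$ and observe that every nonzero term corresponds to a pair $\sigma_- = \sigma$, $\sigma_+ = \sigma\cup\{v\}$ for some $v\in V$ and $\sigma\in 2^{V\setminus\{v\}}$, with weight $\mu(\sigma_-)P(\sigma_-,\sigma_+) = \mu(\sigma_+)P(\sigma_+,\sigma_-)$ (by reversibility). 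So $\mathcal{E}_P(f,\log f)$ is bounded above by a sum over $v$ and over $\sigma\in 2^{V\setminus\{v\}}$ of $\mu(\sigma_-)P(\sigma_-,\sigma_+)(f(\sigma_+)-f(\sigma_-))(\log f(\sigma_+)-\log f(\sigma_-))$; here I'd use that $P$ is row-stochastic so $\mu(\sigma_-)P(\sigma_-,\sigma_+) \le \mu(\sigma_-) \le \mu_{V-v}(\sigma)$ after grouping, or more carefully that the total weight attached to the pair $(\sigma_-,\sigma_+)$ is at most $\min(\mu(\sigma_-),\mu(\sigma_+)) \le \mu(\sigma_-) + \mu(\sigma_+) = \mu_{V-v}(\sigma)$.

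The key local step is then a pointwise comparison: for fixed $v$ and $\sigma$, I want to show
\[
(f(\sigma_+)-f(\sigma_-))(\log f(\sigma_+)-\log f(\sigma_-)) \le \frac{C}{n}\cdot\frac{\Ent_{\mu^\sigma}(f)}{\mu(\sigma_-)+\mu(\sigma_+)}\cdot\big(\mu(\sigma_-)+\mu(\sigma_+)\big),
\]
i.e. that the ``two-point Dirichlet term'' is controlled by the ``two-point entropy'' $\mathbb{E}_\mu[\Ent_v(f)]$ up to the factor $C$. Concretely, writing $p = \mu^\sigma_+ = \mu^\sigma(\{v\})$, $a = f(\sigma_-)$, $b = f(\sigma_+)$, the single-site entropy is $\Ent_{\mathrm{Ber}(p)}[f] = (1-p)a\log a + pb\log b - (\text{mean})\log(\text{mean})$, and the Dirichlet term is $(b-a)(\log b - \log a)$, weighted by $p(1-p)$ against $\mu_{V-v}(\sigma)$. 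The crucial numerical inequality is that $p(1-p)(b-a)(\log b-\log a) \le C\cdot \Ent_{\mathrm{Ber}(p)}[f]$ whenever $b/a \in [C^{-1},C]$; this follows because $\Ent_{\mathrm{Ber}(p)}[f]$ is comparable (within a constant depending on the ratio $b/a$) to $p(1-p)(\sqrt{b}-\sqrt{a})^2$ or to $p(1-p)(b-a)(\log b - \log a)$ — all three are equivalent up to constants controlled by the bounded ratio — so the restriction $f(\sigma_+)/f(\sigma_-)\in[C^{-1},C]$ is exactly what makes the constant $C$ appear. Summing the local bound over all $v\in V$ and over $\sigma$, and using $\mu[\Ent_v(f)] = \mathbb{E}_{\sigma\sim\mu_{V-v}}[\Ent_{\mu^\sigma}(f)]$, yields $\mathcal{E}_P(f,\log f) \le \frac{C}{n}\sum_{v\in V}\mu[\Ent_v(f)]$.

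The main obstacle is getting the combinatorial bookkeeping of the factor $1/n$ right: a step of $P$ touches only one coordinate, so one must check that each of the $n$ ``directions'' receives weight at most what a single down-step would give, i.e. that $\mathcal{E}_P$ does not overcount — this is where the hypothesis that transitions of $P$ change at most one element, together with row-stochasticity, is used, and it is the reason the bound carries the $1/n$ (reflecting that $P$ is a single-site update normalized to move one of $n$ sites). The numerical inequality relating the two-point Dirichlet form to the Bernoulli entropy under a bounded log-ratio is elementary — one can reduce to $a=1$ by homogeneity of both sides under scaling $f$, then it is a one-variable estimate in $b\in[C^{-1},C]$ — but I would flag that obtaining the clean constant $C$ (rather than some function of $C$) requires choosing the comparison carefully; the cleanest route is via $(b-a)(\log b - \log a) \le \tfrac{b-a}{a}\cdot(b-a) = \tfrac{(b-a)^2}{a}$ combined with the standard bound $\Ent_{\mathrm{Ber}(p)}[f] \ge c\, p(1-p)(b-a)^2/\max(a,b)$ and tracking that $\max(a,b)/a \le C$. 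Finally, the ``in particular'' clause is immediate: combining $\kappa\Ent_\mu(f)\le\mathcal{E}_P(f,\log f)$ with the just-proved bound gives $\kappa\Ent_\mu(f)\le\frac{C}{n}\sum_v\mu[\Ent_v(f)]$, which is approximate tensorization with constant $\kappa^{-1}C/n$ as in \cref{def:approx-tensorization}.
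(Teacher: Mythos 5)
Your high-level plan --- expand $\mathcal{E}_P(f,\log f)$ over single-site transitions, then compare each two-point Dirichlet term against the corresponding two-point entropy $\Ent_v(f)$ via a numerical inequality controlled by the ratio bound --- is exactly the paper's, and you correctly identify the crucial numerical inequality $p(1-p)(b-a)(\log b-\log a)\le C\,\Ent_{\operatorname{Ber}(p)}[f]$, which is precisely the paper's \cref{lem:dirichlet-to-entropy} after algebraic rearrangement. However, the step where you extract the factor $1/n$ has a genuine gap.

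You attribute the $1/n$ to row-stochasticity of $P$ plus the single-site property, bounding $\mu(\sigma_-)P(\sigma_-,\sigma_+)\le\mu(\sigma_-)\le\mu_{V-v}(\sigma)$. This bound is off by a factor of $n/(\mu^\sigma_+\mu^\sigma_-)$: the relevant fact, used implicitly in the paper's proof and encoded in the equality it opens with, is that $P$ is the \emph{Glauber dynamics}, whence $\mu(\sigma_-)P(\sigma_-,\sigma_+)=\tfrac{1}{n}\mu_{V-v}(\sigma)\mu^\sigma_+\mu^\sigma_-$ exactly and
\[
\mathcal{E}_P(f,\log f)=\frac{1}{n}\sum_{v\in V}\mathbb{E}_{\sigma\sim\mu_{V-v}}\bigl[\mu^\sigma_+\mu^\sigma_-(f^\sigma_+-f^\sigma_-)(\log f^\sigma_+-\log f^\sigma_-)\bigr].
\]
Row-stochasticity alone cannot yield the $1/n$: the single-site reversible chain $P_v$ that resamples only vertex $v$ (heat-bath at $v$, self-loop otherwise) satisfies all the hypotheses as you read them, yet by Jensen's inequality $\mathcal{E}_{P_v}(f,\log f)\ge\mu[\Ent_v(f)]$, and if $f$ depends only on $\sigma_v$ (with $f(\sigma_+)/f(\sigma_-)=C$) the claimed bound $\tfrac{C}{n}\sum_{v'}\mu[\Ent_{v'}(f)]=\tfrac{C}{n}\mu[\Ent_v(f)]$ fails whenever $C<n$. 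The displayed ``pointwise comparison'' you aim for, $(f(\sigma_+)-f(\sigma_-))(\log f(\sigma_+)-\log f(\sigma_-))\le\tfrac{C}{n}\Ent_{\mu^\sigma}(f)$, is likewise false as stated (let $\mu^\sigma_+\to0$: the right side vanishes while the left does not); your later remark that the Dirichlet term is ``weighted by $p(1-p)$'' reinstates the Glauber weight, but this is inconsistent with the $\mu_{V-v}(\sigma)$ bound you derived a line earlier. So the argument must start from the Glauber identity rather than a row-stochasticity estimate. A secondary issue: the route you sketch through $\Ent_{\operatorname{Ber}(p)}[f]\ge c\,p(1-p)(b-a)^2/\max(a,b)$ would only yield $O(C)$ rather than the clean constant $C$; the paper's \cref{lem:dirichlet-to-entropy} obtains $C$ exactly by a direct one-variable monotonicity argument in $a=f_+/f_-$.
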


We will need the following numerical lemma, whose proof is deferred to \cref{sec:appendix-numerical}.

\begin{lemma}\label{lem:dirichlet-to-entropy}
Suppose positive numbers $f_+$, $f_-$, and $C$ satisfy that $f_+/f_- \in [C^{-1},C]$ and suppose that $\mu_+, \mu_- > 0$ satisfy $\mu_++\mu_-=1$. Then we have 
\begin{align*}
&C\parens*{\mu_-f_-\log f_-+\mu_+f_+\log f_+ - (\mu_-f_-+\mu_+f_+)\log(\mu_-f_-+\mu_+f_+)} \\
&\quad \ge \mu_-f_-\log f_-+\mu_+f_+\log f_+ - (\mu_-f_-+\mu_+f_+)(\mu_+\log f_+ + \mu_-\log f_-).
\end{align*}
\end{lemma}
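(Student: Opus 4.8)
The plan is to reduce the claimed inequality to a one–variable calculus problem. Observe first that both sides are homogeneous of degree one under $f \mapsto cf$: the left side is $C$ times $\Ent_{\mu}(f)$ for the two-point function $f$, and the right side equals the covariance $\E_{\mu}{f\log f} - \E_{\mu}{f}\,\E_{\mu}{\log f}$. Hence we may normalize $f_- = 1$ and set $r := f_+ \in [C^{-1}, C]$, and moreover, since both sides are invariant under the simultaneous swap $(f_+, \mu_+) \leftrightarrow (f_-, \mu_-)$, we may assume $r \ge 1$, so that $1 \le r \le C$. Writing $p := \mu_+$, $q := 1-p$, and $m := q + pr = \E_{\mu}{f}$, the inequality becomes $C\,(pr\log r - m\log m) \ge pq(r-1)\log r$.

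Since $x \mapsto x\log x$ is convex, Jensen's inequality gives $pr\log r - m\log m = \Ent_{\mu}(f) \ge 0$, so the left side is nondecreasing in $C$; it therefore suffices to treat the extremal case $C = r$, i.e. to prove
\[ F(r,p) \;:=\; r\,(pr\log r - m\log m) - pq(r-1)\log r \;\ge\; 0 \qquad\text{for } r\ge 1,\ p\in[0,1],\ m = 1 + p(r-1). \]
One checks $F(1,p) = 0$ and $F(r,0) = F(r,1) = 0$, so this is a statement about the interior, which I would establish by differentiating in $p$ (using $\partial_p m = r-1$ and $\partial_p^2(m\log m) = (r-1)^2/m$) to pin down the sign of $\partial_p F$ on $[0,1]$ and conclude $F \ge 0$. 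A cleaner equivalent route uses the two identities $\E_{\mu}{f\log f} - \E_{\mu}{f}\,\E_{\mu}{\log f} = \Ent_{\mu}(f) + \E_{\mu}{f}\cdot\DKL{\mu \river \tilde{\mu}}$ and $\Ent_{\mu}(f) = \E_{\mu}{f}\cdot\DKL{\tilde{\mu} \river \mu}$, where $\tilde{\mu} := f\mu/\E_{\mu}{f}$; the claim then becomes a comparison between the forward and reverse Kullback--Leibler divergences of two measures on $\set{+,-}$ whose likelihood ratio $(\tilde{\mu}_+\mu_-)/(\tilde{\mu}_-\mu_+)$ lies in $[C^{-1},C]$, which can again be settled by a single one-variable computation.

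The main obstacle is the regime where $f_+$ is close to $f_-$, i.e. $r\to 1^+$ (equivalently $\tilde{\mu} \to \mu$): there $F$ vanishes to second order in $r-1$, and both $\Ent_{\mu}(f)$ and the covariance are $\Theta((r-1)^2)$ multiples of the same variance, so the inequality is governed by the leading quadratic coefficients and cannot tolerate crude estimates — one must use the sharp expansions $\log(1+x) = x - x^2/2 + O(x^3)$ and $(1+x)\log(1+x) = x + x^2/2 + O(x^3)$. This near-constant regime is exactly what pins down the admissible constant; away from it (say $r \ge 1 + c$) the inequality has slack and follows from cruder bounds such as $m\log m \le pr\log r$ (Jensen) and $m \le r$. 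The companion numerical inequality \cref{lem:ent-compare} needed for \cref{lem:ent-tensorization} is of the same character and reduces, after the analogous normalizations, to a one-variable estimate; in both cases one should organize the argument so that the only genuinely delicate input is the second-order behaviour near equality.
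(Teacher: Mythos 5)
Your reduction is correct up to the point where you assert $F(r,p)\ge 0$: normalizing $f_-=1$, setting $r=f_+$, using the swap symmetry to take $r\ge 1$, observing that the left side is nondecreasing in $C$ so that $C=r$ is extremal, and the boundary identities $F(1,p)=F(r,0)=F(r,1)=0$ all check out. The genuine gap is that $F$ is \emph{not} nonnegative on the interior, so the deferred step (pinning down the sign of $\partial_p F$) cannot succeed. At $r=2$, $p=\tfrac12$ one has $m=\tfrac32$ and
\[
F\bigl(2,\tfrac12\bigr) = 2\bigl(\log 2 - \tfrac32\log\tfrac32\bigr) - \tfrac14\log 2 = \tfrac74\log 2 - 3\log\tfrac32 \approx -0.0034 < 0,
\]
so the lemma is already violated at $f_+=2$, $f_-=1$, $\mu_\pm=\tfrac12$, $C=2$. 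Your own heuristic near $r\to 1^+$ is exactly where this becomes visible: with $\epsilon=r-1$, one has $\Ent_\mu(f)=\tfrac12 pq\epsilon^2 + O(\epsilon^3)$ while the right-hand covariance $pq(r-1)\log r = pq\epsilon^2 + O(\epsilon^3)$. Both are indeed $\Theta(\epsilon^2)$ multiples of $\mathrm{Var}_\mu(f)=pq\epsilon^2$, but the leading coefficients differ by a factor of exactly $2$, while the prefactor $C=r$ tends to $1$, not $2$. (In your KL reformulation the same obstruction reads: forward and reverse KL divergences of nearby two-point measures share the leading $\tfrac12\chi^2$ term, so $\mathrm{KL}(\mu\Vert\tilde\mu)/\mathrm{KL}(\tilde\mu\Vert\mu)\to 1$, and a prefactor of $C-1\to 0$ cannot work.)

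So the lemma as stated is off by a constant; the correct multiplicative factor must be at least $\max(C,2)$, and a factor such as $2C$ is provable and suffices for every downstream use (\cref{lem:restricted mlsi to tensorization}, \cref{thm:comparison}, the concentration bounds), since those are all stated up to absolute constants. You should also be aware that the paper's printed proof has a compensating sign/typo: its $F(a;x)$ places the factor $C$ on the covariance term rather than the entropy term, and the subsequent derivative computation is of that (incorrect) $F$, so what the printed argument actually establishes is only the trivial direction $C\cdot\mathrm{Cov}\ge\Ent$. Redoing the computation for the intended $F = C\cdot\Ent - \mathrm{Cov}$ gives $F(1;x)=F'(1;x)=0$ and $F''(1;x)=(C-2)x/(x+1)^2$, recovering exactly the obstruction above. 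The practical takeaway for your argument: carry the second-order expansion you sketched to its conclusion before asserting the sign of $F$, and adjust the multiplicative constant in the lemma statement accordingly.
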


\begin{proof}[Proof of \cref{lem:restricted mlsi to tensorization}]
The lemma follows immediately from \cref{lem:dirichlet-to-entropy}. This is because we have (using the notation introduced in the previous subsection),
\begin{multline*}
\mathcal{E}_P(f,\log f) = \frac{1}{2} \sum_{x,y} \mu(x)P(x,y)(f(x)-f(y))(\log f(x)-\log f(y)) \\
= \frac{1}{n} \sum_{v\in V} \E_{\sigma \sim \mu_{V-v}}{\mu^{\sigma}_+ f^{\sigma}_+ \log f^{\sigma}_+ + \mu^{\sigma}_- f^{\sigma}_- \log f^{\sigma}_- - (\mu^{\sigma}_{+}f^{\sigma}_+ +\mu^{\sigma}_{-} f^{\sigma}_-)(\mu^{\sigma}_{+}\log f^{\sigma}_+ + \mu^{\sigma}_{-} \log f^{\sigma}_{-})},
\end{multline*}
while
\begin{align*}
\sum_{v\in V}\mu[\Ent_{v}{f}] &=  \sum_{v\in V} \E_{\sigma \sim \mu_{V-v}}{\mu^{\sigma}_{+}f^{\sigma}_{+}\log f^{\sigma}_{+}+\mu^{\sigma}_{-}f^{\sigma}_{-}\log f^{\sigma}_{-}  - (\mu^{\sigma}_+ f^{\sigma}_+ +\mu^{\sigma}_-f^{\sigma}_-)\log(\mu^{\sigma}_+f^{\sigma}_+ +\mu^{\sigma}_- f^{\sigma}_-)},
\end{align*}
and \cref{lem:dirichlet-to-entropy} lets us compare these two sums term by term.
\end{proof}
%\Fred{TODO: need to define Dirichlet form $\langle f, (I - P) g \rangle_{L_2(\mu)}$ and add details.} 

\subsection{Approximate entropy factorization for the field dynamics}% - Proof of \cref{assump:entfac-field}}

In this subsection, we show how to deduce \cref{eq:entfac-field} for functions $f$ with $f(\sigma_+)/f(\sigma_-)$ bounded away from $0$ and $\infty$, assuming approximate tensorization of entropy for the block dynamics on the $k$-blow-up distribution (we will use this terminology to refer to the $(k,\dots, k)$-blow-up distribution). For convenience of notation, we will denote the underlying space of the $k$-blow-up distribution by $2^{V_k}$. Recall (\cref{def:projection}) that there is a natural projection map from $2^{V_k}$ to $2^{V}$. Accordingly, for a function $f \colon 2^{V} \to \R_{\geq 0}$, we define its $k$-lift to be the function on $2^{V_k}$, given by the pulling back $f$ by the projection.    

\begin{lemma}\label{lem:field tensorization from block}
Let $\mu$ be a probability measure on $2^{V}$ and let $f \colon 2^{V} \to \R_{\geq 0}$.  Let $\theta \in (0,1)$ and $\pi = \theta \ast \mu = (\theta,\dots, \theta) \ast \mu$, with $Z_{\theta}$ as in \cref{def:measure-tilt}.  
%Let $f$ be an arbitrary positive function.
%Assume that \cref{eq:ent-contraction} holds whenever $f(\sigma_+)/f(\sigma_-) \in [C^{-1},C]$ for all adjacent pairs $(\sigma_-,\sigma_+)$. 
Suppose that there exists $\overline{C} > 0$ such that for all integers $k \geq 1$, for
$\overline{f}$ the $k$-lift of $f$, and for $\ell = \theta nk$, 
%and let $\overline{C}$ be a positive constant so that for all $k$,
\begin{equation}\label{eq:ent-k-lift}
    \Ent(\overline{f}) \le (1 + o_{k \to \infty}(1)) \frac{\overline{C}}{\binom{nk}{\ell}} \sum_{\overline{S} \subseteq \binom{V_k}{\ell}} \mu_k[\Ent_{\overline{S}}(\overline{f})].
\end{equation}
Then, letting $R$ be a random subset of $V$ where each element $v\in V$ is included independently with probability $1-\theta$, we have
\begin{equation}\label{eq:entfac-field-1}
\Ent_{\mu}(f) \le \overline{C} \frac{Z_\theta}{\theta^{\card{V}}} \E_R { \pi(R) \Ent_{\pi^{{\bf 1}_R}}(f)}. 
\end{equation}
\end{lemma}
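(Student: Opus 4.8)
The plan is to transfer the entropy factorization inequality \eqref{eq:ent-k-lift} for the block dynamics on the $k$-blow-up $\mu_k$ down to the field-dynamics factorization \eqref{eq:entfac-field-1} for $\mu$, by taking $k \to \infty$ and identifying the limiting form of each side. First I would recall from \cref{def:projected block dynamics} and \cref{lem:approx-field-block} that the $(k, \lceil \theta n k\rceil)$-projected block dynamics converges, as $k \to \infty$, to the field dynamics $\fieldMarkov_\theta$; so the heart of the matter is a computation of limits. Since $\overline f$ is the $k$-lift of $f$ (constant on fibers of the projection $2^{V_k} \to 2^V$), \cref{lem:blow up preserves KL} (or rather its $\Ent$-analogue, applied with $\nu = \mu f / \E_\mu f$) gives immediately that $\Ent_{\mu_k}(\overline f) = \Ent_\mu(f)$, which handles the left-hand side of \eqref{eq:ent-k-lift}. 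So the entire task reduces to showing that the right-hand side
\[
\frac{1}{\binom{nk}{\ell}} \sum_{\overline S \in \binom{V_k}{\ell}} \E_{\mu_k}{\Ent_{\overline S}(\overline f)}
\]
converges, as $k \to \infty$ with $\ell = \lceil \theta n k\rceil$, to $\frac{Z_\theta}{\theta^{\card V}} \E_R{\pi(R)\, \Ent_{\pi^{\mathbf 1_R}}(f)}$, where $R$ includes each $v$ independently w.p.\ $1 - \theta$.

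For this, I would unpack $\E_{\mu_k}[\Ent_{\overline S}(\overline f)]$ as the entropy of $\overline f$ with respect to $\mu_k$ conditioned on the configuration outside $\overline S$, averaged over that outside configuration and over the uniform random $\overline S$. Here I would use the equivalent ``balls-in-bins'' description of the blow-up recalled in the preliminaries: a sample of $\mu_k$ is obtained by sampling $X \sim \mu$ and then, for each $i \in X$, placing a single particle in a uniformly random one of the $k$ slots $(i,1),\dots,(i,k)$. A uniformly random $\ell$-subset $\overline S$ of $V_k = [nk]$ is, up to vanishing error as $k\to\infty$, like including each slot independently with probability $\theta = \ell/(nk)$. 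Conditioned on the slots \emph{outside} $\overline S$ and their occupation, the coordinates $i \in V$ split into two groups: those $i$ for which the occupied-or-forced slot lies outside $\overline S$ (these are ``frozen''), and those for which all $k$ slots of $i$ lie in $\overline S$ — a probability-$\theta^k \to 0$ event per coordinate if $i \notin X$, but crucially a probability-$\theta^{k-1}\cdot(\text{something})$ / more precisely a $\Theta(\theta^{k-1}/k)$-type event. The correct bookkeeping is exactly the one already carried out in \cite[proof of Lemma~4.1 / eq.~(22)]{chen2021rapid}, and the identity \eqref{eq:exp-over-R} in the proof of \cref{lem:ent-tensorization} above is the finite-$n$ shadow of it; here one runs the same computation ``in reverse'', expressing the block sum as a sum over subsets $R \subseteq V$ (the set of coordinates that become free/resampled) weighted by the probability that exactly those coordinates have all their slots inside $\overline S$, which produces precisely the $\frac{Z_\theta}{\theta^{\card V}}\pi(R)$ weight together with the conditional entropy $\Ent_{\pi^{\mathbf 1_R}}(f)$ — the $\pi = \theta\ast\mu$ appearing because resampling a fiber of size $k$ with one particle corresponds to the $\theta$-tilt, as in \cref{lem:conditional of blow up}. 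Since $\overline f$ is a lift, $\Ent_{\overline S}(\overline f)$ depends only on the projected data, which is what makes the limit collapse to a statement purely about $f$ and $\mu$.

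I would then assemble these: apply \eqref{eq:ent-k-lift} for each $k$, substitute $\Ent(\overline f) = \Ent_\mu(f)$ on the left, pass to the limit $k \to \infty$ on the right using the convergence just described together with the $(1 + o_{k\to\infty}(1))$ factor, and conclude \eqref{eq:entfac-field-1}. The main obstacle I anticipate is precisely the combinatorial limit computation in the previous paragraph: one must be careful that (i) sampling a uniform $\ell$-subset versus $\theta$-independent inclusion of slots only differ by $o(1)$ in the relevant averages (a concentration/exchangeability argument, or one works with $\ell$-subsets throughout and matches the $\binom{nk}{\ell}$ normalizations directly against the $\theta^{k - \card\sigma}$ weights as in \cite{chen2021rapid}), (ii) the event that a coordinate $i \in V$ has \emph{all} $k$ of its slots inside $\overline S$ contributes the $\theta^{\card V}$-type normalization correctly in the limit, and (iii) the conditional measure of $\mu_k$ restricted to such a coordinate's fiber, given the rest, genuinely converges to the single-site conditional of $\pi = \theta\ast\mu$; all three are implicit in \cite[Lemma~4.1 and Proposition~5.3]{chen2021rapid} but need to be reproduced at the level of entropy rather than variance. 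Once the limit of the right-hand side is correctly identified, the rest is immediate.
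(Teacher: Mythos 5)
Your high-level route is the same as the paper's: observe that the $k$-lift preserves entropy, so the left side of \eqref{eq:ent-k-lift} is already $\Ent_\mu(f)$ for every $k$; then compute the $k\to\infty$ limit of the uniform $\ell$-subset average on the right, matching it against the field-dynamics weights; finally truncate to the concentration event that every fiber has $\approx \theta k$ slots inside $\overline S$, bound the complementary $\overline S$ trivially by $\Ent(\overline f)$ and absorb them into the $o_{k\to\infty}(1)$. All of that is what the paper does, and you correctly identify the Chen--Feng--Yin Lemma~4.1 bookkeeping as the relevant precedent.

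However, the combinatorial mechanism you describe is wrong in a way that would derail the calculation if you followed it literally. You claim the coordinates split into ``frozen'' ones (occupied slot outside $\overline S$) versus those ``for which all $k$ slots of $i$ lie in $\overline S$'' and that this latter $\Theta(\theta^k)$-type rare event is what produces the $\theta^{\card V}$ normalization. Neither claim is correct. The set $S$ of resampled coordinates consists of every $v$ with at least one slot in $\overline S$ \emph{and} no occupied slot outside $\overline S$ --- a typical, not a rare, event: on the concentration event every $v$ has $\approx \theta k$ slots in $\overline S$, so $v$ is pinned iff $v$ is occupied and its single occupied slot (uniform among $k$) falls in the roughly $(1-\theta)k$ slots outside $\overline S$, which happens with probability $\approx 1-\theta$ per occupied site. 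Unoccupied sites are essentially always resampled. This is exactly why the pinned set $R$ ends up distributed as a $\text{Ber}(1-\theta)$-thinning of the occupied set, and why the conditional law on a resampled fiber converges to the single-site conditional of $\theta\ast\mu$ (as in \cref{lem:conditional of blow up}). The factor $Z_\theta/\theta^{\card V}$ is then a purely algebraic artifact of rewriting $(1-\theta)^{\card R}\theta^{-\card R}\pi(R)$ as $\theta^{-\card V}\cdot (1-\theta)^{\card R}\theta^{\card V - \card R}\cdot \pi(R)$ so that the sum over $R$ becomes an expectation over the $\text{Ber}(1-\theta)^{\otimes V}$ law --- it has nothing to do with a $\theta^k$ rare event. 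I would also note a minor circularity: the identity you call ``the finite-$n$ shadow'' from \cref{lem:ent-tensorization} is part of the consumer of this lemma, not a precedent for it; the genuine precedent is the reversal of Chen et al.'s Lemma~4.1.

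So: right skeleton, right references, but the specific limiting combinatorics as stated would give the wrong answer, and you should redo step (ii) of your checklist with the pinned-iff-occupied-slot-outside criterion rather than the all-slots-inside criterion.
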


\begin{proof}
Let $P_k:V_k\to V$ denote the natural projection map (\cref{def:projection}). Given $\overline{S} \subseteq V_k$ and a configuration $\sigma_k$ on $V_k\setminus \overline{S}$ (i.e.~$\sigma_k$ is a subset of $V_k\setminus \overline{S}$), let $S$ be the complement of the set of vertices $v$ in $V$ with $P_k^{-1}(v) \subseteq V_k\setminus \overline{S}$ or $P_k^{-1}(v) \cap ((V_k\setminus \overline{S})\cap \sigma_k) \ne \emptyset$. Let $\sigma \subseteq V \setminus S$ be the projection of $\sigma_k$ on $V\setminus S$. For each $v \in S$, let $C_v$ be the number of vertices in $\overline{S}$ whose projection is $v$. For a subset $\xi$ of $S$, let $\overline{\xi}$ denote the collection subsets of $\overline{S}$ that project to $\xi$. Then 
\[
\mu_{k,\overline{S}}^{\sigma_k}[\overline{\xi}] = \frac{\mu[\xi \mid \sigma] \prod_{v\in \xi} \frac{C_v}{k}}{\sum_{\xi}\mu[\xi \mid \sigma] \prod_{v\in \xi} \frac{C_v}{k}}.
\]
Now, let $\overline{S}$ be a uniformly random subset of $V_k$ of size $\ell := \theta nk$. For $\varepsilon>0$, let $\mathcal{E}(\varepsilon)$ denote the event that for all $v\in V$, $\card{\overline{S}\cap P_k^{-1}(v)} \in [(\theta - \varepsilon)k, (\theta + \varepsilon) k]$. %= (\theta\pm\varepsilon)k$.
By basic concentration estimates and the union bound, there exists a sequence $\epsilon_k$ such that $\epsilon_k \to 0$ and $\P_{\overline{S}}{\mathcal{E}(\varepsilon_k)} \to 1$ as $k \to \infty$.
%As $k\to \infty$, we have that the probability of $\mathcal{E}(\varepsilon)$ tends to $1$.

In the following, we denote by $(\theta \ast \mu)[\xi \mid \sigma]$ the probability of $\xi$ conditional on $\xi$ agreeing with $\sigma$ on $V\setminus S$. Observe that for $\overline{S}$ satisfying $\mathcal{E}(\varepsilon)$, we have
\[ \mu_{k,\overline{S}}^{\sigma_k}[\overline{\xi}] \in [(1-\varepsilon/\theta)^n (\theta\ast\mu)[\xi \mid \sigma], (1+\varepsilon/\theta)^n (\theta\ast\mu)[\xi \mid \sigma]] \]
and so for $\overline{S} \in \mathcal{E} := \mathcal{E}(\varepsilon_k)$ we have
\begin{equation} \label{eq:approx block dynamics}
    \mu_{k,\overline{S}}^{\sigma_k}[\overline{\xi}] 
    = (1 + o_{k \to \infty}(1)) (\theta\ast\mu)[\xi \mid \sigma].
\end{equation}
We also have the bound for all $\overline{S}$ that (for $\sigma_k$ a sample from the marginal law on $V_k \setminus \overline{S}$)
\[
\mu_k[\Ent_{\overline{S}}(\overline{f})]\le \mu_k[\Ent_{\overline{S}}(\overline{f})] + \Ent_{\mu_k}{\E{\overline{f} \mid \sigma_k }}= \Ent(\overline{f}).
\]
    Thus, applying this bound to the right hand side of \cref{eq:ent-k-lift} for $\overline{S} \not \in \mathcal{E}$, we get 
    \[ \Ent(\overline{f}) \le (1 + o_{k \to \infty}(1)) \frac{\overline{C}}{\binom{nk}{\ell}} \sum_{\overline{S} \in \mathcal{E}} \mu_k[\Ent_{\overline{S}}(\overline{f})] +  o_{k \to \infty}(1) \cdot \Ent(\overline{f}) \]
    and eliminating $\Ent(\overline{f})$ from the right hand side gives
\[
\Ent(\overline{f})\le \frac{(1 + o_{k \to \infty}(1))
\overline{C}}{\binom{nk}{\ell}}\sum_{\overline{S}\in \mathcal{E}} \mu_k[\Ent_{\overline{S}}(\overline{f})].
\]
%Here $C' \to \overline{C}$ as $k\to \infty$.

Let $\tau_k$ be a sample from $\mu_k$. Using \cref{eq:approx block dynamics}, we have 
%set $\epsilon n/\theta \to 0$, We have 
\begin{align*}
\Ent_{\overline{S}}(\overline{f}) = (1+o_{k \to \infty}(1))\E_{\sigma_k = \tau_k[\overline{S}]}{ \Ent_{\theta\ast\mu_{\sigma}}(f)}.
\end{align*}
Note that if $\overline{S}$ satisfies $\mathcal{E}$, then $\sigma$ is always equal to $V\setminus S$. 
Fixing a subset $R$ of $V$, the probability that $\sigma = R$ (given $\overline{S}$ satisfying $\mathcal{E}$) is given by 
\[
\sum_{\xi \in 2^V : R\subseteq \xi} \mu[\xi] \prod_{v\in R}\frac{k-C_v}{k}\prod_{v\notin R : v\in \xi}\frac{C_v}{k} = (1+o_{k \to \infty}(1))\sum_{\xi \in 2^V : R\subseteq \xi} \mu[\xi] (1-\theta)^{\card{R}} \theta^{\card{\xi}-\card{R}}.
\]
Note that
\[
(\theta\ast\mu)[R] = \sum_{\xi \in 2^V : R\subseteq \xi} \mu[\xi] \theta^{\card{\xi}}/Z_\theta.
\]
Thus,
\begin{align*}
\Ent_{\overline{S}}(\overline{f}) 
&= (1+o_{k \to \infty}(1))\E_{\sigma_k = \tau_k[\overline{S}]}{ \Ent_{(\theta\ast\mu)^{\sigma}}(f)}\\
&=(1+o_{k \to \infty}(1))\sum_{R\subseteq V} \parens*{\sum_{\xi \in 2^V : R\subseteq \xi} \mu[\xi] (1-\theta)^{\card{R}} \theta^{\card{\xi}-\card{R}}} \Ent_{(\theta\ast\mu)^{{\bf 1}_R}}(f)\\
&=(1+o_{k \to \infty}(1))Z_\theta \sum_{R\subseteq V} \parens*{\frac{1-\theta}{\theta}}^{\card{R}} (\theta\ast\mu)[R] \Ent_{(\theta\ast\mu)^{{\bf 1}_R}}(f)\\
&=(1+o_{k \to \infty}(1))\frac{Z_\theta}{\theta^{\card{V}}} \E_R{ (\theta\ast\mu)[R] \Ent_{(\theta\ast\mu)^{{\bf 1}_R}}(f)}.
\end{align*}
Hence,
\[
\Ent(f)=\Ent(\overline{f})\le \overline{C}(1+o_{k \to \infty}(1))\frac{Z_\theta}{\theta^{\card{V}}} \E_R{ (\theta\ast\mu)[R] \Ent_{(\theta\ast\mu)^{{\bf 1}_R}}(f)}.
\]
Taking $k\to \infty$, we obtain the result.%, using that $C' \to \overline{C}$. 
%\cref{assump:entfac-field} with constant $C_1 = \theta^{-O_\delta(\log C)}$.
\end{proof}

\subsection{Proof of \texorpdfstring{\cref{thm:comparison}}{Theorem \ref{thm:comparison}}}
\begin{proof}[Proof of \cref{thm:comparison}]
This follows immediately from \cref{lem:ent-tensorization}, once we verify its assumptions.

\cref{eq:entfac-easy} follows with $C_1 = \kappa^{-1}C/n$ directly from assumption \cref{eqn:mlsi-easy}
%combination of Theorem~\ref{thm:restricted entropy field dynamics}, which establishes $\DKL{\nu P \river \mu P} \le (1-\kappa)\DKL{\nu \river \mu}$ and thus $\kappa \Ent(f) \le \mathcal{E}_P(f,\log f)$ \Fred{minor todo: need to explain more explicitly in preliminaries why restricted entropy contraction implies restricted MLSI}, 
and \cref{lem:restricted mlsi to tensorization}. 

\cref{eq:entfac-field} follows with $C_2 = \Omega(\theta)^{-O(\eta')}$ from the combination of the fact that blowups preserve complete spectral independence (\cref{prop:blowup preserve complete spectral ind}) and (complete)-C-boundedness (\cref{prop:blowup preserve bounded}); \cref{cor:restricted entropy contraction} which implies approximate ``block factorization'' of entropy for the block dynamics on the blowup distribution (more explicitly, that \cref{eq:ent-k-lift} holds with $\overline{C} = \Omega(\theta)^{-O(\eta')}$, where $\eta'$ is as in \cref{cor:restricted entropy contraction}); and \cref{lem:field tensorization from block} which concludes from this \cref{eq:entfac-field-1}, which is precisely what is needed. 
\end{proof}

\section{Sampling in the hardcore model}
\label{sec:sampling-hardcore}
In this section, we use the theory developed in the previous sections to establish our main results for sampling in the hardcore model. All of the key ideas in our approach (with the exception of the trick of using Bernoulli factories) already appear in this setting. 

Recall that the \emph{hardcore model} on a graph $G = (V,E)$ with \emph{fugacity} $\lambda_v$ at site $v \in V$ is the probability distribution $\mu$ over independent sets $\sigma \subset 2^{V}$ given by
\[ \mu(\sigma) \propto \prod_{v \in \sigma} \lambda_v. \]
Here $\mu(\sigma)$ is defined to be zero if $\sigma$ is not an independent set, i.e. if there exist neighbors $u \sim v$ in $G$ such that $\sigma_u = 1 = \sigma_v$. This is a pairwise Markov Random Field on the graph $G$, and from the definition, we can check that the conditional law of $\sigma_v$, given the configuration $\sigma_u$ at all other sites $u$, is %are unoccupied is
\begin{equation}\label{eq:hardcore-conditional}
\P_{\mu}{v \in \sigma \mid (\sigma_u)_{u : u \ne v}} = \frac{\lambda_v}{1 + \lambda_v} 1_{\sigma \cap N(v) = \emptyset}, 
\end{equation}
i.e., if one of its neighbors is occupied then the probability is zero, and otherwise it is $\lambda_v/(1 + \lambda_v)$. In particular, the probability that site $v$ is unoccupied is always lower bounded by $1/(1 + \lambda_v)$ for any fixing of its neighbors.  
%and otherwise, if one of its neighbors is occupied, then $v$ cannot be occupied. 

For $\Delta \ge 3$, we let
\begin{equation}\label{eq:hardcore-uniqueness} \lambda_{\Delta} := \frac{(\Delta - 1)^{\Delta - 1}}{(\Delta - 2)^{\Delta}}
\end{equation}
denote the critical fugacity for the uniqueness threshold on the $\Delta$-regular tree. For $\lambda \geq \lambda_{\Delta}$, sampling from the hardcore distribution with fugacity $\lambda$ on general graphs of maximum degree $\Delta$ is not computationally tractable \cite{sly2010computational}. 

Note that, using the inequality $1 + x \le e^x$, we have
\begin{equation}\label{eq:lambda-delta-bound}
\lambda_{\Delta} = \frac{(\Delta - 1)^{\Delta - 1}}{(\Delta - 2)^{\Delta}} = \frac{1}{\Delta - 2} (1 + 1/(\Delta - 2))^{\Delta - 1} \le \frac{1}{\Delta - 2} e^{(\Delta - 1)/(\Delta - 2)}. 
\end{equation}
Since this approximation is close to tight for large $\Delta$, we have that the uniqueness threshold is roughly $e/\Delta$ for large $\Delta$. We also have for any $\Delta \ge 3$ that
$\lambda_{\Delta} \le 3e^2/\Delta$.
%; we will use the bound
%\[ \lambda_{\Delta} \le \]

%\Fred{Standardize $\set{\pm 1}^n$ vs $2^{[n]}$.}
%\Vishesh{Going to try to use $2^V$ in this subsection; it's more natural anyway for independent sets}

The following proposition collects some facts about the hardcore model with all fugacities below $\lambda_{\Delta}$.
 \begin{proposition} \label{prop:hardcore}
 Let $G = (V,E)$ be a graph on $n$ vertices with maximum degree at most $\Delta \ge 3$. Let $\mu$ denote the distribution on $2^{V}$ corresponding to the hardcore model on $G$ with fugacities $(\lambda_{v})_{v\in V}$. Then, for any $\Lambda \subseteq V$, $v\in V$, and any partial configuration $\sigma_{\Lambda} \in 2^{\Lambda}$ which leaves all neighbors of $v$ unoccupied, we have that
 \begin{equation}\label{eqn:hardcore-probs}
 \frac{\lambda_v}{1+\lambda_v}\cdot \prod_{w \in N(v)}\frac{1}{1+\lambda_w} \leq \P_{\mu}{v\in \sigma \mid \sigma_{\Lambda}} \leq \frac{\lambda_v}{1+\lambda_v}.
 \end{equation}

Moreover, if there exists $\delta \ge 0$ such that $\max_{v\in V}\lambda_{v} \leq (1-\delta)\lambda_{\Delta}$, then:
\begin{enumerate}
    \item For any site $v \in V$, 
    \[ e^{-3e^2} \le \prod_{w \in N(v)} \frac{1}{1 + \lambda_w}. \]
%    \[ e^{-3e^2} \frac{\lambda_v}{1+\lambda_v} \le \P_{\mu}{\sigma_v = 1 \mid \sigma_{\Lambda}} \leq \frac{\lambda_v}{1+\lambda_v} \]
    \item In the same setting as \cref{eqn:hardcore-probs}, we have %\Fred{fixme, probably need to change this back to large-$\Delta$ version where constant improves.}
    \[ \frac{\P_{\mu}{v \in \sigma \mid \sigma_{\Lambda}}}{\P_{\mu}{v \notin \sigma \mid \sigma_{\Lambda}}} \le \lambda_v.\] %(1+\lambda_v)^{2} \le 81 \cdot \lambda_v.\]
    \item Provided $\delta > 0$, $\mu$ is $(O(1/\delta), \delta/2)$-completely spectrally independent. 
\end{enumerate}
% Suppose $\mu$ is the hardcore model with non-uniform fugacity $\lambda_v \leq (1-\delta) \lambda_{\Delta}.$ Then $\mu$ is $(O(1/\delta), \delta/2)$-completely spectrally independent.

% Moreover, $e^{-e} \frac{\lambda_v}{1 + \lambda_v} \leq  \P_{\mu} {v = 1 \given \Lambda } \leq \frac{\lambda_v}{1 + \lambda_v}$ if none of $\mathcal{N}(v)$ is fixed to +1 in $\Lambda$.
%Let $\nu$ be a distribution obtained after all vertices have been updated at least once according to $\mu,$ then $\nu$ is completely $O(1)$-bounded wrt $\mu.$%$\nu(i) \leq \frac{\lambda}{\lambda + 1}$
\end{proposition}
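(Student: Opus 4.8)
The plan is to dispatch the four claims in sequence, the first three by direct conditioning arguments and the last by reduction to the spectral independence bounds of \textcite{chen2021rapid}.

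For \cref{eqn:hardcore-probs}, the upper bound is just the tower property: conditioning additionally on $(\sigma_u)_{u\ne v}$ and invoking \cref{eq:hardcore-conditional}, the conditional chance that $v\in\sigma$ never exceeds $\lambda_v/(1+\lambda_v)$, so neither does its average $\P_\mu{v\in\sigma \mid \sigma_\Lambda}$. For the lower bound I would first note that, since $\sigma_\Lambda$ leaves every neighbor of $v$ lying in $\Lambda$ unoccupied, conditioning on the configuration at all sites other than $v$ and using \cref{eq:hardcore-conditional} gives
\[ \P_\mu{v\in\sigma \mid \sigma_\Lambda} = \frac{\lambda_v}{1+\lambda_v}\,\P_\mu{\sigma\cap N(v)=\emptyset \mid \sigma_\Lambda}. \]
Writing $N(v)\setminus\Lambda=\set{w_1,\dots,w_k}$ and applying the chain rule, $\P_\mu{\sigma\cap N(v)=\emptyset\mid\sigma_\Lambda}=\prod_{i=1}^k \P_\mu{w_i\notin\sigma\mid \sigma_\Lambda,\ w_1,\dots,w_{i-1}\notin\sigma}$; each factor is at least $1/(1+\lambda_{w_i})$ by the same tower-property argument (conditioning on all sites except $w_i$, \cref{eq:hardcore-conditional} makes the probability that $w_i$ is unoccupied at least $1/(1+\lambda_{w_i})$). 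Since each factor $1/(1+\lambda_w)\le 1$ and $\set{w_1,\dots,w_k}\subseteq N(v)$, this product is at least $\prod_{w\in N(v)}1/(1+\lambda_w)$, which is \cref{eqn:hardcore-probs}.

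Claim (1) is then a one-liner: $1/(1+x)\ge e^{-x}$, $\abs{N(v)}\le\Delta$, and $\lambda_w\le\lambda_\Delta\le e^2/(\Delta-2)$ by \cref{eq:lambda-delta-bound}, so $\prod_{w\in N(v)}1/(1+\lambda_w)\ge \exp(-\Delta e^2/(\Delta-2))\ge e^{-3e^2}$ using $\Delta/(\Delta-2)\le 3$ for $\Delta\ge 3$. Claim (2) is immediate from the upper bound in \cref{eqn:hardcore-probs}: it forces $\P_\mu{v\notin\sigma\mid\sigma_\Lambda}\ge 1/(1+\lambda_v)$, whence the ratio is at most $\lambda_v$. (These two use only $\max_v\lambda_v\le\lambda_\Delta$, i.e.\ $\delta\ge 0$.)

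The substantive point is (3). The key structural observation is that any conditional $\mu^{\sigma_\Lambda}$ of a hardcore model is again a hardcore model --- on the subgraph of $G$ obtained by deleting the pinned vertices together with the neighbors of the occupied pinned vertices (which are forced to be empty), keeping the same fugacities on the surviving vertices --- and that tilting a hardcore model by a field $\phi$ simply multiplies each fugacity $\lambda_v$ by $\phi_v$. Consequently, for every $\Lambda$, every valid $\sigma_\Lambda$, and every field $\phi\in(0,1+\delta/2]^{V\setminus\Lambda}$, the measure $\phi\ast\mu^{\sigma_\Lambda}$ is a hardcore model on a graph of maximum degree at most $\Delta$ whose fugacities are bounded by $(1+\delta/2)(1-\delta)\lambda_\Delta\le(1-\delta/2)\lambda_\Delta$; that is, it is a $(\delta/2)$-unique hardcore model. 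Invoking the spectral independence estimate for $\delta$-unique hardcore models that is uniform in the maximum degree --- established by \textcite{chen2021rapid} (see also \cite{chen2021optimal,anari2021spectral} for the bounded-degree case) --- gives $\lambda_{\max}(\corMat_{\phi\ast\mu^{\sigma_\Lambda}})=O(1/\delta)$. Since this holds for every pinning and every such field, $\mu$ is $(O(1/\delta),\delta/2)$-completely spectrally independent by definition. The only thing to be careful about is the bookkeeping in this last step: one must confirm that the pinned-and-tilted measure really is a hardcore model of maximum degree $\le\Delta$ with fugacities in the claimed range before the external estimate can be applied, and one must quote that estimate in the form which does not degrade with $\Delta$ (which is precisely the contribution of \cite{chen2021rapid}).
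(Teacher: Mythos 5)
Your proposal is correct and follows essentially the same route as the paper's proof: the same conditioning identity $\P_\mu\{v\in\sigma\mid\sigma_\Lambda\}=\frac{\lambda_v}{1+\lambda_v}\P\{N(v)\cap\sigma=\emptyset\mid\sigma_\Lambda\}$ for the two-sided bound, the same arithmetic via \cref{eq:lambda-delta-bound} for claim (1), and the same reduction of claim (3) to \cite[Lemma~8.4]{chen2021rapid} after checking that the tilted-and-pinned measures stay below $(1-\delta/2)\lambda_\Delta$. The only cosmetic difference is that you make the chain-rule step in the lower bound and the "conditionals/tilts of hardcore are hardcore" bookkeeping explicit, whereas the paper leaves these implicit.
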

\begin{proof}
Conclusion 2 follows directly from \cref{eq:hardcore-conditional}. 

For Conclusion 3, observe that under the assumption on $\delta$, we have $$\lambda_v (1+\delta/2) \leq \lambda_{\Delta} (1-\delta)(1+\delta/2) \leq \lambda_{\Delta}(1-\delta/2)$$
and so if $\delta > 0$, conclusion 3 follows from \cite[Lemma~8.4]{chen2021rapid}. %Similarly, conclusion 2 follows from the first conclusion, \cref{eq:hardcore-conditional}, and $\lambda_{\Delta} \le e^2 \le 8$.

%Note that apply any external field $\lambda'\in [0,1]^n$ results in another non-uniform hardcore model with external field further away from the uniqueness threshold. It suffices to show the $c$-boundedness for conditionals of $\mu.$
Next, we prove \cref{eqn:hardcore-probs}. By \cref{eq:hardcore-conditional}, for any $\Lambda \subseteq V$ and arbitrary valid partial assignment $\sigma_{\Lambda}$ on $\Lambda$, %after $v$ has been updated at least once, then
\[ \P_{\mu}{v \in \sigma \given \sigma_{\Lambda}} = \P{ N(v) \cap \sigma = \emptyset \given \sigma_{\Lambda}} \frac{\lambda_v}{1+\lambda_v} \leq \frac{\lambda_v}{1+\lambda_v}, \]
%Consequently, $\frac{\nu_{\sigma}(v) }{1 - \nu_{\sigma}(v) } \leq \lambda_v.$ In particular, by setting $\nu = \mu$, we have $\mu_{\sigma} (v) \leq \frac{\lambda_v}{1+\lambda_v}.$ Thus %On the other hand, %if all the neighbors of $v$ has also been updated at least once, then %from CLV, we have $\mu_{\Lambda}^{\hom} \geq 1/\Delta.$
which establishes the upper bound in \cref{eqn:hardcore-probs}. For the lower bound, observe from \cref{eq:hardcore-conditional}, \cref{eq:lambda-delta-bound}, and the inequality $1 + x \le e^x$ that
\[ \P_{\mu}{N(v) \cap \sigma  = \emptyset \given \sigma_{\Lambda}}  
%\geq \prod_{w \in \mathcal{N}(v)\cap (V \setminus \Lambda) } (1- \mu_{\sigma(\Lambda)}(w))
\geq \prod_{w \in {N}(v)} \frac{1}{1 + \lambda_w} \geq  \parens*{1 + \frac{3e^2}{\Delta}}^{-\Delta} \geq e^{-3e^2},\]
where the last inequality holds under the assumption $\max_v \lambda_v \le \lambda_{\Delta}$. The first inequality in the above display equation establishes the lower bound in \cref{eqn:hardcore-probs} while the last inequality establishes conclusion 1. \qedhere

% Therefore using \cref{eq:hardcore-conditional} gives 
% \[%\mu_{\sigma} (v) 
% \P_{\mu}{\sigma_v = 1 \mid \sigma_{\Lambda}} = \P_{\mu}{ \mathcal{N}(v) \subseteq \sigma^{-1}(-1)\given \sigma(\Lambda) }  \frac{\lambda_v}{1 + \lambda_v} \geq \frac{\lambda_v}{1 + \lambda_v} \prod_{w \in \mathcal{N}(v) \cap (V \setminus \Lambda)} \frac{1}{1 + \lambda_w}  \geq e^{-3e^2}  \frac{\lambda_v}{1 + \lambda_v} \]
% where again the last inequality is under the assumption $\max_v \lambda_v \le \lambda_{\Delta}$.
% and 
% \[\frac{\nu_{\sigma}(v) }{\mu_{\sigma} (v)  (1 - \nu_{\sigma}(v)) } \leq \lambda_v ( e^{-e}  \frac{\lambda_v}{1 + \lambda_v})^{-1} \leq  e^e (1+\lambda_v) \leq e^e (1 + e/\Delta).\]
\end{proof}

For later use, we record the following result due to Erbar, Henderson, Menz, and Tetali \cite{RicciMLSIhardcore} (see also \cite{dai2013entropy}), which establishes a modified log-Sobolev inequality for the hardcore model with fugacity at most $1/2\Delta$. This constitutes \cref{eqn:mlsi-easy} for the hardcore model, provided that $\theta$ is a sufficiently small constant. 

\begin{proposition}[{\cite[Theorem~4.5 and Corollary~4.7]{RicciMLSIhardcore}}]\label{prop:hardcore-easy}
Let $\mu$ denote the hardcore model on a graph $G$ of maximum degree at most $\Delta \geq 3$, with fugacity $\lambda_v \leq 1/2\Delta$ for all sites $v$, and let $P$ denote the transition matrix of the Glauber dynamics. Then, the modified log-Sobolev constant $\rho_0(P)$ satisfies $\rho_0(P) \geq 1/4n.$ 
%\Fred{it seems to include the case of varying fugacity as well which I guess we are using}
\end{proposition}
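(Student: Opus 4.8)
This proposition is quoted verbatim from \cite[Theorem~4.5 and Corollary~4.7]{RicciMLSIhardcore}, so at the level of this paper the ``proof'' is essentially a matter of matching conventions: \cite{RicciMLSIhardcore} work with the heat-semigroup (entropic Ricci curvature) formulation and a normalization of the Glauber generator in which each of the $n$ sites is resampled at rate $1$, so that their curvature lower bound translates, under our convention for $\rho_0(P)$ (the factor of $2$ discussed after \cref{def:approx-tensorization}) and our convention that one step of Glauber dynamics resamples a uniformly random site, into exactly $\rho_0(P) \ge 1/4n$. Let me nonetheless sketch why a bound of the form $\Omega(1/n)$ should hold, since the mechanism is instructive.

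The source of the inequality is that fugacities $\lambda_v \le 1/2\Delta$ place the model comfortably inside the Dobrushin uniqueness regime. By \cref{eq:hardcore-conditional}, the single-site conditional law of $\sigma_v$ is either the point mass at $0$ (when a neighbor is occupied) or $\operatorname{Ber}(\lambda_v/(1+\lambda_v))$; hence, for a neighbor $u \sim v$, the Dobrushin influence $\rho_{uv}$ --- the largest total variation distance between these conditional laws as the spin at $u$ is varied --- is at most $\lambda_v/(1+\lambda_v) \le \lambda_v \le 1/2\Delta$, while $\rho_{uv} = 0$ for $u \ne v$ with $u \not\sim v$. Thus every row sum $\sum_u \rho_{uv}$ and every column sum $\sum_v \rho_{uv}$ of the influence matrix is at most $\Delta \cdot \frac{1}{2\Delta} = \frac12$, so its operator norm is at most $\frac12 < 1$. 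From here one may either run the entropic-Ricci-curvature argument of \cite{RicciMLSIhardcore} --- comparing with the product measure $\bigotimes_v \operatorname{Ber}(\lambda_v/(1+\lambda_v))$ (the hardcore model with the independent-set constraint removed), whose Glauber dynamics has modified log-Sobolev constant $\ge c/n$ by tensorization of the two-point bound, and then controlling the effect of reinstating the constraint via the influence bound above --- or invoke a Marton-type argument in the spirit of \cite{caputo2015approximate}, whereby the Dobrushin condition with contraction coefficient $\le \frac12$ yields approximate tensorization of entropy (\cref{def:approx-tensorization}) with a dimension-free constant $C = O(1)$, and then the standard implication (approximate tensorization with constant $C$ $\Rightarrow$ modified log-Sobolev with constant $1/2Cn$) closes the argument.

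The step I expect to be the genuine obstacle --- and the reason this is imported as a black box rather than proved in two lines --- is the passage from the Dobrushin influence bound to an \emph{entropy} (as opposed to variance) contraction statement. The analogous Poincar\'e inequality / spectral gap bound of $\Omega(1/n)$ is a soft consequence of the operator-norm bound $\le \frac12$, and is already what path coupling delivers at $\lambda = O(1/\Delta)$; but the modified log-Sobolev inequality does not follow from fast mixing alone, and deriving it requires either the curvature machinery of \cite{RicciMLSIhardcore} or a delicate coupling/tensorization argument. A secondary technical point is that the hard independent-set constraint makes the state space non-product; this is manageable because the constraint only ever \emph{lowers} a site's conditional occupation probability, keeping the relevant local entropy terms uniformly comparable to their product-model counterparts (in the same spirit as \cref{lem:ent-compare} and \cref{lem:dirichlet-to-entropy}).
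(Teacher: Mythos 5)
The paper does not prove this proposition itself—it is imported wholesale from \cite[Theorem~4.5 and Corollary~4.7]{RicciMLSIhardcore}—and you correctly recognize this, which is the main thing to get right here. Your sketch of the underlying mechanism (the influence bound $\rho_{uv} \le \lambda_v/(1+\lambda_v) \le 1/2\Delta$, giving row and column sums of the Dobrushin matrix at most $1/2$, then passing to entropy contraction via the entropic Ricci curvature of the Glauber generator) is a fair summary of what the cited reference actually does, and your bookkeeping remarks about the rate-one-per-site continuous-time normalization used there versus the paper's discrete-time random-site chain, plus the factor-of-$2$ convention in $\rho_0$, are the right things to track.

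One caveat on the alternative route you float: a Marton-type tensorization argument such as \cite[Theorem~1.14]{marton2015logarithmic}, which the paper does use for the Ising model in \cref{lem:ising-dobrushin}, needs a uniform lower bound on the single-site conditional marginals, and this fails for the hardcore model because the conditional occupation probability at a site $v$ is exactly zero whenever one of its neighbors is occupied. That is precisely why the paper reaches for the curvature result of \cite{RicciMLSIhardcore} here rather than reusing the Marton machinery it deploys for Ising (which has soft constraints, and hence strictly positive conditional marginals, as verified in \cref{prop:ising marginals}). Your closing remark that the hard constraint is ``manageable'' is therefore optimistic: for the Marton route it is a genuine obstruction rather than a technicality, which is why the black-box citation is to \cite{RicciMLSIhardcore} and not to \cite{marton2015logarithmic}.
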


\subsection{Field dynamics with interleaved systematic scans}

Recall that a single pass of the \emph{systematic scan} (see, e.g., \cite{dyer2006systematic}) chain $P^{\operatorname{SS}}$ proceeds as follows: for $v$ looping through a fixed arbitrary ordering of the vertices $V$, the spin $\sigma_v$ at site $v$ is resampled conditional on the configuration at all other sites. This is a Markov chain which preserves the stationary distribution; one difference compared to the usual Glauber dynamics is that this chain is not typically reversible.  

The following proposition shows that for an arbitrary probability measure $\nu_0$ supported on independent sets, the measure $\nu_0 P^{\operatorname{SS}}$ is $O(1)$-bounded with respect to the hardcore measure $\mu$. This property of the systematic scan will be used as a subroutine in our sampling algorithm, to ensure that intermediate distributions lie in the region of applicability of our restricted entropy contraction inequality.  
\begin{proposition}\label{lem:systematic-scan-hardcore}
There exist absolute constants $C, C' > 0$ such that the following is true. 
Let $G = (V,E)$ be a graph on $n$ vertices with maximum degree at most $\Delta \ge 3$. 
Let $\mu$ be the hardcore model on $G$ with fugacities $(\lambda_v)_{v \in V}$ and suppose
$\lambda_v \le \lambda_{\Delta}$ as defined in \cref{eq:hardcore-uniqueness}.
Let $\nu_0$ be an arbitrary probability measure on $2^{V}$ supported on independent sets. Let $P^{\operatorname{SS}}$ denote the Markov
operator corresponding to a single pass of the systematic scan chain for $\mu$,
and let $\nu := \nu_0 P^{\operatorname{SS}}$. Define
\[ \gamma_{v} := \max_{\sigma^-} \frac{\nu( \sigma^+)}{\nu(\sigma^-)}, \]
where the maximum ranges over independent sets $\sigma^-$ in $V$ which do not include $v$ and where $\sigma^+$ denotes the set $\sigma^{-} \cup \set{v}$ (note that $\sigma^+$ may not be an independent set). % set to $+1$ ($-1$ respectively) and $V \setminus v$ to $\sigma.$

Then $\gamma_v \le C \lambda_v$ for all $v \in V$. As a consequence, $\nu$ is $C'$-completely bounded with respect to $\mu$, i.e.
$\nu \in \mathcal{V}^{c}(C',\mu)$.
\end{proposition}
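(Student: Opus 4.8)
The plan is to pass to the density $h := d\nu/d\mu$, which is well defined and nonnegative since $\mu$ has full support on independent sets, and to establish the pointwise bound $h(\sigma^+) \le C\, h(\sigma^-)$, with $C = e^{3e^2}$, for every independent set $\sigma^-$ with $v \notin \sigma^-$ such that $\sigma^+ := \sigma^- \cup \{v\}$ is also independent. Since $\mu$ is a hardcore model, $\mu(\sigma^+)/\mu(\sigma^-) = \lambda_v$ whenever $\sigma^+$ is independent, whence $\nu(\sigma^+)/\nu(\sigma^-) = \lambda_v\, h(\sigma^+)/h(\sigma^-) \le C\lambda_v$; and when $\sigma^+$ is not independent, $\nu(\sigma^+) = 0$ because $\nu$ is supported on independent sets. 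This yields $\gamma_v \le C\lambda_v$ for all $v$.

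To bound $h$, I would write $P^{\operatorname{SS}} = P^{(1)}\cdots P^{(n)}$, where $P^{(k)}$ resamples the $k$-th scanned vertex $v_k$ from its $\mu$-conditional. Each $P^{(k)}$ is a conditional expectation in $L^2(\mu)$, hence self-adjoint, so the density of $\nu = \nu_0 P^{\operatorname{SS}}$ is $h = (P^{\operatorname{SS}})^\ast h_0 = P^{(n)}\cdots P^{(1)} h_0$, where $h_0 := d\nu_0/d\mu$. Say $v = v_j$. After the first $j$ operators $P^{(1)},\dots,P^{(j)}$ of $P^{(n)}\cdots P^{(1)}$ have been applied, the running function no longer depends on the coordinate $x_v$ (as $P^{(j)}$ resamples $v$); the remaining operators $P^{(j+1)},\dots,P^{(n)}$ resample vertices other than $v$, so each acts separately on the two slices $\{x_v = 1\}$ and $\{x_v = 0\}$, and on a non-neighbor of $v$ the two slice-operators coincide by the Markov property of $\mu$. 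Writing $h^+$ and $h^-$ for the restrictions of $h$ to $\{x_v = 1\}$ and $\{x_v = 0\}$, viewed as functions on configurations of $V\setminus\{v\}$, we thus get $h^+ = Q_1\phi$ and $h^- = Q_0\phi$ for a common nonnegative $\phi$, where $Q_1$ and $Q_0$ are equal products of positive linear operators except at the vertices $w_1,\dots,w_m$ that are neighbors of $v$ and occur after $v$ in the scan: at $w_i$ the operator in $Q_1$ substitutes the value $0$ (an occupied $v$ forces $w_i$ unoccupied), while the operator in $Q_0$ is the ordinary hardcore resampling of $w_i$. Since the probability that $w_i$ is unoccupied is always at least $1/(1+\lambda_{w_i})$, the former operator is dominated, on nonnegative inputs, by $(1+\lambda_{w_i})$ times the latter. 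Replacing the $Q_1$-operators by the $Q_0$-operators one at a time — legitimate because all intervening operators are positive and the one-step domination holds for every nonnegative input — gives $h^+ \le \bigl(\prod_{i=1}^m(1+\lambda_{w_i})\bigr) h^- \le \bigl(\prod_{w\in N(v)}(1+\lambda_w)\bigr) h^-$, and $\prod_{w\in N(v)}(1+\lambda_w) \le e^{\sum_{w\in N(v)}\lambda_w} \le e^{\Delta\lambda_\Delta} \le e^{3e^2}$ by \cref{eq:lambda-delta-bound}. As $h(\sigma^+)/h(\sigma^-) = h^+(\sigma^-)/h^-(\sigma^-)$, this proves the claimed pointwise bound.

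For the ``as a consequence'' statement I would fix an external field $\lambda \in (0,1]^n$ and a valid partial configuration $\sigma_\Lambda$, and set $\nu' := (\lambda\ast\nu)^{\sigma_\Lambda}$, $\mu' := (\lambda\ast\mu)^{\sigma_\Lambda}$, $h' := d\nu'/d\mu'$. Tilting by $\lambda$ and then pinning $\sigma_\Lambda$ multiplies the relevant density only by a positive constant, so $h'(\tau) = c\cdot h(\sigma_\Lambda\cup\tau)$ for some $c > 0$, and the pointwise bound above gives $h'(\tau\cup\{v\}) \le C\, h'(\tau)$ for all valid $\tau$. Now $\mu'$ is again a hardcore model, with fugacities at most $\lambda_\Delta$ on a graph of maximum degree at most $\Delta$; writing $g_v$ for the fugacity of $v$ in $\mu'$ and using that $\mu'(\tau\cup\{v\}) = g_v\,\mu'(\tau)$ when $\tau$ leaves $N(v)$ unoccupied (and is $0$ otherwise),
\[ \frac{\nu'(v)}{1-\nu'(v)} = \frac{\sum_\tau h'(\tau\cup\{v\})\,\mu'(\tau\cup\{v\})}{\sum_\tau h'(\tau)\,\mu'(\tau)} \le C\,g_v. \]
On the other hand, $\frac{\mu'(v)}{1-\mu'(v)} \ge \mu'(v) = \frac{g_v}{1+g_v}\,q_v$, where $q_v$ is the $\mu'$-probability that no neighbor of $v$ is occupied; by \cref{prop:hardcore}, $q_v \ge e^{-3e^2}$, so $\frac{\mu'(v)}{1-\mu'(v)} \ge \frac{g_v\,e^{-3e^2}}{1+\lambda_\Delta}$. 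Dividing, $\frac{\nu'(v)(1-\mu'(v))}{\mu'(v)(1-\nu'(v))} \le C(1+\lambda_\Delta)e^{3e^2}$, an absolute constant, so $\nu \in \mathcal{V}^{c}(C',\mu)$ for a suitable absolute $C'$.

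The step I expect to be the main obstacle is the operator-chain argument in the second paragraph: since the operators do not commute, the replacements must be performed sequentially, and correctness rests on (i) all non-neighbor and ``forcing'' operators being positive linear maps and (ii) the one-step domination holding for every nonnegative input function. The remaining ingredients — self-adjointness of single-site resampling, the reduction to the density, and the final two-sided marginal estimate — are routine.
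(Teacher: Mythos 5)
Your proof is correct, and it takes a genuinely different (dual) route from the paper's. The paper argues directly at the level of trajectory probabilities: it reduces to $\nu_0 = \1_x$ by linearity, factors $P^{\operatorname{SS}}(x \to \sigma^\pm)$ into a product of single-site conditionals, and observes that all factors cancel except the one at $v$ (bounded by $\lambda_v$) and those at neighbors of $v$ scanned after $v$ (each bounded by $1 + \lambda_w$). You instead pass to the density $h = d\nu/d\mu$, use that each $P^{(k)}$ is a self-adjoint conditional expectation so $h = P^{(n)} \cdots P^{(1)} h_0$, and then slice the output by the spin at $v$: after $P^{(j)}$ the running function no longer depends on $x_v$, the two slice-operators $Q_1, Q_0$ coincide at non-neighbors by the Markov property, and at a neighbor $w$ the forcing operator is dominated by $(1+\lambda_w)$ times the unconstrained resampling for nonnegative inputs. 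These are mirror images of the same calculation and give the identical constant $C = e^{3e^2}$. Your sequential-replacement argument is handled carefully and correctly: positivity of the intervening operators and the per-step domination on nonnegative inputs are exactly what is needed, and you flag this yourself. The ``as a consequence'' step is also essentially the paper's argument, just phrased via $h'(\tau \cup \{v\}) \le C h'(\tau)$ and a slightly different presentation of the marginal lower bound from \cref{prop:hardcore}, yielding the same constant $C' = Ce^{3e^2}(1 + \lambda_\Delta)$. If anything, the operator/density formulation is a bit more abstract than the paper's, but it has the advantage of making the cancellation structure (Markov property for non-neighbors, domination at neighbors) very transparent, and it avoids the reduction to point masses.
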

\begin{proof}
First we show $\gamma_v \le C \lambda_v$ for all $v \in V$.
This is equivalent to $\nu(\sigma^+) \le C \lambda_v \nu(\sigma^-)$ for all independent sets $\sigma^-$ which do not include $v$. By linearity of expectation, it suffices to prove that if $P := P^{\operatorname{SS}}$ is the systematic scan chain, that for any configuration $x$ we have $P(x \to \sigma^+) \le C \lambda_v P(x \to \sigma^-)$. 

Let $u_1,\ldots,u_k$ (for $0 \le k \le \Delta$) denote the neighbors of site $v$ which are updated after site $v$ in the systematic scan. If $P(x \to \sigma^-) = 0$, this implies that $\sigma^-$ is not an independent set and so $P(x \to \sigma^+) = 0$ as well. Otherwise, if $P(x \to \sigma^-) > 0$ then from the definition of the systematic scan, we have
\[ \frac{P(x \to \sigma^+)}{P(x \to \sigma^-)} 
= \frac{\P_{\mu}{v \in \sigma \mid \sigma^+(v)}}{\P_{\mu}{v \notin \sigma \mid \sigma^-(v)}} \prod_{i = 1}^k \frac{\P_{\mu}{{u_i} \notin \sigma \mid \sigma^+(u_i)}}{\P_{\mu}{{u_i} \notin \sigma \mid \sigma^-(u_i)}} \]
where we let the notation $\sigma^-(u_i)$ denote the configuration right before the systematic scan updates site $u_i$, given that for all previous sites it updated according to configuration $\sigma^-$, and similarly for the notation $\sigma^+(u_i)$. Note that by this definition, we have $\sigma^+(v) = \sigma^-(v)$ since $\sigma^+$ and $\sigma^-$ disagree only at site $v$. In particular, by \cref{prop:hardcore} we have
\[ \frac{\P_{\mu}{v \in \sigma \mid \sigma^+(v)}}{\P_{\mu}{v \notin \sigma \mid \sigma^-(v)}} \le \lambda_v. \]
Also, from \cref{eq:hardcore-conditional} we have $\P_{\mu}{{u_i} \notin \sigma \mid \sigma^+(u_i)} = 1$, since the neighboring site $v$ will be occupied when site $u_i$ is being resampled, and $\P_{\mu}{{u_i} \notin \sigma \mid \sigma^-(u_i)} \ge 1/(1 + \lambda_{u_i})$. From \cref{prop:hardcore} we therefore have that
\[ \prod_{i = 1}^k \frac{\P_{\mu}{{u_i} \notin \sigma \mid \sigma^+(u_i)}}{\P_{\mu}{{u_i} \notin \sigma \mid \sigma^-(u_i)}} \le \prod_{i = 1}^k (1 + \lambda_{u_i}) \le e^{3e^2} \]
and combining our estimates proves the result with $C = e^{3e^2}$. 
%By the weighted mediant inequality, it suffices to prove the result when $\nu_0$ is a $\delta$-measure on some configuration $x \in \set{\pm 1}^n$. 
%\Fred{TODO.}

Now we verify that $\gamma_v \le C \lambda_v$ implies $\nu$ is $C'$-completely bounded with respect to $\mu$. Let $\Lambda \subseteq V \setminus \set{v}$ and  let $\sigma_{\Lambda} \in 2^{\Lambda}$ be an arbitrary partial assignment which does not occupy any neighbor of $v$. Let the
external field be $(\theta_v)_{v\in V}\in [0,1]^V$ and let $\lambda'_v := \theta_v \lambda .$ Letting $\mathcal{I}_{v}$ denote the collection of independent sets not containing $V$, we have
\[  \max_{\sigma^- \in \mathcal{I}_v} \frac{(\theta\ast \nu)( \sigma^+)}{(\theta \ast \nu)(\sigma^-)} = \theta_v \max_{\sigma^- \in \mathcal{I}_v} \frac{\nu( \sigma^+)}{\nu(\sigma^-)} \leq C \lambda'_v      \]
thus
\[\max_{\sigma' \in 2^{\Lambda} } \frac{\P_{\theta \ast \nu}{ v \in \sigma \given \sigma_{\Lambda} = \sigma' }}{\P_{\theta \ast \nu}{ v \notin \sigma \given \sigma_{\Lambda} = \sigma' }} = \max_{\sigma' \in 2^{\Lambda} } \frac{\P_{\theta \ast \nu}{ v \in \sigma \land \sigma_{\Lambda} = \sigma' }}{\P_{\theta \ast \nu}{ v \notin \sigma \land \sigma_{\Lambda} = \sigma' }} \leq C \lambda'_v.\]
\cref{prop:hardcore} implies that for $\sigma_{\Lambda}$ not occupying any neighbors of $v$, $\P_{\theta \ast \mu}{ v \in \sigma \given \sigma_{\Lambda} } \geq  e^{-3e^2} \frac{\lambda'_v}{1 + \lambda'_v},$ thus
\[\frac{(\theta \ast \nu)^{\sigma_{\Lambda}}(v) }{(\theta \ast \mu)^{\sigma_{\Lambda}} (v)  (1 -(\theta \ast \nu)^{\sigma_{\Lambda}}(v)) } \leq C\lambda'_v \parens*{ e^{-3e^2}  \frac{\lambda'_v}{1 + \lambda'_v}}^{-1} \leq  C e^{3e^2} (1+\lambda'_v) \leq C e^{3e^2} (1 + e^2) =: C' \]
which (using the trivial inequality $1 - (\theta \ast \nu)^{\sigma_{\Lambda}}(v) \le 1$) verifies the $C'$-completely bounded property with $C' = C\cdot e^{3e^2}(1+e^2)$. 
\end{proof}

For an integer $m \geq 1$, let $\widetilde {\fieldMarkov}_{\theta,m}$ denote the `$m$-approximate' version of $\fieldMarkov_{\theta}$ (recall \cref{def:field-dynamics}), where instead of resampling exactly from the distribution $(\theta \ast \mu)^{\sigma_{[n]\setminus S}}$, we return the configuration obtained by running the Glauber dynamics with respect to $(\theta \ast \mu)^{\sigma_{[n]\setminus S}}$ for $m$ steps (say, starting from the empty configuration). 

%We are finally ready to prove \cref{thm:main hardcore}, in the following more precise form. \Fred{this is not main theorem for hardcore anymore, it should be sublinear result}
We can now prove that the approximate field dynamics interleaved with systematic scan mixes rapidly, which naively gives a sampling algorithm running in $O(n\Delta \log^2(n))$ time. Later we show how to shave the factor of $\Delta$ to get a sublinear (if $\Delta = \omega(\log^{2}n)$) runtime.

\begin{theorem}\label{thm:hardcore-systematic}
%\label{thm:hardcore-detailed}
Suppose $\mu$ is the $\delta$-unique hardcore model on $G = (V,E)$ with $\abs{V}= n$, and define the Markov chain with transition matrix $P = P^{\operatorname{SS}} \widetilde {\fieldMarkov}_{\theta,m}$ with $\theta = 1/10$ and $m = 10n\log(n T/\epsilon)$. Let $\nu_0$ be arbitrary and define $\nu_t := \nu_{t - 1} P$ inductively. Then
\[ \DTV{\nu_{T}, \mu} \le \epsilon \]
for all $T = \Omega_{\delta}(\log[n \DKL{\nu_0 \river \mu}/\epsilon])$.
\end{theorem}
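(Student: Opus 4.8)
The plan is to bound $\DTV{\nu_T,\mu}$ by comparing the true chain $P = P^{\operatorname{SS}}\widetilde{\fieldMarkov}_{\theta,m}$ against the idealized chain $\widehat{P} := P^{\operatorname{SS}}\fieldMarkov_{\theta}$ (exact field dynamics in place of the $m$-step approximation), splitting the error into (i) the distance $\DTV{\nu_0\widehat{P}^{T},\mu}$ of the idealized chain to stationarity, controlled via geometric contraction of relative entropy, and (ii) the accumulated discrepancy $\DTV{\nu_0 P^{T},\nu_0\widehat{P}^{T}}$ coming from the finite-length resampling. We may assume $\DKL{\nu_0\river\mu}<\infty$, since otherwise no finite $T$ satisfies the hypothesis and the statement is vacuous; in particular $\nu_0$ is supported on independent sets, as are all measures we encounter, since $P^{\operatorname{SS}}$ and the (approximate) field dynamics preserve this support.

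For part (i): by \cref{prop:hardcore}, the $\delta$-unique hardcore model $\mu$ is completely $(O(1/\delta),\delta/2)$-spectrally independent. The key structural point is that $P^{\operatorname{SS}}$ acts as a relative-entropy-nonincreasing ``projection'' into the restricted class $\mathcal{V}^{c}(C',\mu)$ where \cref{thm:restricted entropy field dynamics} applies: it preserves $\mu$, so $\DKL{\nu P^{\operatorname{SS}}\river\mu}\le\DKL{\nu\river\mu}$ by the data processing inequality, and by \cref{lem:systematic-scan-hardcore} one has $\nu P^{\operatorname{SS}}\in\mathcal{V}^{c}(C',\mu)$ for an absolute constant $C'$, no matter what $\nu$ (supported on independent sets) was. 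I would then apply \cref{thm:restricted entropy field dynamics} to $\fieldMarkov_{\theta}$ with $\eta=O(1/\delta)$, $\epsilon=\delta/2$, $C=C'$, obtaining a constant $\kappa=\kappa(\delta)>0$ (of the form $(\theta/3)^{\eta'}$ with $\eta'=O_{\delta}(1)$) such that $\DKL{\nu'\fieldMarkov_{\theta}\river\mu}\le(1-\kappa)\DKL{\nu'\river\mu}$ for all $\nu'\in\mathcal{V}^{c}(C',\mu)$. Composing these two facts at each step gives $\DKL{\nu\widehat{P}\river\mu}\le(1-\kappa)\DKL{\nu\river\mu}$ for every $\nu$ on independent sets, hence by induction $\DKL{\nu_0\widehat{P}^{T}\river\mu}\le e^{-\kappa T}\DKL{\nu_0\river\mu}$. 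Choosing the hidden constant in $T=\Omega_{\delta}(\log[n\DKL{\nu_0\river\mu}/\epsilon])$ large enough makes this $\le\epsilon^{2}/2$, and Pinsker's inequality yields $\DTV{\nu_0\widehat{P}^{T},\mu}\le\epsilon/2$.

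For part (ii): a step of $\widetilde{\fieldMarkov}_{\theta,m}$ differs from a step of $\fieldMarkov_{\theta}$ only by replacing the exact resample $\sigma'_{S}\sim(\theta\ast\mu)^{\sigma_{[n]\setminus S}}$ with the output of $m$ steps of Glauber dynamics for $(\theta\ast\mu)^{\sigma_{[n]\setminus S}}$ started from the empty configuration. This conditional tilted measure is a hardcore model with fugacities $\theta\lambda_{v}$, which (for $\theta=1/10$) are small enough that its Glauber dynamics has modified log-Sobolev constant $\Omega(1/n)$ by \cref{prop:hardcore-easy}; since the relative entropy of the empty configuration with respect to it is $O(n)$, standard relations between the modified log-Sobolev inequality and mixing \cite{bobkov2006modified} together with Pinsker's inequality show that $m=10n\log(nT/\epsilon)$ Glauber steps suffice to reach total variation distance $\le\epsilon/(2nT)$ from the exact conditional. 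Hence $\max_{\sigma}\DTV{\1_{\sigma}\widetilde{\fieldMarkov}_{\theta,m},\1_{\sigma}\fieldMarkov_{\theta}}\le\epsilon/(2nT)$, and since $P^{\operatorname{SS}}$ (and every Markov operator) is nonexpansive in total variation, a standard telescoping estimate gives $\DTV{\nu_0 P^{T},\nu_0\widehat{P}^{T}}\le T\cdot\epsilon/(2nT)=\epsilon/(2n)\le\epsilon/2$. Combining parts (i) and (ii) by the triangle inequality yields $\DTV{\nu_T,\mu}\le\epsilon$.

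I expect the main obstacle to be conceptual rather than computational: the restricted entropy contraction of \cref{thm:restricted entropy field dynamics} only holds inside $\mathcal{V}^{c}(C',\mu)$, and there is no a priori reason the field dynamics keeps the law inside this class, so the argument rests on recognizing that the interleaved systematic scan pass is a cheap, entropy-nonincreasing ``projection'' back into the good class (\cref{lem:systematic-scan-hardcore}), playing the same role as the balancing steps used elsewhere in the paper. The remaining delicate point is the bookkeeping in part (ii) — verifying that the prescribed $m$ is genuinely large enough — which hinges on the small-fugacity modified log-Sobolev inequality for the tilted measure and the $O(n)$ bound on the initial relative entropy from the empty configuration.
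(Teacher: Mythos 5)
Your proposal is correct and follows essentially the same route as the paper: compare the chain $P = P^{\operatorname{SS}}\widetilde{\fieldMarkov}_{\theta,m}$ against the idealized chain $\widehat P = P^{\operatorname{SS}}\fieldMarkov_\theta$, use \cref{lem:systematic-scan-hardcore} to place the law after the scan in $\mathcal{V}^{c}(C',\mu)$, combine with \cref{prop:hardcore} and the restricted entropy contraction for field dynamics (\cref{thm:restricted entropy field dynamics}, or equivalently \cref{cor:restricted entropy contraction} as the paper cites) plus the data processing inequality for $P^{\operatorname{SS}}$ to get geometric KL decay of the idealized chain, and control the approximation error of $\widetilde{\fieldMarkov}_{\theta,m}$ per step via the small-fugacity mixing bound (\cref{prop:hardcore-easy}) and a telescoping TV estimate. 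Your bookkeeping in part (ii) is slightly more generous than the paper's (you aim for per-step error $\epsilon/(2nT)$ where $\epsilon/(2T)$ would do), but the prescribed $m$ covers it either way, and your handling of the implicit assumption that $\nu_0$ is supported on independent sets (via noting the statement is vacuous otherwise) is a sensible reading of what the paper leaves unstated.
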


\begin{proof}[Proof of \cref{thm:main hardcore}]
Let $\nu_0$ be an arbitrary probability distribution supported on the independent sets of $G$. Let $\nu^*_0 := \nu_0$ and for $t\geq 0$, inductively define
\[ \nu^*_{t + 1} := \nu^*_t P^{\operatorname{SS}} {\fieldMarkov}_{\theta}.\]
Then, by \cref{lem:systematic-scan-hardcore}, \cref{cor:restricted entropy contraction}, and the spectral independence conclusion from \cref{prop:hardcore} we have
\[ \DKL{\nu^*_{t + 1} \river \mu} \le (1 - \kappa) \DKL{\nu^*_t P^{\operatorname{SS}} \river \mu} = (1 - \kappa) \DKL{\nu^*_t P^{\operatorname{SS}} \river \mu P^{\operatorname{SS}}} \le (1 - \kappa) \DKL{\nu^*_t \river \mu} \]
where $\kappa = e^{-O(\sqrt{1/\log(1 + \delta)})} = e^{-O(1/\sqrt{\delta})}$. Here, the equality holds because $\mu$ is the stationary measure of the systematic scan $P^{\operatorname{SS}}$, and the last inequality is the data processing inequality.

%\Fred{is this the optimal way to handle approximate version $\nu_t$? Requires $\log^2(n)$ factor in runtime?}
%\Fred{minor todo: add details about approximate $\fieldMarkov$ since it's implemented by glauber, what is the ultimate runtime?} %$\DKL{\nu_{t} P^{\operatorname{SS}} P^{FDG}_{\theta} \river \nu_t P^{\operatorname{SS}} \fieldMarkov_{\theta}} \le \delta$ translates to what?}

Now, let $\theta = 1/10$. By \cref{prop:hardcore-easy} (although, a simple coupling argument suffices, see e.g.~\cite[Chapter~13.1]{levin2017markov}), we see that the $\epsilon$-mixing time of the Glauber dynamics with respect to $(\theta \ast \mu)^{\sigma_{[n]\setminus S}}$ is at most $5n\log(n/\epsilon)$. Therefore, for $m \geq 5n\log(n/\epsilon')$ and letting $\nu_{t + 1} := \nu_t P^{\operatorname{SS}} \widetilde{\fieldMarkov}_{\theta,m}$, %where $\tilde{P}^{FD}_{\theta}$ is the approximate field dynamics implemented by Glauber dynamics run with enough iterations to ensure TV distance $\epsilon$ from stationary, then 
we have by the triangle inequality for $\TV$ distance that 
\begin{align*} 
\DTV{\nu_{t + 1}, \nu^*_{t + 1}}
&\le \DTV{\nu_t P^{\operatorname{SS}} \widetilde{\fieldMarkov}_{\theta,m}, \nu_t P^{\operatorname{SS}} \fieldMarkov_{\theta}} + \DTV{\nu_t P^{\operatorname{SS}} \fieldMarkov_{\theta}, \nu^*_t P^{\operatorname{SS}} \fieldMarkov_{\theta}} \\
&\le \DTV{\nu_t P^{\operatorname{SS}} \widetilde{\fieldMarkov}_{\theta,m}, \nu_t P^{\operatorname{SS}} \fieldMarkov_{\theta}} + \DTV{\nu_t, \nu^*_t } \\
&\le \epsilon' + \DTV{\nu_t, \nu^*_t }.
\end{align*}
and so $\DTV{\nu_t,\nu^*_t} \le t \epsilon'$ for all $t$. Using the triangle inequality and Pinsker's inequality, we have
\[ \DTV{\nu_t,\mu} \le \DTV{\nu_t,\nu^*_t} + \DTV{\nu^*_t,\mu} \le t\epsilon + \sqrt{(1/2) \DKL{\nu^*_t \river \mu}} \le  t\epsilon' + \sqrt{(1/2) (1 - \kappa)^t \DKL{\nu^*_0 \river \mu}} \]
which proves the result (upon setting $\epsilon' = \epsilon/2t$). %\Fred{include a more detailed theorem statement here?}
%\[ \DKL(\nu_{t + 1} \river \nu^*_{t + 1}) = \DKL(\nu_t P^{\operatorname{SS}} \tilde{P}^{FD}_{\theta} \river \nu^*_t P^{\operatorname{SS}} \tilde{P}^{FD}\]
\end{proof}
\subsubsection{Sublinear-time sampling}
\cref{thm:hardcore-systematic} provides an algorithm for sampling from a distribution within $\epsilon$-TV distance of $\mu$ using $T = \Theta_{\delta}(n\log^{2}(n/\epsilon))$ steps, each of which consists of resampling the spin at a vertex from a hardcore model with fugacity $\lambda = \Theta(1/\Delta)$, conditioned on a given configuration of the neighbors of the vertex. \footnote{Selecting which vertices to `pin' in a given round of the field dynamics requires time $O(n)$; since we only need to do this $\Theta_{\delta}(\log(n/\epsilon))$ times, the contribution to the overall running time is of a lower order.} Naively, implementing each such step takes time $\Theta(\Delta)$, since the vertex must learn the configuration of its neighbors. Now, we show how to implement each such step in time $O(1)$ in expectation, the idea being that in order to update the spin of a vertex, one only needs to look at its neighbors with probability roughly $\lambda = \Theta(1/\Delta)$. We proceed to the details.  Without loss of generality, we may assume that $\Delta \geq 20$, since for $\Delta \leq 20$, the naive implementation already works in time $O(\Delta) = O(1)$.

%We now show how to implement field dynamics with running time independent of the maximum degree $\Delta$, giving a sampling algorithm with running time sublinear in the number of edges in the graph, e.g. for fixed $\epsilon > 0$ and $\Delta = \omega(\log^{2}n)$. The only step in the field dynamics which has a runtime naively depending on $\Delta$ are the Glauber updates, which look at all of the neighbors of a vertex. However, the Glauber update actually only rarely needs to look at the neighbors since except with probability roughly $\lambda = \Theta(1/\Delta)$ the update sets a node to be unoccupied anyway. Taking advantage of this lets us eliminate the factor of $\Delta$ from the runtime. 

% \Fred{edit the proof below a bit}

\begin{proof}[Proof of implementing updates]
Throughout, we assume $\Delta\geq 20.$  
Let $S_1, \cdots, S_{T}$ with $T =  C_{\delta} n\log^2 (n/\epsilon)$ be the updating steps that the algorithm takes. When updating a given vertex $i,$ we sample $r \sim \text{Ber} (\frac{\lambda}{\lambda+1}):$ if $r = 1$, then we include $i$ in the independent set, provided that all neighbors of $i$ are unoccupied. Else, if $r = 0$, then $i$ is not included in the independent set. 

For the update at time $t$ ($1 \leq t \leq T$), let $Z_t$ denote the indicator that $r=1$, when we sample $r \sim \text{Ber} (\frac{\lambda}{\lambda+1}).$ Since we only need to look at the neighbors of the vertex updated at time $t$ if $Z_{t} = 1$, it follows that the expected time to carry out the update at time $t$ is
%We show that for the hardcore model, each update at a vertex can be implemented in expected $O(1)$ time.  T

\[O(\Delta)\P{Z_{t} = 1} + O(1)\P{Z_{t} = 0} = O(\Delta) \frac{\lambda }{\lambda + 1} + O(1), \frac{1}{\lambda+1} = O(1)\]%Only when a vertex is updated to $+$ that it needs to check its neighbor. checking neighbor takes $O(\Delta)$ time, but the probability of being updated to $+$ is $\leq O(1/\Delta),$ so the amortize time for each step is $O(1).$
where we have used that $\lambda \leq \lambda(\Delta) \leq e/\Delta$ for $\Delta \geq 20.$ Thus, by the linearity of expectation, the expected runtime of all of the update steps is $O(T)$. 

We can upgrade this to a high probability bound on the overall running time of the algorithm. Indeed, by the discussion in the first paragraph of this subsection, we see that for $T = C_{\delta}n\log^2(n/\epsilon)$, the total runtime of the algorithm is $O(T+\Delta \sum_{t=1}^T Z_t)$, where $Z_1,\dots, Z_T$ are i.i.d. Bernoulli random variables with expected value $\leq \frac{e}{\Delta}$. Therefore, by Chernoff's bound, we get that
\[\P*{ \sum_{t=1}^T Z_t \geq 2e \frac{T}{\Delta}} \leq \exp(-\Omega(T/\Delta)) = O\parens*{\frac{1}{(n/\epsilon)^{n\log (n/\epsilon)/\Delta}}} \]
Thus, except with probability $O\parens*{\frac{1}{(n/\epsilon)^{n\log (n/\epsilon)/\Delta}}}$, our algorithm runs in time $O( T).$ \qedhere

% \June{todo: add argument about field dynamics, always select -, and select + with some probability. This takes O(1) time. We make one $O(n)$ pass to put the selected vertices into an array, and we only need to construct this array for $O(\log n)$ times. We  simulate GD in the easier regime on the selected vertices conditioned on the unselected vertices. When make a GD update at $v\in S$, an attempt update to $-$ takes $O(1)$ time, but an attempt to update to $+$ takes $O(Delta)$ time to check the pinned neighbor and that again takes expected O(1) time. if %We only select $+$ vertices to be updated, so we 
% }
\end{proof}
\subsection{Balanced Glauber dynamics}
\label{sec:balanced-gd}

\begin{comment}
\begin{proposition} \label{prop:ising}
Suppose $\mu$ is the Ising model with $\delta$-unique parameter $(\beta, (\lambda_v)_{v\in V}).$ %Wlog assume $\lambda_v \leq 1\forall v.$ 

TODO: $\mu$ is $(\eta, \epsilon)$-complete spectral independence (maybe look at correlation decay pf? is Ising model fractionally independent, probably not...)
seems the spectral gap for all degree paper already prove this (see definition 8.7, and page 50, 4th line).

After all vertices have been updated at least once according to $\mu$ then $\nu$ is totally $O(1)$-bounded wrt $\mu.$ (this is not even necessary, we can using entropic independence to derive MLSI of block dynamics and field dynamics right away (no dependent on $\lambda$). %MLSI of field dynamics should already imply concentration by Hebst's argument.  
We still need lower bound on the marginal and entropy factorization for translating field dynamics to Glauber dynamics.)
\end{proposition}
\begin{proof}

\end{proof}
\end{comment}
Fix $K > 1$. We consider the following sampling algorithm, which we call the $K$-balanced Glauber dynamics for a distribution $\mu$. At each time step, we keep track of a configuration $\sigma_t$ and a tuple $(N_t(v))_{v\in V}$, described below. We initial $N_{0}(v) = 0$ for all $v \in V$. For each $t\ge 1$, sample a vertex $I_t$ uniformly at random and update $\sigma_{t-1}$ at $I_t$ according to the  distribution $\mu$ conditioned on $\sigma_{t-1,-I_t}$. Let $\sigma_{t,0}$ be the resulting configuration. We define $N_{t,0}(v)=N_{t-1}(v)+1$ for each $v$ adjacent to $I_t$ and define $N_{t,0}(I_t)=0$. For all other vertices $u$, we define $N_{t,0}(u) = N_{t-1}(u)$. Then for $j\ge 1$, as long as there is a vertex $v$ with $N_{t,j-1}(v)>K\Delta$, we choose such a vertex with the smallest index (according to a fixed, but otherwise arbitrary ordering of the vertices) and resample $\sigma_{t,j-1}$ at $v$ according to the distribution $\mu$ conditioned on $\sigma_{t,j-1,-v}$ to form $\sigma_{t,j}$. We then define $N_{t,j}$ by increasing $N_{t, j-1}$ at the neighbors of $v$ by $1$, setting $N_{t,j}(v)=0$, and for all other vertices $u$, setting $N_{t,j}(u) = N_{t-1,j}(u)$. At $j_t$, when there are no vertices $v$ with $N_{t,j_t}(v)>K\Delta$, we let $\sigma_{t}=\sigma_{t,j_t-1}$ and $N_{t} = N_{t,j_t-1}$. 

Observe that $(\sigma_t,N_t)_{t\ge 0}$ forms a Markov chain, initialized from $\sigma_0 \sim \nu_0$ and $N_0(v) = 0$ for all $v \in V$. We also define another Markov chain $(\xi_t,N_t)_{t\ge 0}$ where $\xi_0$ is sampled from the hardcore distribution $\mu$.

Note that there is a natural coupling of these two chains by using the same choice of $I_t$ in each step (observe that this leads to all the $N_{t,i}$'s being the same for both processes). We denote the coupled process by $(\sigma_{t,i}, \xi_{t,i}, N_{t,i})$. Since $(\sigma_t,\xi_t,N_t)$ has a Markovian dependence on $(\sigma_{t,0},\xi_{t,0},N_{t,0})$, it follows from the data processing inequality that for all $t\geq 1$,
\begin{align}
\label{eqn:dpi-balanced-gd}
\DKL{(\sigma_t,N_t)\river(\xi_t,N_t)} 
&\le \DKL{(\sigma_{t,0},N_{t,0})\river(\xi_{t,0},N_{t,0})}.
\end{align}

The main property of the balanced Glauber dynamics is that for every vertex $v\in V$ and at every step, the number of updates to neighbors of $v$ since $v$'s last update is at most $K\Delta$. In the following, for a vertex $v\in V$, let $P_{v}$ denote the transition matrix of the Markov chain which updates a configuration $\sigma$ by resampling the vertex $v$ according to the distribution $\mu$, conditioned on $\sigma_{-v}$. 
\begin{lemma}\label{lem:balanced-bounded}
There exists a constant $C = C(K) > 0$ such that the following is true. Let $G = (V,E)$ be a graph on $n$ vertices with maximum degree at most $\Delta \ge 3$. 
Let $\mu$ be the hardcore model on $G$ with fugacities $(\lambda_v)_{v \in V}$ and suppose
$\lambda_v \le \lambda_{\Delta}$. Let $\nu_0$ denote an arbitrary distribution supported on the independent sets of $G$.

Let $v_1,\dots, v_t \in V$ denote any sequence of vertices (possibly repeated) such that (i) the total number of appearances of neighbors of a vertex $v$ since the last appearance of $v$ is at most $K\Delta$ and (ii) every vertex appears at least once. Then, $\nu := \nu_0 P_{v_1}\dots P_{v_t}$ is $C$-completely bounded with respect to $\mu$. % the marginals of the distribution of the state $\sigma_{t,j}$ is $C$-completely bounded with respect to $\mu$.
\end{lemma}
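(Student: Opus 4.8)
The plan is to follow the template of \cref{lem:systematic-scan-hardcore}. First I would reduce the statement to a marginal estimate: it suffices to show $\gamma_v \le C\lambda_v$ for every $v\in V$, where $\gamma_v := \max \nu(\sigma^+)/\nu(\sigma^-)$ with the maximum over independent sets $\sigma^- \not\ni v$ and $\sigma^+ := \sigma^-\cup\{v\}$. The passage from this bound to complete $C'$-boundedness, i.e.\ $\nu\in\mathcal{V}^{c}(C',\mu)$, under all external fields in $(0,1]^V$ and all pinnings, then follows exactly as in the second half of the proof of \cref{lem:systematic-scan-hardcore}, using the lower bound $\P_{\theta \ast \mu}{v\in\sigma\mid\sigma_{\Lambda}}\ge e^{-3e^2}\lambda'_v/(1+\lambda'_v)$ from \cref{prop:hardcore}. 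So the content is in bounding $\gamma_v$.

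Fix $v$. By condition (ii), $v$ appears in the sequence, so let $s$ be the index of its \emph{last} appearance and factor $\nu=\mu'' R$ where $\mu'' := \nu_0 P_{v_1}\cdots P_{v_{s-1}}P_v$ and $R := P_{v_{s+1}}\cdots P_{v_t}$, noting $v\notin\{v_{s+1},\dots,v_t\}$. Two facts combine. (a) Since the last operator in $\mu''$ is $P_v$, for every independent set $w\not\ni v$ we have $\mu''(w\cup\{v\})/\mu''(w)=\P_{\mu}{v\in\sigma\mid\sigma_{-v}=w}/\P_{\mu}{v\notin\sigma\mid\sigma_{-v}=w}\le\lambda_v$ by Conclusion 2 of \cref{prop:hardcore} (the ratio being $0$ when $w$ occupies a neighbor of $v$). (b) The $v$-avoiding chain $R$ inflates the marginal ratio at $v$ by a factor at most $(1+\lambda_\Delta)^{K\Delta}$: for independent sets $z',w\not\ni v$ having no occupied neighbor of $v$, I claim $R(z'\cup\{v\}\to w\cup\{v\})\le (1+\lambda_\Delta)^{K\Delta}\,R(z'\to w)$. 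Every positive-probability $R$-trajectory from $z'\cup\{v\}$ to $w\cup\{v\}$ stays in configurations containing $v$ (hence with all neighbors of $v$ empty), and deleting $v$ from every state produces, injectively, an $R$-trajectory from $z'$ to $w$; along this map a step updating a non-neighbor of $v$ keeps the same transition probability (it depends only on the updated vertex and its neighbors, none of which is $v$), while a step updating some $u\in N(v)$ has original probability exactly $1$ (the neighbor is forced empty because $v$ is occupied) and new probability at least $1/(1+\lambda_u)\ge1/(1+\lambda_\Delta)$ by \cref{eq:hardcore-conditional}. By condition (i) at most $K\Delta$ steps update a neighbor of $v$, so summing over trajectories proves the claim; and $(1+\lambda_\Delta)^{K\Delta}\le(1+3e^2/\Delta)^{K\Delta}\le e^{3e^2 K}$.

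Combining (a) and (b): expanding $\nu=\mu'' R$ and using that $R$ never updates $v$ gives $\nu(\sigma^+)=\sum_{z'}\mu''(z'\cup\{v\})\,R(z'\cup\{v\}\to\sigma^+)$ and $\nu(\sigma^-)=\sum_{z'}\mu''(z')\,R(z'\to\sigma^-)$, the sums over independent $z'\not\ni v$; only $z'$ with no occupied neighbor of $v$ contribute to the first sum, and for such $z'$ facts (a) and (b) give $\mu''(z'\cup\{v\})\le\lambda_v\mu''(z')$ and $R(z'\cup\{v\}\to\sigma^+)\le e^{3e^2 K}R(z'\to\sigma^-)$. Hence $\nu(\sigma^+)\le e^{3e^2 K}\lambda_v\,\nu(\sigma^-)$, so $\gamma_v\le e^{3e^2 K}\lambda_v$, and the reduction above yields the theorem with $C=C(K)$ an explicit constant (e.g.\ $C=e^{3e^2 K}\cdot e^{3e^2}(1+e^2)$). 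I expect the main obstacle to be making the trajectory-deletion argument in (b) precise: one must couple in the right direction --- from the ``$v$ occupied'' world, in which all neighbors of $v$ are forced empty throughout, down to the ``$v$ empty'' world --- and verify that the only per-step cost is the hardcore conditional $1/(1+\lambda_u)=1-O(1/\Delta)$ paid at each of the at most $K\Delta$ updates of neighbors of $v$ permitted by condition (i).
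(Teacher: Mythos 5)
Your proof is correct and follows essentially the same approach as the paper: both reduce to the marginal bound $\gamma_v \le C(K)\lambda_v$ and establish it by a trajectory comparison (as in the proof of \cref{lem:systematic-scan-hardcore}), charging a factor $1+\lambda_u \le 1 + O(1/\Delta)$ for each update of a neighbor $u$ of $v$ that occurs after $v$'s final update, and using condition (i) to cap the number of such updates at $K\Delta$. Your factorization $\nu = \mu'' R$ and the explicit injective trajectory-deletion map are just a cleaner unpacking of the step the paper dispatches with the phrase ``by a similar argument.''
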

\begin{proof}
This is very similar to the proof of the analogous property for the systematic scan, \cref{lem:systematic-scan-hardcore}, and uses the fact that the sequence of update locations is fixed.

Let $\nu_{x}$ denote the resulting measure on independent sets of $G$, starting from the initial distribution $\nu_0 = \mathbb{1}_x$. As in the proof of \cref{lem:systematic-scan-hardcore}, it suffices to show that for any vertex $v \in V$ and for any independent set $\sigma^{-}$ which does not include $V$, 
\[\frac{\nu_{x}(\sigma^+)}{\nu_{x}(\sigma^-)} \leq \tilde{C}(K)\lambda_{v}.\]

Since every vertex is updated at least once, we see that if $\nu_{x}(\sigma^-) = 0$, then $\nu_{x}(\sigma^+) = 0$ as well. Moreover, by a similar argument as in the proof of \cref{lem:systematic-scan-hardcore}, we get that
\begin{align*}
    \frac{\nu_x(\sigma^+)}{\nu_x(\sigma^-)} \leq \lambda_{v} \prod_{i = 1}^k (1+\lambda_{u_i}),
\end{align*}
where $u_1,\dots, u_k$ denote the neighbors of site $v$ which are updated after the final update to site $v$. 

By assumption, $k \leq K\Delta$. Therefore, we have that
\begin{align*}
    \frac{\nu_x(\sigma^+)}{\nu_x(\sigma^-)} \leq \lambda_{v} \parens*{1+\frac{3e^2}{\Delta}}^{K\Delta} \leq \tilde{C}(K) \lambda_{v},
\end{align*}
as desired. \qedhere

% Let $s(0)$ be an arbitrary fixed configuration which we will fix as the initial configuration initially before the last update to site $v$, and let $s^+(t),s^-(t)$ for $t = 1$ to $T$  be two arbitrary configurations which differ only at site $v$ and such that $v$ is occupied in $s^+(t)$ and unoccupied in $s^-(t)$. Let $\sigma(0) := s(0)$ and let $\sigma(1),\ldots,\sigma(T)$ be random variables so that $\sigma(1)$ is the configuration after the first Glauber update started from $x$ (which changes the configuration at site $v$), $\sigma(2)$ is after the second Glauber update, etc. and let $\pi$ be their joint law. Then, because the Glauber dynamics form a Markov chain we have
% \[ \frac{\pi(\forall t, \sigma(t) = s^+(t))}{\pi(\forall t, \sigma(t) = s^-(t))} =  \prod_{t = 1}^T \frac{\pi(\sigma(t) = \sigma^+(t) \mid \sigma(t - 1) = s^+(t - 1))}{\pi(\sigma(t) = \sigma^-(t) \mid \sigma(t - 1) = s^-(t - 1))}. \]
% Note that if the update at time $t$ is to a site which is $v$ or a neighbor of $v$, then the corresponding term in the product is simply $1$ as the numerator equals the denominator. This leaves in the product the initial update at site $v$, and (by assumption) at most $K \Delta$ updates to the sites neighboring site $v$. From here, these terms can be upper bounded the same way as in \cref{lem:systematic-scan-hardcore} and then the final conclusion follows in the same way as well.
\end{proof}

The next lemma allows us to go from an approximate tensorization of entropy estimate for the distributions corresponding to $(\sigma_{t})_{t\geq 0}$ to an entropy contraction statement for the chain $(\sigma_t, N_t)_{t\geq 0}$. More precisely: 

\begin{lemma}\label{lem:tensorization-to-balance}
With the notation at the start of this subsection, let $\nu$ be the distribution of $\sigma_{t - 1}$ and let $f = d\nu/d\mu$ denote the corresponding density. Then, provided that the approximate tensorization of entropy estimate
\begin{equation}
\label{eqn:assumption-tensorization}
 \DKL{\nu\river\mu} \le \frac{1}{C} \sum_{v \in V} \E_{\mu}{\Ent_{v}(f)} 
\end{equation}
holds for some $C > 0$, we have 
\[
\DKL{(\sigma_{t,0},N_{t,0})\river(\xi_{t,0},N_{t,0})} \le \parens*{1-\frac{C}{n}}\DKL{(\sigma_{t-1},N_{t-1})\river(\xi_{t-1},N_{t-1})}.
\]
Consequently,
\[
\DKL{(\sigma_t,N_t)\river(\xi_t,N_t)} 
\le \parens*{1-\frac{C}{n}} \DKL{(\sigma_{t-1},N_{t-1})\river(\xi_{t-1},N_{t-1})}.
\]
\end{lemma}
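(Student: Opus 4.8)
The plan is to show that one step of the balanced Glauber dynamics (specifically the single Glauber update at the uniformly random site $I_t$ that produces $\sigma_{t,0}$ from $\sigma_{t-1}$, together with its deterministic bookkeeping of the counters $N_{t,0}$) contracts relative entropy, and then that the subsequent ``balancing'' resamplings only help by the data processing inequality. The key observation is that conditioning on the value of $N_{t-1}$ does not affect the dynamics of the configuration: the counters $N_{t,\cdot}$ evolve deterministically given the sequence of update sites, and the update sites are chosen independently of the configuration. So under the coupling that uses the same $I_t$ for both chains, the joint law of $(\sigma_{t,0},N_{t,0})$ decomposes according to the value of $N_{t-1}$, and
\[
\DKL{(\sigma_{t,0},N_{t,0})\river(\xi_{t,0},N_{t,0})} = \E_{N_{t-1}}{\DKL{\sigma_{t,0} \river \xi_{t,0} \mid N_{t-1}}} + \DKL{N_{t-1}\river N_{t-1}},
\]
where the second term vanishes (the marginal law of $N_{t-1}$ is identical for both chains, since it depends only on the shared sequence $I_1,\dots,I_{t-1}$), and similarly $\DKL{(\sigma_{t-1},N_{t-1})\river(\xi_{t-1},N_{t-1})} = \E_{N_{t-1}}{\DKL{\sigma_{t-1}\river\xi_{t-1}\mid N_{t-1}}}$. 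Hence it suffices to establish the one-step contraction for the plain single-site Glauber update applied to the conditional law $\nu$ of $\sigma_{t-1}$ (this is the same for each fixed value of $N_{t-1}$, so we may as well work with the unconditioned $\nu$ as in the statement).

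Next I would recall the standard fact that the single-site Glauber dynamics $P_{\mathrm{GD}} = \frac1n \sum_{v} P_v$ (choose a uniform site, resample it) satisfies, for $\nu = \mu f$,
\[
\DKL{\nu\river\mu} - \DKL{\nu P_{\mathrm{GD}}\river\mu} = \frac1n \sum_{v\in V}\E_{\mu}{\Ent_v(f)},
\]
which is the exact ``entropy decay = averaged local entropy'' identity for product-of-single-site resampling (see e.g.\ the discussion around approximate tensorization, \cref{def:approx-tensorization} and the references \cite{caputo2015approximate,anari2021entropic}). Combining this identity with the hypothesis \cref{eqn:assumption-tensorization}, namely $\sum_v \E_{\mu}{\Ent_v(f)} \ge C\,\DKL{\nu\river\mu}$, gives
\[
\DKL{\nu P_{\mathrm{GD}}\river\mu} \le \parens*{1 - \frac{C}{n}}\DKL{\nu\river\mu}.
\]
Since $\sigma_{t,0}$ is exactly the result of applying $P_{\mathrm{GD}}$ to $\sigma_{t-1}$ (and $\xi_{t,0}$ likewise to $\xi_t \sim \mu$, so $\xi_{t,0}\sim\mu$ as $\mu$ is stationary for $P_{\mathrm{GD}}$), the displayed bound together with the decomposition above yields the first claimed inequality
\[
\DKL{(\sigma_{t,0},N_{t,0})\river(\xi_{t,0},N_{t,0})} \le \parens*{1-\frac{C}{n}}\DKL{(\sigma_{t-1},N_{t-1})\river(\xi_{t-1},N_{t-1})}.
\]

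Finally, for the ``consequently'' part I would invoke \cref{eqn:dpi-balanced-gd}: the pair $(\sigma_t,N_t)$ is obtained from $(\sigma_{t,0},N_{t,0})$ by a (random but configuration-measurable, hence Markov) sequence of balancing resamplings applied identically to both coupled chains, so the data processing inequality gives $\DKL{(\sigma_t,N_t)\river(\xi_t,N_t)} \le \DKL{(\sigma_{t,0},N_{t,0})\river(\xi_{t,0},N_{t,0})}$, and chaining with the previous display finishes the proof. The main obstacle — really the only subtle point — is the conditioning argument in the first paragraph: one must check carefully that the counter process is a deterministic function of the (shared) update locations and therefore contributes no extra divergence and does not interfere with the single-site contraction; once that is pinned down, the rest is the exact Glauber entropy-decay identity plus data processing, both of which are standard.
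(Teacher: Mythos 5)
The proposal has a genuine gap in the reduction step, and the ``exact identity'' it invokes is false.

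Your plan reduces the target to showing $\DKL{\nu P_{\mathrm{GD}}\river\mu}\le(1-C/n)\DKL{\nu\river\mu}$, where $P_{\mathrm{GD}}=\frac1n\sum_v P_v$ is the averaged single-site kernel. But the lemma's left-hand side, $\DKL{(\sigma_{t,0},N_{t,0})\river(\xi_{t,0},N_{t,0})}$, is strictly stronger than the marginal divergence $\DKL{\sigma_{t,0}\river\xi_{t,0}}=\DKL{\nu P_{\mathrm{GD}}\river\mu}$: by the data processing inequality the former dominates the latter, because the counter $N_{t,0}$ leaks information about the chosen update site $I_t$ (it zeroes the entry at $I_t$ and increments the entries at its neighbors). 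Bounding the marginal KL therefore does not bound the quantity the lemma asserts. Your decomposition step also conditions on $N_{t-1}$, which is not the second coordinate of the tuple in the KL; the chain rule for $\DKL{(\sigma_{t,0},N_{t,0})\river(\xi_{t,0},N_{t,0})}$ conditions on $N_{t,0}$, and the paper then upgrades by conditioning on $(N_{t-1},N_{t,0})$, which determines $I_t$. This yields the \emph{averaged conditional} divergence $\frac1n\sum_v\DKL{\nu P_v\river\mu}$, not the divergence of the average $\DKL{\nu P_{\mathrm{GD}}\river\mu}$; these differ and only the former is what one needs to control here.

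Relatedly, the identity you quote, $\DKL{\nu\river\mu}-\DKL{\nu P_{\mathrm{GD}}\river\mu}=\frac1n\sum_v\E_{\mu}{\Ent_v(f)}$, is not a standard fact and is in general false. What is true, and what the paper's computation gives, is the site-wise identity $\DKL{\nu P_v\river\mu}=\DKL{\nu\river\mu}-\E_{\mu}{\Ent_v(f)}$ for each $v$, hence $\frac1n\sum_v\DKL{\nu P_v\river\mu}=\DKL{\nu\river\mu}-\frac1n\sum_v\E_{\mu}{\Ent_v(f)}$. Since KL is convex, $\DKL{\nu P_{\mathrm{GD}}\river\mu}=\DKL{\frac1n\sum_v\nu P_v\river\mu}\le\frac1n\sum_v\DKL{\nu P_v\river\mu}$, typically strictly (e.g., two sites, $\mu$ uniform, $\nu$ a point mass gives $\frac12\log 2$ versus $\log 2$). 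So your ``exact identity'' simultaneously overestimates the decay of the marginal KL and, combined with the erroneous reduction, happens to produce the right-looking bound for the wrong quantity. The final data-processing step (passing from $(\sigma_{t,0},N_{t,0})$ to $(\sigma_t,N_t)$) is correct and matches the paper. To repair the argument you should, as the paper does, condition on $I_t$ via the counters, work with $\frac1n\sum_v\DKL{\nu P_v\river\mu}$, and apply the tensorization hypothesis to that exact expression.
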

\begin{proof}
Note that $N_{t,0}$ is determined from $N_{t-1}$ once we know $I_t$, and moreover, $I_t$ is chosen uniformly at random from $V$. Furthermore, given $N_{t,0}$ and $N_{t-1}$, we can recover $I_t$ uniquely. Thus, by the chain rule for KL divergence, we have that
\[
\DKL{(\sigma_{t,0},N_{t,0})\river(\xi_{t,0},N_{t,0})} = \frac{1}{n}\sum_{v\in V} \DKL{(\sigma_{t,0}\river\xi_{t,0})\given I_t=v}.
\]
Let $\nu$ be the distribution of $\sigma_{t-1}$ and $\mu$ be the distribution of $\xi_{t-1}$. Let $P_v$ denote be the transition matrix of the Markov chain which resamples $\sigma$ at $v$ according to the hardcore distribution $\mu$, and let $D_v$ denote the projection of a configuration on $V\setminus \set{v}$. Then 
\[
\frac{1}{n}\sum_{v\in V} \DKL{(\sigma_{t,0}\river\xi_{t,0})\given I_t=v} 
= \frac{1}{n}\sum_{v\in V} \DKL{\nu P_v\river\mu P_v} \le \frac{1}{n}\sum_{v\in V} \DKL{\nu D_v\river\mu D_v}.
\]
Next, we have that 
\begin{align*}
\frac{1}{n}\sum_{v\in V} \DKL{\nu D_v\river\mu D_v} 
&= \frac{1}{n} \sum_{v\in V}\sum_{\theta \in 2^{V \setminus \set{v}}} \nu(\theta)\log \frac{\nu(\theta)}{\mu(\theta)}.
\end{align*}
On the other hand, for $f=d\nu/d\mu$,
\begin{align*}
\frac{1}{n}\sum_{v\in V} \E_{\mu}{\Ent_{v}(f)}
&= \frac{1}{n}\sum_{v\in V} \sum_{\theta \in 2^{V \setminus \set{v}}}  \parens*{\nu(\theta_+) \log \frac{\nu(\theta_+)}{\mu(\theta_+)} + \nu(\theta_-)\log \frac{\nu(\theta_-)}{\mu(\theta_-)} - \nu(\theta)\log\frac{\nu(\theta)}{\mu(\theta)} }\\
&=\DKL{\nu\river\mu} - \frac{1}{n}\sum_{v\in V} \DKL{\nu D_v\river\mu D_v},
\end{align*}
where $\theta_+$ denotes $\theta \cup \set{v}$ and $\theta_-$ denotes $\theta$, viewed as an element of $2^V$. 
Thus, under the approximate entropy tensorization assumption \cref{eqn:assumption-tensorization}, we have that
\begin{align*}
\frac{1}{n}\sum_{v\in V} \DKL{(\sigma_{t,0}\river\xi_{t,0})\given I_t=v} &\le \DKL{\nu\river\mu} - \frac{1}{n}\sum_{v\in V}\E_{\mu}{\Ent_{v}(f)} \le \parens*{1-\frac{C}{n}}\DKL{\nu\river\mu}.
\end{align*}
Therefore,
\begin{align*}
\DKL{(\sigma_{t,0},N_{t,0})\river(\xi_{t,0},N_{t,0})} &\le \parens*{1-\frac{C}{n}}\DKL{\nu\river\mu} \\
&\le \parens*{1-\frac{C}{n}} \DKL{(\sigma_{t-1},N_{t-1})\river(\xi_{t-1},N_{t-1})}.
\end{align*}
The ``consequently'' part follows by combining the above estimate with \cref{eqn:dpi-balanced-gd}.
\end{proof}

% Combining the previous lemma with , we see that, under assumption \cref{eqn:assumption-tensorization}, 

Next, for $T\geq 1$, let $J_T :=\sum_{t\le T}j_t$, where we have used the notation in the first paragraph of this subsection. Thus, the number of updates needed to obtain $\sigma_T$, starting from $\sigma_0$, is $J_T + T$. Note that $J_T$ is a random variable, depending on the choice of the update sequence $\set{I_t}_{1\leq t \leq T}$. 
\begin{lemma}
\label{lem:number-of-steps-balanced}
Deterministically, $J_T\le T/(K-1)$.
\end{lemma}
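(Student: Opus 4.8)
The plan is a standard amortized (potential-function) argument. Define the potential $\Phi := \sum_{v \in V} N(v)$, the sum of all the counters in the current state of the run. Initially $\Phi = 0$ since $N_0(v) = 0$ for every $v$, and $\Phi$ is always nonnegative. I would track how $\Phi$ evolves across the entire sequence of single-vertex resamplings — the $T$ ``main'' updates at the sites $I_1,\dots,I_T$ together with all the intervening ``balancing'' updates — that turns $\sigma_0$ into $\sigma_T$.

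First I would record the effect of each kind of update on $\Phi$. A main update at site $I_t$ sets $N(I_t)$ to $0$ and increases $N(v)$ by $1$ for every neighbor $v$ of $I_t$, so it changes $\Phi$ by $\deg(I_t) - N(I_t) \le \Delta$. A balancing update is applied at a vertex $v$ with $N(v) > K\Delta$; it likewise sets $N(v)$ to $0$ and increases each neighbor's counter by $1$, so it changes $\Phi$ by $\deg(v) - N(v) < \Delta - K\Delta = -(K-1)\Delta$. Hence every main update raises $\Phi$ by at most $\Delta$, and every balancing update lowers it by at least $(K-1)\Delta$, which is strictly positive since $K > 1$.

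Summing these changes over the whole run, and writing $\Phi_{\mathrm{end}} \ge 0$ for the value of the potential after $\sigma_T$ is produced, we get
\[ 0 \le \Phi_{\mathrm{end}} = \Phi_0 + \sum(\text{changes}) \le T\Delta - J_T(K-1)\Delta, \]
since there are exactly $T$ main updates and, by definition, $J_T$ balancing updates over the run. Rearranging yields $J_T \le T/(K-1)$. As a byproduct, applying the same inequality inside a single time step $t$, where no main updates occur, shows that each balancing loop terminates, so $J_T$ is finite and the statement is well posed.

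I do not anticipate a genuine obstacle: the only points needing care are checking that the threshold $K\Delta$ together with $K>1$ makes the per-balancing-update decrease genuinely positive, and the bookkeeping that the counters modified by the balancing steps are precisely those entering $\Phi$, so that the telescoping over main and balancing updates is legitimate. Both are routine.
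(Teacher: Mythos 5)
Your potential-function argument with $\Phi = \sum_{v} N(v)$, increase at most $\Delta$ per main update, decrease at least $(K-1)\Delta$ per balancing update, and nonnegativity of $\Phi$, is exactly the proof the paper gives. No meaningful difference.
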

\begin{proof}
We use a potential function argument. For $t, j \geq 0$, let $\Phi_{t,j} = \sum_{v\in V}N_{t,j}(v)$. By definition, $\Phi_{t,j} \geq 0$ for all $t,j \geq 0$ and $\Phi_{0,0} = 0$. 

In each sampling step of the form $(t,0)$, where the configuration is resampled at a uniformly chosen vertex $v \in V$, the potential function increases by at most $\Delta$. 

In each sampling step of the form $(t,j)$, $j \geq 1$, where the configuration is resampled due to the presence of a vertex $v$ satisfying $N_{t,j-1} > K\Delta$, the potential function decreases by at least $K\Delta - \Delta = (K-1)\Delta$. 

Note that, up to and including the sampling step $(T,j_T)$, there are $T$ steps of the first type and $J_T$ steps of the second type. Therefore, by the non-negativity of $\Phi_{T, j_T}$ and the above step-wise bounds, we must have
\[0 \leq \Delta \cdot T - (K-1)\Delta \cdot J_T,\]
as desired. 
% Then in each sampling step $(t,j),j\ge 1$ due to a vertex $v$ with $N_{t,j-1}(v)>K\Delta$, we have that $\Phi$ decreases by at least $K\Delta-\Delta$, and in each sampling step at a uniformly chosen vertex $(t,0)$, $\Phi$ increases by at most $\Delta$. Hence, we need $\Delta t- (K-1)\Delta J\ge 0$, or $J\le t/(K-1)$.
\end{proof}

Combining all of the above, we have proved the following result, which recovers \cref{thm:main hardcore}.

\begin{theorem}
\label{thm:hardcore-detailed}
Let $\nu_{-1}$ be an arbitrary initial distribution supported on the independent sets of $G = (V,E)$ with $\card{V} = n$ and define $\nu_0 = \nu_{-1} P^{\operatorname{SS}}$ to be the distribution obtained by running a single round of the systematic scan. 
Suppose $\mu$ is the $\delta$-unique hardcore distribution on $G$ and let $(\sigma_t, N_t)$ be defined by the balanced Glauber dynamics, as above, starting from the initial distribution $\nu_0 \times (0)_{v\in V}$. Let $\nu_t$ be the marginal law of $\sigma_t$. %so that $\nu_0$ is the starting distribution for the balanced Glauber dynamics. 
Then, for any $\epsilon > 0$, 
\[ \DTV{\nu_{T}, \mu} \le \epsilon \]
for all $T = \Omega_{\delta, K}(n \log[n \DKL{\nu_{-1} \river \mu}/\epsilon])$. 
Moreover, the total number of vertices updated to arrive at $\sigma_T$ %(including both random updates and balancing steps) 
is deterministically $O_K(T)$.
\end{theorem}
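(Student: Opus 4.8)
The plan is to show that the balanced Glauber dynamics, started from $\nu_0 = \nu_{-1}P^{\operatorname{SS}}$, contracts relative entropy towards $\mu$ at rate $\Omega_{\delta,K}(1/n)$ per step, and then convert this to total variation via Pinsker's inequality. Concretely, I would establish three things and chain them: \textbf{(I)} an invariant that the law $\nu_t$ of $\sigma_t$ lies in $\mathcal{V}^{c}(C',\mu)$ for all $t\ge 0$, where $C' = C'(K)$; \textbf{(II)} via \cref{thm:comparison}, that membership in $\mathcal{V}^{c}(C',\mu)$ upgrades to approximate tensorization of entropy for $f_t := d\nu_t/d\mu$ with an $n$-independent constant; and \textbf{(III)} via \cref{lem:tensorization-to-balance}, that this tensorization yields one-step entropy contraction for the augmented chain $(\sigma_t,N_t)$ relative to $(\xi_t,N_t)$, where $\xi_0\sim\mu$ and both chains are driven by the same $(I_t)$ so that the $N_t$'s agree.

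For \textbf{(I)}, I would first condition on the update randomness $(I_1,\dots,I_t)$. Since the counters depend only on $(I_1,\dots,I_t)$ and not on the configuration, the entire sequence of update sites producing $\sigma_t$ from $\nu_{-1}$ — the $n$ sites of the scan $P^{\operatorname{SS}}$, then the regular updates $I_1,\dots,I_t$ interleaved with the balancing updates in their prescribed order — is a \emph{fixed} vertex sequence. It touches every vertex at least once (because of the scan), and for every vertex $v$ the number of appearances of neighbors of $v$ since the last appearance of $v$ is at most $(K+1)\Delta$: if $v$ was last touched during the balanced phase this count is the current value of $N(v)$, kept at most $K\Delta$ by the balancing rule; if $v$ was last touched during the scan it is at most $\Delta$ (neighbors following $v$ in the scan) plus $N_t(v)\le K\Delta$. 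Hence \cref{lem:balanced-bounded}, instantiated with $K+1$ in place of $K$, shows the conditional law of $\sigma_t$ given $(I_1,\dots,I_t)$ lies in $\mathcal{V}^{c}(C',\mu)$ with $C'=C'(K)$. Finally I would observe that $\mathcal{V}^{c}(C',\mu)$ is convex: for each triple $(\lambda\in(0,1]^n,\ \sigma_\Lambda,\ i)$ membership amounts to $(\lambda\ast\nu)^{\sigma_\Lambda}(i)\le \tfrac{C'm}{1+(C'-1)m}$ with $m:=(\lambda\ast\mu)^{\sigma_\Lambda}(i)$, which after clearing the normalizing denominator of the left-hand side is a homogeneous linear inequality in $\nu$; so averaging the conditional laws over $(I_1,\dots,I_t)$ keeps us inside $\mathcal{V}^{c}(C',\mu)$, giving $\nu_t\in\mathcal{V}^{c}(C',\mu)$ for all $t\ge0$.

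For \textbf{(II)}–\textbf{(III)}, I would fix $\theta$ to be a small enough absolute constant that $\pi:=\theta\ast\mu$ — and hence every conditional $\pi^{{\bf 1}_R}$, again a hardcore model of degree $\le\Delta$ with all fugacities $\le 1/2\Delta$ — satisfies the modified log-Sobolev inequality with constant $\Omega(1/n)$ by \cref{prop:hardcore-easy}; this is exactly \cref{eqn:mlsi-easy} with $\kappa=\Theta(1/n)$. Combining this with the complete $(O(1/\delta),\delta/2)$-spectral independence of $\mu$ from \cref{prop:hardcore} and the invariant $\nu_{t-1}\in\mathcal{V}^{c}(C',\mu)$, \cref{thm:comparison} gives
\[ \Ent_{\mu}(f_{t-1}) \le C_1 \sum_{v\in V}\E_{\mu}{\Ent_v(f_{t-1})}, \qquad C_1 = C_1(\delta,K) = O_\delta(1)\cdot\Omega(\theta)^{-O(\eta')}, \]
with $C_1$ independent of $n$. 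Feeding this into \cref{lem:tensorization-to-balance} (whose hypothesis constant is then $1/C_1$) yields, using \cref{eqn:dpi-balanced-gd}, that for all $t\ge1$
\[ \DKL{(\sigma_t,N_t)\river(\xi_t,N_t)} \le \parens*{1-\frac{1}{C_1 n}}\DKL{(\sigma_{t-1},N_{t-1})\river(\xi_{t-1},N_{t-1})}. \]
Iterating from $t=1$ to $T$, noting $\DKL{(\sigma_0,N_0)\river(\xi_0,N_0)}=\DKL{\nu_0\river\mu}\le\DKL{\nu_{-1}\river\mu}$ (data processing for $P^{\operatorname{SS}}$, which fixes $\mu$), then applying data processing to forget $N_T$ (here $\xi_T\sim\mu$ since every update preserves $\mu$) and Pinsker, I get
\[ \DTV{\nu_T,\mu}^2 \le \tfrac12\DKL{\nu_T\river\mu} \le \tfrac12 e^{-T/(C_1 n)}\DKL{\nu_{-1}\river\mu}, \]
which is $\le\epsilon^2$ once $T\ge C_1 n\log(\DKL{\nu_{-1}\river\mu}/2\epsilon^2)$; this is subsumed by $T=\Omega_{\delta,K}(n\log[n\DKL{\nu_{-1}\river\mu}/\epsilon])$ for a suitable implied constant. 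The ``moreover'' then follows: producing $\sigma_T$ from $\nu_{-1}$ uses $n$ scan updates, $T$ regular updates, and $J_T\le T/(K-1)$ balancing updates by \cref{lem:number-of-steps-balanced}, a \emph{deterministic} total of $n+T+J_T=O_K(T)$ (using $n\le T$).

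The main obstacle is \textbf{(I)}: a priori the evolving law could drift out of the $C'$-completely-bounded region where \cref{thm:comparison} and \cref{lem:tensorization-to-balance} apply, and it takes all three of (a) the systematic-scan warm start (so hypothesis (ii) of \cref{lem:balanced-bounded} holds already at time $0$), (b) the balancing updates (which enforce hypothesis (i)), and (c) convexity of $\mathcal{V}^{c}(C',\mu)$ (so the bound survives averaging over the dynamics' internal randomness) to keep the invariant. Everything after that is bookkeeping plus invoking the machinery built for exactly this purpose.
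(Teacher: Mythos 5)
Your proposal is correct and follows the same overall plan as the paper: establish the invariant that $\nu_t$ is $C(K)$-completely bounded, feed this into \cref{thm:comparison} (via \cref{prop:hardcore} and \cref{prop:hardcore-easy}) to get $n$-independent approximate tensorization, iterate \cref{lem:tensorization-to-balance} with \cref{eqn:dpi-balanced-gd}, and close with Pinsker and \cref{lem:number-of-steps-balanced}. Your step (II)--(III) bookkeeping is precisely what the paper does.

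Where you add genuine content is step (I). The paper's proof simply cites \cref{lem:balanced-bounded} to conclude that the marginal law $\nu_t$ is $C(K)$-completely bounded, but that lemma is stated and proved for a \emph{fixed} deterministic update sequence, i.e.\ it controls the conditional law of $\sigma_t$ given $(I_1,\dots,I_t)$. Passing to the marginal requires an averaging step that the paper does not spell out. You supply it by observing that $\mathcal{V}^{c}(C',\mu)$ is convex, since each defining constraint $(\lambda\ast\nu)^{\sigma_\Lambda}(i)\le C'm/(1+(C'-1)m)$, with $m=(\lambda\ast\mu)^{\sigma_\Lambda}(i)$ fixed, clears to a linear inequality $A(\nu)\le c\,B(\nu)$ with $A,B$ linear in $\nu$. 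That is correct. An alternative route, more in the spirit of what the paper already does inside the proof of \cref{lem:balanced-bounded} when averaging over the initial configuration, is to observe that the stronger ratio condition $\nu(\sigma^+)\le \tilde C(K)\lambda_v\,\nu(\sigma^-)$ is itself manifestly preserved under convex combinations and then reuse the second half of the argument in \cref{lem:systematic-scan-hardcore} to deduce $C'$-complete boundedness of the mixture; both routes work. You are also slightly more careful in verifying hypothesis (i) of \cref{lem:balanced-bounded} for the concatenated scan-plus-balanced sequence (obtaining $(K+1)\Delta$ rather than $K\Delta$ and hence invoking the lemma with $K+1$), which is harmless since all constants are $O_K(1)$. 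The rest of your proof and the derivation of the mixing time and the deterministic update count are correct and match the paper.
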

\begin{proof}
By \cref{lem:balanced-bounded}, the distribution $\nu_{t}$ is $C(K)$-completely bounded with respect to $\mu$ for all $t\geq 0$. 

Hence, letting $f_{t}$ denote the density $d\nu_{t}/d\mu$, it follows from \cref{thm:comparison} (by using complete spectral independence (\cref{prop:hardcore}) and the MLSI for Glauber dynamics in the ``easy regime'' from \cref{prop:hardcore-easy}) that \cref{eqn:assumption-tensorization} holds with $C = \Omega_{\delta, K}(1)$.

Therefore, with $\sigma_t,\xi_t,N_t$ defined by the balanced Glauber dynamics, as above, it follows by iterating \cref{lem:tensorization-to-balance} that for all $t\geq 0$, 
\[
\DKL{\nu_{t} \river \mu} \leq \DKL{(\sigma_t,N_t)\river(\xi_t,N_t)} 
\le \parens*{1-\frac{C}{n}}^{t} \DKL{\nu_{-1} \river \mu}, 
\]
from which the first conclusion follows. The `moreover' part follows immediately from \cref{lem:number-of-steps-balanced}. \qedhere 

% First note that by \cref{lem:systematic-scan-hardcore} the distribution $\nu_0$ will be $C$-completely bounded with respect to $\mu$ and by Lemma~\ref{lem:balanced-bounded} this is preserved for all $\nu_t$.

% Once we have verified the entropy tensorization assumption from Lemma~\ref{lem:tensorization-to-balance}, the conclusion then follows by combining the above lemmas. The needed entropy tensorization result for bounded distributions follows from 
\end{proof}

\section{Sampling in the Ising model}
\label{sec:ising}
In this section, we show how to apply our techniques to the Ising model. 

Recall that the \emph{Ising model} on a graph $G = (V,E)$ with \emph{edge activity} $\beta$ and \emph{external field} $\lambda$ is the probability distribution $\mu$ over $\set{\pm 1}^V$ given by
\[ \mu(\sigma) \propto \lambda^{\card{\set{ i\given \sigma_i = +1 }}} \prod_{\set{i,j} \in E} \beta^{\1(\sigma_i = \sigma_j)}.  \]
Note that if $\lambda > 1$, the model is equivalent to one with $\lambda \le 1$ by swapping the roles of $+$ and $-$ in the spin system. For this reason, we will assume without loss of generality that $\lambda \le 1$ in what follows.

From the definition of the Ising model, the conditional law of a spin given all others is a function only of its neighbors and given explicitly by
\begin{equation}\label{eqn:conditional-ising}
\mu(\sigma_v = + \mid \sigma_{\sim v}) = \frac{\lambda \beta^{\Delta_v - s_v}}{\beta^{s_v} + \lambda \beta^{\Delta_v - s_v}} 
\end{equation}
where $s_v = s_v(\sigma_{\sim v}) = \card{\set{u \given v \sim u, \sigma_u = -}}$ counts the number of $-1$-neighbors of site $v$.

\subsection{Mixing and MLSI away from the uniqueness threshold}
In order to apply our main comparison theorem (\cref{thm:comparison}), we will need to establish the MLSI for the Ising model in an ``easy regime'', away from the uniqueness threshold. It was shown in \cite{chen2021rapid} that in the ``easy regime'' there exists a contractive coupling of the Glauber dynamics; this implies rapid mixing but not a good enough MLSI.
In this section, we show how to get the MLSI for the Ising model in the appropriate ``easy regime'' by appealing to a result of \textcite{marton2015logarithmic}.

Given a probability distribution $\mu$ on $\set{ \pm 1}^V$, define its \emph{Dobrushin matrix} $R\in \R^{V\times V}$ by 
\[ R_{ij} := \max_{\sigma_{\sim i,j}} \DTV{\mu(\sigma_i = \cdot \mid \sigma_{j} = +, \sigma_{\sim i,j}), \mu(\sigma_i = \cdot \mid \sigma_{j} = -, \sigma_{\sim i,j})}. \]
Recall that an upper bound on the sum of the rows of less than $1$ on the Dobrushin matrix implies the existence of a contractive coupling for the Glauber dynamics with the same constant \cite{levin2017markov}, and hence rapid mixing. The following lemma directly bounds the entries of the Dobrushin matrix in the Ising model.
\begin{lemma}\label{lem:ising-dobrushin-1}
For the Ising model with edge activity $\beta$ and external field $\lambda \le 1$, if $i$ and $j$ are neighbors and $\Delta_i$ is the degree of node $i$, then
\[ R_{ij} \le \lambda \abs{\beta^2 - 1} \max \set{\beta^{\Delta_i-2}, \beta^{-\Delta_i}} \]
and otherwise $R_{ij} = 0$.
\end{lemma}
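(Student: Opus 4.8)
The plan is to reduce everything to a one-variable computation. First I would dispose of the non-neighbor case: by the Markov Random Field property of the Ising model, explicitly visible in \cref{eqn:conditional-ising}, the conditional law of $\sigma_i$ given all other spins depends only on the spins at $N(i)$; if $j \notin N(i)$ then the two distributions $\mu(\sigma_i = \cdot \mid \sigma_j = +, \sigma_{\sim i,j})$ and $\mu(\sigma_i = \cdot \mid \sigma_j = -, \sigma_{\sim i,j})$ coincide, so $R_{ij} = 0$. For the neighbor case, fix $\sigma_{\sim i,j}$ and let $s$ denote the number of $-1$-spins among the neighbors of $i$ other than $j$; note $s \in \{0, 1, \dots, \Delta_i - 1\}$. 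Then $s_i = s$ when $\sigma_j = +$ and $s_i = s+1$ when $\sigma_j = -$, so by \cref{eqn:conditional-ising} the two conditional probabilities that $\sigma_i = +$ are, after clearing a factor of $\beta$ in the second,
\[
p_+ = \frac{a}{a+b}, \qquad p_- = \frac{a}{a + \beta^2 b}, \qquad \text{where } a := \lambda\beta^{\Delta_i - s}, \ b := \beta^s .
\]
Since the total variation distance between two distributions on $\{\pm 1\}$ is just the absolute difference of the mass they assign to $+$, we have $R_{ij} = \max_s \abs{p_+ - p_-}$.

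Next I would carry out the elementary algebra
\[
p_+ - p_- = a\parens*{\frac{1}{a+b} - \frac{1}{a+\beta^2 b}} = \frac{a\,b\,(\beta^2 - 1)}{(a+b)(a+\beta^2 b)},
\]
and bound the denominator crudely from below by $(a+b)(a+\beta^2 b) \ge b \cdot \beta^2 b = \beta^2 b^2$, which uses only $a \ge 0$. This gives
\[
\abs{p_+ - p_-} \le \frac{ab\,\abs{\beta^2 - 1}}{\beta^2 b^2} = \abs{\beta^2 - 1}\,\frac{a}{\beta^2 b} = \lambda\,\abs{\beta^2 - 1}\,\beta^{\Delta_i - 2s - 2}.
\]
Finally, optimizing over the admissible range $s \in \{0, \dots, \Delta_i - 1\}$: the exponent $\Delta_i - 2s - 2$ ranges over $\{-\Delta_i, \dots, \Delta_i - 2\}$, and since $t \mapsto \beta^t$ is monotone, $\beta^{\Delta_i - 2s - 2}$ is maximized at one of the two endpoints, i.e. $\beta^{\Delta_i - 2s - 2} \le \max\{\beta^{\Delta_i - 2}, \beta^{-\Delta_i}\}$. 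Combining, $R_{ij} \le \lambda\,\abs{\beta^2 - 1}\max\{\beta^{\Delta_i - 2}, \beta^{-\Delta_i}\}$, as claimed (and when $\beta = 1$ both sides are $0$).

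I do not expect any genuine obstacle here; the only mild subtlety is choosing the right crude lower bound on the denominator so that exactly the claimed form $\max\{\beta^{\Delta_i-2},\beta^{-\Delta_i}\}$ emerges. Other natural bounds, e.g. $(a+b)(a+\beta^2 b) \ge a^2$, would produce a factor $1/\lambda$ instead of $\lambda$ and hence a weaker estimate, so the bound $\beta^2 b^2$ is the one tailored to the statement; everything else is bookkeeping.
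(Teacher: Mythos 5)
Your proof is correct and follows essentially the same route as the paper's: you identify the two conditional probabilities, subtract, factor out $\beta^2 - 1$, and bound the product of denominators below. The only cosmetic difference is the normalization — the paper divides through by $\beta^{s_v}$ so the conditional probability takes the form $u/(1+u)$ and then uses $(1+u)(1+v)\ge 1$, whereas you keep the unnormalized form $a/(a+b)$ and use $(a+b)(a+\beta^2 b)\ge \beta^2 b^2$; multiplying out, these are literally the same inequality.
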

\begin{proof}
The case where nodes $i$ and $j$ are not neighbors follows immediately from the Markov property (or equivalently, from the conditional law \eqref{eqn:conditional-ising}). Now, we consider the case where $i$ and $j$ are neighbors and let $\Delta_i$ denote the degree of node $i$. From \eqref{eqn:conditional-ising} we see that 
\begin{align*} 
R_{ij} 
= \max_{0 \le s < \Delta_i} \abs*{\frac{\lambda \beta^{\Delta_i - 2s}}{1 + \lambda \beta^{\Delta_i - 2s}} - \frac{\lambda \beta^{\Delta_i - 2s - 2}}{1 + \lambda \beta^{\Delta_i - 2s - 2}}}
&=  \max_{0 \le s < \Delta_i} \lambda \beta^{\Delta_i - 2s - 2}\abs*{\frac{\beta^2 - 1}{(1 + \lambda \beta^{\Delta_i - 2s})(1 + \lambda \beta^{\Delta_i - 2s - 2})}} \\
&\le \lambda\abs{\beta^2 - 1} \max_{0 \le s < \Delta_i} \beta^{\Delta_i - 2s - 2} \\
&= \lambda \abs{\beta^2 - 1} \max \set{\beta^{\Delta_i-2}, \beta^{-\Delta_i}}
\end{align*}
as desired.
\end{proof}
Based on the above, we get the following general result for Ising models which we will use to cover the ``easy regime'' in our results later. 
\begin{lemma}\label{lem:ising-dobrushin}
Let $\mu$ be the Ising model with edge activity $\beta$ and external field $\lambda \le 1$ on a graph with $n$ vertices and maximum degree $\Delta$. Suppose that $\lambda \Delta \abs{\beta^2 - 1} \max \set{ \beta^{\Delta - 2}, \beta^{-\Delta} } \le 1 - \delta$ for some $\delta > 0$. Then:
\begin{enumerate}
    \item The Ising model satisfies Dobrushin's uniqueness condition with gap $\delta$, and in particular there exists a $(1-\delta/n)$-contractive coupling of the Glauber dynamics in the Hamming metric. Explicitly, if $P$ is the Markov operator corresponding to the Glauber dynamics, then for any $x,y \in \set{\pm 1}^n$ we have
    \[ W^d_1(P(x,\cdot),P(y,\cdot)) \le (1 - \delta/n) d(x,y) \]
    where $d(x,y)$ is the Hamming metric on the hypercube and $W_1$ is the Wasserstein-1 distance with respect to this metric. 
    \item The Ising model $\mu$ satisfies the $W_1$ transport-entropy inequality with constant $C = (2n/\delta)/(2 - \delta/n)$, i.e. $W_1(\nu,\mu)^2 \le C \DKL{\nu \river \mu}$ for any probability measure $\nu$ absolutely continuous with respect to $\mu$.
    \item The measure $\mu$ satisfies approximate tensorization of entropy with constant $C = 1/(\alpha \delta^2)$ where $\alpha :=  \min_i \min_{\sigma_{\sim i}} \set{ \mu(\sigma_i = + \mid \sigma_{\sim i}), \mu(\sigma_i = - \mid \sigma_{\sim i})) }$.
    %The modified log-Sobolev constant $\rho_0(P)$ satisfies $\rho_0(P) \ge ?$
\end{enumerate}
\end{lemma}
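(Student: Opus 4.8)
The plan is to treat the three items in turn; all of them rest on the single elementary observation that, under the stated hypothesis, the Dobrushin matrix $R$ of \cref{lem:ising-dobrushin-1} has every row sum \emph{and} every column sum at most $1-\delta$. Indeed, $R_{ij}=0$ unless $i\sim j$, and $R_{ij}\le \lambda\abs{\beta^2-1}\max\set*{\beta^{\Delta_i-2},\beta^{-\Delta_i}}$. Since $\Delta_i\le\Delta$ and, for $\Delta\ge 3$, one has $\max\set*{\beta^{\Delta_i-2},\beta^{-\Delta_i}}\le\max\set*{\beta^{\Delta-2},\beta^{-\Delta}}$ whenever $0\le\Delta_i\le\Delta$ (checking $\beta\ge1$ and $\beta\le1$ separately), the $i$-th row sum is at most $\Delta_i\cdot\lambda\abs{\beta^2-1}\max\set*{\beta^{\Delta_i-2},\beta^{-\Delta_i}}\le \lambda\Delta\abs{\beta^2-1}\max\set*{\beta^{\Delta-2},\beta^{-\Delta}}\le 1-\delta$, and the same bound holds for each column sum. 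In particular the operator norm of $R$ (and the $\ell_\infty\!\to\!\ell_\infty$ and $\ell_1\!\to\!\ell_1$ norms) is at most $1-\delta$, which is exactly Dobrushin's uniqueness condition with gap $\delta$.

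For item~1, I would run the standard path-coupling argument for the single-site Glauber dynamics: couple two copies by using the same uniformly chosen update site and the optimal total-variation coupling of the resampled spin. For configurations $x,y$ that differ in exactly one coordinate $u$, the expected change in Hamming distance after one step is at most $\tfrac1n\bigl(-1+\sum_{v}R_{vu}\bigr)\le -\delta/n$, by the column-sum bound above. Path coupling then upgrades this to $W^d_1(P(x,\cdot),P(y,\cdot))\le(1-\delta/n)\,d(x,y)$ for all $x,y$ (see \cite{levin2017markov,hayes2006simple}).

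For item~2, I would deduce the $W_1$ transport--entropy inequality from the contractive coupling of item~1 by Marton's coupling method. Writing $\nu_t:=\nu P^t$ for the Glauber dynamics started from $\nu$, the contraction gives $W_1(\nu_t,\mu)=W_1(\nu_t,\mu P^t)\le(1-\delta/n)^t W_1(\nu,\mu)\to 0$; on the other hand each step decreases relative entropy, and a one-step transport estimate obtained from Pinsker's inequality applied to the single-site conditionals $\mu(\sigma_i=\cdot\mid\sigma_{\sim i})$ (for which $W_1$ coincides with total variation, so the transport--entropy inequality holds with constant $\tfrac12$) bounds $W_1(\nu_t,\nu_{t+1})^2$ in terms of the entropy drop $\DKL{\nu_t\river\mu}-\DKL{\nu_{t+1}\river\mu}$. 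Summing the telescoping entropy differences against the geometric series $\sum_{t\ge 0}(1-\delta/n)^t=n/\delta$, together with the Cauchy--Schwarz step that produces the $(2-\delta/n)$ factor, yields $W_1(\nu,\mu)^2\le \tfrac{2n/\delta}{\,2-\delta/n\,}\DKL{\nu\river\mu}$. Alternatively one may quote the Bobkov--G\"otze equivalence \cite{bobkov1999exponential} between this inequality and sub-Gaussian concentration of Hamming-Lipschitz functions, the latter following from the contractive coupling by a martingale (Azuma-type) argument along the chain (cf.\ \cite{marton2004measure}); either route reduces item~2 to constant bookkeeping. For item~3, the approximate tensorization of entropy is a direct consequence of the main result of \textcite{marton2015logarithmic}: the row/column-sum bound above shows that the relevant Dobrushin interdependence matrix has operator norm at most $1-\delta$, and $\alpha=\min_i\min_{\sigma_{\sim i}}\set*{\mu(\sigma_i=+\mid\sigma_{\sim i}),\mu(\sigma_i=-\mid\sigma_{\sim i})}$ is precisely the uniform lower bound on one-site conditionals that enters the constant in Marton's logarithmic Sobolev inequality, which is of order $1/\!\left(\alpha(1-\norm R)^2\right)=1/(\alpha\delta^2)$; rewriting Marton's inequality in the normalization of \cref{def:approx-tensorization} produces $C=1/(\alpha\delta^2)$.

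I expect the main obstacle to be item~2: selecting the route from the contractive coupling to the transport--entropy inequality so that the constant comes out exactly as $\tfrac{2n/\delta}{2-\delta/n}$, and in particular so that it depends only on the \emph{per-step} contraction rate $\delta/n$ (yielding $C=\Theta(n/\delta)$) rather than on a static Dobrushin estimate, which would lose an extra factor of $1/\delta$. Item~3 is essentially a citation but requires carefully matching Marton's conventions to \cref{def:approx-tensorization}, and item~1 is entirely standard once the row-sum bound is in hand.
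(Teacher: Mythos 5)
Your overall structure is correct and the conclusions all follow, but you diverge from the paper in items 2 and 3, in small but noteworthy ways.

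For item 1 you and the paper both run the standard path-coupling argument from the row-sum (or column-sum) bound, citing \cite{levin2017markov}; this is the same route.

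For item 2 the paper simply cites the main result of \cite{eldan2017transport} (see also \cite{djellout2004transportation}), which packages the implication ``$(1-\delta/n)$-contractive coupling implies $W_1$ transport--entropy with constant $(2n/\delta)/(2-\delta/n)$'' off the shelf, with that exact constant. You instead sketch a from-scratch Marton-coupling / martingale argument; as you yourself flag, recovering the precise constant $(2n/\delta)/(2-\delta/n)$ that way takes some bookkeeping, and the telescoping-plus-Cauchy--Schwarz route you describe would need to be pinned down carefully to match. Quoting \cite{eldan2017transport} is cleaner and is what the paper does. The reduction to concentration via Bobkov--G\"otze \cite{bobkov1999exponential} that you mention as an alternative does not by itself give the two-sided transport inequality with the stated constant, so it should be left as a remark rather than a proof.

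For item 3 both you and the paper end up invoking \cite[Theorem~1.14]{marton2015logarithmic}, so the substance is the same, but the route to the $\ell_2$ operator-norm bound on $R$ differs. The paper defines the entrywise-dominating \emph{symmetric} matrix $R'$ with constant entries $\lambda\abs{\beta^2-1}\max\set*{\beta^{\Delta-2},\beta^{-\Delta}}$ on the graph edges, so that $0\le R\le R'$ entrywise, and applies Gershgorin to $R'$. You instead note that $R$ has both max row sum and max column sum at most $1-\delta$, which by the Schur test (Riesz--Thorin interpolation, $\norm{R}_{2\to2}\le\sqrt{\norm{R}_{1\to1}\norm{R}_{\infty\to\infty}}$) gives $\norm{R}_{2\to2}\le1-\delta$ directly. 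Your route is correct and arguably cleaner, but you assert the operator-norm bound without naming the interpolation inequality, which is the one step that needs an explicit word of justification.
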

\begin{proof}
Note that by \cref{lem:ising-dobrushin-1}, the assumption shows that the $\ell_1$ norms of the rows of the Dobrushin matrix are most $1 - \delta$.
The first claim then follows by a direct application of classic results in Markov chain theory, see e.g. \cite{levin2017markov}. The second claim follows from the first claim and the main result of \cite{eldan2017transport}, see also \cite{djellout2004transportation}, which shows that the existence of a contractive coupling implies the transportation-entropy inequality.

We now show the last claim. Define the square symmetric matrix $R'$ with entries 
\[ R'_{ij} = \lambda \abs{\beta^2 - 1} \max \set{ \beta^{\Delta - 2}, \beta^{-\Delta} } \]
for $i \sim j$ and $R'_{ij} = 0$ otherwise, and observe that $0 \leq R_{ij} \le R'_{ij}$ entrywise, so that the operator norm of $R$ is dominated by the operator norm of $R'$. Moreover, since $R'$ is a symmetric matrix, its operator norm is equal to its largest eigenvalue in absolute value, and by Gershgorin's circle theorem, its largest eigenvalue has absolute value at most $1 - \delta$. The result now follows directly by applying \cite[Theorem~1.14]{marton2015logarithmic}.
\end{proof}
%\Fred{todo: apply marton using this bound. Above bound is clearly not tight as $\beta \to 0$, does it matter at all? Probably not.}
\subsection{Sharp results for worst-case external field}

In this section, we use the term \emph{$\delta$-unique Ising model} on a graph of maximum degree $\Delta \ge 3$ to mean that
\[ \beta \in \bracks*{\frac{\Delta - 2 + \delta}{\Delta - \delta}, \frac{\Delta - \delta}{\Delta - 2 + \delta}}. \]
This is the uniqueness threshold for the worst case choice of external field (which turns out to be $\lambda = 1$). We remind the reader that this setting was also considered in \cite{chen2021rapid,chen2021optimal} where mixing time bounds of the form $\min(O_{\delta}(n^2), \text{poly}_{\delta}(\Delta)n)$ were attained ($\text{poly}_{\delta}(\Delta)$ means that the degree of the polynomial is a function of $\delta$. Specifically, it is shown in \cite{chen2021optimal} that $\Delta^{O(1/\delta)}$ suffices). In \cref{thm:main ising}, which we prove in this subsection, we get the optimal $O_{\delta}(n \log(n))$ mixing time for the standard Glauber dynamics. 

We begin with the following proposition, whose proof follows from that of \cite[Theorem~1.4]{chen2021rapid} (the necessary modifications are noted in \cref{sub:appendix-ising}). 

%\Vishesh{Add basic definitions, similar to Section 5.1}
\begin{proposition} \label{prop:ising flc}
Suppose $\mu$ is the $\delta$-unique Ising model on a graph $G = (V,E)$ with maximum degree at most $\Delta$, where $\Delta \geq 3$. Then $\mu$ is $\Omega(\delta)$-fractionally log concave.
\end{proposition}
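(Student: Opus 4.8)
The plan is to reduce the claim to an application of \cref{prop:spectralind-to-flc}, i.e.\ to show that the $\delta$-unique Ising model is $(O(1/\delta),\infty)$-spectrally dominated (spectral domination over the \emph{whole} positive orthant, which is exactly the $\epsilon=\infty$ case mentioned after \cref{def:corr}). By \cref{prop:spectralind-to-flc} with $\epsilon=\infty$, such spectral domination with $\eta = O(1/\delta)$ immediately upgrades to $\alpha$-FLC on all of $\R^{2n}_{\geq 0}$ with $\alpha = 1/(2\eta) = \Omega(\delta)$, which is the statement of the proposition.

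So the real work is to bound $\lambda_{\max}(\corMat_{\lambda'\ast\mu}) = O(1/\delta)$ uniformly over \emph{all} external fields $\lambda' \in \R_{>0}^n$. The key observation — which is the content of \cite[Theorem~1.4]{chen2021rapid} and its proof — is that tilting a $\delta$-unique Ising model by an arbitrary positive external field yields another Ising model with the \emph{same} edge activity $\beta$ but a different (site-dependent) external field; since $\beta$ is what governs uniqueness on the $\Delta$-regular tree (and the $\delta$-unique window for $\beta$ is defined to be symmetric and independent of the external field), every such tilt is again $\delta$-unique, and moreover the same holds for every pinning. This is precisely why the Ising model behaves better than the hardcore model here: for the hardcore model an external field changes the fugacity $\lambda$ and can leave the uniqueness regime (hence the need for \emph{restricted} fields), whereas for Ising the uniqueness-controlling parameter $\beta$ is invariant under tilting. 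I would then invoke the spectral independence bound for the $\delta$-unique Ising model — this is the tree-recursion/potential-function analysis of \cite{chen2021rapid} (building on \cite{ALO20,chen2021optimal}) — which gives $\lambda_{\max}(\corMat) = O(1/\delta)$ for the model and all its pinnings, and note that applying it to every tilt $\lambda'\ast\mu$ and every pinning gives exactly complete $(O(1/\delta),\infty)$-spectral domination. The precise constants and the verification that the computation of \cite[Theorem~1.4]{chen2021rapid} goes through unchanged under arbitrary (rather than bounded) external fields is exactly what is deferred to \cref{sub:appendix-ising}.

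Concretely, the steps are: (1) record that $\lambda'\ast\mu$ is again a $\delta$-unique Ising model with edge activity $\beta$ for any $\lambda' \in \R^n_{>0}$, and likewise after any pinning (this uses only the form of the Ising conditional law \eqref{eqn:conditional-ising} and the fact that the $\delta$-unique $\beta$-window is independent of the external field); (2) apply the spectral independence estimate of \cite{chen2021rapid} to conclude $\lambda_{\max}(\corMat_{(\lambda'\ast\mu)^{\sigma_\Lambda}}) = O(1/\delta)$ for every tilt and pinning, i.e.\ $\mu$ is $(O(1/\delta),\infty)$-spectrally dominated (indeed completely so); (3) feed this into \cref{prop:spectralind-to-flc} with $\epsilon = \infty$ to get $\alpha$-FLC of $\mu^{\hom}$ on $\R^{2n}_{\geq 0}$ with $\alpha = \Omega(\delta)$; since $1$-FLC of a homogenization corresponds to fractional log-concavity of the original measure in the sense of \cite{alimohammadi2021fractionally}, this is the desired $\Omega(\delta)$-fractional log-concavity of $\mu$.

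The main obstacle is step (2): importing the spectral-independence bound of \cite{chen2021rapid} and checking that their potential-function / contraction analysis of the tree recursion survives verbatim when the external field is an \emph{arbitrary} positive vector rather than bounded by $1$. This is a genuinely technical point — one must confirm that the relevant contraction estimates are uniform in the external field — but it is essentially a bookkeeping exercise on top of their argument, and it is exactly what \cref{sub:appendix-ising} is there to handle. Everything else (the tilt-invariance of $\beta$, the symmetry of the $\delta$-unique window, and the black-box application of \cref{prop:spectralind-to-flc}) is routine.
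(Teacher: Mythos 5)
Your proposal is correct and follows essentially the same route as the paper: both reduce to showing $O(1/\delta)$-spectral independence of $\lambda'\ast\mu$ for all $\lambda' \in \R_{>0}^V$ (equivalently, $(O(1/\delta),\infty)$-spectral domination) and then observe this follows from the proof of \cite[Theorem~1.4]{chen2021rapid} once one checks the argument does not actually use $\lambda'_v \le 1$. Your conceptual observation that tilting leaves $\beta$ — and hence the $\delta$-unique window — unchanged is a nice way to see \emph{why} this should work and why Ising differs from hardcore, but it is not load-bearing: since the cited theorem is stated for $\lambda' \in [0,1]^V$, you still need (as you correctly flag in your "main obstacle" paragraph) the low-level verification that the tree-recursion analysis extends verbatim to arbitrary positive fields, which is exactly the observation the paper makes — namely that $\lambda'_v$ only enters through the interval $J_{\lambda'_v,d_v}$ defining the domain of $y_v$, and the bound obtained there is uniform over $y_v\in(-\infty,\infty)$.
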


%\Fred{may be able to get restricted mlsi for general lambda using martons result and improved pinsker, todo check this} % june started to add general 2-spin

Next, we need the following result about the $\delta$-unique Ising model with external field in an ``easier regime''. 

\begin{lemma} \label{prop:ising marginals}
Let $\delta \geq 0$ and suppose that $\mu$ is the $\delta$-unique Ising model on $G = (V,E)$ with maximum degree $\Delta \ge 3$, $\lambda = 1/800$. For any $R \subseteq V$ and valid partial configuration $\sigma_R$ on $R,$ the conditional distribution $ \mu^{\sigma_R}$ satisfies approximate tensorization of entropy with constant $O(1)$. Consequently, the modified log-Sobolev constant for the Glauber dynamics is $\Omega(1/n).$ 
\end{lemma}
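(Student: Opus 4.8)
The plan is to invoke the general result of Marton \cite{marton2015logarithmic}, exactly as in the proof of the third part of \cref{lem:ising-dobrushin}, after checking that the choice $\lambda = 1/800$ (together with the $\delta$-uniqueness constraint $\beta \in [\frac{\Delta-2+\delta}{\Delta-\delta}, \frac{\Delta-\delta}{\Delta-2+\delta}]$) places the Ising model comfortably inside the Dobrushin uniqueness regime, with a uniform gap that does \emph{not} depend on $\delta$. Concretely, first I would bound the entries of the Dobrushin influence matrix via \cref{lem:ising-dobrushin-1}: for neighbors $i \sim j$,
\[ R_{ij} \le \lambda \abs{\beta^2 - 1}\max\set{\beta^{\Delta_i - 2}, \beta^{-\Delta_i}}. \]
Since $\beta = 1 \pm \Theta(1/\Delta)$ in the uniqueness window, we have $\abs{\beta^2 - 1} = O(1/\Delta)$ and $\max\set{\beta^{\Delta-2},\beta^{-\Delta}} = O(1)$ (the exponent times $\log\beta$ is $O(1)$), so $R_{ij} = O(\lambda/\Delta)$. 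Summing over the at most $\Delta$ neighbors gives row sums $\sum_j R_{ij} = O(\lambda) \le O(1/800) < 1$; choosing the constant $1/800$ conservatively ensures this is at most, say, $1/2$, giving a Dobrushin gap of $\delta_0 := 1/2$ that is an absolute constant.

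Next I would control the quantity $\alpha := \min_i \min_{\sigma_{\sim i}}\set{\mu(\sigma_i = + \mid \sigma_{\sim i}), \mu(\sigma_i = - \mid \sigma_{\sim i})}$ appearing in part (3) of \cref{lem:ising-dobrushin}. From the conditional law \cref{eqn:conditional-ising}, $\mu(\sigma_v = + \mid \sigma_{\sim v}) = \frac{\lambda\beta^{\Delta_v - s_v}}{\beta^{s_v} + \lambda\beta^{\Delta_v - s_v}}$; since $\lambda = 1/800$ is bounded away from $0$ and $\beta^{\pm\Delta} = \Theta(1)$, both this probability and its complement are bounded below by an absolute constant $\alpha = \Omega(1)$. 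Then part (3) of \cref{lem:ising-dobrushin} (whose hypothesis $\lambda\Delta\abs{\beta^2-1}\max\set{\beta^{\Delta-2},\beta^{-\Delta}} \le 1 - \delta_0$ we have just verified with $\delta_0 = 1/2$) immediately yields approximate tensorization of entropy for $\mu$ itself with an absolute constant $C = 1/(\alpha\delta_0^2) = O(1)$.

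Finally, for the conditional distributions $\mu^{\sigma_R}$: pinning spins in $R$ only removes rows and columns from the Dobrushin matrix and only changes the remaining conditional laws by further conditioning — the bound in \cref{lem:ising-dobrushin-1} is a worst case over all neighbor configurations, so it continues to hold for $\mu^{\sigma_R}$, and likewise the marginal lower bound $\alpha$ is unaffected (it too is a worst case over neighbor configurations, which only shrinks under pinning). Hence \cref{lem:ising-dobrushin}(3) applies verbatim to each $\mu^{\sigma_R}$ with the same absolute constant $C = O(1)$. Approximate tensorization of entropy with constant $O(1)$ then gives, by the standard fact recorded after \cref{def:approx-tensorization}, a modified log-Sobolev constant $\rho_0(P) = \Omega(1/n)$ for the Glauber dynamics on $\mu^{\sigma_R}$. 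The only mild subtlety — and the one place to be careful — is confirming that the exponential factors $\beta^{\pm\Delta}$ really are $\Theta(1)$ uniformly in $\Delta$ within the $\delta$-uniqueness window; this follows because $\abs{\Delta\log\beta} = \abs{\Delta\log(1 \pm \Theta(1/\Delta))} = \Theta(1)$, so there is no hidden dependence on $\Delta$ and no dependence on $\delta$ in the final constant.
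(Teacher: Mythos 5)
Your proposal is correct and follows essentially the same route as the paper: verify the hypothesis of part (3) of \cref{lem:ising-dobrushin} at $\lambda = 1/800$ inside the uniqueness window, lower-bound the conditional marginal $\alpha$ by an absolute constant, and conclude via Marton's theorem. Your explicit observation that pinning a subset $R$ only shrinks the Dobrushin matrix and tightens the marginal bound (both being worst-case over configurations) is correct and is left implicit in the paper, so spelling it out is a small improvement. The one quantitative imprecision is the claim of a Dobrushin gap $\delta_0 = 1/2$: using the explicit bounds $\Delta|\beta^2 - 1| \le 24$ and $\max\{\beta^{\Delta-2},\beta^{-\Delta}\} \le (\Delta/(\Delta-2))^{\Delta}$ (which equals $27$ at $\Delta=3$), the row sums of the Dobrushin matrix can exceed $1/2$; the paper only establishes the weaker bound $\le 0.9$, i.e.\ a gap of $0.1$. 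Since any absolute constant gap suffices for the $O(1)$ tensorization conclusion, this does not affect the validity of your argument, but the $O(\cdot)$ bookkeeping in your proof obscures whether the specific constant $1/800$ actually works and would need to be replaced by the explicit calculus check the paper performs.
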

\begin{proof}
This result follows from the third conclusion of \cref{lem:ising-dobrushin}. To verify the assumption of that lemma, we can check directly using calculus that
\[ (1/800)\Delta \abs{\beta^2 - 1} \max\set{\beta^{\Delta - 2},\beta^{-\Delta}} \le (24/800) \max\set{\beta^{\Delta - 2},\beta^{-\Delta}} \le 0.9 \]
provided $\beta \in [(\Delta - 2)/\Delta, \Delta/(\Delta - 2)]$. Given this, it only remains to show that the conditional marginal lower bound $\alpha$ is lower bounded by an absolute constant.
\iffalse
\Vishesh{Proof is not correct, since L2 norm is not bounded on the nose by L1 norm, but it can be fixed with a direct computation. Also applies to the corresponding step in the proof of Theorem 53}
Let $\tilde{\mu} = \mu^{\sigma_R},V ' = V \setminus R$. By \cite{chen2021rapid}, $\mu$ has $L_1$-norm of Dobrushin matrix bounded by $1-\Omega(\delta).$ As a result, $\tilde{\mu}$ also has $L_1$-norm of Dobrushin matrix bounded by $1-\Omega(\delta).$  Moreover, $\tilde{\mu}$ has lower bound on conditional marginals $\P_{\tilde{\mu} \given \sigma}{ v = c  \given \sigma} \geq  \Omega(1)$ for each $v \in V',$ spin $c \in \set*{\pm 1}$ and assignment $\sigma $ to $ V' \setminus v.$  %$\lambda/28 \geq 1/1400.$

To see this, note that the conditional marginals of $\tilde{\mu}$ are also conditional marginals of $\mu.$ Thus, it is enough to prove the desired lower bounds for conditional marginals of $\mu.$
 %the conditional marginals of $\tilde{mu},$ which are also conditional marginals 
 \fi
From the definition of the Ising model (see \eqref{eqn:conditional-ising}), for each vertex $v$ and partial assignment $\sigma$ of $V \setminus v$, the conditional law is
\[\frac{\P_{\mu} {v = +1 \given \sigma} }{\P_{\mu} {v = -1 \given \sigma} } = \lambda_v \beta^{\Delta_v - 2s}\]
where $s$ is the number of $-1$’s assigned by $\sigma$ to the neighborhood $\mathcal{N}_v$ of $v.$  Now, since $\beta \in [ \frac{\Delta -2}{\Delta} , \frac{\Delta}{\Delta-2}]$ we have
\[\frac{1}{3e^2} \leq \parens*{\frac{\Delta -2}{\Delta}}^{\Delta} \leq \beta^{\Delta_v} \leq \parens*{\frac{\Delta}{\Delta-2}}^{\Delta_v} \leq \parens*{\frac{\Delta}{\Delta-2}}^{\Delta} \leq 3 e^2.\]
Note that 
\[ \beta^{\Delta_v - 2s} \in\begin{cases} [\beta^{-\Delta_v}, \beta^{\Delta_v} ] &\text{ if } \beta \geq 1\\  [\beta^{\Delta_v}, \beta^{-\Delta_v}  ] &\text{ otherwise }  \end{cases} \] 
Thus \[ \frac{1}{3e^2 \times 500 }\leq \frac{\P_{\mu}{v = +1 \given \sigma} }{\P_{\mu}{v = -1 \given \sigma} } \leq \frac{3e^2}{500 }, \]
as desired. 
%Note that $L_1$-norm bound implies $L_2$ norm bound. $L_2$ bound on Dobrushin matrix of $\tilde{\mu}$ together with the lowerbound on conditional marginals imply the desired bound on entropy factorization constant via \cite[Theorem~2]{marton2015logarithmic} (see also \cite[Theorem~2.2]{Sambale_2020}).
%\Fred{
%Seems like their paper just restates the result of \cite{marton2015logarithmic}? 
%Note: might be possible to get a ``restricted'' version of approximate tensorization from her result in the setting where $\lambda$ is tiny, by improving the use of Pinsker's inequality. } % this was for the uniform lambda setting
\end{proof}
\begin{proof}[Proof of \cref{thm:main ising}]
First, observe that if $\lambda \leq 1/800$ then the result follows from rapid mixing under Dobrushin's condition by the first case of \cref{lem:ising-dobrushin}, as previously used also in \cite{chen2021optimal,chen2021rapid}.

It remains to show that the Glauber dynamics for a $\delta$-unique Ising model with $\lambda \geq  \lambda_0 = 1/800$
%\Vishesh{make sure constant says the same} 
has modified log-Sobolev constant $\Omega_{\delta}(\frac{1}{n})$. The proof is similar to the proof of \cref{thm:comparison}, except that instead of using \cref{lem:restricted mlsi to tensorization}, we use \cref{prop:ising marginals}. 

In more detail, \cref{prop:ising marginals} gives \cref{eq:entfac-easy} in \cref{lem:ent-tensorization} with $\theta = \lambda_0/\lambda$ and $C_1 = O(1/\delta^2)$.

Next, as in the proof of \cref{thm:comparison}, by combining \cref{prop:ising flc} and \cref{cor:restricted entropy contraction}, we get 
%\cref{thm:restricted entropy field dynamics}, we see that the field dynamics with $\theta = \lambda_0$ satisfies entropy contraction, with rate $\kappa = (\frac{\lambda_0}{3})^{1/\delta}$. This, together with \cref{lem:field tensorization from block}, gives entropy factorization for the field dynamics, or 
\cref{eq:entfac-field} in \cref{lem:ent-tensorization} with $\theta = \lambda_0/\lambda$ and $C_2 = \Omega(\lambda_0)^{O(1/\delta)}$. 

Therefore, by \cref{lem:ent-tensorization}, we obtain approximate entropy tensorization with $C = O_{\delta}(1)$ for the Glauber dynamics with respect to $\mu$, which shows that the modified log-Sobolev constant for the Glauber dynamics is $\Omega_{\delta}(1/n)$. 

Finally, by the lower bound on the conditional marginals of $\mu$, we see that configuration $\sigma $ on $V$ has $\mu(\sigma) \geq 2^{-O(n)}$. Combining this with the lower bound on the modified log-Sobolev constant and \cref{lem:MLSI to mixing}, we get that the $\epsilon$-mixing time of the Glauber dynamics is $O_{\delta}(n \log (n/\epsilon))$. 
\end{proof}

%\section{Applications} % this section had too much stuff in it
%\label{sec:applications}
%In this section, we give applications of our general theory to two important and widely studied 2-spin systems: the Ising model and the hardcore model on general graphs. 

\subsection{Sampling antiferromagnetic Ising models in general uniqueness regime}
\label{sec:ising-antiferro}
In this subsection, we consider the problem of sampling from anti-ferromagnetic Ising models in the so-called up-to-$\Delta$ uniqueness regime (see \cref{sub:facts-2-spin} for definitions). For values of the external field other than $\lambda \neq 1$, this permits efficient sampling for a wider range of $\beta$.  

Our first result concerns the mixing time of the field dynamics interleaved with systematic scan.

\begin{theorem}
\label{thm:general-ising-ss}
Fix $\alpha \ge 0$. Let $(\beta,\lambda)$ be the parameters of the $\delta$-unique antiferromagnetic Ising model $\mu$ on a graph $G = (V,E)$ (with $\card{V} = n$) of maximum degree at most $\Delta$, with
\[ \beta \in \bracks*{\frac{\Delta}{\Delta + \alpha}, 1}. \]
Define the Markov chain with transition matrix $P = P^{\operatorname{SS}} \widetilde \fieldMarkov_{\theta,m}$ with $\theta = \delta^2/64$ and $m = 10n\log(nT/\epsilon)$. Let $\nu_0$ be arbitrary and define $\nu_t := \nu_{t - 1} P$ inductively. Then
\[ \DTV{\nu_{T}, \mu} \le \epsilon \]
for all $T = \Omega_{\delta,\alpha}(\log[n \DKL{\nu_0 \river \mu}/\epsilon])$.
\end{theorem}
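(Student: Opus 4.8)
The plan is to follow the proof of \cref{thm:hardcore-systematic} essentially line for line, substituting the Ising analogues of its three hardcore-specific inputs. I introduce the auxiliary chain driven by the \emph{exact} field dynamics: set $\nu^*_0 := \nu_0$ and $\nu^*_{t+1} := \nu^*_t P^{\operatorname{SS}} \fieldMarkov_{\theta}$ with $\theta = \delta^2/64$. The goal is to prove (i) geometric decay $\DKL{\nu^*_{t+1} \river \mu} \le (1-\kappa)\DKL{\nu^*_t \river \mu}$ with $\kappa = \Omega_{\delta,\alpha}(1)$, and (ii) that replacing $\fieldMarkov_\theta$ by $\widetilde\fieldMarkov_{\theta,m}$ with $m = 10n\log(nT/\epsilon)$ costs at most $\epsilon/(2T)$ in total variation per step. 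Given (i) and (ii), the triangle inequality gives $\DTV{\nu_t,\nu^*_t} \le t\epsilon/(2T)$, while Pinsker's inequality together with (i) gives $\DTV{\nu^*_T,\mu} \le \sqrt{(1/2)(1-\kappa)^T\DKL{\nu_0 \river \mu}}$, so $\DTV{\nu_T,\mu} \le \epsilon$ once $T = \Omega_{\delta,\alpha}(\log[n\DKL{\nu_0 \river \mu}/\epsilon])$, exactly as in the hardcore case.

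For (i) I need complete spectral independence of $\mu$ and complete $C$-boundedness of $\nu^*_t P^{\operatorname{SS}}$. For the first, I invoke fractional log-concavity of antiferromagnetic two-spin systems in the $\delta$-unique up-to-$\Delta$ regime (the analogue of \cref{prop:ising flc}, following the total-influence analysis of \textcite{chen2021rapid}; see \cref{sub:facts-2-spin}): since $\beta \le 1$ and the model is antiferromagnetic, a tilt by any external field $\theta' \in (0,1+\epsilon_0]^n$ increases vertex fugacities only by a uniformly bounded factor and hence keeps the model $\Omega(\delta)$-unique, so passing to all conditionals shows $\mu$ is completely $(\eta,\epsilon_0)$-spectrally independent with $\eta = O(1/\delta)$ and $\epsilon_0 = \Omega(\delta)$. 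For the second, I prove an Ising version of \cref{lem:systematic-scan-hardcore}: after one systematic-scan pass, $\nu_0 P^{\operatorname{SS}} \in \mathcal{V}^{c}(C,\mu)$ with $C = O_\alpha(1)$. The key estimate is that, by \cref{eqn:conditional-ising}, flipping the spin at one neighbor of a site multiplies the conditional odds at that site by a factor in $[\beta^2,\beta^{-2}]$; so at the moment the scan last resamples a vertex $v$ its conditional is exactly the $\mu$-conditional, and the at most $\Delta$ neighbors of $v$ updated afterwards move its odds by a cumulative factor in $[\beta^{2\Delta},\beta^{-2\Delta}] \subseteq [e^{-2\alpha},e^{2\alpha}]$, using $\Delta/(\Delta+\alpha) \le \beta \le 1$. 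The same bookkeeping survives under pinnings (fewer free neighbors) and under tilts by $\theta' \in (0,1]^n$ (which only rescale $\lambda_v$), giving complete $C$-boundedness with $C = O_\alpha(1)$. Feeding both facts into \cref{thm:restricted entropy field dynamics} with $\theta = \delta^2/64$ yields $\DKL{(\nu^*_t P^{\operatorname{SS}})\fieldMarkov_\theta \river \mu} \le (1-\kappa)\DKL{\nu^*_t P^{\operatorname{SS}} \river \mu}$ with $\kappa = (\theta/3)^{\eta'}$ and $\eta' = \max\set{2\eta,\sqrt{\log C/\log(1+\epsilon_0)}} = O_{\delta,\alpha}(1)$; and since $\mu P^{\operatorname{SS}} = \mu$, the data processing inequality gives $\DKL{\nu^*_t P^{\operatorname{SS}} \river \mu} \le \DKL{\nu^*_t \river \mu}$, which closes (i).

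For (ii) I argue that every tilted conditional $(\theta\ast\mu)^{\sigma_{[n]\setminus S}}$ — an antiferromagnetic Ising model with edge activity $\beta$ and external field at most $\theta\lambda \le \theta = \delta^2/64$ — lies in a regime where the Glauber dynamics mixes in $O(n\log(n/\epsilon'))$ steps, so that $m = 10n\log(nT/\epsilon)$ Glauber steps produce a configuration within TV distance $\le \epsilon/(2T)$ of a true resample. By \cref{lem:ising-dobrushin} it suffices to bound the Dobrushin row sums, i.e.\ $\Delta\,\theta\lambda\,\abs{\beta^2-1}\max\set{\beta^{\Delta-2},\beta^{-\Delta}}$; here one uses that the up-to-$\Delta$ critical field $\lambda_c(\beta,\Delta)$ satisfies $\Delta\,\lambda_c\,\abs{\beta^2-1}\,\beta^{-\Delta} = O(1)$ (a consequence of the tree-recursion fixed-point identity characterizing $\lambda_c$; see \cref{sub:facts-2-spin}), so that $\lambda \le \lambda_c$ forces the row sum to be at most $\theta\cdot O(1) \le 1-\Omega(1)$ once $\theta = \delta^2/64$, whence Dobrushin's condition and rapid mixing. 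The remaining step — bounding $\DTV{\nu_{t+1},\nu^*_{t+1}} \le \epsilon/(2T) + \DTV{\nu_t,\nu^*_t}$ via the triangle inequality, the definition of $\widetilde\fieldMarkov_{\theta,m}$, and this mixing bound — is identical to the hardcore argument.

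The main obstacle is the first ingredient of (i): establishing complete $(\eta,\epsilon)$-spectral independence of $\mu$ throughout the whole up-to-$\Delta$ uniqueness window with $\eta,\epsilon^{-1} = O_\delta(1)$, and verifying that the slack $\epsilon$ can be chosen uniformly in $\beta$ and $\Delta$. This is delicate because the up-to-$\Delta$ condition (uniqueness on every $d$-regular tree for $d \le \Delta$) is more subtle than the single $\Delta$-regular threshold, and one must check that the restricted external-field family $\theta'\in(0,1+\epsilon]^n$ stays inside it with a positive gap; this is precisely where I would lean on the worst-case two-spin analysis of \textcite{chen2021rapid}. A secondary technical point is the cancellation $\Delta\lambda_c\abs{\beta^2-1}\beta^{-\Delta} = O(1)$ underlying the easy-regime mixing in (ii); everything else is a routine transcription of the hardcore proof.
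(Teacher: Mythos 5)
Your proposal is correct and follows essentially the same route as the paper, which at this point is a one-paragraph substitution into the proof of \cref{thm:hardcore-systematic}. Two small mismatches are worth flagging so the argument lands on the right propositions. First, for complete spectral independence you gesture at ``the analogue of \cref{prop:ising flc}''; that proposition is specific to $\lambda = 1$ (where arbitrary tilts stay inside uniqueness by the $\pm$ flipping symmetry), and what is actually needed here is Conclusion~3 of \cref{prop:2 spin soft constraint}, whose proof establishes $(1+\delta/2)\lambda_{1,\delta}(d) \le \lambda_{1,\delta/2}(d)$ — this is exactly the ``delicate'' step you single out, and the reason the allowed tilt window $\theta' \in (0,1+\epsilon_0]^n$ must be taken with $\epsilon_0 = \delta/2$ rather than an unspecified ``uniformly bounded factor'' (a generic bounded factor could leave the uniqueness region). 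Second, your $C$-boundedness argument via one-neighbor flips changing odds by a factor in $[\beta^2,\beta^{-2}]$, carried through the systematic scan and under pinnings/tilts, reproduces \cref{lem:systematic-scan-ising} (built on Conclusions 1 and 2 of \cref{prop:2 spin soft constraint}) rather than being a new argument; the easy-regime Dobrushin bound you invoke is \cref{lem:ising-dobrushin}, and your bound $\Delta\,\lambda_c\,\abs{\beta^2-1}\,\beta^{-\Delta} = O_\alpha(1)$ does follow from \cref{prop:lambda bound} together with $\beta \ge \Delta/(\Delta+\alpha)$. With those references in place the argument matches the paper's.
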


Our next result concerns the mixing time of the balanced Glauber dynamics. We use the same notation and definitions for this process as in \cref{sec:balanced-gd}. 

% \Vishesh{Theorem statement below is a bit weird; separate into cases depending on $\lambda$}
\begin{theorem}
\label{thm:ising-anti-detailed}
Fix $\alpha \geq 0$. Let $(\beta,\lambda)$ be the parameters of the $\delta$-unique antiferromagnetic Ising model $\mu$ on a graph $G = (V,E)$ (with $\card{V} = n$) of maximum degree at most $\Delta$, with
\[ \beta \in \bracks*{\frac{\Delta}{\Delta + \alpha}, 1}. \]
Let $\nu_{-1}$ be an arbitrary starting distribution and define $\nu_0 = \nu_{-1} P^{SS}$ to be the distribution obtained by running a single step of the systematic scan. 
Let $(\sigma_t, N_t)_{t\geq 0}$ denote the states of the balanced Glauber dynamics with initial distribution $\nu_0$ and let $\nu_t$ denote the marginal law of $\sigma_t$. 

Also, let $\nu'_t$ be defined to be the distribution after applying $t$ steps of the standard Glauber dynamics to $\nu_0$. Then,
\[
\begin{cases}
\DTV{\nu'_T, \mu} \leq \epsilon & \text{ if }0 \le \lambda \leq \lambda_{\alpha}\\
\DTV{\nu_{T}, \mu} \} \leq \epsilon & \text{if } \lambda_{\alpha} \leq \lambda \leq 1
\end{cases}
\]
% \[ \min \set{ \DTV{\nu'_T, \mu}, \DTV{\nu_{T}, \mu} } \le \epsilon \]
for all $T = \Omega_{\delta,\alpha}(n \log[n \DKL{\nu_{-1} \river \mu}/\epsilon])$. Moreover, the total number of vertices updated to arrive at $\sigma_T$ in the balanced Glauber dynamics is deterministically $O_K(T)$. 
\end{theorem}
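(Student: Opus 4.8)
The plan is to prove \cref{thm:ising-anti-detailed} by imitating the proof of \cref{thm:hardcore-detailed}, with the antiferromagnetic Ising model in place of the hardcore model, splitting into the two parameter regimes named in the statement. In both cases the role of ``small fugacity $\lambda\le 1/2\Delta$'' from the hardcore analysis is now played by ``$\beta$ close to $1$ together with $\lambda$ not too large'', which places the model (or the appropriate tilt of it) in a Dobrushin-uniqueness regime.

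\textbf{The regime $0\le\lambda\le\lambda_{\alpha}$.} Here I would dispense with the field dynamics entirely and show that standard Glauber already mixes in $O_{\delta,\alpha}(n\log(n/\epsilon))$ steps. By \cref{lem:ising-dobrushin-1} each row sum of the Dobrushin influence matrix is at most $\lambda\Delta\abs{\beta^{2}-1}\max\{\beta^{\Delta-2},\beta^{-\Delta}\}$, and using $\beta\ge\Delta/(\Delta+\alpha)$ one has $\abs{\beta^{2}-1}\le 2\alpha/\Delta$ and $\max\{\beta^{\Delta-2},\beta^{-\Delta}\}\le e^{\alpha}$, so this row sum is $O_{\alpha}(\lambda)$; choosing $\lambda_{\alpha}$ to be a small enough constant (depending on $\alpha$) makes it at most $1/2$. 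Conclusion~3 of \cref{lem:ising-dobrushin} then gives approximate tensorization of entropy for $\mu$ with an $O_{\delta,\alpha}(1)$ constant --- note the conditional-marginal lower bound $\alpha$ appearing there is $\Omega_{\alpha}(1)$ by the computation in \cref{prop:ising marginals} --- hence entropy contraction $\DKL{\nu'_{t}\river\mu}\le(1-\Omega_{\delta,\alpha}(1/n))\DKL{\nu'_{t-1}\river\mu}$ for the standard Glauber dynamics. Iterating from $\nu_{0}$, using $\DKL{\nu_{0}\river\mu}\le\DKL{\nu_{-1}\river\mu}$ (data processing through $P^{\operatorname{SS}}$), and applying Pinsker's inequality yields $\DTV{\nu'_{T},\mu}\le\epsilon$ for $T=\Omega_{\delta,\alpha}(n\log[n\DKL{\nu_{-1}\river\mu}/\epsilon])$.

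\textbf{The regime $\lambda_{\alpha}\le\lambda\le 1$.} This is the direct analogue of \cref{thm:hardcore-detailed}, assembled from four ingredients. (i) \emph{Complete spectral independence:} by (the proof of) \cref{prop:ising flc} and the restricted-external-field estimates of \textcite{chen2021rapid}, the $\delta$-unique antiferromagnetic Ising model with $\beta\in[\Delta/(\Delta+\alpha),1]$ is completely $(O_{\delta,\alpha}(1),\Omega_{\delta,\alpha}(1))$-spectrally independent, the key point being that a multiplicative perturbation of the external field by a factor $1+\epsilon$, for a constant $\epsilon=\epsilon(\delta,\alpha)>0$, stays inside the up-to-$\Delta$ uniqueness region. (ii) \emph{$C$-boundedness:} the Ising analogues of \cref{lem:systematic-scan-hardcore} and \cref{lem:balanced-bounded}. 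Since every conditional marginal $\mu(\sigma_{v}=\pm\mid\sigma_{\sim v})$ is bounded away from $0$ and $1$ by a constant $c_{\alpha}>0$ (the estimate of \cref{prop:ising marginals}, with $\beta^{\pm\Delta}=e^{O(\alpha)}$), and since a single-site resampling shifts a neighbor's conditional odds by at most a factor $\beta^{\pm 2}$, the telescoping argument of \cref{lem:systematic-scan-hardcore} shows $\nu_{0}=\nu_{-1}P^{\operatorname{SS}}\in\mathcal{V}^{c}(C,\mu)$, and the same argument with the ``$\le K\Delta$ neighbor-updates since last update'' property of the $K$-balanced Glauber dynamics (\cref{sec:balanced-gd}) shows every $\nu_{t}\in\mathcal{V}^{c}(C,\mu)$ for $C=C(\alpha,K)$. (iii) \emph{MLSI for the easy measure:} for a suitably small constant $\theta=\theta(\delta,\alpha)$ (following the choice in \cref{thm:general-ising-ss}), the tilt $\pi:=\theta\ast\mu$ has external field $\theta\lambda$ small enough that $\pi$ and all its pinnings fall under Dobrushin's condition of \cref{lem:ising-dobrushin}, giving \cref{eqn:mlsi-easy} with $\kappa=\Omega_{\delta,\alpha}(1/n)$. (iv) Feeding (i)--(iii) into \cref{thm:comparison} (whose use of \cref{cor:restricted entropy contraction} is where the restricted nature of (i) and the boundedness of (ii) are needed) gives approximate tensorization of entropy \cref{eqn:assumption-tensorization} for $\mu$ with constant $\Omega_{\delta,\alpha}(1)$; iterating \cref{lem:tensorization-to-balance} together with \cref{eqn:dpi-balanced-gd} gives $\DKL{\nu_{T}\river\mu}\le(1-\Omega_{\delta,\alpha}(1/n))^{T}\DKL{\nu_{-1}\river\mu}$, and Pinsker converts this to the claimed total-variation bound; the ``moreover'' clause on the number of updates is immediate from \cref{lem:number-of-steps-balanced}.

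\textbf{Main obstacle.} The bookkeeping in (iv), and the Ising versions of the $C$-boundedness lemmas in (ii), are routine once the analogous hardcore statements are in hand. The substantive work is ingredient (i): establishing complete spectral independence \emph{under the restricted family of external fields $(0,1+\epsilon]^{V}$} across the entire two-dimensional up-to-$\Delta$ uniqueness region --- in particular for $\beta$ below the worst-case-field threshold $\approx 1-2/\Delta$, where \emph{unrestricted} external fields do destroy uniqueness --- with slack $\epsilon$ and spectral bound $\eta$ that are constants depending only on $\delta$ and $\alpha$, not on $\Delta$ or $n$. A secondary point requiring care is the simultaneous choice of the split point $\lambda_{\alpha}$ and the tilt parameter $\theta$ so that (a) $\lambda\le\lambda_{\alpha}$ is covered by Dobrushin's condition in the easy regime and (b) for $\lambda\ge\lambda_{\alpha}$ the measure $\theta\ast\mu$ of step (iii) is still in the Dobrushin/MLSI regime while $\theta$ remains bounded below by a constant (so that the contraction rate $\kappa=(\theta/3)^{\eta'}$ coming through \cref{cor:restricted entropy contraction} is $\Omega_{\delta,\alpha}(1)$).
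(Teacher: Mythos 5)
Your proposal is correct and follows essentially the same route as the paper's proof (the same two-regime split, with the high-$\lambda$ regime assembled from $C$-boundedness of the balanced Glauber dynamics, complete spectral independence, an MLSI for the $\theta$-tilted ``easy'' measure from \cref{lem:ising-dobrushin}, and the comparison machinery of \cref{thm:comparison} and \cref{lem:tensorization-to-balance}). Two cosmetic differences: in the low-$\lambda$ regime the paper invokes the contractive-coupling consequence of Dobrushin's condition (Conclusion 1 of \cref{lem:ising-dobrushin}) rather than the approximate-tensorization route you take, though both give the needed $O_{\delta,\alpha}(n\log(n/\epsilon))$ mixing; and the ingredient you call ``the main obstacle'' is already packaged in the paper as Conclusion 3 of \cref{prop:2 spin soft constraint}, which covers the full two-dimensional $(\beta,\lambda)$ uniqueness region under restricted external fields --- \cref{prop:ising flc}, which you cite, concerns only the worst-case field $\lambda=1$, although you correctly identify the underlying estimate from \textcite{chen2021rapid}.
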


We will discuss how to efficiently update vertices using Bernoulli factories in \cref{sec:sublinear-ising}.

%Given the estimates on the uniqueness threshold established in the previous section, we can now straightforwardly adapt our sampling results from the hardcore model to the Ising model with external field. %\Fred{don't need to repeat so many details here?}

\subsubsection{General facts about antiferromagnetic 2-spin systems}
\label{sub:facts-2-spin}
We will need some well-known facts about the uniqueness threshold for the antiferromagnetic Ising model (see, e.g., \cite{sinclair2014approximation,LLY14,chen2021rapid}). These results are typically stated in terms of a
larger class of models called \emph{general 2-spin systems}, whose definition we now recall. 
%$\beta, \gamma, \lambda).$ Wlog we can assume $\beta\gamma \leq 1$ (todo-check)
The 2-spin system on a graph $G = (V,E)$ with parameters $(\beta, \gamma,\lambda)$ is the probability distribution $\mu$ over spins $\sigma \in \set{\pm 1}^{V}$ given by
\[ \mu(\sigma) \propto \lambda^{\card{\set{i\given \sigma_i = +}}} %\parens*{\prod_{v \in \sigma} \lambda_v}
\beta^{m^+(\sigma)} \gamma^{m^-(\sigma)} \]
where $m^+(\sigma) := \card{\set{\set{i,j} \in E \given \sigma_i = + = \sigma_j}}$ and $m^-(\sigma) := \card{\set{\set{i,j} \in E \given \sigma_i = - = \sigma_j}}$ count the number of monochromatic plus and minus edges, respectively.
Note that the Ising model is the special case $\beta = \gamma$.
By flipping all signs from $+$ to $-,$ we can exchange $\beta$ and $\gamma,$ so it is assumed w.l.o.g. that $\beta \leq \gamma.$ Most of these results concern the antiferromagnetic case, when  $\beta \gamma \leq 1.$ 

From the definition, we have (see e.g.~\cite[Equation~83]{chen2021rapid}) that the conditional law of $\sigma_v$ given a valid (i.e.~nonzero probability) partial assignment of the remaining spins $\sigma_{\sim v}$ is
\begin{equation}\label{eqn:conditional-2spin}
\mu(\sigma_v = + \mid \sigma_{\sim v}) = \frac{\lambda \beta^{\Delta_v - s_v}}{\gamma^{s_v} + \lambda \beta^{\Delta_v - s_v}} 
\end{equation}
where $s_v = s_v(\sigma_{\sim v}) = \card{\set{u \given v \sim u, \sigma_u = -}}$ counts the number of $-1$-neighbors of site $v$. %\Fred{change this section to use $\pm$ notation?}

An antiferromagnetic two-spin system is said to be \emph{up-to-$\Delta$ unique with gap $\delta$} if for all $1 \le d \le \Delta - 1$, we have the following \emph{$d$-uniqueness} condition: $\abs{F_d'(\hat x_d)} \le 1 - \delta$, where 
\[ F_d(x) := \lambda \parens*{\frac{\beta x + 1}{x + \gamma}}^d \]
is the univariate tree recursion and $\hat x_d$ is the unique fixpoint of $F_d$ \cite{LLY14}.

%We call $\beta = 0$ the hard constraint case and $\beta > 0$ the soft constraint case. %When $\beta = 0,$ $\mu$ is support on the independent sets of $G,$ and we say 

The following proposition characterizes the uniqueness region for antiferromagnetic 2-spin systems with soft interactions (the result in the reference covers hard interactions as well). %\Fred{models we care about generally satisfy $\beta, \gamma, \lambda \le 1$, does this help simplify e.g. systematic scan analysis? maybe don't need to assume it explicitly} \June{No it doesn't help to put any bound on $\lambda$ for the systematic scan. The bound on $\lambda$ was for the concentration}
\begin{proposition}[{\cite[Proposition 8.6]{chen2021rapid}}]\label{prop:uniqueness}
Suppose $\beta \gamma \le 1$ and $\beta > 0$.
%If $\beta = 0,$ then for all integers $d \geq 1: (0,\gamma, \lambda)$ is $d$-unique with gap $\delta$ iff $\lambda \leq \lambda_{c, \delta} (d) = \frac{(1-\delta) d^d \gamma^{d+1} }{(d-1+\delta)^{d+1}}$
%In what follows, we additionally assume $\beta > 0.$ 
Let $\bar{\Delta} = \frac{1+ \sqrt{\beta \gamma} }{1 - \sqrt{\beta\gamma} }.$ For all integers $d\geq 1$,
we have that:
\begin{enumerate}
    \item If $d \leq (1-\delta)\bar{\Delta}$ then $(\beta, \gamma, \lambda)$ is $d$-unique with gap $\delta$ for all $\lambda.$
    \item If $d \geq (1-\delta) \bar{\Delta},$ let $\zeta_{\delta}(d) = d(1-\beta\gamma) - (1-\delta) (1+\beta\gamma), $ 
    \[x_{1,\delta}(d) = \frac{ \parens*{\zeta_{\delta} (d) - \sqrt{\zeta_{\delta} (d)^2 - 4(1-\delta)^2\beta \gamma }} }{2(1-\delta) \beta} \text{   and   } x_{2,\delta}(d) = \frac{ \parens*{\zeta_{\delta} (d) + \sqrt{\zeta_{\delta} (d)^2 - 4(1-\delta)^2\beta \gamma }} }{2(1-\delta) \beta}  \]
    and for $i \in \set*{1,2}$ \[\lambda_{i,\delta}(d) = x_{i,\delta}(d) \parens*{\frac{x_{i,\delta}(d) + \gamma}{\beta x_{i,\delta}(d) + 1}}^d.\]
     Note that $\lambda_{1,\delta}(d)  \lambda_{2,\delta}(d) = (\frac{\gamma}{\beta})^{d+1}$ and $\lambda_{1,\delta}(d) \leq (\frac{\gamma}{\beta})^{(d+1)/2} \leq \lambda_{2,\delta}(d).$
     
       Then $(\beta, \gamma, \lambda)$ is $d$-unique with gap $\delta$ iff $\lambda \in [0, \lambda_{1,\delta}(d)) \cup (\lambda_{2, \delta}(d), +\infty).$ 

      %When $\delta = 0,$ we omit the subscript $\delta$ and write $ \lambda_{1}(d) \equiv \lambda_{1,0}(d).$ Similarly for $\lambda_{2}(d).$

%    The 2-spin system on graph $G= G(V,E)$ with parameters $(\beta, \gamma, (\lambda_v)_{v\in V})$ is unique with gap $\delta$ iff for all $v$ we have $\lambda \in  [0, \lambda_{1,\delta}(d_v)) \cup (\lambda_{2, \delta}(d_v),\infty)$ for $d_v = \Delta_v - 1,$ with $\Delta_v$ being the degree of $v$ in $G.$
    %A 2-spin system on graph $G= G(V,E)$ with parameters $(\beta, \gamma, (\lambda_v)_{v\in V})$ is unique with gap $\delta$ iff 
    %$\lambda_v \in  [0, \lambda_{1,\delta}(d_v)) \cup (\lambda_{2, \delta}(d_v)$ for $d_v = \Delta_v - 1,$ with $\Delta_v$ being the degree of $v$ in $G.$
\end{enumerate}
\end{proposition}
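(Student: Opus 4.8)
The plan is to reduce the $d$-uniqueness condition to a single-variable question about the tree-recursion fixpoint $\hat x_d$, answer that question by elementary optimization, and then transport the answer to the external field $\lambda$ through an explicit monotone change of variables. I assume $\beta,\gamma>0$ and $\beta\gamma<1$ throughout (when $\beta\gamma=1$ the recursion $F_d$ is constant, $\bar\Delta=+\infty$, and Case 1 holds trivially). Write $f_d(x):=\left(\frac{\beta x+1}{x+\gamma}\right)^d$ so that $F_d=\lambda f_d$. Differentiating, $\frac{F_d'(x)}{F_d(x)}=d\left(\frac{\beta}{\beta x+1}-\frac{1}{x+\gamma}\right)=d\cdot\frac{\beta\gamma-1}{(\beta x+1)(x+\gamma)}$, so $F_d$ is strictly decreasing on $(0,\infty)$; since $F_d(0^+)=\lambda\gamma^{-d}>0$ while $F_d(x)\to\lambda\beta^d$ as $x\to\infty$, the equation $F_d(x)=x$ has a unique solution $\hat x_d=\hat x_d(\lambda)\in(0,\infty)$ for each $\lambda>0$. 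Substituting $F_d(\hat x_d)=\hat x_d$ into the derivative formula gives
\[
\abs{F_d'(\hat x_d)}=g(\hat x_d),\qquad g(x):=\frac{d(1-\beta\gamma)\,x}{(\beta x+1)(x+\gamma)}.
\]

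The second step is the change of variables. From $\hat x_d=\lambda f_d(\hat x_d)$ we get $\lambda=\hat x_d\left(\frac{\hat x_d+\gamma}{\beta\hat x_d+1}\right)^d=:h(\hat x_d)$, and since $\frac{h'(x)}{h(x)}=\frac1x+d\cdot\frac{1-\beta\gamma}{(x+\gamma)(\beta x+1)}>0$ with $h(0^+)=0$ and $h(x)\to\infty$, the map $h$ is a strictly increasing continuous bijection of $(0,\infty)$ whose inverse is $\lambda\mapsto\hat x_d(\lambda)$. Hence $(\beta,\gamma,\lambda)$ is $d$-unique with gap $\delta$ iff $g(\hat x_d(\lambda))\le 1-\delta$, and the set of such $\lambda$ is exactly the $h$-image of $\set{x>0:g(x)\le1-\delta}$, with $\lambda=0$ (where $\hat x_d=0$ and $g(0)=0$) always included.

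The main work is to analyze $g$. Write $g(x)=\frac{d(1-\beta\gamma)}{\beta x+(1+\beta\gamma)+\gamma/x}$; by AM--GM, $\beta x+\gamma/x\ge 2\sqrt{\beta\gamma}$ with equality only at $x_\ast:=\sqrt{\gamma/\beta}$, and $\beta x+\gamma/x$ is strictly convex, so $g$ increases strictly on $(0,x_\ast]$, decreases strictly on $[x_\ast,\infty)$, $g(0^+)=g(\infty)=0$, and
\[
\max_{x>0}g(x)=\frac{d(1-\beta\gamma)}{(1+\sqrt{\beta\gamma})^2}=d\cdot\frac{1-\sqrt{\beta\gamma}}{1+\sqrt{\beta\gamma}}=\frac{d}{\bar\Delta}.
\]
If $d\le(1-\delta)\bar\Delta$ then $g\le d/\bar\Delta\le 1-\delta$ on all of $(0,\infty)$, so $d$-uniqueness with gap $\delta$ holds for every $\lambda$: this is Case 1. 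If $d>(1-\delta)\bar\Delta$ then $d/\bar\Delta>1-\delta$, and by the unimodal shape $\set{x>0:g(x)\le 1-\delta}=(0,x_1]\cup[x_2,\infty)$ where $x_1<x_\ast<x_2$ solve $g(x)=1-\delta$. Clearing denominators and using $d(1-\beta\gamma)-(1-\delta)(1+\beta\gamma)=\zeta_\delta(d)$, this equation is the quadratic $(1-\delta)\beta x^2-\zeta_\delta(d)\,x+(1-\delta)\gamma=0$, whose roots are precisely $x_{1,\delta}(d)$ and $x_{2,\delta}(d)$. One checks that $d\ge(1-\delta)\bar\Delta$ is equivalent to $\zeta_\delta(d)\ge 2(1-\delta)\sqrt{\beta\gamma}$, which forces both the discriminant $\zeta_\delta(d)^2-4(1-\delta)^2\beta\gamma\ge 0$ and $\zeta_\delta(d)>0$, so the two roots are real and positive (their product is $\gamma/\beta$, their sum is $\zeta_\delta(d)/((1-\delta)\beta)$). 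Applying the increasing bijection $h$ and using $h(x_{i,\delta}(d))=\lambda_{i,\delta}(d)$, the set of good $\lambda$ becomes $[0,\lambda_{1,\delta}(d)]\cup[\lambda_{2,\delta}(d),\infty)$; from $x_1x_2=\gamma/\beta$ a short computation gives $\lambda_{1,\delta}(d)\lambda_{2,\delta}(d)=(\gamma/\beta)^{d+1}$ and $\lambda_{1,\delta}(d)\le(\gamma/\beta)^{(d+1)/2}\le\lambda_{2,\delta}(d)$. (If $d$-uniqueness at the threshold is defined by the strict inequality $\abs{F_d'(\hat x_d)}<1$, the endpoints $\lambda_{i,\delta}(d)$ are excluded, recovering the half-open intervals as stated; at the exact boundary $d=(1-\delta)\bar\Delta$ the two cases agree, with Case 1 giving the correct conclusion.)

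I do not expect any deep difficulty: the statement is essentially a calculus exercise. The step most in need of care, and the one that makes the whole translation work, is verifying that $\lambda\mapsto\hat x_d(\lambda)$ is a genuine (strictly monotone, onto) bijection of $(0,\infty)$ with the explicit inverse $h$, since this is what converts the clean fixpoint-variable characterization $g(\hat x_d)\le 1-\delta$ into an interval condition on $\lambda$. A secondary bookkeeping point is showing that $d\ge(1-\delta)\bar\Delta$ is exactly the condition under which the quadratic $(1-\delta)\beta x^2-\zeta_\delta(d)x+(1-\delta)\gamma$ has two positive real roots.
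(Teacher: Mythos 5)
Your proof is correct, and since the paper cites this result from \textcite{chen2021rapid} without reproving it, there is no in-paper argument to compare against; your derivation is the standard one (going back to \cite{LLY14,sinclair2014approximation}). The essential steps all check out: $|F_d'(\hat x_d)| = g(\hat x_d)$ at the fixpoint, the fixpoint map $\lambda \mapsto \hat x_d(\lambda)$ is the increasing bijection with inverse $h$, the unimodal function $g$ has maximum $d/\bar\Delta$ at $x_\ast = \sqrt{\gamma/\beta}$, and the level-set equation $g(x) = 1-\delta$ reduces to the stated quadratic whose roots multiply to $\gamma/\beta$. One minor slip to fix: in your parenthetical remark about the endpoints, the strict inequality should be $|F_d'(\hat x_d)| < 1-\delta$, not $< 1$. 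You correctly flag that the paper's preliminary definition uses $\le 1-\delta$ (which would give closed intervals $[0,\lambda_{1,\delta}(d)]\cup[\lambda_{2,\delta}(d),\infty)$) while the proposition states half-open intervals; this is an inconsistency inherited from the source, not an error in your argument.
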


% Following \cite{chen2021rapid}, for a two-spin system with parameters $(\beta, \gamma, \lambda)$ satisfying $0 < \beta \leq \gamma$, we define the good direction $\chi \in \set*{\pm 1}^V$ as follows:
% \begin{equation} \label{eq:good direction}
%     \chi_v = \begin{cases} +1 \text{ if } \lambda \leq (\frac{\gamma}{\beta})^{\Delta_v/2}\\
%     -1 \text{ otherwise}
% \end{cases}
% \end{equation}
% For $(\beta,\gamma, \lambda)$ in the uniqueness region, we flip $+$ and $-$ according to the good direction $\chi.$ Below, we assume that this has been done. Note that in the Ising model, this simply corresponds to the reduction to the case $\lambda \le 1$ which we made earlier.

%The system is unique with gap $\delta$ iff $\lambda_v \in [0, \lambda_{1,\delta}(d_v))  $ for all $v\in V.$ %we can ensure that $\lambda_v \leq $

The next result is Claim 25 of \cite{LLY14} when $\delta = 0$; we repeat the proof in \cref{sub:appendix-ising} to get the desired
dependence on the gap parameter $\delta$.
\begin{lemma}[{\cite[Claim 25]{LLY14}}] \label{prop:lambda bound}
For $ d\geq (1-\delta)\bar{\Delta}$ and under the setup of case 2 of \cref{prop:uniqueness}, we have
\begin{equation}
    \frac{(1-\delta) \gamma^{d+1}}{\zeta_{\delta} (d) }\leq \lambda_{1,\delta}(d)\leq  \frac{2(1-\delta) \gamma^{d+1}}{\zeta_{\delta } (d)} \parens*{\frac{d+1-\delta}{d-1+\delta}}^d \leq \frac{9 \gamma^{d+1}}{\sqrt{\beta \gamma} }
    \end{equation}
    In particular, if $\beta \gamma\geq \frac{\Delta}{\Delta + \alpha}$ with $\Delta \geq d+1$ then
    \[ (1-\delta) C^{-1} \leq \beta^{d+1}\lambda_{1,\delta}(d) \leq \frac{\lambda_{1,\delta}(d)}{\gamma^{d+1}} \leq  C \]
    for some $ C > 0.$
    %In particular, for
    % In particular, if $\beta \gamma \geq \max\set*{1-\frac{B_1}{d}, B_2} $ for some constants $B_1, B_2 > 0$, then %the bound can be simplified into
    % \[ C^{-1}\leq \lambda_{1,\delta}(d)/\gamma^{d+1} \leq C\]
    % for some constant $C = C(B_1,B_2) > 0.$
\end{lemma}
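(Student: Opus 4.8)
The plan is to work directly from the explicit formula $\lambda_{1,\delta}(d)=x_1\left(\frac{x_1+\gamma}{\beta x_1+1}\right)^{d}$ with $x_1:=x_{1,\delta}(d)$, reducing every displayed inequality to a single sharp two-sided bound on $x_1$. Abbreviate $\zeta:=\zeta_\delta(d)$ and $s:=\sqrt{\beta\gamma}$. First I would record the algebraic preliminaries: $x_1$ and $x_2:=x_{2,\delta}(d)$ are the two roots of $(1-\delta)\beta\,x^{2}-\zeta\,x+(1-\delta)\gamma=0$, so $x_1x_2=\gamma/\beta$ and $x_1+x_2=\zeta/((1-\delta)\beta)$; moreover the case-2 hypothesis $d\ge(1-\delta)\bar\Delta$ of \cref{prop:uniqueness}, i.e.\ $d(1-s)\ge(1-\delta)(1+s)$, multiplied through by $1+s$ and expanded, gives $\zeta\ge 2(1-\delta)s>0$, so in particular the discriminant $\zeta^{2}-4(1-\delta)^{2}\beta\gamma$ is nonnegative. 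Rationalizing the numerator of $x_1=\frac{\zeta-\sqrt{\zeta^{2}-4(1-\delta)^{2}\beta\gamma}}{2(1-\delta)\beta}$ gives $x_1=\frac{2(1-\delta)\gamma}{\zeta+\sqrt{\zeta^{2}-4(1-\delta)^{2}\beta\gamma}}$, and since the square root lies in $[0,\zeta]$ we obtain the sandwich $\frac{(1-\delta)\gamma}{\zeta}\le x_1\le\frac{2(1-\delta)\gamma}{\zeta}$.

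For the lower bound in the displayed chain, observe that $\frac{x_1+\gamma}{\beta x_1+1}\ge\gamma$ is, after clearing the positive denominator, the inequality $x_1\ge\beta\gamma\,x_1$, i.e.\ $\beta\gamma\le1$, which holds by antiferromagneticity; hence $\lambda_{1,\delta}(d)\ge x_1\gamma^{d}\ge\frac{(1-\delta)\gamma^{d+1}}{\zeta}$. For the upper bound, the crucial observation is that the target estimate $\frac{x_1+\gamma}{\beta x_1+1}\le\gamma\cdot\frac{d+1-\delta}{d-1+\delta}$ is, after clearing denominators and collecting terms, equivalent to $x_1\left[(d-1+\delta)-\beta\gamma(d+1-\delta)\right]\le 2(1-\delta)\gamma$; a one-line expansion gives $(d-1+\delta)-\beta\gamma(d+1-\delta)=\zeta$, so this is precisely the upper bound on $x_1$ already established. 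Raising to the $d$-th power and multiplying by $x_1\le\frac{2(1-\delta)\gamma}{\zeta}$ yields $\lambda_{1,\delta}(d)\le\frac{2(1-\delta)\gamma^{d+1}}{\zeta}\left(\frac{d+1-\delta}{d-1+\delta}\right)^{d}$. For the final inequality I would factor the right side as $\frac{\gamma^{d+1}}{s}\cdot\frac{2(1-\delta)s}{\zeta}\cdot\left(\frac{d+1-\delta}{d-1+\delta}\right)^{d}$ and use $\frac{2(1-\delta)s}{\zeta}\le1$ (from $\zeta\ge2(1-\delta)s$) together with the numerical fact $\left(\frac{d+1-\delta}{d-1+\delta}\right)^{d}\le\left(\frac{d+1}{d-1}\right)^{d}\le 9$ for all integers $d\ge2$ (direct check; the maximum over such $d$ is attained at $d=2$). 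The boundary case $d=1$, which by $d\ge(1-\delta)\bar\Delta$ can only occur when $s\le\delta/(2-\delta)$, is handled by a direct computation and at worst replaces $9$ by a $\delta$-dependent constant, which is harmless for the applications.

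For the ``in particular'' clause I would rewrite $\beta^{d+1}\lambda_{1,\delta}(d)=(\beta x_1)\left(\frac{\beta x_1+\beta\gamma}{\beta x_1+1}\right)^{d}$ and use the elementary bounds $\beta\gamma\le\frac{\beta x_1+\beta\gamma}{\beta x_1+1}\le1$ (each equivalent to $\beta\gamma\le1$). The upper bound $\frac{\lambda_{1,\delta}(d)}{\gamma^{d+1}}\le\frac{9}{s}\le 9\sqrt{(d+1+\alpha)/(d+1)}\le 9\sqrt{1+\alpha/2}$ follows from the main chain together with $\beta\gamma\ge\frac{\Delta}{\Delta+\alpha}\ge\frac{d+1}{d+1+\alpha}$ (using $\Delta\ge d+1$), while $\beta^{d+1}\lambda_{1,\delta}(d)\le\lambda_{1,\delta}(d)/\gamma^{d+1}$ is just $(\beta\gamma)^{d+1}\le1$. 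For the lower bound, $\beta^{d+1}\lambda_{1,\delta}(d)\ge(\beta x_1)(\beta\gamma)^{d}\ge\frac{(1-\delta)(\beta\gamma)^{d+1}}{\zeta}$, and here $(\beta\gamma)^{d+1}\ge(1+\alpha/(d+1))^{-(d+1)}\ge e^{-\alpha}$ while $\zeta=d(1-\beta\gamma)-(1-\delta)(1+\beta\gamma)\le d(1-\beta\gamma)\le\frac{d\alpha}{d+1+\alpha}\le\alpha$, so $\beta^{d+1}\lambda_{1,\delta}(d)\ge(1-\delta)(\alpha e^{\alpha})^{-1}$; taking $C=\max\{\alpha e^{\alpha},\,9\sqrt{1+\alpha/2}\}$, which depends only on $\alpha$, completes the proof.

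I do not expect a serious obstacle: the argument is essentially bookkeeping around the quadratic that defines $x_{1,\delta}(d)$. The two points needing a little care are (i) arranging the algebra in the third step so that the upper bound on $\lambda_{1,\delta}(d)$ collapses to exactly the already-proved upper bound on $x_1$ rather than a messier inequality, and (ii) the $d=1$ corner case in the ``$\le 9$'' step, which has to be acknowledged and disposed of by hand.
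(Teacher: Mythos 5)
Your proof is correct and follows essentially the same route as the paper's: rationalize the numerator of $x_{1,\delta}(d)$ to obtain $x_1 = \frac{2(1-\delta)\gamma}{\zeta+\sqrt{\zeta^2-4(1-\delta)^2\beta\gamma}}$ and hence $\frac{(1-\delta)\gamma}{\zeta}\le x_1\le\frac{2(1-\delta)\gamma}{\zeta}$; derive $\zeta\ge 2(1-\delta)\sqrt{\beta\gamma}$ from the case-$2$ hypothesis $d\ge(1-\delta)\bar\Delta$; and assemble. The one presentational difference is in the middle step: the paper observes that $f(x)=\left(\frac{x+\gamma}{\beta x+1}\right)^d$ is increasing on $[0,\infty)$ and evaluates $f$ at the right endpoint $x=\frac{2(1-\delta)\gamma}{\zeta}$, using the identity $\frac{x+\gamma}{\beta x+1}\big|_{x=2(1-\delta)\gamma/\zeta}=\gamma\cdot\frac{d+1-\delta}{d-1+\delta}$; you instead clear denominators in $\frac{x_1+\gamma}{\beta x_1+1}\le\gamma\frac{d+1-\delta}{d-1+\delta}$ and notice the coefficient of $x_1$ is exactly $\zeta$, so the inequality collapses to your established bound on $x_1$. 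This is the same algebra carried out in the opposite direction. Your treatment of the ``in particular'' clause (writing $\beta^{d+1}\lambda_{1,\delta}(d)=(\beta x_1)\left(\frac{\beta x_1+\beta\gamma}{\beta x_1+1}\right)^d$ and bounding $(\beta\gamma)^{d+1}\ge e^{-\alpha}$, $\zeta\le\alpha$, $\frac{1}{\sqrt{\beta\gamma}}\le\sqrt{1+\alpha/2}$) matches the paper up to the inessential choice of constant.

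One substantive thing you noticed that the paper glosses over: the last link in the displayed chain requires $\left(\frac{d+1-\delta}{d-1+\delta}\right)^d\le 9$, which the paper asserts without comment and which is genuinely false when $d=1$ and $\delta$ is small (the left side is $\frac{2-\delta}{\delta}$). Your calculation that $\left(\frac{d+1}{d-1}\right)^d\le 9$ for integer $d\ge 2$ with the maximum at $d=2$ is correct, and your observation that the case-$2$ constraint at $d=1$ forces $\sqrt{\beta\gamma}\le\frac{\delta}{2-\delta}$ is the right diagnosis. As you say, under the additional hypothesis of the ``in particular'' clause the $d=1$ case can only arise when $\alpha$ is not too small, so the final constant is $\alpha$-dependent anyway and nothing downstream breaks; but strictly speaking the leftmost display, as stated unconditionally for all $d\ge(1-\delta)\bar\Delta$, needs the caveat you supplied.
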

% \June{todo: move to appendix}
% \begin{proof}

Based on these facts, we prove some estimates on the conditional marginals of a two-spin model in the uniqueness regime. The proof of the following proposition is deferred to \cref{sub:appendix-ising}.
\begin{proposition} \label{prop:2 spin soft constraint}
%\June{todo: add argument for hardconstraint-$\beta = 0, \gamma \neq 1.$ should always get same res as hardcore}
 Let $G = (V,E)$ be a graph on $n$ vertices with maximum degree at most $\Delta \ge 3$. Let $\mu$ denote the distribution on $\set{\pm 1}^{V}$ corresponding to the $(\beta, \gamma,\lambda)$-model on $G$ with $\beta \gamma \le 1$ which is up-to-$\Delta$ unique with gap $\delta \ge 0$. If $\beta > 0$, let $\tau = \frac{1}{\beta \gamma}.$
 Then, for any $v\in V$, $\Lambda \subseteq V\setminus \set*{v}$, and valid partial configuration $\sigma_{\Lambda} \in \set{\pm 1}^{\Lambda}$
 \begin{enumerate}
     \item If $\beta > 0$, then for any $\theta = (\theta_v)_{v\in V} \in (0,1]^V$
     \[ \theta_v \lambda \beta^{\Delta_v} \leq  \frac{\P_{\theta \ast \mu}{\sigma_v = + \mid \sigma_{\Lambda}}}{\P_{\theta \ast \mu}{\sigma_v = - \mid \sigma_{\Lambda}}} \le \frac{\theta_v \lambda}{\gamma^{\Delta_v}}\leq \theta_v \lambda \beta^{\Delta_v} \tau^{\Delta}\]
     %If $\beta = 0:$ \Fred{this should only be inequalities, analogous to previous case? also it should be zero if there is a neighbor which is $+$} \June{yeah, sorry, should have add the neighborhood thingy, similar to hardcore}
     %\[\frac{\P_{\mu}{\sigma_v = + \mid \sigma_{\Lambda}}}{\P_{\mu}{\sigma_v = - \mid \sigma_{\Lambda}}} = \frac{\lambda_v}{\gamma^{\Delta_v}}\]
     \item Fix an arbitrary $u\in V$, neighbor $v$ of $u$ and assignment $\sigma^- \in \set{\pm 1}^{V}$ with $\sigma_v = -$. Let $\sigma^+$ denote the same assignment with $\sigma_v$ flipped to  $+$. %set $\sigma^- \cup \set*{v}. $
    If $\beta > 0$, then for any $\theta = (\theta_v)_{v\in V} \in (0,1]^V$
     \[\frac{1}{\tau} \leq \frac{\P_{\theta \ast \mu}{\sigma_u = + \mid \sigma^+_{\sim u}} }{\P_{\theta \ast \mu}{\sigma_u = + \mid \sigma^-_{\sim u} }} \leq 1 \quad \text{ and } \quad 1\leq \frac{\P_{\mu}{\sigma_u = - \mid \sigma^+_{\sim u}} }{\P_{\mu}{\sigma_u = - \mid \sigma^-_{\sim u} }} \leq \tau  \]
     %If $\beta  = 0:$
     %\[\frac{\P_{\mu}{\sigma_u = + \mid  \sigma^+_{\sim u}} }{\P_{\mu}{\sigma_u = + \mid \sigma^-_{\sim u} }} = 0 \quad \text{ and }\quad \frac{\P_{\mu}{\sigma_u = - \mid  \sigma^+_{\sim u}} }{\P_{\mu}{\sigma_u = - \mid \sigma^-_{\sim u} }}  = \frac{1}{1 + \lambda_v/\gamma^{\Delta_v}} \]
    \item Provided $\delta > 0,$ $\mu$ is $(O(1/\delta), \delta/2)$-completely spectrally independent.
 \end{enumerate}
 \end{proposition}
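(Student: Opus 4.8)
Parts 1 and 2 are elementary consequences of the explicit conditional law \cref{eqn:conditional-2spin}, and Part 3 is essentially an appeal to \cite{chen2021rapid}. For Part 1, I would first fix a full neighbor configuration $\sigma_{\sim v}$ and set $s := s_v(\sigma_{\sim v})$. Since tilting by $\theta$ replaces the external field $\lambda$ at site $v$ by $\theta_v \lambda$ (directly from \cref{def:measure-tilt}), \cref{eqn:conditional-2spin} gives
\[
\frac{\P_{\theta \ast \mu}{\sigma_v = + \mid \sigma_{\sim v}}}{\P_{\theta \ast \mu}{\sigma_v = - \mid \sigma_{\sim v}}} = \frac{\theta_v \lambda \beta^{\Delta_v - s}}{\gamma^{s}} = \theta_v \lambda \beta^{\Delta_v} (\beta\gamma)^{-s}.
\]
Because $\beta\gamma \le 1$, the right-hand side is nondecreasing in $s \in \set{0,\dots,\Delta_v}$, so it lies between its value at $s=0$, which is $\theta_v \lambda \beta^{\Delta_v}$, and its value at $s = \Delta_v$, which is $\theta_v \lambda / \gamma^{\Delta_v} = \theta_v \lambda \beta^{\Delta_v} \tau^{\Delta_v} \le \theta_v \lambda \beta^{\Delta_v} \tau^{\Delta}$ (using $\tau \ge 1$ and $\Delta_v \le \Delta$). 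To pass from conditioning on a full configuration to conditioning on a partial configuration $\sigma_\Lambda$, I would write $\P_{\theta\ast\mu}{\sigma_v = \pm \mid \sigma_\Lambda}$ as a mixture over the configurations $w$ of $V \setminus (\Lambda \cup \set{v})$, with mixing weights $p_w = \P_{\theta\ast\mu}{\sigma_{V\setminus(\Lambda\cup\set{v})} = w \mid \sigma_\Lambda}$ that do not depend on $\sigma_v$, and then invoke the mediant inequality: a ratio $\sum_w a_w p_w / \sum_w b_w p_w$ with $p_w \ge 0$ lies between $\min_w a_w/b_w$ and $\max_w a_w/b_w$.

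For Part 2, the key observation is that flipping $\sigma_v$ from $-$ to $+$ (with $v \sim u$) decreases the minus-neighbor count $s_u$ of $u$ by exactly one. Abbreviating $A := \theta_u\lambda\beta^{\Delta_u - s_u}$ and $B := \gamma^{s_u}$, where $s_u$ is the count under $\sigma^-$, \cref{eqn:conditional-2spin} gives $\P_{\theta\ast\mu}{\sigma_u = + \mid \sigma^-_{\sim u}} = A/(A+B)$, and substituting $s_u \mapsto s_u - 1$ and clearing $\gamma$ gives $\P_{\theta\ast\mu}{\sigma_u = + \mid \sigma^+_{\sim u}} = A\beta\gamma/(B + A\beta\gamma)$. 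Hence the first ratio in Part 2 equals $\beta\gamma(A+B)/(B + A\beta\gamma)$, and the two identities $B + A\beta\gamma - \beta\gamma(A+B) = B(1-\beta\gamma) \ge 0$ and $(A+B) - (B + A\beta\gamma) = A(1-\beta\gamma) \ge 0$ place it in $[\beta\gamma, 1] = [1/\tau, 1]$. The second displayed inequality is identical: the ratio for $\sigma_u = -$ works out to $(A+B)/(B+A\beta\gamma)$ (specializing to $\theta \equiv 1$), which the same two identities place in $[1,\tau]$.

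For Part 3, I would follow the strategy already used for the hardcore model in \cref{prop:hardcore}: first argue that for every external field $\vec\theta \in (0, 1+\delta/2]^V$ the tilted model $\vec\theta \ast \mu$, as well as each of its pinnings, remains up-to-$\Delta$ unique with gap at least $\delta/2$ — this is the two-spin analogue of the hardcore computation $\lambda_v(1+\delta/2) \le \lambda_\Delta(1-\delta/2)$, and quantitatively it amounts to checking, via \cref{prop:uniqueness} and \cref{prop:lambda bound}, that the perturbation $\lambda \mapsto (1+\delta/2)\lambda$ does not cross the critical values $\lambda_{1,\delta}(d), \lambda_{2,\delta}(d)$ for any $1 \le d \le \Delta-1$. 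Then I would invoke the spectral independence bound of \cite{chen2021rapid} (the two-spin version of \cite[Lemma~8.4]{chen2021rapid}), which converts up-to-$\Delta$ uniqueness with gap $\delta/2$ for all conditional distributions into an $O(1/\delta)$ bound on $\lambda_{\max}$ of the correlation matrices of all tilted pinnings; this is exactly the assertion that $\mu$ is $(O(1/\delta),\delta/2)$-completely spectrally independent.

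The only step that is not purely mechanical is the robustness claim inside Part 3, namely that tilting by a factor up to $1 + \delta/2$ keeps the model inside the up-to-$\Delta$ uniqueness region with a gap still $\Omega(\delta)$; this is where the explicit description of the critical fugacities from \cref{prop:uniqueness} and the two-sided estimates of \cref{prop:lambda bound} are needed. Everything else reduces to the elementary manipulations of \cref{eqn:conditional-2spin} carried out above.
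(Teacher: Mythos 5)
Your proposal is correct and follows the same route as the paper: Parts 1 and 2 are the same direct manipulations of \cref{eqn:conditional-2spin} (your $A,B$ bookkeeping in Part 2 matches the paper's displayed computation once common factors of $\gamma$ are cleared), and Part 3 deduces complete spectral independence from \cite{chen2021rapid} exactly as the paper does, after checking that tilting by $(0,1+\delta/2]^V$ keeps the model $\delta/2$-unique. If anything you are more explicit than the paper about the mediant-inequality reduction from partial configurations $\sigma_\Lambda$ to full configurations $\sigma_{\sim v}$ in Part 1 (the paper merely asserts this reduction), while in Part 3 you defer to \cref{prop:uniqueness} and \cref{prop:lambda bound} the explicit verification of $(1+\delta/2)\lambda_{1,\delta}(d)\le\lambda_{1,\delta/2}(d)$, which the paper carries out in full.
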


 \subsubsection{Analysis of mixing times}

We begin with the proof of \cref{thm:general-ising-ss}. 
%For an integer $m \geq 1$, we define the $m$-approximate field dynamics $\widetilde \fieldMarkov_{\theta,m}$ in the same way as before. 

%let $(\fieldMarkov_{\theta})_{m}$ denote the `$m$-approximate' version of $\fieldMarkov_{\theta}$ (recall \cref{def:field-dynamics}), where instead of resampling exactly from the distribution $(\theta \ast \mu)^{\sigma_{[n]\setminus S}}$, we instead return the configuration obtained by running the Glauber dynamics with respect to $(\theta \ast \mu)^{\sigma_{[n]\setminus S}}$ for $m$ steps (say starting from the empty configuration). 

% Let $P^{\operatorname{SS}}$ denote the Markov
% operator corresponding to a single pass of the systematic scan chain for $\mu.$ For $\theta \leq 1$, let $P = P_{m,\theta}$ be the Markov chain operator $P^{\operatorname{SS}}(\fieldMarkov_{\theta})_{m}$.

As before, the utility of the systematic scan is that it always outputs a distribution which is $O_{\alpha}(1)$-completely bounded with respect to the Ising measure. 
\begin{proposition}\label{lem:systematic-scan-ising}
For arbitrary $\alpha \ge 0$, there exist $C, C' > 0$ depending only on $\alpha$  such that the following is true. 
Let $G = (V,E)$ be a graph on $n$ vertices with maximum degree at most $\Delta \ge 3$. 
Let $\mu$ be the antiferromagnetic Ising model on $G$ with up-to-$\Delta$ unique parameters $(\beta,\lambda)$, where
\[ \beta \in \bracks*{\frac{\Delta}{\Delta + \alpha}, 1}. \]
Let $\nu_0$ be an arbitrary probability measure on $\set{\pm 1}^{V}$. Let $P^{\operatorname{SS}}$ denote the Markov
operator corresponding to a single pass of the systematic scan chain for $\mu$,
and let $\nu := \nu_0 P^{\operatorname{SS}}$. Define
\[ \gamma_{v} := \max_{\sigma^-} \frac{\nu( \sigma^+)}{\nu(\sigma^-)}, \]
where the maximum ranges over all $\sigma^- \in \set{\pm 1}^V$ with $\sigma^-_v = -1$, and where $\sigma^+$ denotes $\sigma^-$ with the value of $v$ flipped to $+1$. %the set $\sigma^{-} \cup \set{v}$. % set to $+1$ ($-1$ respectively) and $V \setminus v$ to $\sigma.$

Then $\lambda/C \le \gamma_v \le C \lambda$ for all $v \in V$. As a consequence, $\nu$ is $C'$-completely bounded with respect to $\mu$, i.e.
$\nu \in \mathcal{V}^{c}(C',\mu)$.
\end{proposition}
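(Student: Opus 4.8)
The plan is to mirror the argument for the hardcore model in \cref{lem:systematic-scan-hardcore}, feeding in the two‑spin conditional estimates of \cref{prop:2 spin soft constraint} (specialized to the Ising case $\gamma=\beta$, so that $\tau=\beta^{-2}$) in place of the hardcore estimates. First I would reduce to a pointwise statement: since $\nu(\sigma)=\sum_{x}\nu_0(x)P^{\operatorname{SS}}(x,\sigma)$, by linearity of expectation it suffices to show that for every configuration $x$ and every admissible pair $(\sigma^-,\sigma^+)$ (i.e.\ $\sigma^-_v=-1$ and $\sigma^+$ obtained by flipping $v$ to $+1$) one has
\begin{equation*}
\frac{\lambda}{C}\,P^{\operatorname{SS}}(x,\sigma^-)\ \le\ P^{\operatorname{SS}}(x,\sigma^+)\ \le\ C\lambda\,P^{\operatorname{SS}}(x,\sigma^-).
\end{equation*}
Summing over $x$ then gives $\gamma_v\le C\lambda$, and since $\nu$ is a probability measure some $\sigma^-$ has $\nu(\sigma^-)>0$, whence $\gamma_v\ge\lambda/C$ as well. (For the Ising model all conditional probabilities are strictly positive, so there are no feasibility issues and all the ratios below are well defined.)

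The key computation is the transition ratio. Fix the scan order $w_1,\dots,w_n$ and let $v=w_j$. Writing $P^{\operatorname{SS}}(x,\sigma)$ as the product of the conditional probabilities resampled along the scan, every factor at a site that is neither $v$ nor a neighbor of $v$ is identical for $\sigma^+$ and $\sigma^-$, because these two configurations agree off $v$ and the Ising conditional law \cref{eqn:conditional-ising} at a site depends only on its neighbors; likewise the factors at neighbors of $v$ scanned \emph{before} $v$ cancel, since at those steps site $v$ still carries its value from $x$ in both runs. Hence only the factor at $v$ and the factors at the neighbors $u_1,\dots,u_k$ of $v$ ($k\le\Delta$) that are scanned after $v$ survive, giving
\begin{equation*}
\frac{P^{\operatorname{SS}}(x,\sigma^+)}{P^{\operatorname{SS}}(x,\sigma^-)}
=\frac{\P_{\mu}{\sigma_v=+\mid \eta_{\sim v}}}{\P_{\mu}{\sigma_v=-\mid \eta_{\sim v}}}\ \prod_{i=1}^{k}\frac{\P_{\mu}{\sigma_{u_i}=c_i\mid \eta^{+}_{\sim u_i}}}{\P_{\mu}{\sigma_{u_i}=c_i\mid \eta^{-}_{\sim u_i}}},
\end{equation*}
where $\eta$ is the common configuration seen when $v$ is resampled, $c_i:=\sigma^+_{u_i}=\sigma^-_{u_i}$, and $\eta^{\pm}$ are the configurations seen when $u_i$ is resampled, which agree except that $v$ carries spin $\pm$. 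By the first conclusion of \cref{prop:2 spin soft constraint} (with $\theta\equiv\mathbf 1$, $\gamma=\beta$) the first factor lies in $[\lambda\beta^{\Delta_v},\lambda\beta^{-\Delta_v}]$, and by the second conclusion (with $\tau=\beta^{-2}$) each of the remaining factors lies in $[\beta^{2},\beta^{-2}]$. Using $\beta\in[\Delta/(\Delta+\alpha),1]$ together with $1+t\le e^{t}$, we get $\beta^{-\Delta}\le(1+\alpha/\Delta)^{\Delta}\le e^{\alpha}$ and $\beta^{-2\Delta}\le e^{2\alpha}$, so the whole ratio lies in $[\lambda e^{-3\alpha},\lambda e^{3\alpha}]$; thus $C:=e^{3\alpha}$ works.

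For the ``consequently'' part I would argue exactly as in \cref{lem:systematic-scan-hardcore}. Fix $v$, a partial assignment $\sigma_{\Lambda}$ on some $\Lambda\subseteq V\setminus\set{v}$, and an external field $\theta\in(0,1]^V$; write $\lambda'_v:=\theta_v\lambda$. Tilting multiplies the local ratio $\nu(\sigma'^+)/\nu(\sigma'^-)$ by $\theta_v$, so the bound $\gamma_v\le C\lambda$ yields $(\theta\ast\nu)(\sigma'^+)\le C\lambda'_v\,(\theta\ast\nu)(\sigma'^-)$ for every $\sigma'$ with $\sigma'_v=-1$; summing over completions consistent with $\sigma_{\Lambda}$ gives
\begin{equation*}
\frac{(\theta\ast\nu)^{\sigma_{\Lambda}}(v)}{1-(\theta\ast\nu)^{\sigma_{\Lambda}}(v)}\ \le\ C\lambda'_v,
\qquad\text{while}\qquad
\frac{(\theta\ast\mu)^{\sigma_{\Lambda}}(v)}{1-(\theta\ast\mu)^{\sigma_{\Lambda}}(v)}\ \ge\ \lambda'_v\beta^{\Delta_v}\ \ge\ \lambda'_v e^{-\alpha}
\end{equation*}
by the first conclusion of \cref{prop:2 spin soft constraint} applied to $\theta\ast\mu$. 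Dividing the two displays,
\begin{equation*}
\frac{(\theta\ast\nu)^{\sigma_{\Lambda}}(v)\,\bigl(1-(\theta\ast\mu)^{\sigma_{\Lambda}}(v)\bigr)}{(\theta\ast\mu)^{\sigma_{\Lambda}}(v)\,\bigl(1-(\theta\ast\nu)^{\sigma_{\Lambda}}(v)\bigr)}\ \le\ \frac{C\lambda'_v}{\lambda'_v e^{-\alpha}}\ =\ Ce^{\alpha}\ =:\ C',
\end{equation*}
which is exactly the assertion that $(\theta\ast\nu)^{\sigma_{\Lambda}}$ is $C'$-bounded with respect to $(\theta\ast\mu)^{\sigma_{\Lambda}}$; hence $\nu\in\mathcal V^{c}(C',\mu)$ with $C'=e^{4\alpha}$. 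The main obstacle is the cancellation bookkeeping in the transition‑ratio step — i.e.\ verifying that, when $P^{\operatorname{SS}}(x,\sigma^+)$ is compared with $P^{\operatorname{SS}}(x,\sigma^-)$, every factor except the one at $v$ and those at neighbors of $v$ scanned after $v$ drops out; this is the same subtlety as in the hardcore case and crucially exploits that the scan order is fixed. Everything else is a routine substitution of the two‑spin marginal estimates and elementary bounds on $\beta^{\pm\Delta}$.
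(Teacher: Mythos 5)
Your proof is correct and follows essentially the same route as the paper: reduce to a pointwise bound on the one-pass transition ratio $P^{\operatorname{SS}}(x,\sigma^+)/P^{\operatorname{SS}}(x,\sigma^-)$, identify the surviving factors (site $v$ plus the neighbors of $v$ scanned after $v$), bound these via the two conclusions of \cref{prop:2 spin soft constraint} with $\tau=\beta^{-2}$, and then derive complete boundedness by dividing the resulting tilted odds-ratio bound for $\nu$ by the lower bound on the odds-ratio for $\mu$. Two small remarks that are to your credit: you correctly transcribe the neighbor-factor interval as $[\beta^2,\beta^{-2}]$, whereas the paper's proof writes the reversed (and hence vacuous) interval $[\beta^{-2},1]$ and $[1,\beta^2]$; and you correctly state that the transition ratio lies in $[\lambda e^{-O(\alpha)},\lambda e^{O(\alpha)}]$, whereas the paper's displayed range drops the leading factor of $\lambda$. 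You are also slightly more explicit than the paper on two points that the paper leaves implicit — the argument that the factors at non-neighboring sites and at neighbors scanned before $v$ cancel, and the derivation of the lower bound $\gamma_v\ge\lambda/C$ — but these match what the paper intends by ``comparing the probability of trajectories as in the proof of \cref{lem:systematic-scan-hardcore}.''
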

\begin{proof}
From \cref{prop:2 spin soft constraint}, we have that
     \[ \lambda \beta^{\Delta_v} \leq  \frac{\P_{\mu}{\sigma_v = + \mid \sigma_{\Lambda}}}{\P_{\mu}{\sigma_v = - \mid \sigma_{\Lambda}}}\leq \lambda \beta^{\Delta_v}/\beta^{2\Delta} \le e^{2\alpha} \lambda \beta^{\Delta_v} \]
     and also that for every node $v$ with neighbor node $u$ and every partial assignment $\sigma_{\sim u}^+,\sigma_{\sim u}^- \in \set{ \pm 1}^{V \setminus \set{u}}$ differing only in the assignment at site $v$,
     \[\frac{1}{\beta^2} \leq \frac{\P_{\mu}{\sigma_u = + \mid \sigma^+_{\sim u}} }{\P_{\mu}{\sigma_u = + \mid \sigma^-_{\sim u} }} \leq 1 \quad \text{ and } \quad 1\leq \frac{\P_{\mu}{\sigma_u = - \mid \sigma^+_{\sim u}} }{\P_{\mu}{\sigma_u = - \mid \sigma^-_{\sim u} }} \leq \beta^2.  \]
    Using this, and comparing the probability of trajectories as in the proof of \cref{lem:systematic-scan-hardcore}, we get for any initial state $x$,
    \[P^{\operatorname{SS}}(x \to \sigma^+)/P^{\operatorname{SS}}(x \to \sigma^-) \in [e^{-O(\alpha)},e^{O(\alpha)}],\] 
    which gives the desired bound on $\gamma_v$. 
    
    Given this and \cref{prop:2 spin soft constraint}, the proof of $C'$-complete boundedness is similar to \cref{lem:systematic-scan-hardcore}. In a bit more detail, as in the proof of \cref{prop:2 spin soft constraint}, the bound on $\gamma_{v}$ obtained above shows that for any valid partial configuration $\sigma_{\Lambda}$ on $\Lambda \subseteq V$ and for any $\theta \in (0,1]^{V}$,
    \[\frac{\P_{\theta \ast \nu}{\sigma_v = +1 \mid \sigma_{\Lambda}}}{\P_{\theta \ast \nu}{\sigma_v = -1 \mid \sigma_{\Lambda}}} \leq C_{\alpha}\theta_{v}\lambda.\]
    Moreover, from the first conclusion of \cref{prop:2 spin soft constraint} and the constraint on $\beta$, we have that for any valid partial configuration $\sigma_{\Lambda}$ on $\Lambda \subseteq V$ and for any $\theta \in (0,1]^{V}$,
     \[\frac{\P_{\theta \ast \mu}{\sigma_v = -1 \mid \sigma_{\Lambda}}}{\P_{\theta \ast \mu}{\sigma_v = +1 \mid \sigma_{\Lambda}}} \leq \frac{\tilde{C}_{\alpha}}{\theta_{v}\lambda}.\]
     
     Combining this with the previous estimate, we see that $\nu$ is $C' = C_{\alpha}\tilde{C}_{\alpha}$-completely bounded with respect to $\mu$, as desired. 
    \end{proof}
From this, we get rapid mixing of the field dynamics interleaved with systematic scan. 

\begin{proof}[Proof of \cref{thm:general-ising-ss}]
The proof is the same as the proof of \cref{thm:hardcore-systematic} with the appropriate replacements: instead of \cref{lem:systematic-scan-hardcore}, we use \cref{lem:systematic-scan-ising} to obtain $C$'-boundedness; instead of \cref{prop:hardcore}, we use \cref{prop:2 spin soft constraint} to obtain complete spectral independence; instead of \cref{prop:hardcore-easy}, we use \cref{lem:ising-dobrushin} to obtain an $O_{\alpha, \delta}(n\log(n/\epsilon))$ bound on the $\epsilon$-mixing time in the ``easier regime'' (which our choice of $\theta$ results in, again due to \cref{lem:ising-dobrushin}).     
% We use the fact that with this choice of $\theta$, Glauber dynamics in the ``easy regime'' (i.e. inside of the field dynamics) is rapidly mixing due to  and that the systematic scan step implies $C'$-boundedness due to . Otherwise, 
\end{proof}
%Also similar to the hardcore case, we get rapid mixing of the balanced Glauber dynamics. 

Finally, we prove rapid mixing of the (balanced) Glauber dynamics. 

\begin{proof}[Proof of \cref{thm:ising-anti-detailed}]
%5\%\Vishesh{To fix the Dobrushin step here}
We split the analysis into two cases. 
%If $\lambda \le (\delta^2/64) \lambda_{1,\delta}$ then by \cref{prop:dobrushin-2spin}
If $\lambda \le (1/800) e^{-16\alpha} \lambda_{1,0}(\Delta)$ then by \cref{lem:ising-dobrushin}, there exists a $(1 - 1/10n)$-contractive coupling and the mixing of the standard Glauber dynamics follows immediately,
%\Fred{check if is obvious for balanced}, 
see e.g. \cite{levin2017markov}. 

Otherwise, we use by \cref{prop:lambda bound} that, due to the restriction on $\beta$, $\lambda_{1,0}$ is lower bounded by a constant depending only on $\alpha$. Then, as in the proof for the $\lambda=1$ case before, we can appeal
to the last part of \cref{lem:ising-dobrushin} %to \cite[Theorem 2]{marton2015logarithmic} 
to get approximate tensorization of entropy for the Glauber dynamics in the ``easy regime'', with a constant that depends only on $\alpha$ and $\delta$.  %(i.e. for Glauber dynamics inside of the field dynamics). 

From here, the proof follows the same strategy as the proof of \cref{thm:hardcore-detailed}: (1) based on essentially the same argument as in \cref{lem:balanced-bounded} (with appropriate replacements coming from \cref{prop:2 spin soft constraint}, as in the proof of \cref{lem:systematic-scan-ising}), we get that the distributions $\nu_t$ are $C'$-bounded with respect to $\mu$ for $C' = C'(\alpha,\delta) > 0$, and then (2) we have entropy contraction based on approximate tensorization of entropy in the easy regime (which we showed in the previous paragraph), complete spectral independence (\cref{prop:2 spin soft constraint}), and the comparison between field dynamics and Glauber dynamics (\cref{thm:comparison}).  
\end{proof}

\subsection{Sublinear-time sampling}
\label{sec:sublinear-ising}
In the algorithms considered above, resampling a vertex $v$ according to the distribution $\mu$ condition on $\sigma_{-v}$ naively takes $O(\Delta)$ time. Here, we show how to implement these steps in $O(t_S)$ expected time, where $t_S$ is an upper bound on the time taken to sample a uniformly random neighbor of a given vertex (note that for an adjacency array representation of the graph, $t_S = O(1)$).  The key observation is that, in $t_S$ time, we can sample a random neighbor of a vertex and observe its spin. Combining this observation with results from the literature on Bernoulli factories, we then show how to (perfectly) implement a resampling step with only $O(1)$ (in expectation) independent samples of a random neighbor of a vertex, which takes expected time $O(t_S)$.

\begin{proof}[Proof of implementing updates]
We first describe the approach in the antiferromagnetic setting $\beta < 1$. Recall that we can assume $\lambda \le 1$ without loss of generality (by flipping the role of $+$ and $-$).

Let
\[ S(x) := \frac{e^x}{e^x + 1} = \frac{1 + \tanh(x/2)}{2}  \]
be the usual sigmoid function (i.e. a recentered/rescaled version of $\tanh$). 
Observe that $\lambda^{1(\sigma_i = 1)} = e^{\sigma_i (\log \lambda)/2 + (\log \lambda)/2}$. A similar manipulation shows that the marginal law of the spin $\sigma_i$ at site $i$, conditioned on the spins at the other sites, is given by
\begin{align*}
\mu[\sigma_i = +1 \mid \sigma_{\sim i}] 
&= \frac{\lambda \beta^{\card{\set{j \given i \sim j, \sigma_j = +1}}}}{\lambda \beta^{\card{\set{j \given i \sim j, \sigma_j = +1}}} + \beta^{\card{\set{j \given i \sim j, \sigma_j = -1}}}} \\
&= \frac{\lambda \beta^{2\card{\set{j \given i \sim j, \sigma_j = +1}} - \Delta_i}}{\lambda \beta^{2\card{\set{j \given i \sim j, \sigma_j = +1}} - \Delta_i} + 1} \\
&= S\parens*{\log \lambda + (\log \beta)[2\card{\set{j \given i \sim j, \sigma_j = +1}} - \Delta_i]} \\
&= S\parens*{\log \lambda + \Delta_i(\log \beta)\bracks*{2 \frac{\card{\set{j \given i \sim j, \sigma_j = +1}}}{\Delta_i} - 1}}
\end{align*}
where $\Delta_i$ is the degree of node $i$. Next, consider a 0/1-valued random variable $Y$ which is generated in the following way: with probability $1/2$ we have $Y \sim \text{Ber}(1/2)$; otherwise, a neighbor $j$ of $i$ is selected uniformly at random and $Y = 1$ if $\sigma_j = 1$ and $Y = 0$ if $\sigma_j = -1$. Clearly the resulting law is $Y \sim \text{Ber}(p)$ with 
\[ p := \frac{1}{4} + \frac{\card{\set{j \given i \sim j, \sigma_j = +1}}}{2\Delta_i} \in \left[\frac{1}{4}, \frac{3}{4}\right].\]
In terms of this bias $p$, we can write the above expression as
\[ \mu[\sigma_i = +1 \mid \sigma_{\sim i}] = S\parens*{\log \lambda + \Delta_i(\log \beta)\bracks*{4p - 2}} = S\parens*{\log \lambda - 2 \Delta_i \log \beta + 4p \Delta_i \log \beta)}. \]
Letting $c_1 = \log \lambda - 2 \Delta_i \log \beta, c_2 = 4 \Delta_i \log \beta$, we therefore have that
\[ \mu[\sigma_i = +1 \mid \sigma_{\sim i}] = \frac{e^{c_1} e^{c_2 p}}{1 + e^{c_1} e^{c_2 p}}.\]

Next, note that since $\Delta/(\Delta + \alpha) \leq \beta < 1$, there exists an absolute constant $C_{\alpha}$ such that $e^{c_1} \leq C_{\alpha}$.
%Next, note that since $c_1 + c_2 p \leq (4p-2)\Delta_i \log \beta \leq -\Delta_i \log \beta $ (here, we are using the fact that the model is antiferromagnetic, so that $\log \beta < 0$), we have by the assumption $\beta \geq \Delta/(\Delta + \alpha)$ that there exists a sufficiently large constant $C_{\alpha}$ (which can be estimated a priori)  such that $e^{c_1}e^{c_2 p} = C_{\alpha} p'$ for some $p' \in (0,1/4]$. 
Therefore, by a result of Huber \cite[Lemma~1]{huber2017optimal}, it follows that if we can simulate $\text{Ber}(e^{c_2 p})$ in expected $O_{\alpha}(1)$ time, then we can also simulate $\text{Ber}(e^{c_1 + c_2 p}/(1+ e^{c_1 + c_2p}))$ in expected $O_{\alpha}(1)$ time. 

%\Vishesh{to complete, see Lemma 1 in Huber}. 
%By a result of , there is a simple Bernoulli factory which, given access to a coin of bias $\text{Ber}(r)$ (with $r$ unknown), implements a coin of bias $\text{Ber}(r/(1 + r))$ using in expectation $1/(1 + r)$ coin flips. Therefore, it remains to implement a coin of bias $e^{c_1}e^{c_2 p}$. 

It remains to implement a coin of bias $e^{c_2 p}$. By Proposition 3.4 of \cite{latuszynski2011simulating} (see also the discussion after Proposition 3.4), there exists a Bernoulli factory to simulate $\text{Ber}(e^{c_2 p})$ with $O_{\alpha}(1)$-expected calls to $\text{Ber}(p)$ (here we use that because the model is antiferromagnetic, $c_2 < 0$, and that $\abs{c_2}$ is upper bounded by an constant depending only on $\alpha$ by the assumed lower bound on $\beta$) . %and the function $e^{c_1} e^{c_2 p}$ is then implemented by additionally flipping a coin of bias $e^{c_1}$ and taking the logical and of the output of this with the output of the $e^{c_2 p}$ Bernoulli factory. %\Fred{check the details?}

For the ferromagnetic model, the approach is very similar: we simply flip the role of $-1$ and $+1$ in the neighbors of node $i$ when implementing the algorithm. (This corresponds to the fact that on a tree of depth 1, a ferromagnetic model can be trivially converted to an antiferromagnetic model using the same flip).  
\end{proof}
%\Fred{if we allow per-site external field, then in general we get the result provided external fields are not near $0$ or $\infty$, is this right? For field dynamics approach, maybe we could get mixing time completely independent of the external field?}
%\subsection{MLSI in easy regime}

\section{Concentration of measure}
\label{sec:concentration}

Finally, we derive concentration of measure statements for the models concerned in the previous section using our (restricted) MLSIs. 

\begin{lemma}[Herbst lemma (see, e.g.,~\cite{van2014probability})] \label{lem:herbst}
Suppose that a probability measure $\nu$, random variable $f$, and $\Gamma,\sigma \ge 0$ are such that for all $\gamma \in [0,\Gamma]$, we have $\E_{\nu}{e^{\gamma f}} < \infty$ and
\[ \Ent_{\nu}[e^{\gamma f}] \le \frac{\gamma^2 \sigma^2}{2} \E_{\nu}{e^{\gamma f}}. \]
%for all $\gamma \in [0,\Gamma]$, then
Then
\[ \log \E_{\nu}{e^{\gamma (f - \E_{\nu}{f})}} \le \frac{\gamma^2 \sigma^2}{2} \]
for all $\gamma \in [0,\Gamma]$. Therefore by the Chernoff bound, for all $t \ge 0$ we have
\[ \P_{\nu}{f > \E_{\nu}{f} + t} \le \exp\parens*{\inf_{\gamma \in [0,\Gamma]} \gamma^2 \sigma^2/2 - \gamma t } = 
\begin{cases}
e^{-t^2/2\sigma^2} & \text{ if $t/\sigma \le \Gamma \sigma$} \\
e^{\Gamma^2 \sigma^2/2} e^{-\Gamma t} & \text{otherwise} \\
\end{cases}
\]
\end{lemma}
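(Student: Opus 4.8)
The statement to be proved is the Herbst lemma, a classical fact; the plan is to follow the standard argument via the differential inequality for the log-moment generating function. Write $\psi(\gamma) := \log \E_{\nu}{e^{\gamma (f - \E_\nu f)}}$ and $\Lambda(\gamma) := \log \E_{\nu}{e^{\gamma f}}$, so that $\psi(\gamma) = \Lambda(\gamma) - \gamma \E_\nu f$. First I would unpack the entropy hypothesis: by definition,
\[ \Ent_{\nu}[e^{\gamma f}] = \E_{\nu}{\gamma f e^{\gamma f}} - \E_{\nu}{e^{\gamma f}} \log \E_{\nu}{e^{\gamma f}} = \E_{\nu}{e^{\gamma f}}\parens*{\gamma \Lambda'(\gamma) - \Lambda(\gamma)}, \]
using $\Lambda'(\gamma) = \E_{\nu}{f e^{\gamma f}}/\E_{\nu}{e^{\gamma f}}$ (differentiation under the integral sign is justified on $[0,\Gamma]$ since $\E_\nu e^{\gamma f} < \infty$ there). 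Hence the hypothesis $\Ent_{\nu}[e^{\gamma f}] \le \tfrac{\gamma^2\sigma^2}{2}\E_\nu{e^{\gamma f}}$ becomes the differential inequality $\gamma \Lambda'(\gamma) - \Lambda(\gamma) \le \tfrac{\gamma^2 \sigma^2}{2}$ for all $\gamma \in (0,\Gamma]$.

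The next step is the classical trick of dividing by $\gamma^2$: the left-hand side is exactly $\gamma^2 \frac{d}{d\gamma}\parens*{\Lambda(\gamma)/\gamma}$, so we get $\frac{d}{d\gamma}\parens*{\Lambda(\gamma)/\gamma} \le \sigma^2/2$ on $(0,\Gamma]$. We then integrate from $0$ to $\gamma$. The only subtlety is the value of $\Lambda(\gamma)/\gamma$ as $\gamma \to 0^+$: by Taylor expansion $\Lambda(\gamma) = \gamma \E_\nu f + O(\gamma^2)$, so $\lim_{\gamma \to 0^+} \Lambda(\gamma)/\gamma = \E_\nu f$. Integrating yields $\Lambda(\gamma)/\gamma - \E_\nu f \le \gamma\sigma^2/2$, i.e. $\psi(\gamma) = \Lambda(\gamma) - \gamma\E_\nu f \le \gamma^2\sigma^2/2$ for all $\gamma \in [0,\Gamma]$, which is the first claimed inequality.

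Finally, the tail bound is a routine Chernoff/Markov argument: for any $\gamma \in [0,\Gamma]$,
\[ \P_\nu{f > \E_\nu f + t} = \P_\nu{e^{\gamma(f - \E_\nu f)} > e^{\gamma t}} \le e^{-\gamma t}\E_\nu{e^{\gamma(f - \E_\nu f)}} \le \exp\parens*{\tfrac{\gamma^2\sigma^2}{2} - \gamma t}. \]
Optimizing the exponent $\gamma \mapsto \gamma^2\sigma^2/2 - \gamma t$ over $\gamma \in [0,\Gamma]$: the unconstrained minimizer is $\gamma^* = t/\sigma^2$, which lies in $[0,\Gamma]$ precisely when $t/\sigma^2 \le \Gamma$, i.e. $t/\sigma \le \Gamma\sigma$, giving value $-t^2/2\sigma^2$; otherwise the function is decreasing on $[0,\Gamma]$ so the optimum is at $\gamma = \Gamma$, giving value $\Gamma^2\sigma^2/2 - \Gamma t$. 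This reproduces the two-case bound in the statement. There is no real obstacle here — the only points requiring a word of care are the justification of differentiating under the integral on $[0,\Gamma]$ and the $\gamma \to 0^+$ limit of $\Lambda(\gamma)/\gamma$, both of which are standard; I would state them briefly rather than belabor them.
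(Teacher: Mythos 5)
Your argument is correct and is precisely the standard Herbst computation that the paper delegates to \cite[Lemma~3.13]{van2014probability}: the differential inequality $\frac{d}{d\gamma}(\Lambda(\gamma)/\gamma) \le \sigma^2/2$ obtained from the entropy bound, integrated against the boundary value $\lim_{\gamma\to 0^+}\Lambda(\gamma)/\gamma = \E_\nu f$, followed by the same constrained Chernoff optimization the paper writes out. No substantive difference; you have simply unpacked the cited lemma.
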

\begin{proof}
The first part of the statement follows from the proof of \cite[Lemma~3.13]{van2014probability}. The second part is the Chernoff bound $\P{f - \E_{\nu}{f} > t} \le e^{-t \gamma} \E{e^{\gamma f}}$, and we used that the unconstrained minimum of the quadratic objective is achieved when $\gamma = t/\sigma^2$, at which point we have $\gamma^2 \sigma^2/2 - \gamma t = t^2/2\sigma^2 - t^2/\sigma^2 = -t^2/2\sigma^2$, so the constrained minimum is either there or at $\Gamma$.
\end{proof}
To interpret this, note that for a random variable with an exponential tail like $e^{-t/\sigma}$, the largest $\Gamma$ we can hope to apply the bound with is $\Gamma \approx 1/\sigma$, in which case the first bound applies for $t$ at most $\sigma$ and the second bound applies for $t$ larger. For other probability distribution, like $\text{Poi}(\lambda)$, the two bounds correspond as in Bernstein's inequality to the sub-Gaussian behavior of moderate deviations and the subexponential behavior of larger deviations.

The above lemma, the next two lemmas, and our restricted MSLI combined lets us get concentration bounds for functions in terms of the maximum size of their ``gradient''. 
%\Fred{bugfixed version of Hermon et al with $e^{\gamma \kappa}$ term, minor changes to application TODO. note the original bug?}
\begin{lemma}[Proof\footnote{Note that in the statement of \cite[Lemma~5]{hermon2019modified}, there is a typo in (103) where $f(y) - f(x)$ from the definition of $v_1(f)$ is reversed compared to the proof. We have corrected this in the statement of \cref{lem:dirichlet1}. \cref{lem:dirichlet2} is the correct statement and proof  when the order of $f(y) - f(x)$ is reversed.} of Lemma 5, \cite{hermon2019modified}]\label{lem:dirichlet1}
Suppose $Q$ is the transition matrix of a reversible Markov chain on a finite set $\Omega$ with stationary measure $\nu$. Then
\[ \mathcal{E}_{Q}(e^{\gamma f}, \gamma f) \le \gamma^2 v_1(f) \E_{\nu}{e^{\gamma f}} \]
where
\[ v_1(f) := \max_{x \in \Omega} \sum_{y \in \Omega} P(x,y) [f(x) - f(y)]_+^2, \]
$[x]_+ = \max(x,0)$,
and $\mathcal{E}_P(f,g) := \E_{\omega \sim \nu}{[(I - P)f](\omega) g(\omega)}$ is the Dirichlet form. %\Fred{check, is there a typo in the definition of $v(f)$ in their paper? seems like order of $x,y$ is different in proof versus statement of theorem? In fact their theorem statement seems false, but we can fix it by putting a factor of $e^{\gamma \kappa}$ which doesn't change conclusion below since we pay for it in restricted MLSI anyway.}
\end{lemma}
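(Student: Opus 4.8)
The plan is to reduce the inequality to an elementary pointwise estimate on the summands of the Dirichlet form, and then use reversibility to absorb the one-sided truncation $[\,\cdot\,]_+$ that appears in the definition of $v_1(f)$. First I would rewrite the Dirichlet form in its symmetric form: for the reversible chain $Q$ with stationary measure $\nu$,
\[ \mathcal{E}_{Q}(e^{\gamma f}, \gamma f) = \frac12 \sum_{x,y \in \Omega} \nu(x)\, Q(x,y)\, \bigl(e^{\gamma f(x)} - e^{\gamma f(y)}\bigr)\bigl(\gamma f(x) - \gamma f(y)\bigr), \]
noting that every summand is nonnegative since $t \mapsto e^{t}$ is increasing.

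The core of the argument is a numerical inequality: for all reals $u \ge v$ one has $e^{u} - e^{v} = \int_{v}^{u} e^{t}\,dt \le e^{u}(u-v)$, hence
\[ (e^{u} - e^{v})(u - v) \le e^{\max(u,v)}\,(u-v)^{2} \qquad \text{for all reals } u,v. \]
Applying this term by term with $u = \gamma f(x)$ and $v = \gamma f(y)$ gives
\[ \mathcal{E}_{Q}(e^{\gamma f}, \gamma f) \le \frac{\gamma^{2}}{2}\sum_{x,y} \nu(x)\, Q(x,y)\, e^{\gamma\max(f(x),f(y))}\,(f(x)-f(y))^{2}. \]

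Next I would split the sum according to whether $f(x) \ge f(y)$ or $f(y) > f(x)$. On the first piece, $\max(f(x),f(y)) = f(x)$ and $(f(x)-f(y))^{2} = [f(x)-f(y)]_{+}^{2}$, so performing the $y$-sum first (and extending it to all of $\Omega$, which is harmless because the truncation kills the extra terms) bounds it by $\sum_{x}\nu(x)e^{\gamma f(x)}\sum_{y}Q(x,y)[f(x)-f(y)]_{+}^{2} \le v_1(f)\,\E_{\nu}{e^{\gamma f}}$. On the second piece I would use reversibility, $\nu(x)Q(x,y) = \nu(y)Q(y,x)$, and relabel $x \leftrightarrow y$ to turn it into exactly the same expression as the first piece, hence also at most $v_1(f)\,\E_{\nu}{e^{\gamma f}}$. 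Adding the two halves and restoring the factor $\tfrac12$ yields $\mathcal{E}_{Q}(e^{\gamma f}, \gamma f) \le \gamma^{2} v_1(f)\,\E_{\nu}{e^{\gamma f}}$.

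I do not expect a genuine obstacle here. The only point requiring care is the bookkeeping when passing from the symmetric Dirichlet sum to the asymmetric quantity $v_1(f)$ — this is exactly where reversibility enters — and making sure the numerical inequality is stated for all reals, so that it applies regardless of the signs of $\gamma$ and $f$. Everything else is a routine calculation.
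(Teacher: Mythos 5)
Your proof is correct and follows essentially the same path the paper uses for the parallel \cref{lem:dirichlet2}: fold the symmetric Dirichlet sum via reversibility, and bound the increment of $e^{\gamma f}$ by $e^{\gamma f}$ times the increment of $\gamma f$ --- your elementary estimate $(e^u - e^v)(u-v) \le e^{\max(u,v)}(u-v)^2$ is exactly $1 - e^{-t} \le t$ in disguise, which is the same inequality used there. One small caveat: your closing remark that the argument works ``regardless of the signs of $\gamma$'' is not right. The identity $e^{\max(\gamma f(x), \gamma f(y))} = e^{\gamma\max(f(x),f(y))}$ and the subsequent split on $f(x) \ge f(y)$ (which is what makes $[f(x)-f(y)]_+^2$ appear) both require $\gamma \ge 0$; indeed for $\gamma < 0$ the inequality with $v_1(f)$ on the right-hand side is false in general (a two-point chain with $\nu$ biased toward the high-$f$ state gives a counterexample), and one would need $v_2$ instead. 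Since the Herbst argument only ever applies the bound with $\gamma \ge 0$, this does not affect the usefulness of the lemma, but it is worth being precise about the assumption.
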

 The proof of this result is a variant of the proof of Lemma 5 of \cite{hermon2019modified}. We omit the details. 
\begin{lemma}\label{lem:dirichlet2}
%[Proof of Lemma 5, \cite{hermon2019modified}]
Suppose $Q$ is the transition matrix of a reversible Markov chain on a finite set $\Omega$ with stationary measure $\nu$, and let $f$ be so that $f(y)-f(x) \le \kappa$ whenever $Q(x,y) > 0$. Then
\[ \mathcal{E}_{Q}(e^{\gamma f}, \gamma f) \le e^{\gamma \kappa} \gamma^2 v_2(f) \E_{\nu}{e^{\gamma f}} \]
where
\[ v_2(f) := \max_{x \in \Omega} \sum_{y \in \Omega} P(x,y) [f(y) - f(x)]_+^2, \]
$[x]_+ = \max(x,0)$,
and $\mathcal{E}_P(f,g) := \E_{\omega \sim \nu}{[(I - P)f](\omega) g(\omega)}$ is the Dirichlet form. %\Fred{check, is there a typo in the definition of $v(f)$ in their paper? seems like order of $x,y$ is different in proof versus statement of theorem? In fact their theorem statement seems false, but we can fix it by putting a factor of $e^{\gamma \kappa}$ which doesn't change conclusion below since we pay for it in restricted MLSI anyway.}
\end{lemma}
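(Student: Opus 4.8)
\textbf{Proof plan for \cref{lem:dirichlet2}.}

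The plan is to imitate the proof of \cref{lem:dirichlet1} (Lemma 5 of \cite{hermon2019modified}), but keeping track of the direction of the increments carefully so that the relevant quantity becomes $v_2(f)$ rather than $v_1(f)$, at the cost of an extra factor $e^{\gamma\kappa}$. First I would expand the Dirichlet form using reversibility:
\[
\mathcal{E}_Q(e^{\gamma f},\gamma f) = \frac{1}{2}\sum_{x,y}\nu(x)Q(x,y)\bigl(e^{\gamma f(x)}-e^{\gamma f(y)}\bigr)\bigl(\gamma f(x)-\gamma f(y)\bigr).
\]
Each summand is nonnegative, since $t\mapsto e^{\gamma t}$ is increasing, so $(e^{\gamma f(x)}-e^{\gamma f(y)})$ and $(f(x)-f(y))$ have the same sign. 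Writing $a=f(x)$, $b=f(y)$, the summand is $\gamma(a-b)(e^{\gamma a}-e^{\gamma b})$, which is symmetric in $(x,y)$, so the sum over ordered pairs equals twice the sum over the ``$b>a$'' part; equivalently I can write $\mathcal{E}_Q(e^{\gamma f},\gamma f) = \sum_{x,y}\nu(x)Q(x,y)\,\gamma[f(y)-f(x)]_+\bigl(e^{\gamma f(y)}-e^{\gamma f(x)}\bigr)$ after symmetrizing and discarding the sign bookkeeping.

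The key elementary inequality is: for real $u\ge 0$ with $u\le\kappa$ (here $u=f(y)-f(x)$ on transitions with $Q(x,y)>0$), one has $e^{\gamma u}-1 \le \gamma u\, e^{\gamma u} \le \gamma u\, e^{\gamma\kappa}$, using $e^s-1\le s e^s$ for $s\ge 0$ and monotonicity. Therefore $e^{\gamma f(y)}-e^{\gamma f(x)} = e^{\gamma f(x)}(e^{\gamma(f(y)-f(x))}-1) \le e^{\gamma f(x)}\gamma[f(y)-f(x)]_+ e^{\gamma\kappa}$ on the relevant pairs. Substituting,
\[
\mathcal{E}_Q(e^{\gamma f},\gamma f) \le e^{\gamma\kappa}\gamma^2 \sum_{x}\nu(x)e^{\gamma f(x)}\sum_{y}Q(x,y)[f(y)-f(x)]_+^2 \le e^{\gamma\kappa}\gamma^2 v_2(f)\sum_x\nu(x)e^{\gamma f(x)},
\]
which is exactly $e^{\gamma\kappa}\gamma^2 v_2(f)\,\E_\nu[e^{\gamma f}]$, as desired. (Note the pesky factor of $2$ disappears in the symmetrization step; I would double-check the bookkeeping there.)

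The main obstacle, such as it is, is purely the combinatorial/sign bookkeeping: making sure that after symmetrizing the Dirichlet form one correctly lands on $[f(y)-f(x)]_+$ (the ``up'' increments) rather than $[f(x)-f(y)]_+$, so that the bound $f(y)-f(x)\le\kappa$ is actually applicable on every term that survives, and that the leftover factor $e^{\gamma f(x)}$ (not $e^{\gamma f(y)}$) is what pairs with $\E_\nu[e^{\gamma f}]$. There is no analytic difficulty beyond the one-line inequality $e^s-1\le se^s$ for $s\ge 0$; the role of the hypothesis $f(y)-f(x)\le\kappa$ is solely to replace the unbounded factor $e^{\gamma(f(y)-f(x))}$ by the constant $e^{\gamma\kappa}$. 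This is why the statement (unlike \cref{lem:dirichlet1}) needs the increment bound.
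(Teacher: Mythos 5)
Your proof is correct and essentially identical to the paper's: both symmetrize the Dirichlet form using reversibility to reduce to a sum over the ``increasing'' increments weighted by $[f(y)-f(x)]_+$, then apply the elementary bound $e^s-1 \le s e^s \le s\, e^{\gamma\kappa}$ (equivalently, the paper writes this as $e^s-1 \le e^{\gamma\kappa}(1-e^{-s}) \le e^{\gamma\kappa}s$) and pull out $v_2(f)$ by taking the max over $x$. The sign bookkeeping and the factor-of-two cancellation you flag as the only subtlety are handled in exactly the same way in the paper.
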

\begin{proof}
%\Fred{put corrected statement/proof.} 
From the definition we have
\begin{align*}
    \mathcal{E}(e^{\gamma f}, \gamma f) 
    &= \frac{\gamma}{2} \sum_{x,y} \nu(x) Q(x,y) (e^{\gamma f(x)} - e^{\gamma f(y)})(f(x) - f(y)) \\
    &= \gamma \sum_{x,y} \nu(x) Q(x,y) (e^{\gamma f(y)} - e^{\gamma f(x)})(f(y) - f(x))_+ \\
    &= \gamma \sum_{x,y} \nu(x) Q(x,y) e^{\gamma f(x)} (e^{\gamma [f(y) - f(x)]} - 1)(f(y) - f(x))_+
\end{align*}
where we used that, by reversibility, $\nu(x) Q(x,y) = \nu(y) Q(y,x)$ in order to pair up terms.  Using the assumption on $f$ %\Fred{add} 
and the inequality $1 - e^{-x} \le x$ gives
\[ (e^{\gamma [f(y) - f(x)]} - 1) \le e^{\gamma \kappa} (1 - e^{-\gamma[f(y) - f(x)]}) \le \gamma e^{\gamma \kappa} [f(y) - f(x)] \]
whenever $Q(x,y) > 0$ 
so
\begin{align*}
\MoveEqLeft \gamma \sum_{x,y} \nu(x) Q(x,y) e^{\gamma f(x)} (e^{\gamma [f(y) - f(x)]} - 1)(f(y) - f(x))_+ \\
&\le \gamma^2 e^{\gamma \kappa} \sum_{x,y} \nu(x) Q(x,y) e^{\gamma f(x)} (f(y) - f(x))^2_+ \\
&\le \gamma^2 e^{\gamma \kappa} \E_{\nu}{e^{\gamma f(X)}} \max_x \sum_y Q(x,y) (f(y) - f(x))_+^2
\end{align*}
which gives the desired conclusion.
\end{proof}
Recall that if $\nu$ satisfies an MLSI with constant $\alpha/2$, then
\[ \Ent_{\nu}[e^{\gamma f}] %= \E{e^{\lambda f}} \D*_{\KL}{\frac{e^{\lambda f}}{\E{e^{\lambda f}}} \nu, \nu} 
\le \frac{1}{\alpha} \mathcal{E}_P(e^{\gamma f}, \gamma f).  \]
In our applications, we only have a restricted MLSI, so that the above conclusion is not guaranteed to hold for all $\gamma$ and $f$. Therefore, since $\frac{\Ent_{\nu}[g]}{\E_{\nu}{g}} = \DKL{\frac{g}{\E_{\nu}{g}} \nu, \nu}$, when using our restricted MLSIs, we will need to check that $\frac{e^{\gamma f}}{\E_{\nu}{e^{\gamma f}}} \nu$ is in the set of ``valid'' measures for suitably large $\gamma$.

%\June{For $f $ being size of independent set, $\nu' = \frac{e^{\gamma f}}{\E{e^{\gamma f}}} \nu$ has $
%\frac{\nu'(\sigma^+)}{\nu'(\sigma^-)} = e^{\gamma} \frac{\nu(\sigma^+)}{\nu(\sigma^-)}$ thus $\nu'$ is $O(e^{\gamma})$-bounded wrt. $\mu.$ (\cref{prop:hardcore}). \cref{thm:restricted entropy contraction} implies restricted MLSI of field dynamic and ??Glauber dynamics?? with $1/\alpha = n \exp(O(\gamma/\delta)).$ Which, in \cref{lem:herbst} leads to $\sigma^2 \approx n \exp(O(\gamma/\delta))/\Delta \leq  n \exp(O(\Lambda/\delta))/\Delta \forall \gamma \in [0,\Lambda].$ So for $t \leq \sqrt{\Lambda  n \exp(O(\Lambda/\delta))/\Delta}$ we get the $e^{-t^2/2\delta^2}$ bound.}

%\Huy{notation: $\gamma$ for parameter in hardcore model, $\delta$ gap to uniqueness}
%\Fred{add lower tail, application to number of occupied states, matching ising result?}

\paragraph{Concentration for the hardcore model.} 
For increasing Lipschitz functions, we get the following Bernstein-type bound, which is similar to the usual concentration bounds for sums of i.i.d. $\text{Ber}(p)$ random variables.
\begin{proposition}\label{lem:hardcore-monotone}
Let $\nu$ be the hardcore model on a graph of maximum degree at most $\Delta \ge 3$ and at fugacity $\lambda \le (1 - \delta) \lambda_{\Delta}$. There exists some constant $c = c(\delta) > 0$ such that the following is true.
Let $f$ be so that $0 \le f(\sigma_+)-f(\sigma_-) \le \kappa$ for all adjacent states $(\sigma_-,\sigma_+)$. For all $t \ge 0$ we have %$(\gamma n)^{1/2} O_{\delta,\kappa}(1) < t < \gamma n$, we have 
\[ \P_{\nu}{f -  \E_{\nu} f > t} \le \begin{cases}
e^{-c t^2/\lambda \kappa^2 n} & \text{ if $t = O_{\delta}( \lambda \kappa n)$} \\
e^{O_{\delta}(\lambda n)} e^{- \Omega_{\delta}(t/\kappa)} & \text{otherwise} \\
\end{cases}, \]
and
\[ \P_{\nu}{f -  \E_{\nu} f < -t} \le \begin{cases}
e^{-c t^2/\lambda \kappa^2 n} & \text{ if $t = O_{\delta}( \lambda \kappa n)$} \\
e^{O_{\delta}(\lambda n)} e^{- \Omega_{\delta}(t/\kappa)} & \text{otherwise} \\
\end{cases}. \]
%\[
%\P_\nu{f>\E_\nu f + t} \le e^{-t^2/2\kappa^2}.
%\]
\end{proposition}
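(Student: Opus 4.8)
The plan is to apply the Herbst argument (\cref{lem:herbst}) together with the restricted MLSI for the (balanced) Glauber dynamics established earlier in the paper, but being careful to track two features: (i) the MLSI only holds for $C$-completely bounded measures, so we must verify that the tilted measure $\nu_\gamma \propto e^{\gamma f}\nu$ stays in the good class, and (ii) we want the sharper ``Bernstein-type'' variance proxy that reflects the small marginals $\approx\lambda$ rather than the crude Lipschitz bound. To that end, first I would fix the Markov chain $P$ to be the Glauber dynamics for the hardcore model $\nu$, which moves between adjacent states, and observe that, since $f$ is monotone and $\kappa$-Lipschitz along edges, for any configuration $\sigma$ the ``upward gradient'' $\sum_{\sigma'}P(\sigma,\sigma')[f(\sigma')-f(\sigma)]_+^2$ is bounded by $\frac{1}{n}\sum_{v:\ v\text{ flippable up}}\kappa^2$, and the probability that a uniformly chosen site is both selected and updated to occupied is at most $\lambda/(1+\lambda)\le\lambda$ by \cref{eq:hardcore-conditional}; hence $v_1(f)=O(\lambda\kappa^2)$ and likewise $v_2(f)=O(\lambda\kappa^2)$ (monotonicity makes only the ``up'' moves contribute to one of the two, and the other is handled by reversibility), where $v_1,v_2$ are as in \cref{lem:dirichlet1,lem:dirichlet2}.

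Next I would combine these Dirichlet-form bounds with the restricted MLSI. By \cref{thm:hardcore-detailed} (via \cref{thm:comparison} and \cref{prop:hardcore,prop:hardcore-easy}), the hardcore model satisfies approximate tensorization of entropy with an $O_\delta(1)$ constant when restricted to $C$-completely bounded $f$; equivalently the Glauber dynamics satisfies an MLSI with constant $\Omega_\delta(1/n)$ on that restricted class. The crucial check is that $\nu_\gamma\propto e^{\gamma f}\nu$ is $C$-completely bounded with $C=C(\delta)$ uniformly in $\gamma$ up to a threshold $\Gamma$: since $f$ is $\kappa$-Lipschitz along edges, the density ratio $\nu_\gamma(\sigma_+)/\nu_\gamma(\sigma_-)$ picks up at most a factor $e^{\gamma\kappa}$ relative to $\nu(\sigma_+)/\nu(\sigma_-)\le\lambda_v$, so as long as $\gamma\kappa\le \Gamma\kappa=O(1)$, i.e. $\Gamma=\Theta(1/\kappa)$, we stay in $\mathcal V^c(C',\nu)$ for an absolute constant $C'$ (cf. the argument in \cref{lem:systematic-scan-hardcore} and \cref{lem:balanced-bounded}). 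Then for all $\gamma\in[0,\Gamma]$ we get
\[
\Ent_\nu[e^{\gamma f}] \le \frac{1}{\Omega_\delta(1/n)}\,\mathcal E_P(e^{\gamma f},\gamma f) \le O_\delta(n)\cdot e^{\gamma\kappa}\gamma^2 \cdot O(\lambda\kappa^2)\,\E_\nu{e^{\gamma f}} \le \frac{\gamma^2\sigma^2}{2}\,\E_\nu{e^{\gamma f}}
\]
with $\sigma^2 = O_\delta(\lambda\kappa^2 n)$, where $e^{\gamma\kappa}=O(1)$ was absorbed because $\gamma\le\Gamma=\Theta(1/\kappa)$ (here I use \cref{lem:dirichlet2} for the upper tail; for the lower tail one applies the same argument to $-f$, using \cref{lem:dirichlet1} with $v_1(-f)=v_2(f)=O(\lambda\kappa^2)$, which does not need the extra $e^{\gamma\kappa}$ factor, and noting $\nu_{-\gamma f}$ is trivially $1$-bounded since decreasing the occupied-site density only decreases the ratio).

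Finally, feeding $\sigma^2=O_\delta(\lambda\kappa^2 n)$ and $\Gamma=\Theta(1/\kappa)$ into the Chernoff step of \cref{lem:herbst} yields exactly the claimed two-regime bound: the sub-Gaussian regime $e^{-t^2/2\sigma^2}=e^{-\Omega_\delta(t^2/\lambda\kappa^2 n)}$ holds for $t\le\Gamma\sigma^2=O_\delta(\lambda\kappa n)$, and for larger $t$ one gets $e^{\Gamma^2\sigma^2/2}e^{-\Gamma t}=e^{O_\delta(\lambda n)}e^{-\Omega_\delta(t/\kappa)}$; the two-sided statement follows by applying this to $f$ and to $-f$. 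I expect the main obstacle to be the uniform-in-$\gamma$ verification that $\nu_\gamma$ remains in the $C$-completely bounded class and the careful bookkeeping of the $e^{\gamma\kappa}$ factor in \cref{lem:dirichlet2} against the cutoff $\Gamma=\Theta(1/\kappa)$ — everything else is a routine combination of the Dirichlet-form estimates, the restricted MLSI, and Herbst. One subtlety worth flagging is that the restricted MLSI we actually proved is for the balanced Glauber dynamics / field dynamics rather than vanilla Glauber; so in the concentration argument one should either run Herbst with the balanced chain (whose Dirichlet form dominates that of vanilla Glauber up to $O_\delta(1)$, since the extra balancing moves only add nonnegative terms) or invoke the approximate-tensorization form of the inequality directly, which is chain-independent.
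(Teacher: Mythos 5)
Your proposal is correct and takes essentially the same route as the paper's proof: tilt by $e^{\gamma f}$, track the boundedness of the tilted object, combine the easy-regime MLSI from \cref{prop:hardcore-easy} (through \cref{lem:restricted mlsi to tensorization}) with the field-dynamics entropy factorization (through \cref{cor:restricted entropy contraction} and \cref{lem:field tensorization from block}) via \cref{lem:ent-tensorization}, bound the Dirichlet form by $v_2(f)\le\lambda\kappa^2$ via \cref{lem:dirichlet2}, and close with Herbst at $\Gamma=\Theta_\delta(1/\kappa)$ and $\sigma^2=\Theta_\delta(\lambda\kappa^2 n)$; the lower tail is handled by applying the same to $-f$ with \cref{lem:dirichlet1}. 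The only cosmetic difference is that you cap $\gamma\kappa=O(1)$ up front so that the boundedness constant is absolute, whereas the paper lets the constant grow as $\theta^{-O_\delta(1+\gamma\kappa)}$ through the argument and imposes the cutoff only at the Herbst step; both work.

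One slip worth correcting: the claim $v_1(f)=O(\lambda\kappa^2)$ is false. For a monotone $\kappa$-Lipschitz $f$, only \emph{downward} Glauber moves contribute to $v_1(f)$, and a currently-occupied site is vacated with conditional probability $1/(1+\lambda_v)=\Theta(1)$, not $O(\lambda)$; taking the empty graph with $f(\sigma)=\kappa\card{\sigma}$ gives $v_1(f)=\Theta(\kappa^2)$. This is harmless to your argument since, as you correctly note later, the upper tail uses only $v_2(f)$ via \cref{lem:dirichlet2} and the lower tail uses $v_1(-f)=v_2(f)$ via \cref{lem:dirichlet1} — exactly as the paper does — but the stated bound on $v_1(f)$ should simply be dropped. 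Your concern about balanced versus vanilla Glauber dynamics resolves the way you suspect: the paper appeals directly to the approximate-tensorization form of the inequality, which is chain-free, and the resulting estimate $\mathcal{E}_P(g,\log g)\ge\Omega_\delta(e^{-O(\gamma\kappa)}/n)\Ent_\nu[g]$ is for the ordinary Glauber Dirichlet form applied to this particular tilted $g$.
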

\begin{proof}
We first prove the upper tail bound.
Let $\gamma$ be arbitrary and consider $g = e^{\gamma f}$. Then $g(\sigma_+)/g(\sigma_-)\le e^{\gamma \kappa}$. Therefore, from MLSI in the easier regime (\cref{prop:hardcore-easy}) and \cref{lem:restricted mlsi to tensorization}, we have that \cref{eq:entfac-easy} holds for $g$ with $C_1 = O(e^{\gamma \kappa})$. 

Moreover, similar to the proof of \cref{thm:comparison}, we see that \cref{eq:entfac-field} holds with $C_2 = \theta^{-O_\delta(1+\gamma \kappa)}$, for an absolute constant $\theta$.  

%For $\theta$ an absolute constant, we have that \cref{eq:entfac-field} holds with .  
Thus, \cref{lem:ent-tensorization} implies that for $P$ the Glauber dynamics with respect to $\nu$, we have the following MLSI-type estimate %Glauber dynamics has 
\[
\Omega(\theta^{O_\delta(1+\gamma \kappa)}\ e^{-\gamma \kappa} n^{-1}) \Ent_\nu[g] \le \mathcal{E}_{P}(g,\log g).
\]
From the definition, we have $v_2(f) \le \lambda \kappa^2$. Thus by \cref{lem:dirichlet2}, we have
\[\mathcal{E}_P(g, \log g) \leq e^{\gamma \kappa}\gamma^{2}\lambda \kappa^{2} \E_{\nu}{g}.\]
Combining this with the previous estimate, we get that
\[
\Ent_\nu[g] \le \theta^{-O_\delta(1+\gamma \kappa)} e^{2\gamma \kappa} n \gamma^2 \lambda \kappa^2 \E_\nu{g}.
\]
Applying the Herbst argument, \cref{lem:herbst}, with $\Gamma = (1/\kappa) \Theta_{\delta}(1)$ and $\sigma^2 = \Theta(\lambda \kappa^2 n)$ gives the upper tail. 

For the lower tail, we repeat the same argument with $f$ replaced by $-f$ and \cref{lem:dirichlet2} replaced by \cref{lem:dirichlet1}.
\iffalse with $\sigma_\gamma=(\theta^{-O_\delta(\gamma \kappa)}\lambda\kappa^2n)^{1/2}$ shows that 
\[
\log \E_\nu{e^{\gamma(f-\E_\nu f)}} \le \gamma^2 \sigma_\gamma^2/2.
\]
This yields concentration for $t$ such that there exists $\gamma$ with $\gamma \sigma_\gamma^2 \ge t$, and $t\ge \sigma_\gamma$, i.e. $\lambda n > t > (\lambda n)^{1/2} O_{\delta,\kappa}(1)$.
\fi
\end{proof}
For arbitrary Lipschitz functions we get a slightly weaker sub-Gaussian bound.
\begin{proposition}\label{lem:hardcore-lipschitz}
Let $\nu$ be the hardcore model on a graph of maximum degree at most $\Delta \ge 3$ and at fugacity $\lambda \le (1 - \delta) \lambda_{\Delta}$.
Let $f$ be so that $\abs{f(\sigma_+)-f(\sigma_-)} \le \kappa$ for all adjacent states $(\sigma_-,\sigma_+)$. For all $t \ge 0$ we have %$(\gamma n)^{1/2} O_{\delta,\kappa}(1) < t < \gamma n$, we have 
\[ \P_{\nu}{f -  \E_{\nu} f > t} \le 
e^{-c t^2/\kappa^2 n} \]
for some $c = c(\delta) > 0$.
\iffalse \begin{cases}
e^{-c t^2/\kappa^2 n} & \text{ if $t = O_{\delta}(\kappa n)$} \\
e^{O_{\delta}(\lambda n)} e^{- \Omega_{\delta}(t/\kappa)} & \text{otherwise} \\
\end{cases}
\fi
%\[
%\P_\nu{f>\E_\nu f + t} \le e^{-t^2/2\kappa^2}.
%\]
\end{proposition}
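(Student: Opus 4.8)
The plan is to run the Herbst argument exactly as in the proof of \cref{lem:hardcore-monotone}, but now for an arbitrary $\kappa$-Lipschitz $f$ (not necessarily monotone), so that we lose the extra $\lambda$ factor in the variance proxy and only obtain a clean sub-Gaussian tail. First I would fix $\gamma \ge 0$ and set $g = e^{\gamma f}$; since $|f(\sigma_+) - f(\sigma_-)| \le \kappa$ for adjacent states, we have $g(\sigma_+)/g(\sigma_-) \in [e^{-\gamma\kappa}, e^{\gamma\kappa}]$. Therefore \cref{lem:restricted mlsi to tensorization} applies with constant $C = e^{\gamma\kappa}$, and combined with the ``easy regime'' MLSI (\cref{prop:hardcore-easy}), assumption \cref{eqn:mlsi-easy}/\cref{eq:entfac-easy} holds for $g$ with $C_1 = O(e^{\gamma\kappa})$. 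As in the proof of \cref{thm:comparison}, complete spectral independence of the $\delta$-unique hardcore model (\cref{prop:hardcore}), the fact that blow-ups preserve complete $C$-boundedness (\cref{prop:blowup preserve bounded}), and restricted entropy contraction (\cref{cor:restricted entropy contraction}) give \cref{eq:entfac-field} with $C_2 = \theta^{-O_\delta(1 + \gamma\kappa)}$ for an absolute constant $\theta$. (The relevant measures are $C$-completely bounded with $C = O(e^{\gamma\kappa})$, so the exponent $\eta'$ in \cref{cor:restricted entropy contraction} is $O_\delta(\sqrt{1+\gamma\kappa})$, giving the stated $C_2$.) Hence \cref{lem:ent-tensorization} yields an MLSI-type bound: for $P$ the Glauber dynamics with respect to $\nu$,
\[
\Omega\bigl(\theta^{O_\delta(1+\gamma\kappa)} e^{-\gamma\kappa} n^{-1}\bigr)\,\Ent_\nu[g] \le \mathcal{E}_P(g, \log g).
\]

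Next I would bound the Dirichlet form. Unlike the monotone case, here I can only use the trivial bound $v_1(f) \le \kappa^2$ (rather than $v_2(f) \le \lambda\kappa^2$), since for a general Lipschitz $f$ every neighbor can contribute and we cannot exploit the occupation probability. Applying \cref{lem:dirichlet1} (or equivalently \cref{lem:dirichlet2}, absorbing the $e^{\gamma\kappa}$ factor) gives $\mathcal{E}_P(g, \log g) \le e^{\gamma\kappa}\gamma^2\kappa^2 \E_\nu{g}$. Combining with the MLSI-type estimate,
\[
\Ent_\nu[g] \le \theta^{-O_\delta(1+\gamma\kappa)} e^{2\gamma\kappa}\, n\,\gamma^2 \kappa^2\, \E_\nu{g}.
\]
Now I would invoke the Herbst argument (\cref{lem:herbst}) with $\Gamma = \Theta_\delta(1)/\kappa$ and $\sigma^2 = \Theta_\delta(\kappa^2 n)$: for $\gamma \in [0, \Gamma]$ the factor $\theta^{-O_\delta(1+\gamma\kappa)} e^{2\gamma\kappa}$ is bounded by a constant depending only on $\delta$, so the hypothesis $\Ent_\nu[e^{\gamma f}] \le \tfrac{\gamma^2\sigma^2}{2}\E_\nu{e^{\gamma f}}$ of \cref{lem:herbst} holds, and we conclude $\log\E_\nu{e^{\gamma(f - \E_\nu f)}} \le \gamma^2\sigma^2/2$ for all $\gamma \in [0,\Gamma]$, hence by Chernoff $\P_\nu{f - \E_\nu f > t} \le e^{-t^2/2\sigma^2}$ for $t \le \Gamma\sigma^2 = \Theta_\delta(\kappa n)$. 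For $t$ larger than this we only get the exponential tail $e^{O_\delta(n)}e^{-\Omega_\delta(t/\kappa)}$; but since for a function with increments bounded by $\kappa$ on the $n$-vertex graph one has $|f - \E_\nu f| \le \kappa n$ deterministically (so the event is trivially empty once $t/\kappa^2 n \gg 1$ by adjusting $c$), the single clean bound $e^{-ct^2/\kappa^2 n}$ holds for \emph{all} $t \ge 0$ after shrinking the constant $c = c(\delta)$. That is precisely the claimed statement.

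The one genuinely delicate point — and the step I expect to be the main obstacle — is verifying that the restricted MLSI actually applies at the relevant scale of $\gamma$, i.e.\ that the tilted measure $\tfrac{e^{\gamma f}}{\E_\nu e^{\gamma f}}\,\nu$ (and its pins/field-tilts) remain in the class $\mathcal{V}^c(C,\mu)$ of $C$-completely bounded measures with $C = O_\delta(e^{\gamma\kappa})$, uniformly for $\gamma \le \Gamma = \Theta_\delta(1)/\kappa$. This is where the Lipschitz bound $|f(\sigma_+) - f(\sigma_-)| \le \kappa$ is used: it forces the site-marginal ratios $\nu(i)(1-\mu(i))/(\mu(i)(1-\nu(i)))$ of the tilted measure to stay within a factor $e^{\gamma\kappa} = O_\delta(1)$ of those of $\mu$, and one must check this survives conditioning on partial configurations and tilting by external fields $\lambda \in (0,1]^n$ (which only shrink the occupation probabilities, as in the proof of \cref{prop:blowup preserve bounded}). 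Once that boundedness bookkeeping is in place, everything else is a routine reprise of the calculation in \cref{lem:hardcore-monotone} with $\lambda\kappa^2$ replaced by $\kappa^2$, and I would not write it out in detail.
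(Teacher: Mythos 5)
Your proposal is correct and takes essentially the same approach as the paper's proof, which simply states that the argument of \cref{lem:hardcore-monotone} carries over with $v_2(f)\le\kappa^2$ in place of $v_2(f)\le\lambda\kappa^2$, and that the trivial deterministic bound $\abs{f-\E_\nu f}\le\kappa n$ handles large $t$. The ``delicate point'' you flag — that the exponential tilt stays $O_\delta(e^{\gamma\kappa})$-completely bounded, so \cref{cor:restricted entropy contraction} applies — is indeed needed, but it is dispatched in the paper implicitly via the boundedness hypothesis of \cref{lem:restricted mlsi to tensorization} together with the chain of lemmas behind \cref{thm:comparison}, exactly as you sketch.
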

\begin{proof}
The proof is the same as \cref{lem:hardcore-monotone} except that we only have the bound $v_2(f) \le \kappa^2$. (Note that, because of the Lipschitz assumption, we have $\abs{f - \E_\nu f} = O(\kappa n)$ with probability $1$, so for $t$ at least $\kappa n$, the bound improves to $0$).   
%case corresponding to the subexponential tail in the previous proposition is trivial).  %subexponential tail) is trivial and can be improved to an upper bound of $0$.
\end{proof}

\paragraph{Concentration for the Ising model.} Since we proved the full Modified Log-Sobolev Inequality (MLSI) in \cref{prop:ising marginals} for the Ising model in the uniqueness regime, we directly get concentration of Lipschitz functions for the Ising model without a special argument. 
%\Fred{add a theorem saying lipschitz functions concentrate, use \cite{eldan2017transport} to cover the strong external field case.} 
\begin{proposition}\label{prop:ising-lipschitz}
Let $\nu$ be the Ising model on a graph of maximum degree at most $\Delta \ge 3$ 
with edge activity $\beta \in \bracks*{\frac{\Delta - 2 + \delta}{\Delta - \delta}, \frac{\Delta - \delta}{\Delta - 2 + \delta}}$ and external field $\lambda > 0$. There exists $c = c(\delta) > 0$ such that the following result holds. 
Let $f$ be so that $\abs{f(\sigma_+)-f(\sigma_-)} \le \kappa$ for all adjacent states $(\sigma_-,\sigma_+)$. For all $t \ge 0$ we have %$(\gamma n)^{1/2} O_{\delta,\kappa}(1) < t < \gamma n$, we have 
\[ \P_{\nu}{f -  \E_{\nu} f > t} \le 
e^{-c t^2/\kappa^2 n} \]
\iffalse \begin{cases}
e^{-c t^2/\kappa^2 n} & \text{ if $t = O_{\delta}(\kappa n)$} \\
e^{O_{\delta}(\lambda n)} e^{- \Omega_{\delta}(t/\kappa)} & \text{otherwise} \\
\end{cases}
\fi
%\[
%\P_\nu{f>\E_\nu f + t} \le e^{-t^2/2\kappa^2}.
%\]
\end{proposition}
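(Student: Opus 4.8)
The plan is to run the Herbst argument (\cref{lem:herbst}) exactly as in the proof of \cref{lem:hardcore-lipschitz}, but using the \emph{full} MLSI for the Ising model in the uniqueness regime, which was already established in \cref{prop:ising marginals}. Since that proposition asserts approximate tensorization of entropy with constant $O(1)$ for $\mu$ and all its pinnings — and in particular a genuine MLSI for the Glauber dynamics with constant $\Omega_\delta(1/n)$ — there is no need to worry about whether the tilted density $e^{\gamma f}/\E_\nu e^{\gamma f}$ lies in a restricted class of $C$-bounded measures; the MLSI holds unconditionally for every positive function. This is the simplification the paragraph preceding the statement is alluding to.

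Concretely, first I would fix $\gamma > 0$ arbitrary and set $g = e^{\gamma f}$. By \cref{prop:ising marginals}, the Glauber dynamics $P$ for $\mu$ satisfies $\Omega_\delta(n^{-1}) \Ent_\nu[g] \le \mathcal{E}_P(g, \log g)$, where I am writing $\nu$ for the Ising measure (the statement of the proposition uses $\mu$; these are the same object). Next, the Lipschitz hypothesis $\abs{f(\sigma_+) - f(\sigma_-)} \le \kappa$ for adjacent states gives $f(y) - f(x) \le \kappa$ whenever $P(x,y) > 0$, and also $v_2(f) := \max_x \sum_y P(x,y)[f(y)-f(x)]_+^2 \le \kappa^2$ directly from the definition. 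Applying \cref{lem:dirichlet2} with $Q = P$ then yields $\mathcal{E}_P(g, \log g) \le e^{\gamma \kappa} \gamma^2 \kappa^2 \E_\nu[g]$. Chaining the two bounds gives $\Ent_\nu[g] \le O_\delta(1)\, e^{\gamma\kappa}\, n\, \gamma^2 \kappa^2\, \E_\nu[g]$.

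To feed this into \cref{lem:herbst} I need the hypothesis $\Ent_\nu[e^{\gamma f}] \le \tfrac{\gamma^2 \sigma^2}{2}\E_\nu[e^{\gamma f}]$ for all $\gamma \in [0,\Gamma]$; the factor $e^{\gamma\kappa}$ is harmless if I restrict to $\gamma \le \Gamma := 1/\kappa$, on which range $e^{\gamma\kappa} \le e$, so I may take $\sigma^2 = \Theta_\delta(\kappa^2 n)$. The Herbst lemma then gives $\log \E_\nu[e^{\gamma(f - \E_\nu f)}] \le \gamma^2 \sigma^2/2$ for $\gamma \in [0,\Gamma]$, and its Chernoff conclusion gives $\P_\nu[f - \E_\nu f > t] \le e^{-t^2/2\sigma^2}$ for $t/\sigma \le \Gamma\sigma$, i.e.\ $t = O_\delta(\kappa n)$. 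For $t$ larger than $\Theta_\delta(\kappa n)$ the Lipschitz bound forces $\abs{f - \E_\nu f} \le O(\kappa n)$ almost surely, so the probability is actually $0$ in that range, and the stated bound $e^{-ct^2/\kappa^2 n}$ holds (after adjusting $c$) for all $t \ge 0$. This is essentially a verbatim repetition of the proof of \cref{lem:hardcore-lipschitz}, and I would phrase it that way.

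There is no real obstacle here: the only thing to be careful about is that \cref{prop:ising marginals} as stated assumes $\lambda = 1/800$ in the ``easy regime,'' but \cref{thm:main ising} and the discussion around it show the full MLSI with constant $\Omega_\delta(1/n)$ holds for the $\delta$-unique Ising model at \emph{any} external field $\lambda > 0$ (splitting into the Dobrushin regime $\lambda \le 1/800$ handled by \cref{lem:ising-dobrushin} and the regime $\lambda \ge 1/800$ handled by the comparison argument). So the cleanest write-up would cite the MLSI established in the proof of \cref{thm:main ising} rather than \cref{prop:ising marginals} directly. The rest is the same Herbst computation, so I would keep it to a few lines and refer back to \cref{lem:hardcore-lipschitz}.
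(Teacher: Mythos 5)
Your Herbst computation for the regime $\lambda \ge 1/800$ is correct and matches what the paper does there: in that regime \cref{thm:main ising} does establish a genuine (unrestricted) MLSI with constant $\Omega_{\delta}(1/n)$ via the comparison argument, and the standard Herbst/Chernoff calculation with $\Gamma = \Theta(1/\kappa)$ and $\sigma^2 = \Theta_{\delta}(\kappa^2 n)$ gives the claimed tail.

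The gap is in the small-field case $\lambda < 1/800$. You write that ``\cref{lem:ising-dobrushin} handles the Dobrushin regime'' and that the full MLSI holds with constant $\Omega_{\delta}(1/n)$ for \emph{any} $\lambda > 0$, but this is not what the paper proves. For $\lambda \le 1/800$ the proof of \cref{thm:main ising} only invokes the \emph{contractive coupling} (part 1 of \cref{lem:ising-dobrushin}), which gives $O(n\log n)$ mixing but no MLSI. The MLSI available in that regime is part 3 of \cref{lem:ising-dobrushin}, whose constant is $1/(\alpha \delta^2)$, where $\alpha$ is the worst-case conditional marginal. As $\lambda \to 0$ the Ising marginal $\mu(\sigma_v = + \mid \sigma_{\sim v}) \approx \lambda/(1+\lambda) \to 0$, so $\alpha \to 0$ and the MLSI constant degenerates to $\Theta(\lambda/n)$ rather than $\Omega_{\delta}(1/n)$. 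Feeding this into Herbst gives the weaker tail $e^{-c\lambda t^2/\kappa^2 n}$, not the advertised $\lambda$-independent bound. The paper circumvents this by using a \emph{different} functional inequality for $\lambda < 1/800$: part 2 of \cref{lem:ising-dobrushin} gives the $W_1$ transport-entropy inequality with constant $\Theta(n/\delta)$ (coming from the contractive coupling via \cite{eldan2017transport,djellout2004transportation}), and the Bobkov--G\"otze equivalence \cite{bobkov1999exponential} converts that directly into sub-Gaussian concentration of Lipschitz functions with the correct constant. So the two cases genuinely need different tools: MLSI+Herbst for $\lambda \ge 1/800$, transport-entropy for $\lambda < 1/800$. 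Your write-up should make this split explicit rather than asserting a uniform MLSI.
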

\begin{proof}
Recall that by interchanging the role of $+$ and $-$, we can assume without loss of generality that $\lambda \le 1$. 
If $\lambda < 1/800$, this follows from the second part of \cref{lem:ising-dobrushin} by the same argument as we used for mixing in \cref{thm:main ising}, because the $W_1$ transport-entropy inequality is equivalent to sub-Gaussian concentration for Lipschitz functions with corresponding constant, see \cite{eldan2017transport,bobkov1999exponential}.
%the main result of \cite{eldan2017transport}, since the Glauber dynamics for the Ising model with parameters as defined above has a contractive coupling (see e.g. \cite{levin2017markov} or the proof of Theorem 1.4 of \cite{chen2021rapid}) and the main result of \cite{eldan2017transport} shows that the existence of such a coupling implies the stated concentration inequality for Lipschitz functions.

If $\lambda \ge 1/800$, then by \cref{prop:ising marginals} the Glauber dynamics satisfy a  modified log-Sobolev inequality with constant $\Omega_{\delta}(1/n)$ so by the standard application of the Herbst argument (see \cref{lem:herbst} and \cite{van2014probability}) we have the desired concentration of Lipschitz functions. 
\end{proof}
\begin{proposition}\label{prop:ising-anti-concentration}
Let $\alpha \ge 0$, let $(\beta,\lambda)$ be the parameters of the $\delta$-unique antiferromagnetic Ising model $\mu$ on graph $G = (V,E)$ where
\[ \beta \in \bracks*{\frac{\Delta}{\Delta + \alpha}, 1}. \]
There exists $c = c(\delta,\alpha) > 0$ such that the following result holds. 
Let $f$ be so that $\abs{f(\sigma_+)-f(\sigma_-)} \le \kappa$ for all adjacent states $(\sigma_-,\sigma_+)$. For all $t \ge 0$ we have %$(\gamma n)^{1/2} O_{\delta,\kappa}(1) < t < \gamma n$, we have 
\[ \P_{\nu}{f -  \E_{\nu} f > t} \le 
e^{-c t^2/\kappa^2 n} \]
\end{proposition}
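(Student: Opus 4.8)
The plan is to follow the argument used for the hardcore model in \cref{lem:hardcore-lipschitz} (and for the worst-field Ising model in \cref{prop:ising-lipschitz}), splitting on the magnitude of the external field $\lambda$, which we may assume satisfies $\lambda \le 1$ after interchanging the roles of $+1$ and $-1$. As in the proof of \cref{thm:ising-anti-detailed} I would work with the threshold $\lambda_{\alpha} := (1/800)e^{-16\alpha}\lambda_{1,0}(\Delta)$ (interpreted as $\lambda_\alpha = 0$ when $d$-uniqueness holds for all fields and all $d \le \Delta-1$); by \cref{prop:lambda bound} and the constraint $\beta \in [\Delta/(\Delta+\alpha),1]$ it is bounded below by a constant depending only on $\alpha$ when positive. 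For $\lambda \le \lambda_\alpha$, an elementary estimate (using $\abs{\beta^2-1} \le 2\alpha/\Delta$ and $\beta^{-\Delta} \le e^{\alpha}$) shows that the hypothesis of \cref{lem:ising-dobrushin} holds with a constant gap, so by its second conclusion $\mu$ satisfies a $W_1$ transport-entropy inequality $W_1(\rho,\mu)^2 \le O(n)\,\DKL{\rho \river \mu}$ with respect to the Hamming metric; by the Bobkov--G\"otze characterization \cite{bobkov1999exponential} this is equivalent to the asserted sub-Gaussian bound for every $\kappa$-Lipschitz function, with $c = c(\delta,\alpha) > 0$. It then remains to treat $\lambda_\alpha \le \lambda \le 1$, where $\lambda$ is bounded below by a constant depending only on $\alpha$.

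In this remaining regime I would run the restricted-MLSI argument of \cref{lem:hardcore-monotone}/\cref{lem:hardcore-lipschitz}, with \cref{prop:2 spin soft constraint} replacing \cref{prop:hardcore} and \cref{lem:ising-dobrushin} replacing \cref{prop:hardcore-easy}. Fix $\gamma \ge 0$ and put $g := e^{\gamma f}$; the $\kappa$-Lipschitz hypothesis gives $g(\sigma_+)/g(\sigma_-) \in [e^{-\gamma\kappa}, e^{\gamma\kappa}]$ for all adjacent states. For the ``easy regime'' input, choose $\theta = \theta(\alpha) \in (0,1)$ small enough that $\theta\lambda$ lies well inside the Dobrushin region for every admissible $\beta$; then the third conclusion of \cref{lem:ising-dobrushin} (together with the bound $\Omega_{\alpha}(1)$ on conditional marginals, obtained exactly as in the proof of \cref{prop:ising marginals}) shows that $\pi^{\mathbf{1}_R}$ satisfies approximate tensorization of entropy with an $O_{\delta,\alpha}(1)$ constant uniformly over the pinning $R$, where $\pi := \theta \ast \mu$; combined with \cref{lem:restricted mlsi to tensorization} applied to $g$, this gives \cref{eq:entfac-easy} for $g$ with $C_1 = O_{\delta,\alpha}(e^{\gamma\kappa})$. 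Next I would check that $\rho_0 := g\mu/\E_{\mu}{g}$ lies in $\mathcal{V}^{c}(C,\mu)$ with $C = e^{\gamma\kappa + O(\alpha)}$: for any valid pinning $\sigma_\Lambda$, any external field $\theta' \in (0,1]^V$, and any $v \notin \Lambda$, pairing each configuration containing $v$ with the one obtained by deleting $v$ and using that a ratio of sums of positive terms is at most the largest ratio of corresponding terms expresses the odds of $\set{v \text{ present}}$ under $(\theta' \ast \rho_0)^{\sigma_\Lambda}$ as at most $e^{\gamma\kappa}$ times a conditional odds of $v$ under $\theta' \ast \mu$, which by the first conclusion of \cref{prop:2 spin soft constraint} lies in $[\theta'_v \lambda\beta^{\Delta}, \theta'_v\lambda\beta^{-\Delta}] \subseteq [\theta'_v \lambda e^{-\alpha}, \theta'_v\lambda e^{\alpha}]$; the matching lower bound for the odds of $v$ under $(\theta'\ast\mu)^{\sigma_\Lambda}$ then yields the claimed complete $C$-boundedness.

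Since blow-ups preserve complete spectral independence (\cref{prop:blowup preserve complete spectral ind}, using the third conclusion of \cref{prop:2 spin soft constraint}) and complete $C$-boundedness (\cref{prop:blowup preserve bounded}), \cref{cor:restricted entropy contraction} gives the approximate block factorization \cref{eq:ent-k-lift} for the $k$-lift of $g$ with $\overline{C} = \theta^{-O_{\delta,\alpha}(1+\gamma\kappa)}$ --- here $\eta' = \max\set{2\eta, \sqrt{\log C/\log(1+\epsilon)}} = O_\delta(1+\sqrt{\gamma\kappa})$ with $\eta = O(1/\delta)$ and $\epsilon = \delta/2$ --- and \cref{lem:field tensorization from block} then gives \cref{eq:entfac-field} for $g$ with $C_2 = \theta^{-O_{\delta,\alpha}(1+\gamma\kappa)}$. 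Feeding \cref{eq:entfac-easy} and \cref{eq:entfac-field} into \cref{lem:ent-tensorization} gives $\Ent_\mu(g) \le C' \sum_{v \in V}\E_{\mu}{\Ent_v(g)}$ with $C' = C_1 C_2 \theta^{-1} = e^{O_{\delta,\alpha}(1+\gamma\kappa)}$, hence (by the standard fact that approximate tensorization of entropy implies the MLSI up to the factor $1/2n$, applied to the single function $g$) the estimate $\rho\,\Ent_\mu(g) \le \mathcal{E}_P(g,\log g)$ for the Glauber dynamics $P$ of $\mu$, with $\rho = e^{-O_{\delta,\alpha}(1+\gamma\kappa)}/n$. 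Bounding $\mathcal{E}_P(g,\log g)$ via \cref{lem:dirichlet2} (valid since $f(y)-f(x)\le\kappa$ on transitions, with $v_2(f)\le\kappa^2$) gives $\mathcal{E}_P(g,\log g) \le e^{\gamma\kappa}\gamma^2\kappa^2\,\E_{\mu}{g}$, so $\Ent_\mu(e^{\gamma f}) \le e^{O_{\delta,\alpha}(1+\gamma\kappa)}\, n\gamma^2\kappa^2\,\E_{\mu}{e^{\gamma f}}$. Applying the Herbst argument \cref{lem:herbst} with $\Gamma = c(\delta,\alpha)/\kappa$ small enough that $e^{O_{\delta,\alpha}(1+\gamma\kappa)} = O_{\delta,\alpha}(1)$ for all $\gamma \le \Gamma$, and $\sigma^2 = \Theta_{\delta,\alpha}(\kappa^2 n)$, yields $\P_{\mu}{f - \E_{\mu}{f} > t} \le e^{-t^2/2\sigma^2}$ for $t \le \Gamma\sigma^2 = \Theta_{\delta,\alpha}(\kappa n)$; for $t$ in the range $(\Gamma\sigma^2, \kappa n]$ the probability is at most $e^{-\Gamma^2\sigma^2/2} = e^{-\Theta_{\delta,\alpha}(n)}$ while $t^2/\kappa^2 n \le n$, and for $t > \kappa n$ the probability vanishes since $\abs{f - \E_{\mu}{f}} \le \kappa n$ almost surely; shrinking $c = c(\delta,\alpha)$ to cover all three ranges gives the stated inequality.

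The step requiring the most care is the complete-$C$-boundedness verification for $e^{\gamma f}\mu/\E_{\mu}{e^{\gamma f}}$, together with the bookkeeping that keeps every constant of the form $e^{O_{\delta,\alpha}(1+\gamma\kappa)}$ so that the Herbst argument survives out to $\gamma \asymp 1/\kappa$ --- precisely the range needed so that the crude almost-sure Lipschitz bound $\abs{f - \E_{\mu}{f}} \le \kappa n$ takes over and leaves a clean sub-Gaussian tail with no residual Bernstein term. All remaining steps are transcriptions of the corresponding hardcore arguments.
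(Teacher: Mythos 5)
Your proof is correct and follows the same route as the paper's (brief) proof, which cites \cref{thm:ising-anti-detailed} for the case split and defers to the hardcore argument \cref{lem:hardcore-lipschitz} for the restricted-MLSI/Herbst step; you have simply expanded that sketch, and in particular made explicit the complete $C$-boundedness check for $e^{\gamma f}\mu/\E_\mu e^{\gamma f}$, which the paper (and the corresponding hardcore proof) leave implicit. One small inconsistency: after invoking the third conclusion of \cref{lem:ising-dobrushin} to get approximate tensorization for $\pi^{\mathbf{1}_R}$ with an $O_{\delta,\alpha}(1)$ constant (a statement valid for all $f$), you then also pass through \cref{lem:restricted mlsi to tensorization} and land on $C_1 = O_{\delta,\alpha}(e^{\gamma\kappa})$ --- the two are redundant, and in fact the first alone already gives \cref{eq:entfac-easy} with the better constant $C_1 = O_{\delta,\alpha}(1)$; this is harmless here since the $e^{\gamma\kappa}$ factor is absorbed by the choice of $\Gamma$ in the Herbst step anyway.
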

\begin{proof}
As in the proof of \cref{thm:ising-anti-detailed}, we split into two cases. If $\lambda \le (1/800) e^{-16\alpha} \lambda_{1,0}$ then the result follows from the transport-entropy inequality from \cref{lem:ising-dobrushin} and its equivalence to subgaussian concentration of Lipschitz functions \cite{bobkov1999exponential}.
%\cref{prop:dobrushin-2spin} there exists a $(1 - \delta)$ contractive coupling of the Glauber dynamics and the concentration inequality follows from the main result of \cite{eldan2017transport}. 
Otherwise, as in the proof of \cref{thm:ising-anti-detailed} we have a MLSI (and even entropy factorization) in the ``easy regime'' (i.e. for $\pi = ((1/800) e^{-16\alpha} \lambda_{1,0}/\lambda) * \mu$) by the last part of \cref{lem:ising-dobrushin},
%the main result of \cite{marton2015logarithmic}, 
combining \cref{thm:comparison} gives a restricted MLSI for $\mu$, and the result follows as in the proofs for the hardcore model (\cref{lem:hardcore-lipschitz}).
\end{proof}
\PrintBibliography

\appendix
\appendixpage
\addappheadtotoc
%\section{Appendix}
%\label{sec:appendix}
\section{Deferred proofs from \texorpdfstring{\cref{sec:comparison}}{Section \ref{sec:comparison}}}
\label{sec:appendix-numerical}
\begin{proof}[Proof of \cref{lem:ent-compare}]
Note that when we multiply $a$ and $b$ by a constant $c > 0,$ the inequality doesn't change.
Therefore, without loss of generality, we can assume that $b=1$. Let us consider $\eta>1$, the case $\eta\le 1$ is similar. 

Let 
\[
f(a; x,\eta) = \frac{1}{x+1}a\log a- \frac{a+x}{x+1}\log(\frac{a+x}{x+1}) - \max(\eta,\eta^{-1}) \parens*{\frac{1}{\eta x+1}a\log a - \frac{a+\eta x}{\eta x+1}\log(\frac{a+\eta x}{\eta x+1})}.
\]
Note that 
\begin{align*}
f'(a; x,\eta) &= \frac{\log a+1}{x+1} - \frac{1}{x+1}\left(1+\log(\frac{a+x}{x+1})\right) - \max(\eta,\eta^{-1})\parens*{\frac{\log a+1}{\eta x+1} - \frac{1}{\eta x+1}(1+\log(\frac{a+\eta x}{\eta x+1})) } \\
&= (\log a)\left(\frac{1}{x+1}-\frac{\eta}{\eta x+1}\right) - \left(\frac{1}{x+1}\log(\frac{a+x}{x+1}) - \frac{\eta}{\eta x+1}\log(\frac{a+\eta x}{\eta x+1})\right),
\end{align*}
and
\begin{align*}
f''(a; x,\eta)
&= \frac{1}{a}\left(\frac{1}{x+1}-\frac{\eta}{\eta x+1}\right) - \left(\frac{1}{(a+x)(x+1)} - \frac{\eta}{(a+\eta x)(\eta x+1)}\right)\\
&=\frac{1-\eta}{(x+1)(\eta x+1)}\parens*{\frac{1}{a} - \frac{a-\eta x^2}{(a+x)(a+\eta x)}}\\
&=\frac{1-\eta}{(x+1)(\eta x+1)} \frac{1}{a(a+x)(a+\eta x)}\parens*{(a+x)(a+\eta x)-a(a-\eta x^2)}.
\end{align*}
Then $f''(a;x,\eta) < 0$ for $\eta > 1$ and $a>0$. Furthermore, $f'(1;x,\eta)=0$. Hence, $f(a;x,\eta)$ achieves its maximum at $a=1$, where $f(1;x,\eta) = 0$. 
\end{proof}
\begin{proof}[Proof of \cref{lem:dirichlet-to-entropy}]
Without loss of generality, we can assume that $f_+\ge f_-=1$. Let $a=f_+ \in [1,C]$ and $x=\mu_-/\mu_+$, the inequality can be rewritten as 
\[
C\parens*{\frac{1}{x+1}a\log a - \frac{x+a}{x+1}\log(\frac{x+a}{x+1})} \ge \frac{1}{x+1}a\log a - \frac{x+a}{x+1} \frac{\log a}{x+1}.
\]
Let 
\[
F(a;x) = C\parens*{\frac{1}{x+1}a\log a - \frac{x+a}{x+1} \frac{\log a}{x+1}} - \parens*{\frac{1}{x+1}a\log a - \frac{x+a}{x+1}\log(\frac{x+a}{x+1})}.
\]
Differentiating with respect to $a$, we have 
\[
F'(a;x) = \frac{1}{(x+1)^2 a} \parens*{Cx(a-1+a\log a)+ (x+1)a\log(\frac{x+a}{a(x+1)})}.
\]
For 
\[
g(a;x)=Cx(a-1+a\log a)+ (x+1)a\log(\frac{x+a}{a(x+1)}),
\]
we have 
\begin{align*}
g'(a;x) &= C x (\log a + 2) - (1+x) \parens*{\log(\frac{ax+a}{a+x}) + \frac{x}{a+x}}\\
&\ge Cx(2+\log a) - (1+x) \cdot \frac{xa}{x+a} \\
&\ge Cx(2+\log a) - xa.
\end{align*}
We have $C(2+\log t)-t \ge 0$ for all $t\in [1,C]$, we have that $g(a;x) \ge 0$ for all $a\in [1,C]$. Thus, $F'(a;x) \ge 0$ and $F(a;x) \ge F(1;x) = 0$ for all $a\in [1,C]$, which proves the result. 
\end{proof}
\section{Deferred proofs from \texorpdfstring{\cref{sec:ising}}{Section \ref{sec:ising}}}
\label{sub:appendix-ising}

\begin{proof}[Proof of \cref{prop:ising flc}]
By continuity and the connection between fractional log-concavity and spectral independence (see \cite{alimohammadi2021fractionally}), it is enough to verify that $\lambda \ast \mu$ is $O(1/\delta) $-spectrally independent for all $\lambda = (\lambda_v)_{v\in V} \in \R_{> 0}^{V}.$ This follows from the proof of \cite[Theorem~1.4]{chen2021rapid}. There, it is shown that $\lambda \ast \mu$ is $O(1/\delta)$-spectrally independent for all $\lambda \in [0,1]^{\abs{V}}.$ We simply note that their proof doesn't use the condition $\lambda_v \leq 1$. Indeed, the only place in the proof where $\lambda_v$ appears is in definition of the interval $J_{\lambda_v, d_v},$ which is the domain of $y_v.$ But in the proof, \cite{chen2021rapid} obtain a uniform bound for all $y_v \in (-\infty, +\infty)$, so that the proof applies to all external field $\lambda \in \R^{V}_{>0}.$ %on $\abs{h^{\phi}(y_v)} = \abs{h(y_v)}$ for all $y_v \in \R$
\end{proof}

\begin{proof}[Proof of \cref{prop:lambda bound}]
We repeat the proof of Claim 25 of \cite{LLY14}, now with the gap parameter $\delta$. First observe
\begin{align*}
 x_{1,\delta}(d) &= \frac{ \zeta_{\delta} (d) - \sqrt{\zeta_{\delta} (d)^2 - 4(1-\delta)^2\beta \gamma } }{2(1-\delta) \beta} \\
&= \frac{4\beta \gamma (1-\delta)^2 }{2(1-\delta) \beta\parens*{\zeta_{\delta} (d) + \sqrt{\zeta_{\delta} (d)^2 - 4(1-\delta)^2\beta \gamma }}} \\
&= \frac{2 \gamma (1-\delta) }{\zeta_{\delta} (d) + \sqrt{\zeta_{\delta} (d)^2 - 4(1-\delta)^2\beta \gamma }}   \\
\end{align*}
thus 
\[ \frac{ \gamma (1-\delta) }{\zeta_{\delta} (d)}\leq x_{1,\delta}(d)\leq\frac{2 \gamma (1-\delta) }{\zeta_{\delta} (d)} \]
Note that
$f(x):= (\frac{x + \gamma}{\beta x + 1})^d = \parens*{\frac{1}{\beta}(1 - \frac{(1- \gamma \beta)}{\beta x+ 1})}^d$ is increasing in $[0,\infty),$ since $\beta \gamma \leq 1.$ In particular, $f(x) \geq f(0) = \gamma^d,$ so
\[\frac{ \gamma^{d+1} (1-\delta) }{\zeta_{\delta} (d)}\leq x_{1,\delta}(d)\gamma^d \leq \lambda_{1,\delta}(d) \leq \frac{2 \gamma (1-\delta) }{\zeta_{\delta} (d)} f\parens*{\frac{2 \gamma (1-\delta) }{\zeta_{\delta} (d)}} = \frac{2\gamma^{d+1}(1-\delta) }{\zeta_{\delta } (d)} \parens*{\frac{d+1-\delta}{d-1+\delta}}^d \]
which proves the result. %\Fred{add detail about last part?}

Note that for $\beta \gamma \geq \frac{\Delta }{\Delta + \alpha}$
\[\zeta_{\delta} (d) = d(1-\beta\gamma) -(1-\delta)(1+\beta\gamma)  \leq d(1-\beta\gamma) \leq \alpha \]
Using the fact that $d \geq (1-\delta) \bar{\Delta}$ we have
\begin{align*}
  \zeta_{\delta}(d) &\geq (1-\delta) \parens*{ \bar{\Delta} (1-\beta\gamma) - (1+\beta\gamma)} = (1-\delta) \parens*{(1-\beta\gamma)\frac{1+\sqrt{\beta\gamma} }{1-\sqrt{\beta \gamma} } - (1+\beta\gamma) }  \\
  &= (1-\delta) \parens*{(1+\sqrt{\beta \gamma})^2 - (1+\beta\gamma)} = 2(1-\delta) \sqrt{\beta \gamma }  
\end{align*}
thus 
\[ \frac{(1-\delta) \gamma^{d+1} }{\alpha } \leq \lambda_{1,\delta}(d) \leq \frac{2 \gamma^{d+1} (1-\delta) }{2(1-\delta) \sqrt{\beta \gamma} } (\frac{d+1 -\delta}{d-1+\delta})^d \leq \frac{9\gamma^{d+1}}{\sqrt{\beta \gamma} }\]
and 
\[ \frac{(1-\delta) }{\alpha e^{\alpha}}\leq \frac{(1-\delta) (\beta \gamma)^{d+1} }{\alpha } \leq \beta^{d+1}\lambda_{1,\delta}(d) \leq \frac{\lambda_{1,\delta}(d)}{\gamma^{d+1}}\leq  \frac{9}{\sqrt{\beta \gamma }} \leq 9\sqrt{1+\alpha}. \qedhere\]
\end{proof}

\begin{proof}[Proof of \cref{prop:2 spin soft constraint}]
%For the hard constraint case, conclusion 1 and 2  ($\beta = 0$) follows from definition. \Fred{fix this?}
%For Conclusion 3, note that
%\[(1+\delta/2)\lambda_{c, \delta} (d) = (1+\delta/2) (1-\delta) \frac{d^d \gamma^{d+1} }{(d-1+\delta)^{d+1}} \leq  \frac{(1-\delta/2) d^d \gamma^{d+1} }{(d-1+\delta/2)^{d+1}} = \lambda_{c,\delta/2} (d) \]
%where we use the fact that $(1-\delta)(1+\delta/2)\leq 1-\delta/2$ and that $ (d-1+\delta)^{d+1} \geq (d-1+\delta/2)^{d+1}.$

%Now we focus on the soft constraint case.
Note that we only need to prove Conclusion 1 for $\Lambda = V \setminus v.$ We view $\theta\ast \mu$ as an Ising model with per-site external field $0 \leq \lambda_v \leq \lambda$.
By \cref{eqn:conditional-2spin}, %Eq. (83) in \cite{chen2021rapid}, 
for each vertex $v$ and valid assignment $\sigma_{\sim v}$ of $V \setminus v:$ %\Fred{fix double use of $\sigma$ as random variable and partial assignment}
\[\lambda_v \beta^{\Delta_v} \leq \frac{\P_{\mu} {\sigma_v = + \given \sigma_{\sim v}} }{\P_{\mu} {\sigma_v = - \given \sigma_{\sim v}} } = \frac{\lambda_v \beta^{\Delta_v - s}}{\gamma^s} = \lambda_v \beta^{\Delta_v} \tau^s \leq\lambda_v \beta^{\Delta_v}\tau^{\Delta} \]
where $s$ is the number of $-1$’s assigned by $\sigma$ to the neighborhood $\mathcal{N}_v$ of $v,$ and the inequalities follows by using $\tau \geq 1$ and $0\leq s\leq \Delta_v \leq \Delta.$ %Conclusion 1 follows by noting that $\tau \geq 1$ and $s\leq \Delta_v \leq \Delta.$

For Conclusion 2, we again use \cref{eqn:conditional-2spin} to obtain %Eq. (83) in \cite{chen2021rapid} to obtain
\begin{align*}
    \P_{\mu} {\sigma_u = - \given \sigma^+_{\sim u}}  &= \frac{\gamma^s} {\lambda_u \beta^{\Delta_v - s} +\gamma^s }\\
    \P_{\mu} {\sigma_u = - \given \sigma^-_{\sim u}}  &= \frac{\gamma^{s+1}} {\lambda_u \beta^{\Delta_v - (s+1)} +\gamma^{s+1} }
\end{align*}
where $s$ is the number of $-1$’s assigned by $\sigma^+_{\sim u}$ to the neighborhood of $u.$ Thus
\begin{align*}
\frac{1}{\tau}  \leq \frac{\P_{\mu} {\sigma_u = + \given \sigma^+_{\sim u}} }{\P_{\mu} {\sigma_u = + \given \sigma^-_{\sim u}}} &= \frac{\beta^{\Delta_v - s}} {\lambda_u \beta^{\Delta_v - s} +\gamma^s }  (\frac{\lambda_u \beta^{\Delta_v - (s+1)} } {\lambda_u \beta^{\Delta_v - (s+1)} +\gamma^{s+1} })^{-1} = \beta \cdot \frac{\lambda_u \beta^{\Delta_v - (s+1)} +\gamma^{s+1}}{\lambda_u \beta^{\Delta_v - s} +\gamma^s } \leq 1\\
1\leq \frac{\P_{\mu} {\sigma_u = - \given \sigma^+_{\sim u}} }{\P_{\mu} {\sigma_u = - \given \sigma^-_{\sim u}}} &= \frac{\gamma^s} {\lambda_u \beta^{\Delta_v - s} +\gamma^s }  (\frac{\gamma^{s+1}} {\lambda_u \beta^{\Delta_v - (s+1)} +\gamma^{s+1} })^{-1} = \frac{1}{\gamma} \cdot \frac{\lambda_u \beta^{\Delta_v - (s+1)} +\gamma^{s+1}}{\lambda_u \beta^{\Delta_v - s} +\gamma^s } \leq \tau
\end{align*}
where the inequalities follow from
\begin{align*}
  \beta(\lambda_u \beta^{\Delta_v - (s+1)} +\gamma^{s+1}) &= \lambda_u \beta^{\Delta_v-s} + \gamma^s (\beta \gamma) \leq \lambda_u \beta^{\Delta_v-s} + \gamma^s\\
  \lambda_u \beta^{\Delta_v - (s+1)} +\gamma^{s+1} &= \gamma( \lambda_u \beta^{\Delta_v - s} \tau +\gamma^{s})\geq \lambda_u \beta^{\Delta_v - s} +\gamma^{s}
\end{align*}

For Conclusion 3, we only need to show that for all $d$
\[(1+\delta/2) \lambda_{1,\delta}(d) \leq \lambda_{1,\delta/2}(d). \]
First, note that $f(x):= (\frac{x + \gamma}{\beta x + 1})^d = \parens*{\frac{1}{\beta}(1 - \frac{(1- \gamma \beta)}{\beta x+ 1})}^d$ is increasing in $[0,\infty),$ since $\beta \gamma \leq 1.$ Next, we show that
\[(1+\delta/2) x_{1,\delta}(d) \leq x_1^{(\delta/2)}(d) \]
We can rewrite
\begin{align*}
(1+ \frac{\delta}{2}) x_{1,\delta}(d) &= \frac{(1+\delta/2) \parens*{\zeta_{\delta} (d) - \sqrt{\zeta_{\delta} (d)^2 - 4(1-\delta)^2\beta \gamma }} }{2(1-\delta) \beta} \\
&= \frac{4(1+\delta/2)(1-\delta)^2\beta \gamma  }{2(1-\delta) \beta\parens*{\zeta_{\delta} (d) + \sqrt{\zeta_{\delta} (d)^2 - 4(1-\delta)^2\beta \gamma }}}   \\
&\leq_{(1)} \frac{4(1-\delta/2)^2\beta \gamma  }{2(1-\delta/2) \beta\parens*{\zeta_{\delta} (d) + \sqrt{\zeta_{\delta} (d)^2 - 4(1-\delta)^2\beta \gamma }}}\\
&\leq_{(2)} \frac{4(1-\delta/2)^2\beta \gamma  }{2(1-\delta/2) \beta\parens*{\zeta_{\delta/2} (d) + \sqrt{\zeta_{\delta/2} (d)^2 - 4(1-\delta/2)^2\beta \gamma }}}\\
&= x_{1,\delta/2}(d) 
\end{align*}
where in (1) we use $(1-\delta)(1+\delta/2) \leq (1-\delta/2)$ and in (2) we use the observation that for $d\geq 1 $ %$d \geq (1-\delta/2) \bar{\Delta} $
\[g(y) =  \zeta_{y} (d) + \sqrt{\zeta_{y} (d)^2 - 4(1-y)^2\beta \gamma } \]
is strictly increasing in $[0,1].$ Indeed, \[\zeta_y(d) = d(1-\beta\gamma) - (1-y) (1+\beta\gamma)\] is strictly increasing in $y,$ and
\[h(y) := \zeta_y(d)^2 -4(1-y)^2\beta \gamma = d^2 (1-\beta\gamma)^2 - 2(1-y)d (1-(\beta\gamma)^2) + (1-y)^2 (1-\beta\gamma)^2 \]
has non-negative derivative in $y$
\[h'(y) : = 2(1-\beta\gamma) (d (1+\beta\gamma) - (1-\beta\gamma) (1-y) ) \geq 0\]
Finally,
\[(1+\delta/2) \lambda_{1,\delta}(d) = (1+\delta/2) x_{1,\delta}(d) \, f(x_{1,\delta}(d) ) \leq x_{1,\delta/2}(d) \, f(x_{1,\delta/2}(d) ) =\lambda_{1,\delta/2}(d). \qedhere \]

\end{proof}

\end{document}